\documentclass[11pt]{article}
\usepackage{require_package}
\makeatletter
\def\BState{\State\hskip-\ALG@thistlm}
\makeatother

\title{Single- and Multi-Level Fourier-RQMC Methods for Multivariate
Shortfall Risk}

\author[1]{Chiheb Ben Hammouda}
\author[1]{Truong Ngoc Nguyen\thanks{Corresponding Author: Email: n.t.nguyen@uu.nl}}

\affil[1]{Mathematical Institute, Utrecht University, 3584 CD Utrecht, the Netherlands}

\begin{document}
	\date{}
\maketitle

\begin{abstract}
Multivariate shortfall risk measures provide a principled framework for quantifying systemic risk and determining capital allocations prior to aggregation in interconnected financial systems. Despite their well-established theoretical properties, the numerical estimation of multivariate shortfall risk and the corresponding optimal allocations remains computationally challenging, as existing Monte Carlo–based approaches can be numerically expensive due to slow convergence.

In this work, we develop a new class of single- and multilevel numerical algorithms for estimating multivariate shortfall risk and the associated optimal allocations, based on a combination of Fourier inversion techniques and randomized quasi–Monte Carlo (RQMC) sampling.  Rather than operating in physical space, our approach evaluates the relevant expectations appearing in the risk constraint and its optimization in the frequency domain, where the integrands exhibit enhanced smoothness properties that are well suited for RQMC integration. We establish a rigorous mathematical framework for the resulting Fourier-RQMC estimators, including convergence analysis and computational complexity bounds. Beyond the single-level method, we introduce a multilevel RQMC scheme that exploits the geometric convergence of the underlying deterministic optimization algorithm to reduce computational cost while preserving accuracy.

Numerical experiments demonstrate that the proposed Fourier–RQMC methods outperform sample average approximation and stochastic optimization benchmarks in terms of accuracy and computational cost across a range of models for the risk factors and loss structures. Consistent with the theoretical analysis, these results demonstrate improved asymptotic convergence and complexity rates relative to the benchmark methods, with additional savings achieved through the proposed multilevel RQMC construction.

\vspace{0.3em}
\noindent\textbf{Keywords:} Multivariate risk measures, systemic risk, risk allocations, randomized quasi-Monte Carlo, Fourier inversion, multilevel algorithms, asymptotic convergence, complexity rates.
\noindent\textbf{MSC2020 classifications:}	 65C05, 65D30, 42A38, 65Y20, 91G45, 91G60.
\end{abstract}
\newpage
\tableofcontents

\section{Introduction}

Systemic risk is the risk that the entire financial system could fail due to the inherent characteristics of the system itself. Such a risk can trigger severe economic losses, requiring effective tools for its assessment and control. To this end, risk assessment methods must be accurate, coherent, and capable of assessing risks and informing capital allocation across interconnected financial components, such as portfolios, financial institutions, and clearinghouse members. 

Two modelling approaches have been proposed in the literature. In the \emph{post-aggregation} view (i.e., \emph{aggregate first, then allocate})\cite{chen_axiomatic_2013,brunnermeier_measuring_2019}, systemic risk measures are interpreted as the minimal amount of capital required to secure the financial system after aggregating losses across individual institutions, i.e., we first compress the multivariate loss vector $\mathbf{X}=(X_1,\ldots,X_d)$ through an aggregator $\Lambda$ and then apply a univariate risk functional $\eta$,
		\[
		R(\mathbf{X})=\eta\!\big(\Lambda(\mathbf{X})\big).
		\]
 While conceptually appealing, this aggregation step collapses the multivariate loss profile into a single scalar quantity, potentially obscuring information about interdependencies and interactions among system components not encoded by $\Lambda$.  In contrast, the \emph{pre-aggregation} view (i.e., \emph{allocate first, then aggregate}) \cite{feinstein_measures_2017,biagini_unified_2019,kromer_systemic_2016} allocates capital componentwise before aggregation,  leading to  risk measures of the  form
	\[
	R(\mathbf{X}) = \inf\Bigl\{\sum_{i=1}^d m_i : \Lambda(\mathbf{X}+\mathbf{m})\in\mathcal{A}\Bigr\},\]
	where $\mathbf{m}=(m_1,\dots,m_d)$ denotes the capital allocation vector and $\mathcal{A}$ is an acceptance set. This formulation preserves the multivariate dependence structure among the components of the system and yields both a total risk measure and consistent component-wise capital allocations.

 Beyond their theoretical formulation, systemic risk measures must be monitored on a regular basis—monthly or even weekly—making computational efficiency and numerical scalability essential. In this work, we adopt the pre-aggregation perspective and focus on the efficient computation of the Multivariate Shortfall Risk Measure (MSRM) introduced by \cite{armenti_multivariate_2018}.

Computing the MSRM requires repeatedly evaluating expectations involving the loss function and its gradient, which enter the  first-order optimality conditions of the associated allocation problem along an optimization trajectory. This repeated evaluation of expectation-based quantities constitutes the main computational bottleneck. Two main  numerical approaches were introduced in the literature. {The first is \emph{Sample-Average Approximation} (SAA) \cite{armenti_multivariate_2018}, where $N$ i.i.d.\ samples of $\mathbf X$ are drawn \emph{once} at the beginning, required expectations are approximated by their Monte Carlo (MC) estimators, and subsequently solved using standard deterministic optimization methods. In unconstrained stochastic optimization, SAA attains the canonical,
dimension-independent convergence rate $\mathcal{O}\paren{N^{-1/2}}$ for optimal solutions under suitable
regularity conditions \cite{homem-de-mello_monte_2014,fu_guide_2015}, a rate that is often regarded as slow in practice \cite{glasserman_monte_2003}. In the MSRM setting, however, the literature typically does not provide a convergence analysis for the corresponding SAA estimators. In contrast, \cite{kaakai_estimation_2024} proposes a second approach based on stochastic approximation (SA), in which the optimizer is updated iteratively using noisy gradient information obtained from samples of the loss vector $\mathbf{X}$, following a Robbins–Monro–type scheme \cite{bardou_computing_2009,dunkel_stochastic_2010}.
While these SA algorithms are carefully adapted to the MSRM framework and are shown to converge theoretically, the numerical experiments in \cite{kaakai_estimation_2024} suggest that their numerical performance can be inferior to that of SAA-based methods.

In this work, we address this computational challenge by developing efficient numerical methods for MSRM that operate in the frequency domain and are tailored to the optimization-driven structure of the problem. Our approach combines Fourier-based representations of the MSRM optimality conditions with randomized quasi–Monte Carlo (RQMC) sampling to efficiently evaluate the expectation-based quantities required along the allocation optimization trajectory. Compared to previously proposed Monte Carlo–based approaches \cite{armenti_multivariate_2018, kaakai_estimation_2024}, our method exploits the smoothness of frequency-domain integrands and the improved error convergence properties of RQMC. Building on this single-level framework, we further introduce a multilevel strategy that leverages the fast local (geometric) convergence of the underlying numerical optimization scheme to reduce overall computational cost while preserving accuracy.

In risk measurement, QMC methods have been used primarily within the post-aggregation paradigm, where computations are performed in physical space on the aggregated portfolio loss, most notably for quantile-based risk measures such as VaR and CVaR \cite{kreinin_measuring_1998,bardou_recursive_2009,he_convergence_2021}. Under suitable smoothness assumptions, QMC estimators in this setting can achieve convergence rates of order $\mathcal{O}(N^{-1})$ \cite{he_convergence_2021}. A key limitation of these approaches is their strong sensitivity to both the effective dimension and the regularity of the integrand. In risk measurement, loss functions and their gradients often involve kinks or discontinuities, which can severely degrade QMC efficiency when treated directly in physical space. These issues are further exacerbated in pre-aggregation frameworks such as MSRM, where expectations depend on an evolving allocation vector along a high-dimensional optimization trajectory. As a result, regularity must be preserved uniformly over the allocation iterates generated by the optimization algorithm, rather than only at a single scalar value. Consequently, despite their success in post-aggregation settings, applications of QMC and RQMC methods to pre-aggregation risk measures remain scarce. 

One possible route to improving the regularity structure\footnote{Besides alternative approaches based on analytical or numerical smoothing, as explored for example in the context of option pricing in \cite{bayer_hierarchical_2020, bayer_numerical_2023}.} is to work in the frequency domain, leveraging Fourier representations when characteristic functions of the loss vector are available. In the MSRM setting, Fourier-based techniques have been explored primarily in \cite{armenti_multivariate_2018}, where they are used mainly as numerical benchmarks against SAA-type methods and are reported to exhibit limited practical efficiency relative to SAA. However, this approach remains largely heuristic and does not systematically exploit the potential advantages of QMC integration. In particular, the proposed implementation in \cite{armenti_multivariate_2018} relies on generic Gaussian quadrature rules that scale poorly with dimension, offers limited guidance on the choice of contour-shift (damping) parameters, which are selected ad hoc despite their critical role in controlling  the smoothness of the resulting Fourier integrands \cite{bayer_optimal_2023}, and does not provide a unified analytical framework ensuring stability and accuracy along the optimization trajectory. Moreover, neither a rigorous error analysis nor a  computational complexity analysis is provided.

These limitations highlight the need for a systematic formulation that tightly couples Fourier representations, regularity control, and optimization-aware numerical integration, while remaining robust along the optimization trajectory. We address this need by developing an optimization-aware Fourier–RQMC framework for MSRM, together with a rigorous error and  complexity analysis that demonstrates its convergence properties and computational advantages over  SA and SAA-based approaches. Although developed in the context of MSRM, the proposed theoretical and numerical framework naturally extends to other classes of multivariate risk measures for which suitable Fourier representations and optimization regularity conditions are available.

\paragraph{Our contributions.} The main contributions of this work are summarized as follows:
\begin{itemize}

\item  We develop a new class of single-level and multilevel numerical algorithms for estimating multivariate shortfall risk measures and the associated optimal capital allocations, based on a combination of Fourier inversion techniques and RQMC sampling. The proposed design incorporates carefully constructed damping rules and domain transformations to preserve the regularity structure of the Fourier integrands associated with the loss functions and their gradients, uniformly along the allocation iterates (Sections \ref{sec:mapping_MSRM_Fourier} and \ref{sec:Methodology of our Approach}).
\item We introduce a multilevel RQMC construction whose sample allocation explicitly exploits the local geometric convergence of the underlying deterministic optimization algorithm, leading to work-optimal sampling strategies and further reductions in computational complexity.

\item We provide a rigorous error and  complexity analysis of the proposed single-level and multilevel Fourier–RQMC schemes for the MSRM problem, establishing improved asymptotic convergence and complexity rates relative to the  benchmark methods (Section \ref{sec:num_analysis}; Appendix \ref{appendix:sup_err_analysis}).
    
\item  Building on ideas introduced in \cite{bayer_optimal_2023}  in the context of option pricing, we design an adaptive damping strategy across optimization iterations, complemented by a regularized update rule that ensures robustness under repeated evaluation of evolving integrands along the optimization trajectory. In addition, we provide theoretical guarantees for the convexity of the resulting damping-selection problem (Section \ref{subsec:opti_damp} and Appendix \ref{Appendix:supp_optimal_damping}).

      \item   We adapt the domain transformation for RQMC integration proposed in  \cite{bayer_quasi-monte_2025} to the repeated integration problems arising along an optimization trajectory. Moreover, we provide an alternative derivation of the transformation rule based on a detailed analysis and control of boundary oscillations tailored to our setting (Section~\ref{subsubsec:domain_trans_QMC}; Appendix~\ref{Appendix:supp_results_domain_trans}).

		\item  We demonstrate the computational advantages of the proposed Fourier–RQMC methods through different numerical experiments, using SAA and SA as benchmarks across different loss functions, risk-factor distributions, and dimensional settings. The experiments illustrate the additional savings achieved through the multilevel RQMC construction   (Section \ref{sec:num_exp_results}).
	\end{itemize}
\paragraph{Organization of the paper.}  The outline of this paper is as follows. Section~\ref{sec:Problem Setting and Background}   introduces the problem setting and the associated optimization framework.  Section~\ref{sec:mapping_MSRM_Fourier} presents the Fourier representations of the risk-measure problems together with the proposed optimal damping strategy. Section~\ref{sec:Methodology of our Approach}  develops the optimization-aware Fourier–RQMC framework and its multilevel extension. Section~\ref{sec:num_analysis}  provides a rigorous error and work–accuracy complexity analysis of the proposed methods. Finally, Section~\ref{sec:num_exp_results} reports the numerical experiments and results.

\section{{Optimization Framework for Multivariate Shortfall Risk}}\label{sec:Problem Setting and Background}
This section introduces the analytical setting and optimization framework for MSRM, together with the constrained optimization formulation used throughout the paper. We establish notation, define admissible multivariate loss functions and dependence structures, and recall key existence, uniqueness, and first-order optimality results for risk allocations. These results underpin the Sequential Quadratic Programming (SQP)-based numerical framework used to solve the MSRM problem.
\begin{notation}\
\begin{itemize}[leftmargin=*,label=\normalfont\textbullet]
   \item $\norm{.}$ denotes the Euclidean norm for vectors and the associated Frobenius norm for matrices, unless stated otherwise.
  \item
  Let $(\Omega,\mathcal{F},\mathbb{P})$ be a probability space and let $d\in\mathbb{N}$. We denote by $L^0\paren{\Omega;\mathbb{R}^d}$ the space of 
  $\mathbb{R}^d$-valued $\mathcal{F}$-measurable random variables on
  $(\Omega,\mathcal{F},\mathbb{P})$.
  For $p\in[1,\infty)$ we define  $ L^p\paren{\Omega;\mathbb{R}^d}
    := \bigl\{ \mathbf{X} \in L^0\paren{\Omega;\mathbb{R}^d} :
                \mathbb{E}[\|\mathbf{X}\|^p] < \infty \bigr\}$.

  \item For $p\in[1,\infty)$, we {denote} the space of  $p$-integrable functions on $\mathbb{R}^d$ as

$    
L^p(\mathbb R^d)
:=\Bigl\{ f:\mathbb R^d\to\mathbb R \ \big|\ \int_{\mathbb R^d}|f(x)|^p\,dx<\infty\Bigr\}.
$  
  \item
For $\mathbf{x},\mathbf{y} \in \mathbb{R}^d$ we define
$\mathbf{x} \ge \mathbf{y}
\;\Longleftrightarrow\;
x_k \ge y_k$ for all $k \in \{1,\dots,d\}$,
and $\mathbf{x} > \mathbf{y}$ if $x_k > y_k$ for all $k$.
Here $x_k$ and $y_k$ denote the $k$-th components of $\mathbf{x}$ and
$\mathbf{y}$, respectively.

  \item A generic element $\mathbf X=(X_1,\ldots,X_d)\in L^0(\Omega;\mathbb R^d)$ denotes an $\mathbb R^d$-valued random vector of (monetary) financial losses {(i.e., for each $k=1,\ldots,d$, the component $X_k$ represents the loss  of institution $k$}. We assume that $\mathbf X$ admits a joint density $f_{\mathbf X}$ on $\mathbb R^d$ with the parameters denoted by $\boldsymbol \Theta_\mathbf{X}$.
\end{itemize}
\label{notation:prob_space}
\end{notation}
We begin by defining a multivariate loss function $\ell : \mathbb{R}^d \to (-\infty, \infty]$, which serves as the basic building block of the multivariate shortfall risk framework.

\begin{definition}[Multivariate Loss Function]
A function $\ell : \mathbb{R}^d \to (-\infty, \infty]$ is called a \emph{loss function} if:
\begin{enumerate}[label=(A\arabic*), ref=A\arabic*]
    \item\label{ass:A1} $\ell$ is increasing: $\ell(\mathbf{x}) \geq \ell(\mathbf{y})$ whenever $\mathbf{x} \geq \mathbf{y}$,{with $\mathbf{x}, \mathbf{y} \in \mathbb{R}^d$};
    \item\label{ass:A2} $\ell$ is convex and lower semicontinuous, with $\inf \ell < 0$;
    \item\label{ass:A3} {For $\mathbf{x} \in \mathbb{R}^d$, $\ell(\mathbf{x}) \geq \sum_{k=1}^d x_k - c$ for some constant {$c \in \mathbb{R}$}.}
\end{enumerate}
\label{def:assumption_loss_func}
\end{definition}
Throughout this work, we shall refer to $\ell$ as a (multivariate) loss function if it satisfies Assumptions \eqref{ass:A1}–\eqref{ass:A3}. We will additionally impose permutation invariance (\ref{ass:A4}) when needed.
\begin{enumerate}[label=(A\arabic*), ref=A\arabic*]
\setcounter{enumi}{3} 
    \item \label{ass:A4} $\ell$ is \emph{permutation invariant}, which means  $\ell(\mathbf{x}) = \ell(\pi(\mathbf{x}))$ for every permutation $\pi$ of the components.
\end{enumerate}
Property~(\ref{ass:A1}) states that the risk measure increases with losses. Property~(\ref{ass:A2}) reflects that the diversification should not increase risk; the lower semi-continuity ensures that losses may exhibit one-sided jumps while still guaranteeing that the infimum of the risk measure is attained on the domain. Property~(\ref{ass:A3}) ensures that the risk measure penalizes large losses more heavily than a risk-neutral evaluation. Assumption~(\ref{ass:A4}) implies that when the considered risk components are of the same nature, such as banks, clearinghouse members, or trading desks within the same trading floor, the loss function $\ell$ encodes fairness, meaning that it should not discriminate against any particular component.

We now illustrate how to build a multivariate loss function from a one-dimensional loss function together with a coupling term that encodes dependence and interaction across components of the loss vector $\mathbf{X}$. 
\begin{example}
Let $h:\mathbb R\to\mathbb R$ be a one-dimensional loss function satisfying \eqref{ass:A1}--\eqref{ass:A3}, and let $\Xi:\mathbb R^d\to\mathbb R$ be a coupling functional modelling the dependence structure among the components of the loss vector, also satisfying \eqref{ass:A1}--\eqref{ass:A3}. We construct a multivariate loss function $\ell$ by\footnote{This generalizes class (C3) in \cite[Example~2.3]{armenti_multivariate_2018}. As discussed in \cite[Section~4]{armenti_multivariate_2018}, the parameter $\alpha$ influences risk allocation through the dependence structure of the components of the loss vector $\mathbf X$.}
\begin{equation}
    \ell\paren{\mathbf{x}} := \sum h(x_k) + \alpha \Xi(\mathbf{x}), \quad 0 \leq \alpha \leq 1.
\label{eq:loss_dependence}
\end{equation}
In this work, we consider two examples of multivariate loss functions constructed from \eqref{eq:loss_dependence}
\begin{enumerate}[label=(\roman*)]
\item Exponential (entropic-type) loss function
\begin{equation}\label{eq:multi_entropy}
\ell_{\mathrm{exp}}(\mathbf{x})
:= \frac{1}{1+\alpha}\left[
        \sum_{k=1}^d e^{\beta x_k}
        + \alpha\, e^{\beta \sum_{k=1}^d x_k}
     \right]
     - \frac{\alpha + d}{\alpha + 1},
\qquad \alpha \ge 0,\;\beta \ge 0.
\end{equation}
\item Quadratic pairwise coupling loss function (QPC)
\begin{equation}\label{eq:multi_qcl}
\ell_{\mathrm{qpc}}(\mathbf{x})
:= \sum_{k=1}^d x_k
     + \frac{1}{2}\sum_{k=1}^d (x_k^+)^2
     + \alpha \sum_{1 \le j < k \le d} x_j^+ x_k^+
     - 1,
\qquad \alpha \ge 0.
\end{equation}
\end{enumerate}
where $\alpha$ and $\beta$ denote the systemic weight and the risk-aversion coefficient, respectively.
\label{exam:cross_dependent_losses}
\end{example}

\begin{assumption}[Integrability of the loss vector]
To ensure integrability, we assume that the loss vector $\mathbf{X}$ belongs to the following multivariate Orcliz heart:\footnote{Orlicz spaces have been widely used in the study of risk measures \cite{cheridito_risk_2009}. A detailed discussion of their properties in the multivariate sense can be found in Appendix B of \cite{armenti_multivariate_2018}.}
\begin{equation}
    M^\gamma = \left\{ \mathbf{X} \in L^0\paren{\Omega;\mathbb{R}^d}: \mathbb{E}[\gamma(\lambda \mathbf{X})] < \infty \text{ for all } \lambda > 0 \right\}, \quad \text{where } \gamma(\mathbf{x}) := \ell(|\mathbf{x}|),\ \mathbf{x} \in \mathbb{R}^d.
    \label{eq:Multivariate_Orcliz}
\end{equation}
\label{ass:multivariate_orclizt_space}
\end{assumption}
Next, we recall the definition of the MSRM as introduced in \cite[Definition 2.7]{armenti_multivariate_2018}.
\begin{definition}[Multivariate shortfall risk]
Let $\ell$ be a multivariate loss function and that $\mathbf{X} \in M^\gamma$ {be defined on $(\Omega,\mathcal F,\mathbb P)$}. The multivariate shortfall risk $R(\mathbf{X})$ is defined as
\begin{equation}
	R(\mathbf{X}) := \inf \left\{ \sum_{k=1}^d m_k : \mathbf{m} \in \mathcal{A}(\mathbf{X}) \right\} = \inf \left\{ \sum_{k=1}^d m_k : \mathbb{E}[\ell(\mathbf{X-m})] \leq 0  \right\},
\label{eq:deter_optimization_MSRM}
\end{equation}
where {$\mathbf{m}=(m_1,\dots,m_d)$ denotes the risk (capital) allocation vector}, and the acceptance set is characterised as
\begin{equation}
    \mathcal{A}(\mathbf{X}) := \left\{ \mathbf{m} \in \mathbb{R}^d : \mathbb{E}[\ell(\mathbf{X-m})] \leq 0 \right\}.
\end{equation}
\label{def:loss_acceptance_set}
\end{definition}
We now address the question of existence and uniqueness of risk allocations. {To this end, we introduce the following definition and impose additional assumptions on the loss function $\ell$ and the loss vector $\mathbf{X}$.}
\begin{definition}
A vector $\mathbf{m} \in \mathbb{R}^d$ is called an \emph{acceptable monetary risk allocation} if $\mathbf{m} \in \mathcal{A}(\mathbf{X})$ such that $R(\mathbf{X}) = \sum_{k=1}^d m_k$.
\end{definition}
\begin{assumption}\
  \begin{enumerate}[label=(A\arabic*), ref=A\arabic*]
\setcounter{enumi}{4}
    \item\label{ass:A5} For every $\mathbf{m_0} \in \mathbb{R}^d$, the mapping $\mathbf{m} \mapsto \ell(\mathbf{X-m})$ is differentiable at $\mathbf{m}_0$ almost surely (a.s).
    \item \label{ass:A6} Let $
   \mathcal{U} := \left\{ \mathbf{u} \in \mathbb{R}^d : \sum u_i = 0 \right\} $
be the zero-sum allocations set. We assume that, for every  $\mathbf{x} \in \mathbb R^d$,
the function  $\mathbf{m}\mapsto \ell(\mathbf{x}+\mathbf{m})$ is strictly convex on $\mathcal U$, and that  $\ell(\mathbf{x}) \ge 0$.
\end{enumerate}
\label{assump:additional_assump_loss}
\end{assumption}

\begin{theorem}[Theorem 3.4 in \cite{armenti_multivariate_2018}]
Let $\ell$ be a loss function that satisfies Assumptions \eqref{ass:A4}--\eqref{ass:A5} and $\mathbf{X} \in M^\gamma$. Then a risk allocation $\mathbf{m}^* \in \mathbb{R}^d$ exists and is characterized by the first-order conditions (F.O.C.):
\begin{equation}
\begin{pmatrix}
\lambda^*\,\mathbb{E}\brac{\nabla \ell(\mathbf{X}-\mathbf{m^*})}-\mathbf{1} \\[3pt]
\mathbb{E}[\ell(\mathbf{X}-\mathbf{m^*})]
\end{pmatrix}
= \mathbf 0,
\label{eq:KKT_system}
\end{equation}
where $\lambda^* > 0$ is the Lagrange multiplier, and $\mathbf{1}:= (1, \dots,1)^T \in \rset^d$.\\
Moreover, if Assumption \eqref{ass:A6} holds, then the risk allocation $\mathbf{m}^*$ is unique.
\label{theorem:existence_uniq_allocation}
\end{theorem}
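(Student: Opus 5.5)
We treat the problem as the convex program of minimising the linear objective $\mathbf m\mapsto\sum_k m_k$ over the set $\mathcal A(\mathbf X)$, with constraint function $\mathbf m\mapsto F(\mathbf m):=\mathbb E[\ell(\mathbf X-\mathbf m)]$. Convexity of $F$ follows directly from \eqref{ass:A2}. Lower semicontinuity and finiteness on all of $\mathbb R^d$ come from $\mathbf X\in M^\gamma$. For finiteness, we bound $\ell(\mathbf X-\mathbf m)$ above by a multiple of $\gamma$ evaluated at a scaled version of $\mathbf X$ and $\mathbf m$, using convexity and monotonicity. A lower bound comes from \eqref{ass:A3}, and Fatou's lemma then gives lower semicontinuity. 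Differentiability of $F$ follows from \eqref{ass:A5}: convex difference quotients are monotone, and they are dominated by $\ell(\mathbf X-\mathbf m\pm\mathbf e_k)$, which is integrable again by the Orlicz-heart assumption. Monotone/dominated convergence then gives $\nabla F(\mathbf m)=-\mathbb E[\nabla\ell(\mathbf X-\mathbf m)]$.

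\textbf{Existence.} First, Slater's condition holds. Since $\inf\ell<0$, there is $\mathbf x_0$ with $\ell(\mathbf x_0)<0$. By \eqref{ass:A1}, choosing $\mathbf m=\mathbf m_0+t\mathbf 1$ with $t$ large makes $\mathbf X-\mathbf m$ small, and dominated convergence gives $F(\mathbf m)<0$. Hence $\mathcal A(\mathbf X)$ has nonempty interior.

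Next, $R(\mathbf X)>-\infty$. By \eqref{ass:A3} and Jensen, $F(\mathbf m)\ge \sum_k\mathbb E[X_k]-\sum_k m_k-c$, so every acceptable $\mathbf m$ satisfies $\sum m_k\ge\sum\mathbb E X_k-c$.

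The main obstacle is attainment, since the feasible set is unbounded in directions $\mathbf u\in\mathcal U$. I would take a minimizing sequence $\mathbf m_n$ and decompose $\mathbf m_n=s_n\mathbf 1/d+\mathbf u_n$ with $\mathbf u_n\in\mathcal U$. The sums $s_n$ are bounded, so it remains to show $(\mathbf u_n)$ is bounded. Suppose $\|\mathbf u_n\|\to\infty$. Normalise $\mathbf v_n=\mathbf u_n/\|\mathbf u_n\|\to\mathbf v\in\mathcal U$. Since $\mathbf v\neq 0$ and its entries sum to zero, it has a strictly negative component $v_j<0$. Combining \eqref{ass:A4} with convexity and \eqref{ass:A3}, we then show that $\ell(\mathbf x-t\mathbf v)$ grows at least linearly in $t$, uniformly in $\mathbf x$ on compacts. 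Concretely, symmetrising over permutations gives $\frac1{d!}\sum_\pi\ell(\pi(\mathbf y))=\ell(\mathbf y)\ge\ell(\bar{\mathbf y}\mathbf 1)$, where $\bar{\mathbf y}\mathbf 1$ has all entries equal to the mean of $\mathbf y$. Along a recession direction of a convex function, the function is either nonincreasing or grows linearly. Monotonicity excludes directions with negative entries from being nonincreasing, so the growth is linear. Fatou's lemma then gives $F(\mathbf m_n)\to\infty$, which contradicts feasibility. A convergent subsequence together with lower semicontinuity of $F$ therefore yields a minimizer $\mathbf m^*$.

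\textbf{F.O.C. and uniqueness.} Slater's condition and convexity give KKT multipliers $\mu\ge0$ with $\mathbf 1-\mu\,\mathbb E[\nabla\ell(\mathbf X-\mathbf m^*)]=0$, so $\mu>0$. Complementary slackness then forces $F(\mathbf m^*)=0$; we set $\lambda^*=\mu$, which is \eqref{eq:KKT_system}. For uniqueness under \eqref{ass:A6}, suppose $\mathbf m^1\neq\mathbf m^2$ are both optimal. Then $\mathbf m^1-\mathbf m^2\in\mathcal U$, because both have the same sum $R(\mathbf X)$. Strict convexity along $\mathcal U$ gives $F\big((\mathbf m^1+\mathbf m^2)/2\big)<0$, with the strict inequality kept under expectation. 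The midpoint is then strictly feasible, so shifting it by $-\varepsilon\mathbf 1$ lowers the objective while staying feasible, which is a contradiction.
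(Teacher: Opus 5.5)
The paper does not prove this theorem itself; it cites Armenti et al., where the statement is in subdifferential form under (A4). So your argument has to stand on its own. Most of it does: Slater's condition, $R(\mathbf X)>-\infty$, the KKT step (with $\mu>0$ and complementary slackness giving $F(\mathbf m^*)=0$), and uniqueness are all sound. For ``characterized'' you should add the easy converse: a solution of the F.O.C.\ with $\lambda^*>0$ minimizes $\sum_k m_k+\lambda^*F(\mathbf m)$ and is therefore optimal.

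\textbf{The gap is attainment.} You claim that $t\mapsto\ell(\mathbf x-t\mathbf v)$ grows linearly for every unit $\mathbf v\in\mathcal U$, because monotonicity supposedly excludes the nonincreasing alternative.
\begin{itemize}
\item Monotonicity says nothing along mixed-sign directions.
\item Your symmetrization only gives $\ell(\mathbf x-t\mathbf v)\ge\ell(\bar x\,\mathbf 1)$, which is a lower bound, not growth.
\end{itemize}
The claim is in fact false under (A1)--(A5). The loss $\ell(\mathbf x)=e^{\sum_k x_k}-1$ is smooth, permutation invariant, satisfies (A3) with $c=0$, and is constant along $\mathcal U$. For Gaussian $\mathbf X$, every $\mathbf m$ with $\sum_k m_k=\log\mathbb E[e^{\sum_k X_k}]$ is an allocation. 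Hence $\frac{R(\mathbf X)}{d}\mathbf 1+n\mathbf v$ is an unbounded minimizing sequence, and the contradiction you derive does not exist. Without (A6) the allocation set can be unbounded, so existence cannot come from bounding minimizing sequences.

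\textbf{What (A4) actually gives.} The zero-sum directions along which $\ell$ does not grow form a subspace on which $F$ is constant. The argument runs as follows.
\begin{itemize}
\item Let $\ell^\infty$ be the recession function of $\ell$. It is sublinear and permutation invariant, and $\ell^\infty(\mathbf y)\ge\sum_k y_k$ by (A3).
\item Monotone convergence of the convex difference quotients gives $F^\infty(\mathbf v)=\ell^\infty(-\mathbf v)$. This also replaces your ``uniformly on compacts plus Fatou'' step, which does not work because $\mathbf X$ is not compactly supported.
\item For $\mathbf v\in\mathcal U$ one has $\sum_\pi\pi(-\mathbf v)=\mathbf 0$. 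Hence $\mathbf v=\sum_{\pi\neq\mathrm{id}}\pi(-\mathbf v)$ and $\ell^\infty(\mathbf v)\le(d!-1)\,\ell^\infty(-\mathbf v)$.
\item Consequently $L:=\{\mathbf v\in\mathcal U:\ell^\infty(-\mathbf v)\le 0\}$ is a linear subspace along which $F$ is constant.
\item Any recession direction $\mathbf v$ of $\mathcal A(\mathbf X)$ with $\sum_k v_k\le 0$ lies in $L$, because $\ell^\infty(-\mathbf v)\ge-\sum_k v_k$.
\end{itemize}
Now project a minimizing sequence onto $L^\perp$; this changes neither feasibility nor the objective. The projected sequence is bounded. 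Otherwise its normalized limit would be a zero-sum recession direction of $\mathcal A(\mathbf X)$ lying in $L^\perp$, that is, a unit vector in $L\cap L^\perp=\{\mathbf 0\}$. Extracting a convergent subsequence and using closedness of $\mathcal A(\mathbf X)$ yields $\mathbf m^*$. This is Rockafellar's criterion that recession directions are directions of constancy. Under (A6) one gets $L=\{\mathbf 0\}$, which recovers your bounded-sequence picture, but only through this argument.
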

{\begin{remark}
Throughout the paper, the gradient $\nabla$ and the Hessian $\nabla^2$ are taken with respect to (w.r.t.)  the allocation vector $\mathbf m$, unless the differentiation variable is explicitly indicated.\end{remark}}
\begin{remark}
Theorem 3.4 in \cite{armenti_multivariate_2018} is formulated in terms of subdifferentials and holds under the weaker standing Assumption \eqref{ass:A4}, without requiring \eqref{ass:A5}. Under Assumption \eqref{ass:A5}, the subdifferential inclusion reduces to the F.O.C. in Theorem \ref{theorem:existence_uniq_allocation}.
\end{remark} 
\begin{remark}[Interchanging Differentiation and Expectation]
{Based on property \eqref{ass:A2}, the loss function $\ell$ is locally Lipschitz on the interior of its effective domain; see \cite[Theorem B]{roberts_another_1974}. Together with Assumption \eqref{ass:A5}, this allows interchanging  differentiation and expectation, yielding  $\nabla \mathbb{E}\brac{\ell\paren{\mathbf{X}-\mathbf{m}}} = \mathbb{E}[\nabla \ell(\mathbf{X}-\mathbf{m})]$.}
\label{rema:interchange_diff_expectation} 
\end{remark}
When extending to the multivariate setting, the uniqueness of risk allocations becomes crucial. Without uniqueness, the total capital requirement may be distributed arbitrarily across components, potentially leading to outcomes that are not acceptable from a regulatory standpoint. For loss functions of the form \eqref{eq:loss_dependence}, we have the following corollary. 
\begin{corollary}[Uniqueness of the optimal allocation]\label{rem:uniqueness_allocation}
Let $\ell$ be defined as in \eqref{eq:loss_dependence} with a strictly convex univariate loss function $h$.
Assume that the coupling term $\Xi$ satisfies \eqref{ass:A4} and that Assumption~\eqref{ass:A5} holds. Then, for every $\mathbf X\in M^\gamma$, the associated multivariate shortfall risk admits a unique optimal allocation $\mathbf m^*$.
\end{corollary}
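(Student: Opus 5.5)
The plan is to reduce the corollary to Theorem~\ref{theorem:existence_uniq_allocation}: check that $\ell$ of the form \eqref{eq:loss_dependence} satisfies its hypotheses, including the strict-convexity part of \eqref{ass:A6}. Existence and the first-order characterization \eqref{eq:KKT_system} then come from the first part of the theorem, and uniqueness from the second part.

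\textbf{Step 1: $\ell$ is a loss function.} I will check \eqref{ass:A1}--\eqref{ass:A3} term by term.
\begin{itemize}
\item Sums of increasing, convex, lower semicontinuous functions keep these properties, and $\alpha\ge 0$, so \eqref{ass:A1} and \eqref{ass:A2} transfer from $h$ and $\Xi$.
\item For \eqref{ass:A3}, I use $h(x_k)\ge x_k-c_h$ and $\alpha\Xi(\mathbf x)\ge \alpha\sum_k x_k-\alpha c_\Xi$. The latter gives $\ell(\mathbf x)\ge(1+\alpha)\sum_k x_k - C$.
\item This is slightly off from the required bound $\ell(\mathbf x)\ge\sum_k x_k-c$ when $\sum_k x_k<0$. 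I expect to handle it with the normalization already built into the examples (the factor $1/(1+\alpha)$ in \eqref{eq:multi_entropy}), or by noting that the coupling term is bounded below, so the linear term can be absorbed.
\item For $\inf\ell<0$, I take $x_k\to-\infty$ and use the one-dimensional infimum of $h$. Where needed I will note the implicit normalizations rather than fight them.
\end{itemize}

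\textbf{Step 2: permutation invariance \eqref{ass:A4}.} The separable part $\sum_k h(x_k)$ is symmetric under any permutation of the coordinates. $\Xi$ is permutation invariant by hypothesis. Hence $\ell$ satisfies \eqref{ass:A4}. Together with \eqref{ass:A5} and $\mathbf X\in M^\gamma$, the first part of Theorem~\ref{theorem:existence_uniq_allocation} gives existence of $\mathbf m^*$ and the F.O.C.

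\textbf{Step 3: strict convexity on $\mathcal U$.} This is the key step. Fix $\mathbf x$, and take $\mathbf m\neq\mathbf m'$ with $\mathbf m-\mathbf m'\in\mathcal U$ and $t\in(0,1)$.
\begin{itemize}
\item Since $\mathbf m\ne\mathbf m'$, some coordinate satisfies $m_k\ne m'_k$.
\item Strict convexity of $h$ gives a strict inequality for $h(x_k+tm_k+(1-t)m'_k)$.
\item All other terms of $\sum_j h$ and the term $\alpha\Xi$ satisfy the weak convexity inequality.
\item Summing gives strict convexity of $\mathbf m\mapsto\ell(\mathbf x+\mathbf m)$ on all of $\mathbb R^d$, hence on $\mathcal U$.
\end{itemize}

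\textbf{Step 4: the sign condition in \eqref{ass:A6}.} The condition $\ell\ge0$ is incompatible with $\inf\ell<0$ in \eqref{ass:A2}. I therefore expect it is only used in the cited theorem through strict convexity along zero-sum directions. I will invoke uniqueness from Theorem~\ref{theorem:existence_uniq_allocation} on the basis of the strict convexity just shown.

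If a direct argument is preferred, the fallback is the following.
\begin{itemize}
\item Suppose $\mathbf m^1\neq\mathbf m^2$ are both optimal. Then they have equal sums, so $\mathbf m^1-\mathbf m^2\in\mathcal U$.
\item Their midpoint has the same sum $R(\mathbf X)$.
\item Taking expectations in the strict inequality (strict on a set of positive probability, in fact everywhere) gives $\mathbb E[\ell(\mathbf X-\bar{\mathbf m})]<0$.
\item By continuity and monotonicity, one can then decrease one coordinate slightly while staying acceptable. This contradicts minimality.
\end{itemize}

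The main obstacle is justifying this last continuity step in the direct argument. The expectation $\mathbf m\mapsto\mathbb E[\ell(\mathbf X-\mathbf m)]$ must be finite and continuous, and I plan to get this from $\mathbf X\in M^\gamma$ and convexity.
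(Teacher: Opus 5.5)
Your proposal is correct and follows the paper's proof: permutation invariance of $\sum_k h(x_k)$ together with \eqref{ass:A4} for $\Xi$, strict convexity of $h$ giving the strict convexity required in \eqref{ass:A6}, and then Theorem~\ref{theorem:existence_uniq_allocation}; your reading of the garbled sign condition in \eqref{ass:A6} and your direct midpoint argument are sound extras. The paper does not verify \eqref{ass:A1}--\eqref{ass:A3} at all and takes \eqref{eq:loss_dependence} to be a loss function by construction, so your Step~1 can be dropped---which is just as well, since your proposed patch for \eqref{ass:A3} via a lower bound on $\Xi$ fails in general (e.g.\ for $\Xi(\mathbf x)=\sum_k x_k$).
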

\begin{proof}
    The argument follows \cite[Proposition~3.8]{armenti_multivariate_2018}. The term $\sum_{k=1}^d h(x_k)$ is permutation-invariant and thus $\ell$ satisfies \eqref{ass:A4} whenever $\Xi$ does. Moreover, strict convexity of $h$ yields the strict convexity property required for uniqueness (Assumption~\eqref{ass:A6}). Hence, the uniqueness conclusion follows from Theorem~\ref{theorem:existence_uniq_allocation}.
\end{proof}
For convenience, we define $g\paren{\mathbf{m}}:= \mathbb{E}\brac{\ell \paren{\mathbf{X}-\mathbf{m}}}$ and collect the primal–dual variables in $\mathbf{z}:=(\mathbf{m},\lambda)$; this notation will be used throughout the following Sections when constructing Fourier–RQMC surrogate estimators for the gradient and Hessian of the Lagrangian.

The F.O.C. in \eqref{eq:KKT_system} can be rewritten as:
\begin{equation}
   \nabla_{\mathbf{z}} \mathcal{L}(\mathbf{z^*}) := \begin{pmatrix}
\lambda^*\,\nabla g\paren{\mathbf{m^*}}-1 \\[3pt]
g\paren{\mathbf{m^*}}
\end{pmatrix}
=0
\label{eq:KKT_FOC_g_form}
\end{equation}
For the numerical analysis in Section~\ref{sec:num_analysis}, {we also make use of the Hessian of the Lagrangian. Under Assumption~\eqref{ass:C1}, it is given by}
\begin{equation}
\nabla^{2}_{\mathbf{z}} \mathcal{L}(\mathbf{z})
:=
\begin{bmatrix}
-\lambda \nabla^2 g(\mathbf{m}) & \nabla g(\mathbf{m}) \\[3pt]
\nabla g(\mathbf{m})^{\!\top} & 0
\end{bmatrix}.
\label{eq:hessian_KKT_system}
\end{equation}
When solving \eqref{eq:KKT_system} numerically, the expectation-based terms 
\(g(\mathbf{m})\), \(\nabla g(\mathbf{m})\), and \(\nabla^2 g(\mathbf{m})\) in \eqref{eq:KKT_FOC_g_form}--\eqref{eq:hessian_KKT_system} 
need to be estimated. Two common approaches are: (i) constructing \emph{deterministic surrogates}
and applying deterministic optimizers \cite{armenti_multivariate_2018} or (ii) using \emph{stochastic approximation} (SA) methods \cite{kaakai_estimation_2024}. In this work,  we
construct our methodology based on the former one. {In particular,} $g(\mathbf m)$, $\nabla g(\mathbf m)$, and $\nabla^2 g(\mathbf m)$ are approximated by
$\hat g^{\mathrm{Fou}}(\mathbf m)$, $\hat g^{\mathrm{Fou}}_{\nabla}(\mathbf m)$, and $\hat g^{\mathrm{Fou}}_{\nabla^2}(\mathbf m)$, respectively, {using Fourier transform representations combined with single-level and multilevel RQMC methods. This will be explained in Sections~\ref{sec:mapping_MSRM_Fourier} and~\ref{sec:Methodology of our Approach}}. Once these surrogates are in place, the MSRM problem in \eqref{eq:deter_optimization_MSRM} becomes a deterministic nonlinear constrained optimization problem in $\mathbf m$.  We then compute $\mathbf{z}^* = \paren{\mathbf{m}^*, \lambda^*}$ by solving \eqref{eq:KKT_FOC_g_form} numerically, for which a natural choice numerical optimizer is SQP \cite[Chapter~18]{nocedal_numerical_2006}. 

Algorithm~\ref{alg:SLSQP_full} summarizes the generic SQP framework used throughout this work to solve the MSRM problem. The framework is built around surrogate-based Fourier–RQMC estimators and incorporates explicit control of statistical and optimization errors.   In our numerical implementation, we adopt a practical variant of this framework, described in Remark~\ref{rema:SLSQP}. Concrete constructions of the estimators and their associated error bounds are provided in the following sections.
\begin{algorithm}[ht!]
\caption{SQP for MSRM Problem}
\small
\label{alg:SLSQP_full}
\begin{algorithmic}[1]
\Require Initial point $\mathbf{z}_1 = (\mathbf{m}_1, \lambda_1)$, surrogates $\hat{g}^{\mathrm{Fou}},\hat{g}^{\mathrm{Fou}}_{\nabla},\hat{g}^{\mathrm{Fou}}_{\nabla^2}$, the prescribed convergence tolerance $\varepsilon$.
\noindent\textbf{Note:} The surrogates are estimated via Single-level Fourier-RQMC (see Section \ref{subsec:sing-RQMC}, Algorithm \ref{alg:Single-level RQMC}) or Multilevel Fourier-RQMC (see Section \ref{subsec:Multilevel Fou-RQMC}, Algorithm \ref{alg:RQMC_Fou_multi}).
\For{each iteration $j = 1, 2, \ldots J$}

    \State \textbf{Step 1: Linearization of the active constraint}
    \Statex Around $\mathbf{m}^{(j)}$, we approximate
    \[
        \hat{g}^{\mathrm{Fou}}(\mathbf{m}^{(j)} + \mathbf{d}^{(j)}) 
        \approx 
        \hat{g}^{\mathrm{Fou}}\paren{\mathbf{m}^{(j)}}
        + \hat{g}^{\mathrm{Fou}}_{\nabla}\paren{\mathbf{m}^{(j)}}^{\!\top} \mathbf{d}^{(j)} 
        \le 0.
    \]

    \State \textbf{Step 2: Formulation of the Quadratic subproblem (QP)}
    \Statex Solve the quadratic program
    \[
    \begin{aligned}
    \min_{\mathbf{d}^{(j)} \in \mathbb{R}^d} \quad 
    & 
      \mathbf{1}^{\!\top} \mathbf{d}^{(j),\top}
      + \tfrac{1}{2} \mathbf{d}^{(j),\top} 
         \lambda^{(j)}  \hat g^{\mathrm{Fou}}_{\nabla^2}\paren{\mathbf{m}^{(j)}}
        \, \mathbf{d}^{(j)} \\[3pt]
    \text{s.t.} \quad 
    & \hat{g}^{\mathrm{Fou}}\paren{\mathbf{m}^{(j)}}
      +  \hat{g}^{\mathrm{Fou}}_{\nabla}\paren{\mathbf{m}^{(j)}}^{\!\top} \mathbf{d}^{(j)} 
      = 0.
    \end{aligned}
    \]
    \State \textbf{Step 3: Solve the QP problem}
    \Statex The pair $\Delta \mathbf{z}^{(j)} :=(\mathbf{d}^{(j)}, p^{(j)})$ satisfies the KKT system:
    \begin{equation}
    \underbrace{
    \begin{bmatrix}
       -\lambda^{(j)}  \hat g^{\mathrm{Fou}}_{\nabla^2}\paren{\mathbf{m}^{(j)}}& \hat{g}^{\mathrm{Fou}}_{\nabla}\paren{\mathbf{m}^{(j)}}^{\!\top} \\[3pt]
        \hat{g}^{\mathrm{Fou}}_{\nabla}\paren{\mathbf{m}^{(j)}}& 0
    \end{bmatrix}}_{:=\hat{\mathcal{L}}_{\nabla_{\mathbf{z}}^2}^{\mathrm{Fou}}\paren{\mathbf{z}^{(j)}}}
    \begin{bmatrix}
        \mathbf{d}^{(j)} \\[3pt]
        p^{(j)}
    \end{bmatrix}
    =
    \underbrace{
    \begin{bmatrix}
        \mathbf{1} - \lambda^{(j)} \hat{g}^{\mathrm{Fou}}_{\nabla}\paren{\mathbf{m}^{(j)}} \\[3pt]
        -\hat{g}^{\mathrm{Fou}}\paren{\mathbf{m}^{(j)}}
    \end{bmatrix}}_{:=\hat{\mathcal{L}}_{\nabla_{\mathbf{z}}}^{\mathrm{Fou}}\paren{\mathbf{z}^{(j)}}}.
    \label{eq:Equality_KKT_system}
    \end{equation}

    \State \textbf{Step 4: Line search and update}
    \Statex Determine a step size $\alpha_j \in (0,1]$ via a backtracking line search 
    using an appropriate merit function 
    (see \cite[Section~18.3]{nocedal_numerical_2006}), 
    and update:
    \begin{equation}
        \mathbf{z}^{(j+1)}
        = \mathbf{z}^{(j)} + \alpha^{(j)} \Delta\mathbf{z}^{(j)}
    \label{eq:update_zk_slsqp}
    \end{equation}
    \State \textbf{Step 5: Convergence check} 
    \If{$\abs{\Delta \mathbf{z}^{(j)}} \leq \varepsilon_{\mathrm{opt}}$, with $\varepsilon_{\mathrm{opt}} \leq \frac{\varepsilon}{2}$}
        \State \textbf{stop.}
    \EndIf
\EndFor
\end{algorithmic}
\end{algorithm}   

\begin{remark}[SLSQP]
In the numerical experiments, Algorithm \ref{alg:SLSQP_full} is realized using Sequential Least Squares Programming (SLSQP), a standard quasi-Newton variant of the SQP methodology. A key advantage of SLSQP is that it does not require explicit evaluation of exact second-order derivative information at every iteration. Instead, it relies on a quasi-Newton approximation (e.g., BFGS; see  \cite[Chapters~6 and~18]{nocedal_numerical_2006}) to capture the second-order information associated with the SQP subproblem. More precisely, SLSQP maintains a symmetric matrix 
$\hat{\mathbf{B}}^{\mathrm{Fou}}\paren{\mathbf{m}^{(j)}}\in\mathbb{R}^{d\times d}$ which serves as an approximation of the Hessian of the Lagrangian. The initialization is typically chosen as  $\hat{\mathbf{B}}^{\mathrm{Fou}}\paren{\mathbf{m}^{(1)}}=\boldsymbol I_d$ (or a diagonal approximation).
At iteration $j$, the update takes the form
\[
\hat{\mathbf{B}}^{\mathrm{Fou}}\paren{\mathbf{m}^{(j+1)}}
=
\hat{\mathbf{B}}^{\mathrm{Fou}}\paren{\mathbf{m}^{(j)}}
+
\frac{\mathbf{y}^{(j)} \mathbf{y}^{(j),\top}}{\mathbf{y}^{(j),\top} \mathbf{s}^{(j)}}
-
\frac{\hat{\mathbf{B}}^{\mathrm{Fou}}\paren{\mathbf{m}^{(j)}}\,\mathbf{s}^{(j)} \mathbf{s}^{(j),\top}\,\hat{\mathbf{B}}^{\mathrm{Fou}}\paren{\mathbf{m}^{(j)}}}
{\mathbf{s}^{(j),\top}\hat{\mathbf{B}}^{\mathrm{Fou}}\paren{\mathbf{m}^{(j)}}\,\mathbf{s}^{(j)}},
\]
where
\[
\mathbf{s}^{(j)} := \mathbf{m}^{(j+1)}-\mathbf{m}^{(j)},
\qquad
\mathbf{y}^{(j)} := \lambda^{(j+1)}\hat g^{\mathrm{Fou}}_{\nabla}\!\paren{\mathbf{m}^{(j+1)}}
- \lambda^{(j)} \hat g^{\mathrm{Fou}}_{\nabla}\!\paren{\mathbf{m}^{(j)}}.
\]
\label{rema:SLSQP}
\end{remark}

\section{Fourier Representations of the MSRM Problem}\label{sec:mapping_MSRM_Fourier}
In this section we derive Fourier-domain representations of the MSRM objective, its gradient, and its Hessian along the allocation trajectory. Our approach extends the framework of \cite{drapeau_fourier_2014}  to the multivariate setting. Although the Fourier methodology in \cite{bayer_optimal_2023,bayer_quasi-monte_2025} was developed in option pricing, we adapt it here to risk measurement. We first introduce the required notation and integrability assumptions, and then obtain a unified Fourier representation in  Corollary~\ref{coro:Multivariate_Fourier_pricing}.
\begin{notation}\label{notation:X_and_corresponding_CF}\
\begin{itemize}
    
    \item For $\mathbf{y} \in \mathbb{C}^d$, $
    \Phi_{\mathbf{X}}(\mathbf{y}) := \mathbb{E} \left[ e^{\mathrm{i} \langle \mathbf{y}, \mathbf{X} \rangle} \right],
    $ denotes the {joint extended characteristic function {(CF)}} of $\mathbf{X}$. Here $\langle.,.\rangle$ denotes the inner product on $\mathbb{R}^d$ extended bi-linearly to $\mathbb{C}^d$, {i.e.,} for $\mathbf{w,t} \in \mathbb{C}^d, \langle \mathbf{w}, \mathbf{t} \rangle = \sum_{k=1}^d w_k t_k $
   
   \item {For $\mathbf{y} \in \mathbb{C}^d$, the function $\hat{f}(\mathbf{y})
:=  \int_{\mathbb{R}^d}
e^{-\mathrm{i}\,\langle \mathbf{y}, \mathbf{x} \rangle}\,
f(\mathbf{x})\,\mathrm{d}\mathbf{x}$ represents the (extended) Fourier transforms of the function $f$.}
\item The Fourier transform of $\nabla_\mathbf{x}f$ is taken componentwise, i.e.,
$\hat{f}_{\nabla_\mathbf{x}}(\mathbf{y})
=
\big(\hat f_{\partial_{x_1}}(\mathbf{y}),\dots,\hat f_{\partial_{x_d}}(\mathbf{y})\big)$.
Similarly, the Fourier transform of $\nabla_\mathbf{x}^2f$ is taken entrywise:
$\hat{f}_{\nabla_\mathbf{x}^2}(\mathbf{y})
=
\paren{\hat f_{\partial^2_{x_ix_j}}(\mathbf{y})}_{i,j=1}^d$.
\item $\mathrm{i}$ denotes the imaginary unit number, $\Re[\cdot]$ and $\Im[\cdot]$ are the real and imaginary parts of a complex number, respectively.
\item $\boldsymbol{\Theta}_\ell$ denotes the parameters of the loss function
$\ell$.
\item For $\nu=0,1,2$, define $\ell^{(\nu)}:=\ell,\,\nabla\ell,\,\nabla^2\ell$, respectively, and let $\widehat{\ell}^{(\nu)}:=\widehat{\ell},\,\widehat{\ell}_{\nabla},\,\widehat{\ell}_{\nabla^2}$ denote their corresponding Fourier transforms.
Let $\delta_{\ell}^{(\nu)}$ denote the strip of analyticity of $\widehat{\ell}^{(\nu)}$, i.e.,
\[
\delta_{\ell}^{(\nu)}:=\{\mathbf K^{(\nu)}\in\mathbb R^d \mid \mathbf x \mapsto e^{\langle \mathbf K^{(\nu)},\mathbf x\rangle}\ell^{(\nu)}(\mathbf{x})\in L^1(\mathbb{R}^d)\}.
\]
\item $\delta_X := \{ \mathbf{K} \in \mathbb{R}^d : \mathbb{E}[e^{\langle \mathbf{K},\mathbf{X}\rangle}] < \infty \}$.

\end{itemize}
\end{notation}
\begin{assumption}[Admissible contour shifts]
For each $\nu\in\{0,1,2\}$, there exists $\mathbf K^{(\nu)}\in\delta_X$ such that $
\mathbf w \mapsto \Phi_{\mathbf{X}}(\mathbf{w}+\mathrm{i} \mathbf K^{(\nu)})\,\widehat{\ell}^{(\nu)}(\mathbf{w}+\mathrm{i} \mathbf K^{(\nu)})
\in L^1(\mathbb{R}^d)$. We denote the corresponding admissible set by
\[
\delta_K^{(\nu)}
:=\left\{\mathbf K\in\delta_X:\ 
\Phi_{\mathbf{X}}(\cdot+\mathrm{i} \mathbf K)\,\widehat{\ell}^{(\nu)}(\cdot+\mathrm{i} \mathbf K)\in L^1(\mathbb{R}^d)
\right\},
\]
and assume $\delta_K^{(\nu)}\neq\varnothing$.
 \label{ass:A7}
\end{assumption}
\begin{remark}
By construction, $\delta_K^{(\nu)} \subseteq \delta_X \cap \delta_\ell^{(\nu)}$.
\end{remark}

The corollary and its proof below are adapted for the MSRM problem, which is based on \cite[Proposition 2.4]{bayer_optimal_2023}.
\begin{corollary}[Fourier representations for MSRM problem]\
\label{coro:Multivariate_Fourier_pricing}
Suppose Assumption \ref{ass:A7} holds. Then, for  any choice of 
\(\mathbf{K}^{(\nu)} \in \delta_{K}^{(\nu)}\), the Fourier-based representation for {\(g(\mathbf{m})\), 
\(\nabla g(\mathbf{m})\) and \(\nabla^2 g(\mathbf{m})\) in \eqref{eq:KKT_FOC_g_form}-\eqref{eq:hessian_KKT_system}} are given in unified form by:
\begin{align}
\hat g^{(\nu),\mathrm{Fou}}(\mathbf{m})
&:= (2\pi)^{-d} \,
\Re\!\left[
\int_{\mathbb{R}^d} 
e^{\langle \mathbf{K}^{(\nu)} - \mathrm{i} \mathbf{w},\, \mathbf{m} \rangle}
\, \Phi_{\mathbf{X}}\paren{\mathbf{w} + \mathrm{i}\mathbf{K}^{(\nu)}}
\, \widehat{\ell}^{(\nu)}\paren{\mathbf{w} + \mathrm{i}\mathbf{K}^{(\nu)}}
\, \mathrm{d}\mathbf{w}
\right], \quad \nu=0,1,2.\label{eq:g_fou}
\end{align}
\end{corollary}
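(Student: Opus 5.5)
The plan is to write $g^{(\nu)}(\mathbf m):=\mathbb E[\ell^{(\nu)}(\mathbf X-\mathbf m)]$ (so $g^{(0)}=g$, and $g^{(1)}=\nabla g$, $g^{(2)}=\nabla^2 g$ by Remark~\ref{rema:interchange_diff_expectation}, applied once more for the Hessian). Up to the sign convention for derivatives in $\mathbf m$ versus $\mathbf x$, which only flips signs of $\ell^{(1)}$, the representation is then a Parseval/Plancherel-type identity after a contour shift, following \cite[Proposition~2.4]{bayer_optimal_2023}. I would argue entrywise for $\nu=1,2$, so it suffices to treat a scalar function $f=\ell^{(\nu)}$ (or one of its components) with damping vector $\mathbf K=\mathbf K^{(\nu)}\in\delta_K^{(\nu)}$.

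\textbf{Step 1 (damping).} I write $\mathbb E[f(\mathbf X-\mathbf m)]=\int_{\mathbb R^d} f(\mathbf x-\mathbf m)f_{\mathbf X}(\mathbf x)\,d\mathbf x$. Then I introduce the damped function $f_{\mathbf K}(\mathbf y):=e^{\langle\mathbf K,\mathbf y\rangle}f(\mathbf y)$. This function lies in $L^1(\mathbb R^d)$ because $\mathbf K\in\delta_\ell^{(\nu)}$, and its Fourier transform is $\widehat f_{\mathbf K}(\mathbf w)=\widehat f(\mathbf w+\mathrm i\mathbf K)$ by the definition of the extended transform.

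\textbf{Step 2 (inversion).} Fourier inversion gives $f_{\mathbf K}(\mathbf y)=(2\pi)^{-d}\int e^{\mathrm i\langle\mathbf w,\mathbf y\rangle}\widehat f(\mathbf w+\mathrm i\mathbf K)\,d\mathbf w$. I then substitute $\mathbf y=\mathbf X-\mathbf m$ and multiply by $e^{-\langle\mathbf K,\mathbf X-\mathbf m\rangle}$, which yields
\[
f(\mathbf X-\mathbf m)=(2\pi)^{-d}\int e^{\langle\mathbf K-\mathrm i\mathbf w,\mathbf m\rangle}e^{\mathrm i\langle \mathbf w+\mathrm i\mathbf K,\mathbf X\rangle}\,\widehat f(\mathbf w+\mathrm i\mathbf K)\,d\mathbf w,
\]
using $e^{\mathrm i\langle\mathbf w+\mathrm i\mathbf K,\mathbf X\rangle}=e^{\mathrm i\langle\mathbf w,\mathbf X\rangle-\langle\mathbf K,\mathbf X\rangle}$.

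\textbf{Step 3 (expectation).} Taking expectations, I apply Fubini to swap $\mathbb E$ and $\int d\mathbf w$. This produces $\Phi_{\mathbf X}(\mathbf w+\mathrm i\mathbf K)$, which is finite because $\mathbf K\in\delta_X$. Since $g^{(\nu)}$ is real-valued, taking real parts gives \eqref{eq:g_fou}.

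\textbf{Main obstacle: justification.} Pointwise inversion in Step 2 needs $\widehat f(\cdot+\mathrm i\mathbf K)\in L^1$, which is not assumed; only the product with $\Phi_{\mathbf X}$ is integrable by Assumption~\ref{ass:A7}. I would therefore avoid pointwise inversion and use the Parseval route instead. I write
\[
\mathbb E[f(\mathbf X-\mathbf m)]=\int f_{\mathbf K}(\mathbf x-\mathbf m)\,e^{-\langle \mathbf K,\mathbf x-\mathbf m\rangle}f_{\mathbf X}(\mathbf x)\,d\mathbf x .
\]
The damped density $e^{-\langle\mathbf K,\mathbf x\rangle}f_{\mathbf X}(\mathbf x)$ has transform $\Phi_{\mathbf X}(-\mathbf w+\mathrm i\mathbf K)$-type factors, and is in $L^1$ by the $\delta_X$ condition (with the sign convention of $\delta_X$ matched to this damping). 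I then apply the $L^1$–$L^1$ Parseval identity $\int \phi\,\overline{\psi}=(2\pi)^{-d}\int\widehat\phi\,\overline{\widehat\psi}$ to this pair. The identity is valid when one side's transform product is integrable, which is exactly the $L^1$ condition in Assumption~\ref{ass:A7}; if needed, I would approximate by a Gaussian mollification and pass to the limit by dominated convergence using that same integrable majorant. Tracking the sign conventions so that the exponent comes out as $\langle\mathbf K-\mathrm i\mathbf w,\mathbf m\rangle$ is routine bookkeeping.
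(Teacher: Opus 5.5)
Your Steps 1--3 are the paper's proof in Appendix~\ref{Appendix:proof_corro_fourier_pricing}: damp $\ell^{(\nu)}$, invert pointwise via the generalized Fourier inversion theorem, then take expectations with a Fubini interchange that the paper attributes to Assumption~\ref{ass:A7}. Your final argument takes a genuinely different route. You never invert $\widehat\ell^{(\nu)}$ on its own; instead you apply the convolution/Parseval identity to the pair $f_{\mathbf K}(\cdot-\mathbf m)$ and $e^{-\langle\mathbf K,\cdot\rangle}f_{\mathbf X}$, so only the product $\Phi_{\mathbf X}\widehat\ell^{(\nu)}$ ever needs to be integrable.

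Your diagnosis is correct, and it also applies to the paper's Fubini step. The absolute double integral there equals $e^{\langle\mathbf K,\mathbf m\rangle}\,\mathbb E[e^{-\langle\mathbf K,\mathbf X\rangle}]\int|\widehat\ell^{(\nu)}(\mathbf w+\mathrm i\mathbf K)|\,\mathrm d\mathbf w$. It is therefore finite only if $\widehat\ell^{(\nu)}(\cdot+\mathrm i\mathbf K)\in L^1$. That is strictly more than Assumption~\ref{ass:A7}, since $|\Phi_{\mathbf X}(\mathbf w+\mathrm i\mathbf K)|\le\Phi_{\mathbf X}(\mathrm i\mathbf K)$. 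It fails for the QPC loss: $\widehat\ell^{(1)}$ and $\widehat\ell^{(2)}$ contain factors $\theta!/(-K+\mathrm i y)$, which come from the discontinuous damped functions $\theta!\,e^{Kx}\mathbf 1_{\{x>0\}}$.

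So the paper's route is shorter and gives the formula at every $\mathbf m$ at once, but only under that extra integrability. Yours uses exactly Assumption~\ref{ass:A7}, $\mathbf K\in\delta^{(\nu)}_\ell$, and integrability of the damped density. Your sign caveat is also right: $\Phi_{\mathbf X}(\mathbf w+\mathrm i\mathbf K)$ needs $\mathbb E[e^{-\langle\mathbf K,\mathbf X\rangle}]<\infty$, the opposite sign to the paper's definition of $\delta_X$.

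Your route does cost one step you should state explicitly. With $L^1$ data and an integrable transform product, $G(\mathbf m):=e^{-\langle\mathbf K,\mathbf m\rangle}g^{(\nu)}(\mathbf m)$ agrees with the continuous inverse transform of $\widehat G$ only for almost every $\mathbf m$. In your Gaussian-mollification step, dominated convergence with the A7 majorant handles only the Fourier side. On the physical side, $(G*\gamma_\varepsilon)(\mathbf m)\to G(\mathbf m)$, with $\gamma_\varepsilon$ the Gaussian kernel, requires continuity of $g^{(\nu)}$ at $\mathbf m$. That continuity holds here, but it is a genuine ingredient of your argument:
\begin{itemize}
\item for $\nu=0$, $g$ is finite and convex;
\item for $\nu=1$, a differentiable convex $g$ has a continuous gradient;
\item for $\nu=2$, in the paper's examples, it follows because $\mathbf X$ has a density.
\end{itemize}
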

\begin{proof}
    The proof for Corollary \ref{coro:Multivariate_Fourier_pricing} is presented in Appendix \ref{Appendix:proof_corro_fourier_pricing}.
\end{proof}
{In view of the Fourier representations \eqref{eq:g_fou}, we introduce the aggregate integrands}
\begin{equation}
\begin{aligned}
h^{(\nu)}\paren{\mathbf{w};\mathbf{m},  \mathbf{K}^{(\nu)},\boldsymbol{\Theta}}
:= (2\pi)^{-d}
    e^{\langle \mathbf{K}^{(\nu)}-\mathrm{i}\mathbf{w},\,\mathbf{m} \rangle}
    \,\Phi_{\mathbf{X}}\paren{\mathbf{w}+\mathrm{i}\mathbf{K}^{(\nu)}}\,  \widehat{\ell}^{(\nu)}\paren{\mathbf{w}+\mathrm{i}\mathbf{K}^{(\nu)}},\quad \nu = 0,1,2.
\end{aligned}
\label{eq:aggregate_integrands}
\end{equation}
for $\mathbf{w}\in\mathbb{R}^d$, {where $\mathbf{K}^{(\nu)}\in\delta_K^{(\nu)}$, and 
$\boldsymbol{\Theta}:=(\boldsymbol{\Theta}_{\mathbf{X}},\boldsymbol{\Theta}_{\ell})$}.

Motivated by the structure of the multivariate loss functions in Example~\ref{exam:cross_dependent_losses}, which combine marginal terms with dependence components of finite interaction order $q_{\ell}$, we exploit this structure to decompose the aggregate Fourier integrands into componentwise contributions indexed by interaction order and coordinate subsets. This decomposition is not introduced for dimension adaptivity but rather as a fundamental tool for the construction of our numerical methods and their subsequent analysis, allowing expectations, gradients, and Hessians to be expressed as finite sums of lower-dimensional Fourier integrals. The decomposition is formalized in the following notation.
\begin{notation}[Component selection and componentwise integrands]\label{not:component_integrands} Let $q_{\ell}\in\mathbb N, q_{\ell} \leq d$ denote the maximal interaction order appearing in the dependence structure of the loss function, $\ell$; for instance, $q_{\ell}=2$ corresponds to pairwise interactions, $q_{\ell}=3$ to triplet interactions, and so on. Let $\mathcal{I}_{q_{\ell}} \subset \{1,\dots,q_{\ell}\}$. For each $ k \in \mathcal{I}_{q_{\ell}}$, let $\mathcal I_k := \{ \mathbf{p} = (p_1,\dots,p_k) : 1 \le p_1 < \cdots < p_k \le d \}$ denote the collection of all $k$-dimensional coordinate subsets. For each $\mathbf{p} \in \mathcal I_k$, let $P_{k,p} \in \mathbb R^{k \times d}$ denote the corresponding coordinate selection matrix, whose $r$-th row equals the canonical basis vector $e_{p_r}^\top$, \footnote{In particular, its entries are given by 
$(P_{k,p})_{rj} = \begin{cases}
1, & \text{if } j = p_r,\\
0, & \text{otherwise},
\end{cases}
\quad r=1,\dots,k,\; j=1,\dots,d.$} and define the projected vectors $$\mathbf{m}_{k,p} := P_{k,p} \; \mathbf{m} \in \mathbb R^k,
 \qquad \mathbf{X}_{k,p} := P_{k,p} \; \mathbf{X} \in \mathbb R^k.$$
Then, for $\nu = 0,1,2$,  the Fourier-based componentwise integrands are defined by
\begin{equation}
h_{k,p}^{(\nu)}\paren{\mathbf{u};\mathbf{m}_{k,p},\mathbf{K}_{k,p}^{(\nu)},,\boldsymbol{\Theta}_{k,p}}
:= (2\pi)^{-k}
    \exp\!\bigl(\langle \mathbf{K}_{k,p}^{(\nu)}-\mathrm{i}\mathbf{u},\,\mathbf{m}_{k,p}\rangle\bigr)\,
\Phi_{\mathbf{X}_{k,p}}\paren{\mathbf{u}+\mathrm{i}\mathbf{K}_{k,p}^{(\nu)}}\,
\hat{\ell}^{(\nu)}_{k,p}\paren{\mathbf{u}+\mathrm{i}\mathbf{K}_{k,p}^{(\nu)}}, \quad \mathbf{u} \in \mathbb{R}^k 
\label{eq:loss_component_integrands}
\end{equation}
where $\boldsymbol{\Theta}_{k,p}:= (\boldsymbol{\Theta}_{\mathbf{X}_{k,p}},\boldsymbol{\Theta}_{\ell_{k,p}})$ collects the corresponding parameters for component $(k,p)$, and $\mathbf{K}_{k,p} \in \delta_{K_{k,p}}^{(\nu)}:= \delta_{X_{k,p}}^{(\nu)} \cap \delta_{l_{k,p}}^{(\nu)}$.

The aggregated loss function,  integrands in~\eqref{eq:aggregate_integrands} and integrals in~\eqref{eq:g_fou} admit the following finite decomposition: 
\begin{equation}
\begin{aligned}
\ell^{(0)}
&= \sum_{k\in\mathcal I_{q_{\ell}}}\sum_{\mathbf p\in\mathcal I_k} \ell^{(0)}_{k,p},
&
h^{(0)}
&= \sum_{k\in\mathcal I_{q_{\ell}}}\sum_{\mathbf p\in\mathcal I_k} h^{(0)}_{k,p},
&
g^{(0),\mathrm{Fou}}
&= \sum_{k\in\mathcal I_{q_{\ell}}}\sum_{\mathbf p\in\mathcal I_k} g^{(0),\mathrm{Fou}}_{k,p},
\\
\ell^{(1)}
&= \sum_{k\in\mathcal I_{q_{\ell}}}\sum_{\mathbf p\in\mathcal I_k} P_{k,p}^{\top}\, \ell^{(1)}_{k,p},
&
h^{(1)}
&= \sum_{k\in\mathcal I_{q_{\ell}}}\sum_{\mathbf p\in\mathcal I_k} P_{k,p}^{\top}\, h^{(1)}_{k,p},
&
g^{(1),\mathrm{Fou}}
&= \sum_{k\in\mathcal I_{q_{\ell}}}\sum_{\mathbf p\in\mathcal I_k} P_{k,p}^{\top}\, g^{(1),\mathrm{Fou}}_{k,p},
\\ 
\ell^{(2)}
&= \sum_{k\in\mathcal I_{q_{\ell}}}\sum_{\mathbf p\in\mathcal I_k} P_{k,p}^{\top}\, \ell^{(2)}_{k,p}\,P_{k,p},
&
h^{(2)}
&= \sum_{k\in\mathcal I_{q_{\ell}}}\sum_{\mathbf p\in\mathcal I_k} P_{k,p}^{\top}\, h^{(2)}_{k,p}\,P_{k,p},
&
g^{(2),\mathrm{Fou}}
&= \sum_{k\in\mathcal I_{q_{\ell}}}\sum_{\mathbf p\in\mathcal I_k} P_{k,p}^{\top}\, g^{(2),\mathrm{Fou}}_{k,p}\,P_{k,p}. 
\end{aligned}
\label{eq:aggregate_components}
\end{equation}
\end{notation}
 {The decomposition in Notation \ref{not:component_integrands} makes explicit that, under finite interaction order $q_{\ell}$, the Fourier representations of the MSRM objective, its gradient, and its Hessian can be written as finite sums of lower-dimensional Fourier integrals, each involving at most $k \le q_{\ell}$ coordinates. This representation will be exploited in the subsequent sections to construct numerical schemes whose complexity is governed by the interaction order rather than the full dimension $d$ of the loss vector.}
\begin{remark}
    For the choice of loss functions in Example~\ref{exam:cross_dependent_losses}, we restrict attention to the admissible dimension set $\mathcal{I}_{q_{\ell}} = \{1,q_{\ell}\}$, where $q_{\ell}=2$ for \eqref{eq:multi_entropy} and  $q_{\ell}=d$ for \eqref{eq:multi_qcl}.
\label{rema:admissible_dimension}
\end{remark}
For the loss functions in Example \ref{exam:cross_dependent_losses}, the componentwise Fourier transforms (see Appendix~\ref{Appendix: Fourier_transform_loss}) and their strips of analyticity, $\delta_{\ell_{k,p}}^{(\nu)}$, can be characterized explicitly (see Table~\ref{tab:strip_analyticity_loss}). Moreover, for the loss families in Example \ref{exam:cross_dependent_losses} the admissible damping domains for $\ell, \nabla \ell$, and $\nabla^2\ell$ coincide; hence we use a single domain for all $\nu\in\{0,1,2\}$. For loss components whose Fourier representations rely on a domain decomposition, admissible damping parameters are characterized componentwise on the corresponding one-sided domains. In particular, for the exponential loss this leads naturally to admissible pairs $(K^-_{k,p}, K^+_{k,p})$, as summarized in Table \ref{tab:strip_analyticity_loss}.
\begin{table}[H]
\centering
\renewcommand{\arraystretch}{1.8} 
\begin{tabular}{|c|c|}
\hline
\textbf{Loss function} & $\delta_{ \ell_{k,p}}^{(\nu)},  {\nu = {0,1,2}}$\\ \hline
\textbf{Exponential} 
& $\left\{
\left( K^{-}_{k,p},\, K^{+}_{k,p} \right) \in \mathbb{R}^k \times \mathbb{R}^k
\;\middle|\;
K^{-}_{k,p} < \beta < K^{+}_{k,p},
\quad 
\mathbf{p} \in \mathcal{I}_k,\;
k \in \mathcal{I}_{q_{\ell}}
\right\}$
 \\ \hline 
\textbf{QPC} 
& $\left\{
\mathbf{K}_{k,p} \in \mathbb{R}^k :
\mathbf{K}_{k,p} < 0,\;
\mathbf p \in \mathcal I_k,\;
k \in \mathcal I_{q_{\ell}}
\right\}$ \\ \hline 
\end{tabular} 
\caption{Strips of analyticity for $\hat{\ell}_{k,p}^{(\nu)}$.}
\label{tab:strip_analyticity_loss}
\end{table}
For the loss vector  $\mathbf{X}$, we focus on continuous distributional families admitting closed-form extended characteristic functions, namely Gaussian and Normal Inverse Gaussian (NIG). Appendix \ref{appendix:loss_vector_distr} provides the corresponding parameterizations and extended characteristic functions of the marginals $\mathbf X_{k,p}$, while Table \ref{tab:strip_analyticity} summarizes the associated analyticity domains ${\delta_{X_{k,p}}}$.
\begin{table}[H]
\centering
\renewcommand{\arraystretch}{1.5}
\begin{tabular}{|c|c|}
\hline
\textbf{Distribution} & ${\delta_{X_{k,p}}}$ \\ \hline
\textbf{Gaussian}& $\{\, {\mathbf{K}_{k,p}} \in \mathbb{R}^k,\ \mathbf{p} \in \mathcal{I}_k,\ k \in \mathcal{I}_{q_{\ell}} \,\}$ \\ \hline
\hline
\textbf{NIG} & $\left\{ {\mathbf{K}_{k,p}} \in \mathbb{R}^k, \ \left( \alpha_{k,p}^2 - \langle (\boldsymbol \beta_{k,p} - \mathbf{K}_{k,p}), \ \boldsymbol{\Gamma}_{k,p} (\boldsymbol \beta_{k,p} - \mathbf{K}_{k,p}) \rangle \right) > 0 ,\ \mathbf{p} \in \mathcal{I}_k,\ k \in \mathcal{I}_{q_{\ell}} \,\right\}$ \\ \hline
\end{tabular}
\caption{Strip of analyticity for the extended CF $\boldsymbol \Phi _{\mathbf{X}_{k,p}}$.}
\label{tab:strip_analyticity}
\end{table}
The choice of damping parameters $\mathbf{K}_{k,p}^{(\nu)}$ is crucial to control the integrability and smoothness of component integrands. As shown in Figure~\ref{fig:1D_choice_of_damping}, an inappropriate choice of $\mathbf{K}_{k,p}^{(\nu)}$ can produce ill-behaved integrands and might destabilize the numerical optimization procedure in Algorithm \ref{alg:SLSQP_full}. We therefore select  $\mathbf{K}_{k,p}^{(\nu)}$ using the optimal damping rule developed in Section~\ref{subsec:opti_damp}.
\enlargethispage{2\baselineskip} 
\begin{figure}
\centering
\begin{subfigure}[t]{0.5\linewidth}
  \centering
\includegraphics[width=1\linewidth]{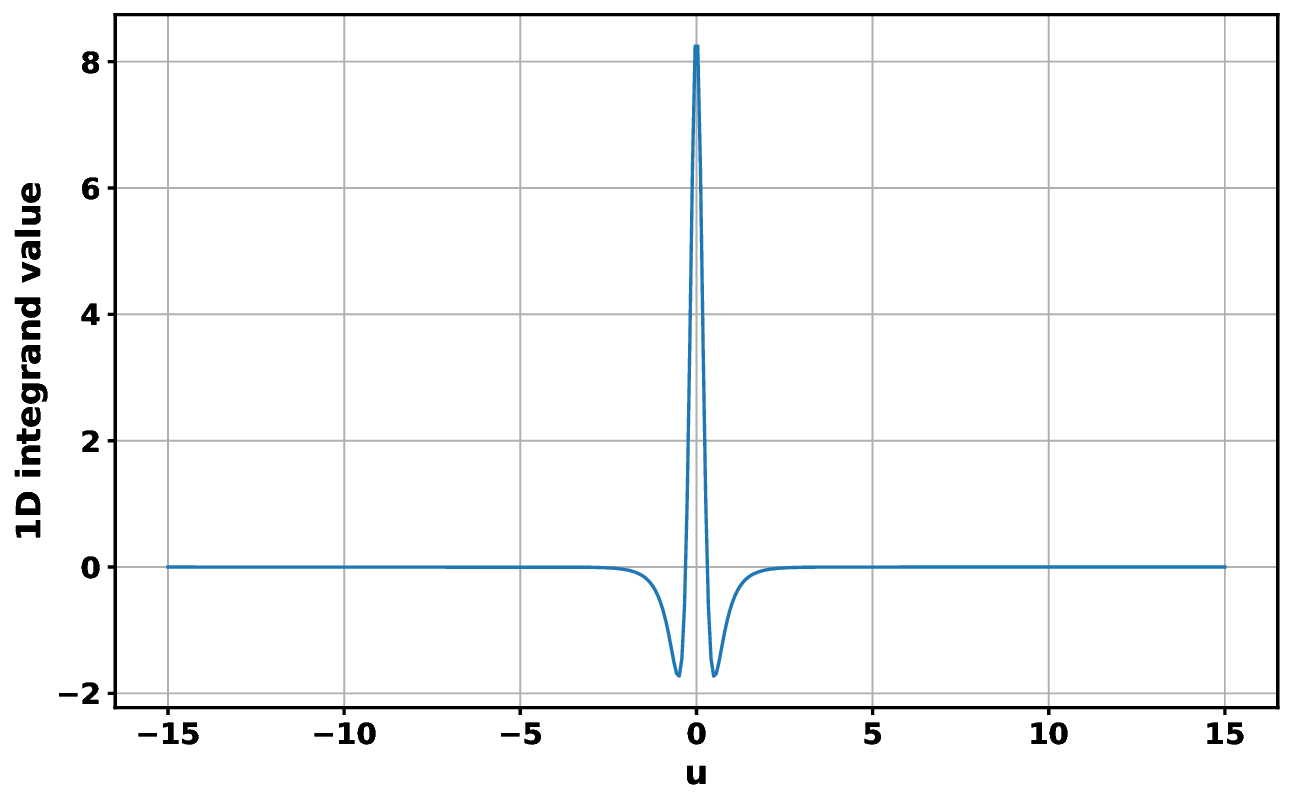}
  \caption{${\mathbf{K}_{1,1}^{(0)}}=0.5$}
\end{subfigure}\hfill
\begin{subfigure}[t]{0.5\linewidth}
  \centering
\includegraphics[width=1\linewidth]{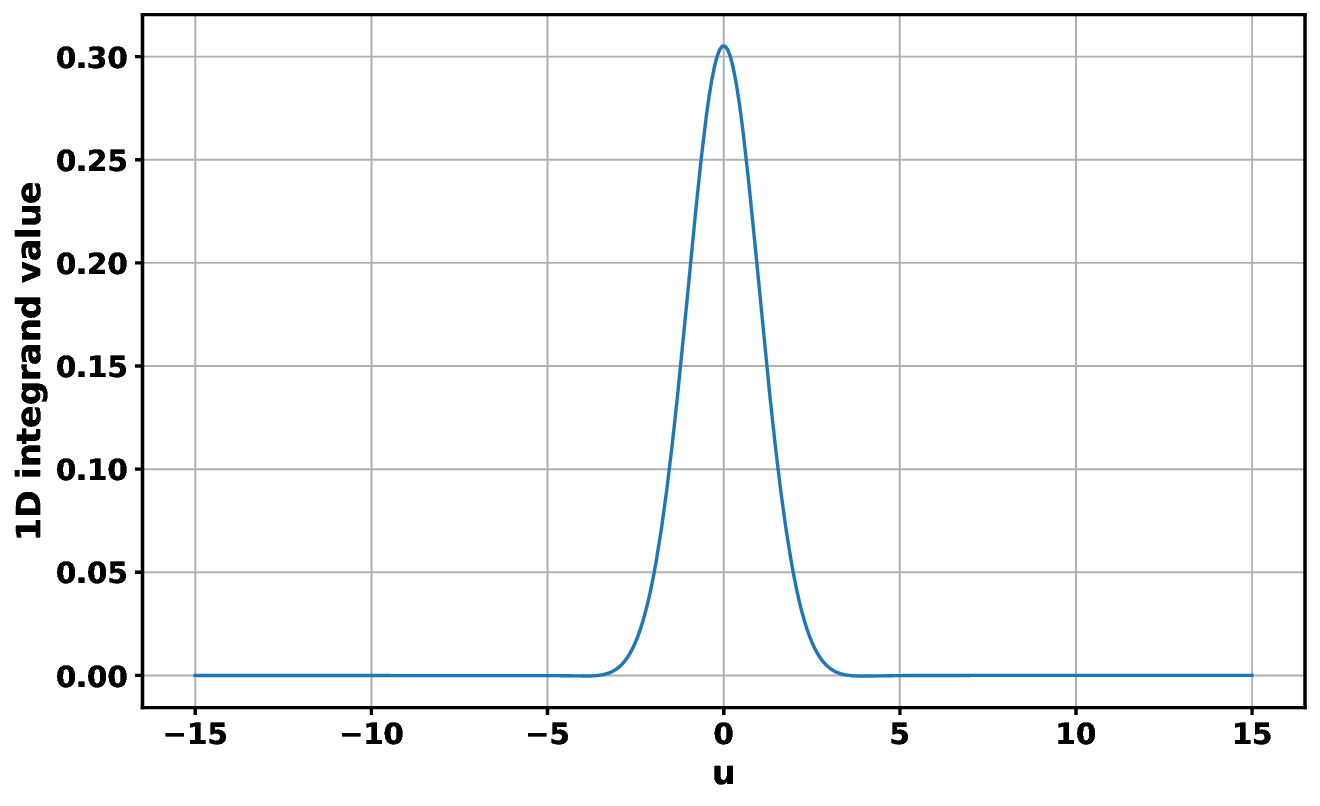}
  \caption{${\mathbf{K}_{1,1}^{(0)}}=2.5$}
\end{subfigure}
\caption{\small Effect of the damping parameter ${\mathbf{K}_{1,1}^{(0)}}$ on the QPC loss integrand component $h^{(0)}_{1,1}$ for a $10$-dimensional Gaussian loss vector (Example in Section \ref{subsec:10D_gaussian_qcl}).}
\label{fig:1D_choice_of_damping}
\end{figure}
\subsection{Optimal Damping Rule}
\label{subsec:opti_damp}
We adopt the optimal damping rule from \cite{bayer_optimal_2023}, which is originally developed in an option-pricing context, and extend it to the MSRM setting. Specifically, we select damping vectors $\mathbf{K}_{k,p}^{(\nu)}$ and update them along the optimization trajectory. From this section onward, the index $(\nu)$ is understood to take values $\nu=0,1$, unless stated otherwise.

Corollary~\ref{coro:optimal_damping} provides the derivation of the optimal damping rule for the component $(k,p)$.

\begin{corollary}[Damping Rule]\label{coro:optimal_damping}
For component integrands $h_{k,p}^{(\nu)}$ defined in \eqref{eq:loss_component_integrands}  with $\mathbf{K}_{k,p}^{(\nu)} \in \delta_{K_{k,p}}^{(\nu)}$, we have
\begin{equation}
\begin{aligned}\label{eq:optimal_damp_loss}
   \mathbf{K}^{(\nu),*}_{k,p}\paren{\mathbf{m}_{k,p},\boldsymbol\Theta_{k,p}}
&:=\argmin_{\mathbf K_{k,p}^{(\nu)}\in\delta_{K_{k,p}}^{(\nu)}}
\sup_{\mathbf u\in\mathbb R^k}\,
\Big|h_{k,p}^{(\nu)}\big(\mathbf u;\mathbf m_{k,p},\mathbf K_{k,p}^{(\nu)},\boldsymbol\Theta_{k,p}\big)\Big|\\
&=\argmin_{\mathbf K_{k,p}^{(\nu)}\in\delta_{K_{k,p}}^{(\nu)}}
\Big|h_{k,p}^{(\nu)}\big(\mathbf 0_{\mathbb R^k};\mathbf m_{k,p},\mathbf K_{k,p}^{(\nu)},\boldsymbol\Theta_{k,p}\big)\Big|\\
&=\argmin_{\mathbf K_{k,p}^{(\nu)}\in\delta_{K_{k,p}}^{(\nu)}}
\upsilon_{k,p}^{(\nu)}\!\left(\mathbf m_{k,p},\mathbf K_{k,p}^{(\nu) },\boldsymbol\Theta_{k,p}\right), 
\end{aligned}
\end{equation}
where
\begin{equation}
\begin{aligned}
\upsilon_{k,p}^{(\nu)}\!\left(\mathbf m_{k,p},\mathbf K_{k,p}^{(\nu)},\boldsymbol\Theta_{k,p}\right)
&:=\ln\Big|h_{k,p}^{(\nu)}\big(\mathbf 0_{\mathbb R^k};\mathbf m_{k,p},\mathbf K_{k,p}^{(\nu)},\boldsymbol\Theta_{k,p}\big)\Big|\\
&= -k\ln(2\pi)
+\langle \mathbf K_{k,p}^{(\nu)}, \mathbf m_{k,p}\rangle
+\ln\Big|\mathbf\Phi_{\mathbf X_{k,p}}\!\big(\mathrm i\,\mathbf K_{k,p}^{(\nu)}\big)\Big|
+\ln\Big|\hat\ell_{k,p}^{(\nu)}\!\big(\mathrm i\,\mathbf K_{k,p}^{(\nu)}\big)\Big|.
\end{aligned}
\label{eq:K_m_relate}
\end{equation}
\end{corollary}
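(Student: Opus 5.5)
The three lines of \eqref{eq:optimal_damp_loss} are three separate claims. The first is simply the definition of the damping rule. The second says that the supremum over $\mathbf u$ of $|h_{k,p}^{(\nu)}|$ is attained at $\mathbf u=\mathbf 0$. The third says that minimizing $|h_{k,p}^{(\nu)}(\mathbf 0;\cdot)|$ is the same as minimizing its logarithm $\upsilon_{k,p}^{(\nu)}$, where $\upsilon_{k,p}^{(\nu)}$ has the stated explicit form.

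\textbf{Step 1 (sup at the origin).} For fixed $\mathbf K=\mathbf K^{(\nu)}_{k,p}$ in the admissible strip, the modulus of \eqref{eq:loss_component_integrands} factors. The term $|e^{\langle \mathbf K-\mathrm i\mathbf u,\mathbf m_{k,p}\rangle}|=e^{\langle\mathbf K,\mathbf m_{k,p}\rangle}$ does not depend on $\mathbf u$. For the characteristic function, $\mathbf K\in\delta_X$ lets us bound
\[
|\Phi_{\mathbf X_{k,p}}(\mathbf u+\mathrm i\mathbf K)|=\bigl|\mathbb E[e^{\mathrm i\langle\mathbf u,\mathbf X_{k,p}\rangle}e^{-\langle\mathbf K,\mathbf X_{k,p}\rangle}]\bigr|\le \mathbb E[e^{-\langle\mathbf K,\mathbf X_{k,p}\rangle}]=\Phi_{\mathbf X_{k,p}}(\mathrm i\mathbf K).
\]
The loss factor is handled the same way. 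Writing $\hat\ell^{(\nu)}_{k,p}(\mathbf u+\mathrm i\mathbf K)=\int e^{-\mathrm i\langle\mathbf u,\mathbf x\rangle}e^{\langle\mathbf K,\mathbf x\rangle}\ell^{(\nu)}_{k,p}(\mathbf x)\,d\mathbf x$, the triangle inequality gives $|\hat\ell^{(\nu)}_{k,p}(\mathbf u+\mathrm i\mathbf K)|\le \int e^{\langle\mathbf K,\mathbf x\rangle}|\ell^{(\nu)}_{k,p}(\mathbf x)|\,d\mathbf x$.

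This bound equals $|\hat\ell^{(\nu)}_{k,p}(\mathrm i\mathbf K)|$ only when the damped loss $e^{\langle\mathbf K,\mathbf x\rangle}\ell^{(\nu)}_{k,p}$ has constant sign on the relevant (half-)domain. \emph{This is the main obstacle.} The loss functions here are shifted by constants and, for the exponential loss, are split into one-sided domains with the pair $(K^-,K^+)$. I would therefore check sign-definiteness case by case, using the explicit transforms in Appendix~\ref{Appendix: Fourier_transform_loss} and Table~\ref{tab:strip_analyticity_loss}.

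As a fallback, I would note that the closed-form transforms are of rational or exponential type, for example $\prod_r 1/(K_r-\mathrm i u_r)^{a}$ with $K_r$ of fixed sign. The modulus of such an expression is $\prod_r (K_r^2+u_r^2)^{-a/2}$, which is visibly maximized at $\mathbf u=0$. For the Gaussian and NIG families, the closed forms of Appendix~\ref{appendix:loss_vector_distr} likewise show that $|\Phi|$ is radially maximized at $\mathbf u=0$. Multiplying the three bounds gives $\sup_{\mathbf u}|h^{(\nu)}_{k,p}|=|h^{(\nu)}_{k,p}(\mathbf 0)|$, which proves the second equality.

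\textbf{Step 2 (log transform).} By Assumption~\ref{ass:A7} the integrand is nonzero and finite at $\mathbf u=\mathbf 0$. Since $\ln$ is strictly increasing on $(0,\infty)$, the argmin is unchanged by taking logarithms. Evaluating \eqref{eq:loss_component_integrands} at $\mathbf u=\mathbf 0$ and taking $\ln|\cdot|$ splits the product into a sum:
\[
-k\ln(2\pi)+\langle\mathbf K,\mathbf m_{k,p}\rangle+\ln|\Phi_{\mathbf X_{k,p}}(\mathrm i\mathbf K)|+\ln|\hat\ell^{(\nu)}_{k,p}(\mathrm i\mathbf K)|,
\]
which is exactly \eqref{eq:K_m_relate}. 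The routine verifications are the strip memberships and the explicit transform formulas. The sign and maximality argument in Step 1 is the only nontrivial point.
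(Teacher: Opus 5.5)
Your proposal is correct and follows essentially the paper's argument: the paper invokes the ridge property of \cite{bayer_optimal_2023}, which is precisely your triangle-inequality bound $|\Phi_{\mathbf X_{k,p}}(\mathbf u+\mathrm i\mathbf K)|\le\Phi_{\mathbf X_{k,p}}(\mathrm i\mathbf K)$, $|\hat\ell^{(\nu)}_{k,p}(\mathbf u+\mathrm i\mathbf K)|\le|\hat\ell^{(\nu)}_{k,p}(\mathrm i\mathbf K)|$ for nonnegative functions, and then uses strict monotonicity of $\ln$. The paper settles the sign issue you flag by noting that the components actually Fourier-transformed in Appendix~\ref{Appendix: Fourier_transform_loss} (the constant and linear terms are handled directly) are nonnegative; this nonnegativity, rather than Assumption~\ref{ass:A7}, is also what gives $|h^{(\nu)}_{k,p}(\mathbf 0)|>0$.
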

\begin{proof}
Appendix~\ref{appendix:proof_optimal_damping} presents the proof.
\end{proof}
We need to solve \eqref{eq:optimal_damp_loss} numerically, since it is generally not available in closed form. For numerical convenience, we apply a logarithmic transformation to the minimization problem in \eqref{eq:optimal_damp_loss}. Moreover, by Proposition~\ref{prop:psi_convex} and Remark~\ref{rema:postive_definite_esscher}, this log-transformed objective is strictly convex or even strongly convex in $\mathbf{K}_{k,p}^{(\nu)}$, so standard numerical optimization routines typically converge quickly to the minimizer.

In most cases, \eqref{eq:optimal_damp_loss} can be solved efficiently along the optimization trajectory to obtain $\mathbf K_{k,p}^{(\nu),*}$ with given $\mathbf{m}_{k,p}$. A potential issue arises, because $\mathbf{m}_{k,p}$ enters $\nu_{k,p}^{(\nu)}$ linearly (through the term $\langle \mathbf K_{k,p}^{(\nu)}, \mathbf m_{k,p}\rangle$); as a result, the minimizer ${\mathbf K_{k,p}^{(\nu),*}} \paren{\mathbf m_{k,p}}$ might approach the boundary of the analyticity strip $\delta_{K_{k,p}}^{(\nu)}$ for certain iterates $\mathbf{m}_{k,p}$. In this case, minimizing $\upsilon_{k,p}^{(\nu)}$ can drive $\mathbf K_{k,p}^{(\nu)}$ toward the boundary of $\delta_{K_{k,p}}^{(\nu)}$, potentially yielding component integrands that are numerically unstable.
\begin{figure}[H]
    \centering
    \begin{subfigure}[t]{0.48\textwidth}
        \centering
        \includegraphics[width=\linewidth]{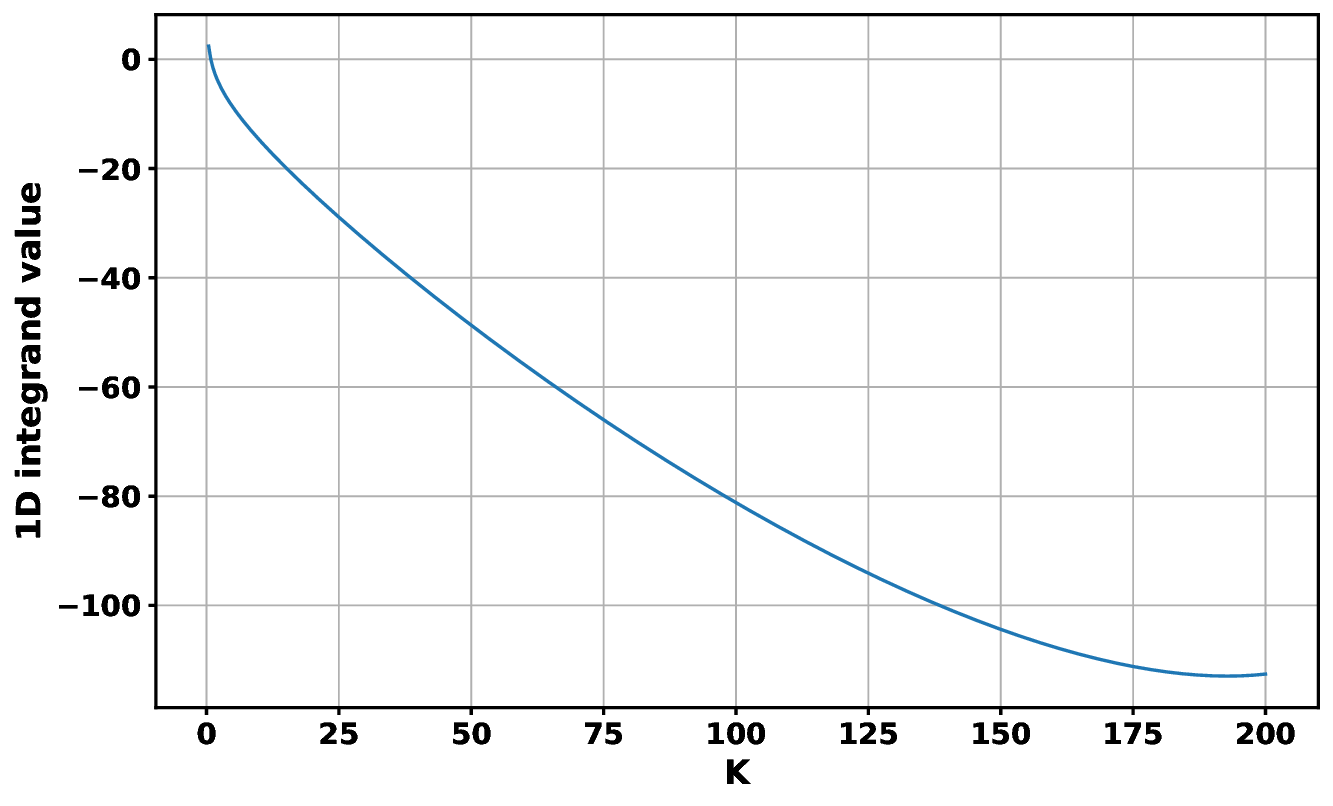}
        \caption{$v_{1,2}^{(0)}$ with $\mathbf{m}_{1,2}=-0.8$, varying damping $\mathbf{K}_{1,2}^{(0)}$.}
        \label{fig:damping_peak}
    \end{subfigure}
    \hfill
    \begin{subfigure}[t]{0.48\textwidth}
        \centering
        \includegraphics[width=\linewidth]{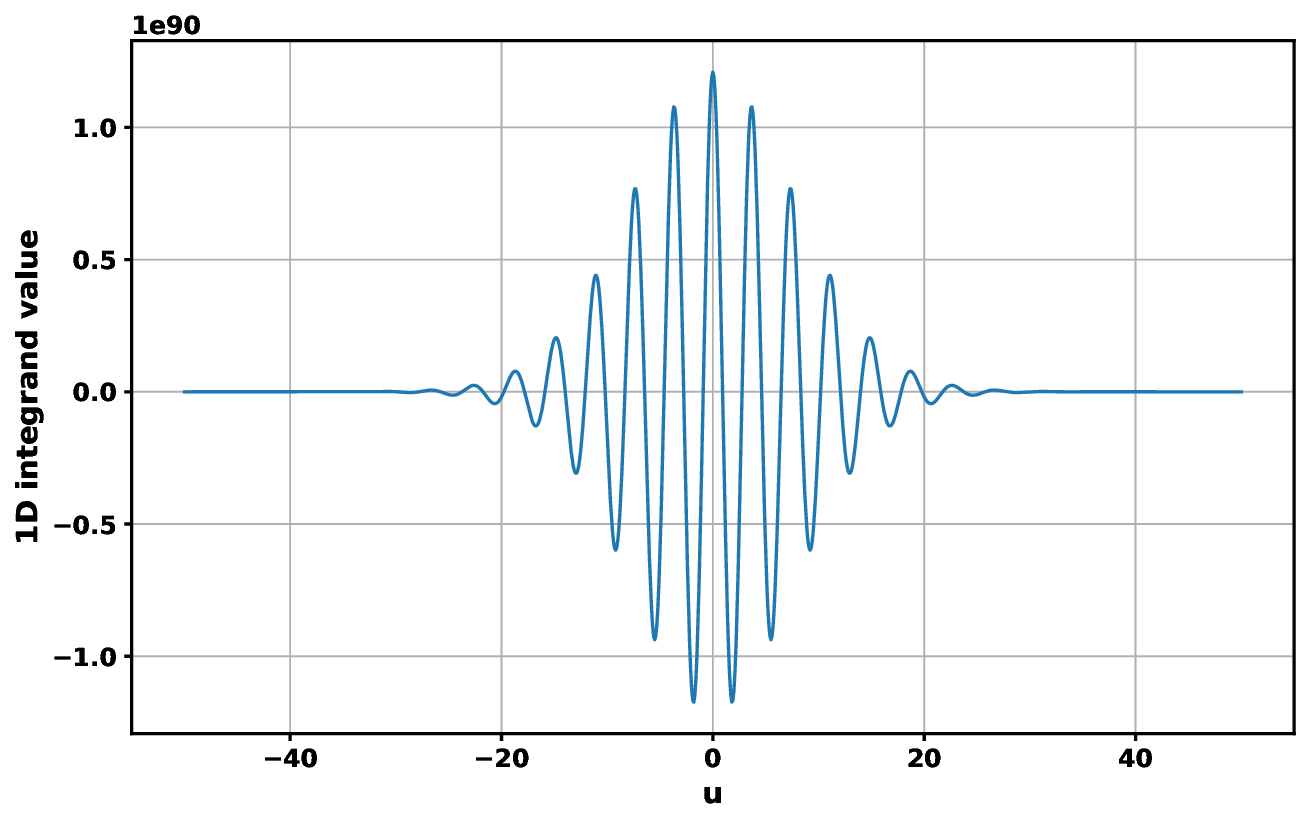}
        \caption{Integrand $h_{1,2}^{(0)}$ at $\mathbf{m}_{1,2}=-0.8$, $\mathbf{K}_{1,2}^{(0)} \approx 200$, varying $\mathbf{u}$.}
        \label{fig:damping_u}
    \end{subfigure}
   \caption{\small Unregularized optimal damping selection for the QPC loss with  a $3$-dimensional  NIG loss vector (Example in Section \ref{subsec:3D_NIG_qcl}).}
    \label{fig:damping_subplots}
\end{figure}
For illustration, as shown in Figure~\ref{fig:damping_peak}, the minimizer of $\upsilon_{1,2}^{(0)}$ is attained close to the boundary of the analyticity strip, at approximately $\mathbf K_{1,2}^{(0)}\approx 200$. Using this value in $h_{1,2}^{(0)}$, Figure~\ref{fig:damping_u} shows that the resulting function becomes highly oscillatory in $\mathbf u$ and attains very large magnitudes. One way to alleviate this problem is to establish the anisotropic Tikhonov-regularization for \eqref{eq:optimal_damp_loss} (see Appendix \ref{appendix:regularization_damping} for more details), which yields to solving the following problem for the optimal damping parameters 
\begin{equation}
    \min_{\mathbf{K}_{k,p}^{(\nu)}} 
    \;\upsilon\paren{\mathbf{m}_{k,p},\mathbf{K}_{k,p}^{(\nu)},\boldsymbol \Theta_{k,p}}
    + \tfrac{\lambda_{k,p}}{2}\|\mathbf{K}_{k,p}^{(\nu)}\|_{\boldsymbol W_{k,p}}^2
    \quad\text{s.t.}\quad \mathbf{K}_{k,p}^{(\nu)}\in\delta_{K_{k,p}}^{(\nu)}.
\label{eq:tikhonov_penalized_damping}
\end{equation}
where $\boldsymbol W_{k,p}\succ 0$ is a  weighting matrix, and $\lambda_{k,p}>0$ is a regularization parameter controlling the strength of the penalty.
\begin{figure}
    \centering
    \begin{subfigure}[t]{0.48\textwidth}
        \centering
        \includegraphics[width=\linewidth]{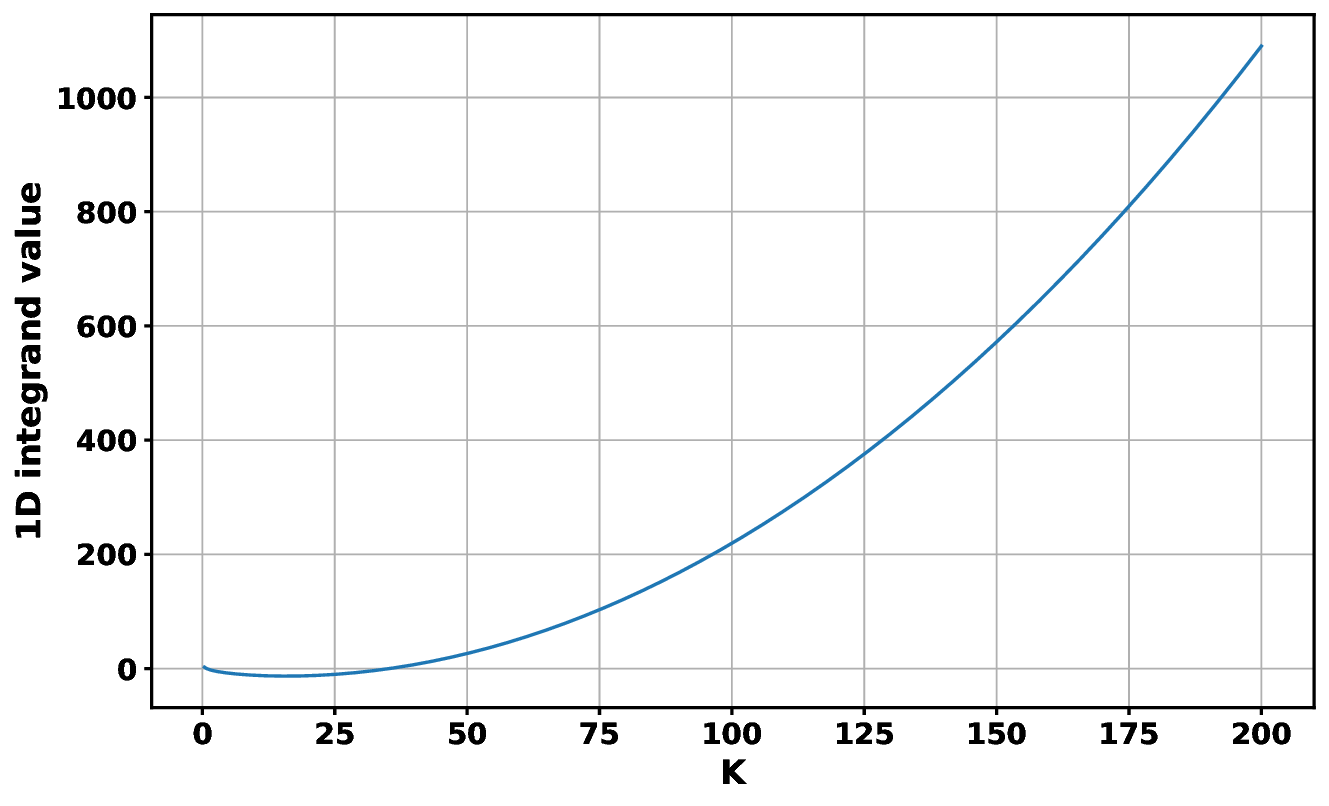}
        \caption{$v_{1,2}^{(0)}$ with $\mathbf{m}_{1,2}=-0.8$, varying damping $\mathbf{K}_{1,2}^{(0)}$.}
        \label{fig:damping_peak_corrected}
    \end{subfigure}
    \hfill
    \begin{subfigure}[t]{0.48\textwidth}
        \centering
        \includegraphics[width=\linewidth]{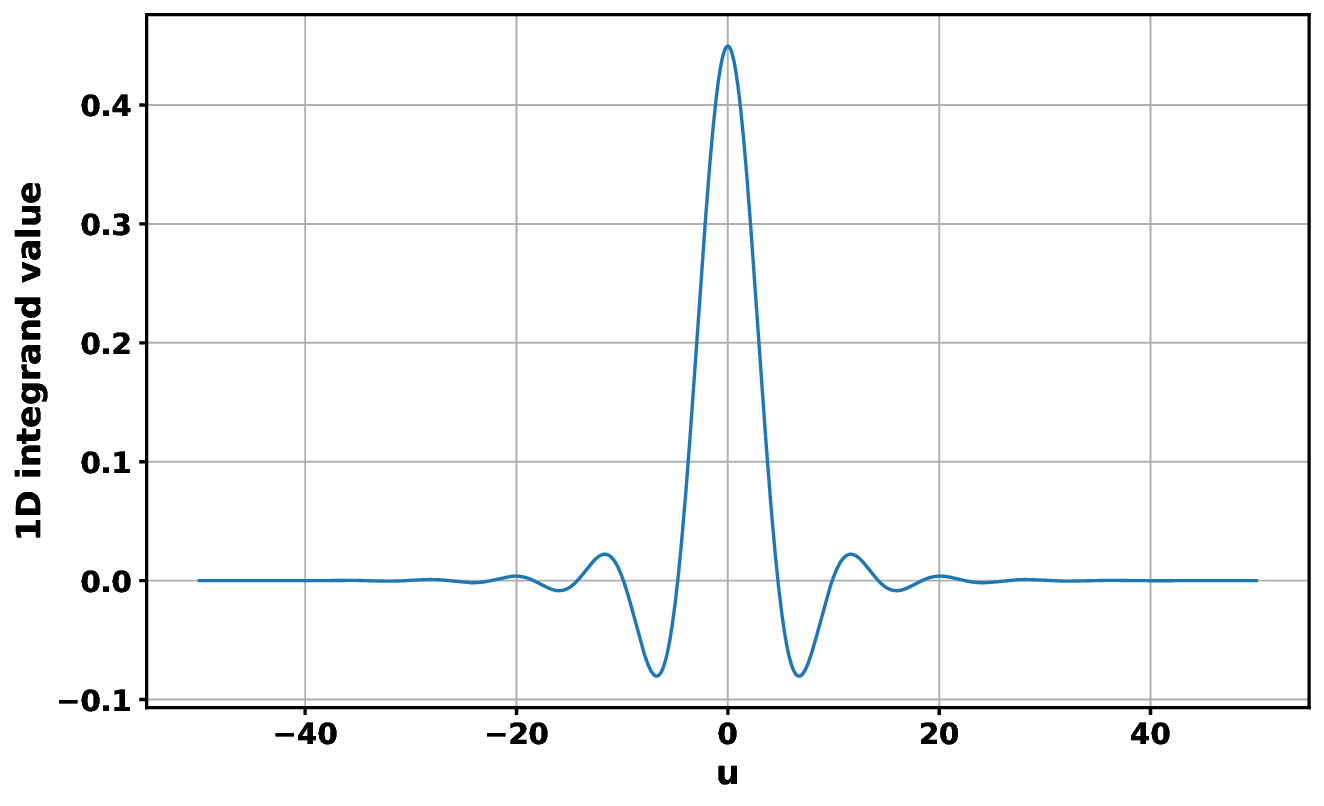}
         \caption{Integrand $h_{1,2}^{(0)}$ at $\mathbf{m}_{1,2}=-0.8$, $\mathbf{K}_{1,2}^{(0)} \approx 5$, varying $\mathbf{u}$.}
        \label{fig:damping_u_corrected}
    \end{subfigure}
    \caption{\small  Regularized damping selection for the QPC loss with   a $3$-dimensional NIG loss vector (parameter setting in Section~\ref{subsec:3D_NIG_qcl}), using the anisotropic weighting matrix $\boldsymbol W_{1,2}$.
Compared to Figure \ref{fig:damping_subplots}, regularization shifts the optimal damping away from the boundary of the analyticity strip, yielding a smoother and better-conditioned integrand $h^{(0)}_{1,2}$.}
    \label{fig:damping_subplots_corrected}
\end{figure}

With the inclusion of the anisotropic term $\boldsymbol W_{1,2}$, the minimizer of $\upsilon_{1,2}^{(0)}$ is attained at 
$\mathbf{K}_{1,2}^{(0)}\approx 5$. In this case, the resulting integrand exhibits a much more favorable shape compared to the scenario where only the peak $\mathbf{u}=0$ is minimized. 

Having determined the optimal choice of $\mathbf{K}_{k,p}^{(\nu)}$, we can next set up the suitable numerical scheme to compute ${h}_{k,p}^{(\nu)}$ in Section \ref{sec:Methodology of our Approach}. We summarize the main idea in Algorithm \ref{alg:Fou_opti_damp} for computing the optimal damping vector $\mathbf{K}_{k,p}^{(\nu)}, $ for the component integrands along the optimization trajectory.
\begin{remark}[Choosing $\lambda_{k,p}$]
In our numerical experiments, we observe boundary-hugging behavior (Figure~\ref{fig:damping_peak}) for the NIG loss vector $\mathbf X$.
Accordingly, in the Gaussian case we set $\lambda_{k,p}=0$.
In contrast, for the NIG case, we set a positive penalty and choose $\lambda_{k,p}\in[0.1,0.5]$ to ensure that the optimizer $\mathbf K_{k,p}^{(\nu)}$ remains in a reasonable interior region.
\label{rema:choose_lambda_pen}
\end{remark}

\begin{algorithm}[H]
\caption{Selecting optimal damping vectors at each optimization step $j$}
\label{alg:Fou_opti_damp}
\begin{algorithmic}[1]
\Require components $h_{k,p}^{(\nu)}$ in \eqref{eq:loss_component_integrands}, allocation $\mathbf{m}_{k,p}^{(j)}$, marginal distribution $\mathbf{X}_{k,p}$ of $\mathbf{X}$ .
\State  
Find $\mathbf{K}_{k,p}^{(\nu,j)}$ by solving the optimization problem \eqref{eq:tikhonov_penalized_damping} 
for $h_{k,p}^{(\nu)}$. The choice of the regularization parameter $\lambda_{k,p}$ follows Remark \ref{rema:choose_lambda_pen}. The resulting problem can be efficiently solved using a numerical optimizer (e.g., SLSQP or trust-constr).
\end{algorithmic}
\end{algorithm}

\section{{Single- and Multilevel RQMC Approximation of Fourier-Based MSRM Integrals}}\label{sec:Methodology of our Approach}
Building on the Fourier representations derived in Section \ref{sec:mapping_MSRM_Fourier}, we now approximate the resulting Fourier integrals numerically. These integrals can be moderately high-dimensional and must be evaluated repeatedly along the iterates of the constrained optimization algorithm. We therefore use (R)QMC methods, which are computationally efficient, provide practical error quantification, and perform well in moderate dimensions. Section \ref{subsec:sing-RQMC} introduces a single-level Fourier–RQMC estimator with a suitable domain transformation, and Section \ref{subsec:Multilevel Fou-RQMC} develops a multilevel extension that exploits the geometric convergence of the deterministic optimizer. The resulting estimators are subsequently employed as deterministic surrogate models for objective and gradient evaluations within the constrained optimization algorithm \ref{alg:SLSQP_full}.

\subsection{Single-Level Fourier–RQMC Approximation with Domain Transformation}\label{subsec:sing-RQMC}
We begin by constructing a single-level RQMC estimator for the Fourier-based integrals derived in Section \ref{sec:mapping_MSRM_Fourier}. This requires mapping the Fourier integration domain $\mathbb{R}^k$ to the unit cube $[0,1]^k$ and applying an RQMC rule with tractable error estimation.

The component integrands $h_{k,p}^{(\nu)}$ in \eqref{eq:loss_component_integrands} are defined on $\mathbb{R}^k$, $1 \le k \le q \le d$. To apply (R)QMC methods, we perform a change of variables $\mathbf{v} = G(\mathbf{u})$ mapping $\mathbb R^k$ to the unit cube $[0,1]^k$. The transformation $G$ is drawn from a fixed, distribution-driven family whose functional form is independent of $(k,p)$, while its dimension is determined by $k$ and its parameters  may depend on $(k,p)$ (see Section \ref{subsubsec:domain_trans_QMC} for details). The resulting transformed integrands are given by
\begin{equation}\label{eq:transformed_components}
    \tilde h_{k,p}^{(\nu)}\left(\mathbf{v};\cdot\right) := 
h_{k,p}^{(\nu)}\left(\mathbf{u};\cdot\right)\,
  \bigl|\det J_{G^{-1}}(\mathbf v; \cdot)\bigr|
\end{equation}
Here $J_{G^{-1}}(\mathbf v;\cdot)$ denotes the Jacobian matrix of the inverse transformation $G^{-1}$  w.r.t. $\mathbf v$. We assume that $G$ is invertible almost everywhere with an almost-everywhere differentiable inverse.

 The QMC estimator for the integral of transform component integrands $\tilde h_{k,p}^{(\nu)}:[0,1]^k \to \mathbb R$ is an $N$-point equal-weight quadrature rule, denoted by 
$I_{N}^{\mathrm{QMC}}$, defined as
\begin{equation} \label{eq:QMC_estimator}
\hat g^{(\nu),\mathrm{Fou}}_{k,p}\paren{\mathbf{m}_{k,p}} \;=\; \int_{[0,1]^k} \tilde h_{k,p}^{(\nu)}\paren{\mathbf{v};\mathbf{m}_{k,p}}\, \mathrm{d}\mathbf v
    \;\approx\; I_{N}^{\mathrm{QMC}}\brac{ \tilde h_{k,p}^{(\nu)}\paren{\cdot;\mathbf{m}_{k,p}}}
    := \frac{1}{N}\sum_{n=1}^N \tilde h_{k,p}^{(\nu)}\paren{\mathbf v_n; \mathbf{m}_{k,p}},
\end{equation}
where $\{\mathbf v_n\}_{n=1}^N \subset [0,1]^k$ is a deterministic low-discrepancy 
sequence (e.g., Halton, Faure, Sobol; see \cite{dick_high-dimensional_2013} for details). The advantage of the QMC estimator in \eqref{eq:QMC_estimator} over standard MC lies in the more uniform coverage of the unit cube $[0,1]^k$ provided by low-discrepancy sequences, which often leads to improved convergence in practice. However, since the quadrature points are deterministic and exhibit strong dependence, the classical i.i.d. central limit theorem (CLT) does not apply directly, and probabilistic error bounds are not immediately available. Instead, convergence of deterministic QMC estimators is typically analyzed via discrepancy-based bounds, most notably the Koksma–Hlawka inequality \cite{hlawka_funktionen_1961}. To evaluate this error bound, we need to compute the integral involving the first mixed partial derivatives of $\tilde h_{k,p}^{(\nu)}$, which is often more difficult than evaluating the original integrand $\tilde h_{k,p}^{(\nu)}$ itself. To recover probabilistic error quantification while retaining the favorable space-filling properties of QMC, we employ a randomized version of the estimator in \eqref{eq:QMC_estimator}, referred to as the RQMC estimator \cite[Chapter 17]{owen_practical_2023}, which is expressed as
\begin{equation}
I^{\text{RQMC}}_{N,S_{\mathrm{shift}}}\brac{\tilde h_{k,p}^{(\nu)}\paren{\cdot;\mathbf{m}_{k,p}}} := \frac{1}{S_{\text{shift}}} \sum_{s=1}^{S_{\text{shift}}} \frac{1}{N} 
    \sum_{n=1}^{N} 
    \tilde h_{k,p}^{(\nu)}\left(\mathbf{v}_n^{(s)};\mathbf{m}_{k,p}\right),
\label{eq:RQMC_estimator}
\end{equation}
where $\curly{\mathbf{v}_n^{(s)}}_{s=1}^{S_{\mathrm{shift}}}$ is obtained by applying independent digital shifts to the underlying deterministic digital net $\{v_n\}_{n=1}^N \subset [0,1]^k$, while preserving the low-discrepancy structure. Various randomization schemes exist, each with different theoretical guarantees (see, e.g., \cite[Chapter 17]{owen_practical_2023}). In this work, we employ Sobol sequences \cite{sobol_construction_2011} with \emph{digital shifting} \cite[Section 5.2]{lecuyer_recent_2002} as our randomization method. In order for this randomization to yield a valid RQMC estimator, we need the following assumption
\begin{assumption}[Square-integrability of transformed integrands]\label{ass:A10}
For each $(k,p)$ and for the selected damping vectors $\mathbf{K}_{k,p}^{(\nu)}$ along the optimization iterates, the transformed integrands
\(
\tilde h_{k,p}^{(\nu)}
\)
belong to $L^2\!\left([0,1]^k\right)$, with $\nu = 0,1,2$.
\end{assumption}
Under Assumption~\ref{ass:A10} and using independent digital shifts, the RQMC estimator~\eqref{eq:RQMC_estimator} is unbiased, i.e.,
\[
\mathbb{E}\!\left[ I^{\mathrm{RQMC}}_{N,S_\mathrm{shift}}\brac{\tilde h_{k,p}^{(\nu)}\paren{\cdot;\mathbf{m}_{k,p}}} \right]
=\hat g^{(\nu),\mathrm{Fou}}_{k,p}\paren{\mathbf{m}_{k,p}},
\]
 and enables us to derive the root mean squared error (RMSE) of the estimator:
\begin{equation}
\varepsilon_{N,S_{\mathrm{shift}}}^{\mathrm{RQMC}}\brac{\tilde h_{k,p}^{(\nu)}\paren{\cdot;\mathbf{m}_{k,p}}} = C_{\alpha}\sqrt{\frac{1}{S_{\mathrm{shift}}(S_{\mathrm{shift}}-1)} \sum_{s=1}^{S_{\text{shift}}} \left(\frac{1}{N} 
    \sum_{n=1}^{N} 
    \tilde h_{k,p}^{(\nu)}\left(\mathbf{v}_n^{(s)};\mathbf{m}_{k,p}\right) -I^{\text{RQMC}}_{N,S_{\mathrm{shift}}}\brac{\tilde h_{k,p}^{(\nu)}\paren{\cdot;\mathbf{m}_{k,p}}} \right)^2}
\label{eq:RMSE_RQMC}
\end{equation}
where $C_{\alpha}$ denotes the $(1-\frac{\alpha}{2})$-quantile of the standard normal distribution for a confidence level $0 < \alpha \ll1$.

Since the smoothness of the transformed integrands $\tilde h^{(v)}_{k,p}$ near the boundary of $[0,1]^k$ depends critically on the choice of the domain transformation $G$, we adopt the boundary-singularity framework of \cite{owen_halton_2006} to characterize the resulting convergence rate of the RQMC estimator in \eqref{eq:RQMC_estimator}. Specifically, there exists $C<\infty$ such that 
\begin{equation}
\label{eq:bound_growth_derivatives_RQMC}
\bigl| \partial^{\boldsymbol{\kappa}} \tilde h^{(\nu)}_{k,p}(\mathbf{v}) \bigr|
\;\le\;
C \prod_{j=1}^{k} \min(v_j,1-v_j)^{-A^{(\nu)}_{j,p}-\mathbf{1}_{j\in \boldsymbol{\kappa}}},
\qquad
\forall\, \boldsymbol{\kappa} \subseteq \{1,\dots,k\},
\end{equation}
where $
\partial^{\boldsymbol{\kappa}} \tilde h^{(\nu)}_{k,p}(\mathbf{v})
\;:=\;
\prod_{j\in\boldsymbol{\kappa}}
\frac{\partial}{\partial v_j}
\,\tilde h^{(\nu)}_{k,p}(\mathbf{v})$,  and the exponents
\(A^{(\nu)}_{j,p} > 0\) quantify the boundary growth of the transformed integrands and its derivatives as
\(v_j \to 0\) or \(v_j \to 1\).
To obtain a single convergence exponent that is valid uniformly across all transformed integrands and will be used later in the statistical error analysis of Section \ref{subsec:stat_err_analysis}, we define the worst-case boundary singularity exponent
\begin{equation}
    A_{\mathrm{sing}}^*
:= \max_{\nu\in\{0,1\}}\max_{k \in \mathcal{I}_{q_{\ell}}} \max_{\mathbf{p}\in\mathcal I_k}  \max_{1\le j\le k} A_{j,p}^{(\nu)}.
\label{eq:worst_exponent_component}
\end{equation}
Then, for any $\varsigma>0$, \cite[Theorem~5.7]{owen_halton_2006} implies that the RQMC estimator
\begin{equation}
\varepsilon_{N,S_{\mathrm{shift}}}^{\mathrm{RQMC}}\brac{\tilde h_{k,p}^{(\nu)}\paren{\cdot;\mathbf{m}_{k,p}}}
= \mathcal{O}\!\left(N^{-r}\right),
\label{eq:stat_error_RQMC_rate}
\end{equation}
with $r := 1 - A_{\mathrm{sing}}^* - \varsigma$. Equation \eqref{eq:stat_error_RQMC_rate} shows that RQMC can  outperform MC when $A_{j,p}^{(\nu)}<\tfrac12$, although the convergence rate deteriorates as the boundary singularities become more severe. Moreover, \cite{liu_randomized_2025-1}  provides
a complementary spectral interpretation by linking the boundary-growth condition~\eqref{eq:bound_growth_derivatives_RQMC} to the decay of Fourier/Walsh coefficients. In particular,
larger exponents $A^{(v)}_{j,p}$ (and hence a larger $A^*_{\mathrm{sing}}$)
correspond to slower spectral decay and increased oscillatory behavior near
the boundary. This observation motivates designing a domain transformation $G$ so that the transformed integrands $\tilde h_{k,p}^{(\nu)}$ exhibit sufficiently mild boundary growth, as discussed in detail in Section \ref{subsubsec:domain_trans_QMC}. We summarize the resulting single-level Fourier–RQMC procedure in  Algorithm~\ref{alg:Single-level RQMC}.
\begin{remark}
In  \eqref{eq:stat_error_RQMC_rate}, the exponent $r$ satisfies $r\le 1$ under the general boundary-growth assumptions adopted here. In more specific settings, higher (R)QMC rates may be attainable. For instance, when RQMC is combined with importance sampling, asymptotic rates of order $\mathcal O\!\paren{N^{-\frac{3}{2}+\varsigma}}$ have been reported \cite{ouyang_achieving_2024}. Moreover, suitably chosen domain transformations can substantially improve integrand regularity, which may translate into markedly better non-asymptotic error decay in practice \cite{bayer_quasi-monte_2025, liu_nonasymptotic_2026}.
\label{rema:non_asymptoic_err_decay}
\end{remark}
\begin{remark}
The exponent $A^*_{\mathrm{sing}}$ yields a uniform (worst-case) convergence rate for the RQMC estimator across all components $(k,p,\nu)$ and along the optimization trajectory. Such a worst-case bound is essential for the subsequent error propagation analysis in Section \ref{sec:num_analysis}. In practice, however, the observed RQMC convergence can be substantially faster than the rate predicted by $A^*_{\mathrm{sing}}$, owing to effective low dimensionality,  the smoothing effect of the domain transformation~$G$, and  boundary growth that is milder than the worst-case behavior permitted by the theoretical bounds. Less conservative bounds can be obtained by retaining the coordinate- and component-wise boundary exponents and  anisotropic constructions (e.g., transform employing tuning or weighted QMC rules). We do not pursue such refinements here and instead leverage variance reduction through iteration-indexed multilevel differences.
\end{remark}

\begin{algorithm}[H]
\small
\caption{Single-level Fourier-RQMC at optimization step $j$}
\begin{algorithmic}[1]
\Require Allocation $\mathbf m^{(j)}$, Sobol size $N\in\mathbb N$, shifts $S_{\mathrm{shift}}$. For $k \in \mathcal{I}_{q_{\ell}}$: index set $\mathcal I_k$; for $\mathbf{p}\in\mathcal I_k$: integrands $h_{k,p}^{(\nu)}\paren{\mathbf v;\mathbf m_{k,p}^{(j)}}$, marginal $\mathbf X_{k,p}$, damping vectors $\mathbf K_{k,p}^{(\nu,j)}$ (Algorithm~\ref{alg:Fou_opti_damp}).

  \State Based on the distribution of $\mathbf{X}_{k,p}$, apply the appropriate 
transformation from \eqref{eq:domain_transform_MN} or 
\eqref{eq:domain_transform_MNIG} to obtain the transformed integrands 
$\tilde{h}_{k,p}^{(\nu)}\paren{\mathbf{v}; \mathbf{m}_{k,p}^{(j)}}$.

\For{$k \in \mathcal{I}_{q_{\ell}}$}
  \State \textbf{(Digital net generation)}
  \State Generate once per $k$ an unshifted base-2 digital net $\{\mathbf{u}^{(s)}_n\}_{n=1}^{N} \subset [0,1]^{k}$ and reuse it for all $\mathbf{p} \in \mathcal{I}_k$ \footnotemark.
  \State Draw a digital shift $\boldsymbol{\Delta}^{(s)}\in[0,1]^{k}$ (seed $1:S_{\mathrm{shift}}$) and set
        \[
          \mathbf{v}^{(s)}_n \;=\; \mathbf{u}^{(s)}_n \,\oplus\, \boldsymbol{\Delta}^{(s)}, 
          \qquad n=1,\dots,N,
        \]
        where $\oplus$ denotes the base-2 digital (bitwise XOR) shift.
\EndFor

\State Set ${I}_{N,S_{\mathrm{shift}}}^{\mathrm{RQMC}} \brac{\tilde h^{(\nu)} \paren{\cdot;\mathbf{m}^{(j)}}}\gets 0$.
\For{$k \in \mathcal{I}_{q_{\ell}}$}
  \For{$\mathbf{p} \in \mathcal{I}_k$}
    \State Compute the RQMC estimate ${I}_{N,S_{\mathrm{shift}}}^{\mathrm{RQMC}}\brac{\tilde h_{k,p}^{(\nu)}\paren{\cdot;\mathbf{m}_{k,p}^{(j)}}}$ using $\{\mathbf v^{(s)}_n\}_{n=1}^{N}$.
    
    \State ${I}_{N,S_{\mathrm{shift}}}^{\mathrm{RQMC}} \brac{\tilde h^{(\nu)}\paren{\cdot;\mathbf{m}^{(j)}}} \mathrel{+}={I}_{N,S_{\mathrm{shift}}}^{\mathrm{RQMC}}\brac{\tilde h_{k,p}^{(\nu)}\paren{\cdot;\mathbf{m}_{k,p}^{(j)}}}$
  \EndFor
\EndFor
\State \Return ${I}_{N,S_{\mathrm{shift}}}^{\mathrm{RQMC}} \brac{\tilde h^{(\nu)}\paren{\cdot;\mathbf{m}^{(j)}}}$.
\end{algorithmic}
\label{alg:Single-level RQMC}
\end{algorithm}
\footnotetext{For the NIG transform in \eqref{eq:domain_transform_MNIG}, an additional mixing variable introduces one extra integration dimension. Hence, we generate $\{\mathbf{u}^{(s)}_n\}_{n=1}^{N} \subset [0,1]^{k+1}$ instead of $[0,1]^k$.}

\subsubsection{Oscillation-Aware, Distribution-Dependent Domain Transformation}\label{subsubsec:domain_trans_QMC}

To control oscillatory behavior and boundary growth in the Fourier-based integrands, we introduce a distribution-dependent, oscillation-aware change of variables $G$, mapping $\mathbb{R}^k$ to $[0,1]^k$, for $k \in \mathcal{I}_{q_{\ell}}$.  Our construction builds on ideas from Fourier-based option pricing \cite{bayer_quasi-monte_2025}, but is adapted to the multivariate risk setting and complemented by a dedicated analysis of the induced oscillatory behavior.

We  rewrite $h_{k,p}^{(\nu)}$ in \eqref{eq:loss_component_integrands} as a standard oscillatory (Fourier-type) integrand, with a (complex) amplitude $a_{k,p}^{(\nu)}\paren{\mathbf{u};\mathbf{m}_{k,p},\mathbf{K}_{k,p}^{(\nu)},\boldsymbol \Theta_{k,p}}$ and oscillatory phase $w_{k,p}\paren{\mathbf{u};\mathbf{m}_{k,p}}$ as follows:
\begin{equation*}
    h_{k,p}^{(\nu)}(\mathbf{u};\mathbf{m}_{k,p},\mathbf{K}_{k,p}^{(\nu)},\boldsymbol \Theta_{k,p})
    \;=\;
    a_{k,p}^{(\nu)}\paren{\mathbf{u};\mathbf{m}_{k,p},\mathbf{K}_{k,p}^{(\nu)},\boldsymbol \Theta_{k,p}}\,
    \exp\!\bigl(-\mathrm{i}\,w_{k,p}(\mathbf{u};\mathbf{m}_{k,p})\bigr),
\end{equation*}
where:
\begin{equation*}
    \begin{aligned}
a_{k,p}^{(\nu)}\paren{\mathbf{u};\mathbf{m}_{k,p},\mathbf{K}_{k,p}^{(\nu)},\boldsymbol \Theta_{k,p}}
    &:=
    (2\pi)^{-k}
    \exp\!\bigl(\langle \mathbf{K}_{k,p}^{(\nu)}, \mathbf{m}_{k,p}\rangle\bigr)\,
    \Phi_{\mathbf{X}_{k,p}}\paren{\mathbf{u}+\mathrm{i}\mathbf{K}_{k,p}^{(\nu)}}\,
    \hat{\ell}_{k,p}^{(\nu)}\paren{\mathbf{u}+\mathrm{i}\mathbf{K}_{k,p}^{(\nu)}}, \\
    w_{k,p}(\mathbf{u};\mathbf{m}_{k,p})
    &:= \mathbf{m}_{k,p}^\top \mathbf{u},   
    \end{aligned}
\end{equation*}
with $\mathbf{u} \in \mathbb{R}^k, \mathbf{K}_{k,p}^{(\nu)} \in \delta_{K_{k,p}}^{(\nu)}$. The transformed integrand is then expressed as:
\begin{equation}
    \tilde h_{k,p}^{(\nu)}(\mathbf{v};\mathbf{m}_{k,p})=\;a_{k,p}^{(\nu)}\paren{\mathbf{u};\mathbf{m}_{k,p},\mathbf{K}_{k,p}^{(\nu)},\boldsymbol \Theta_{k,p}}\,\exp{\paren{-\mathrm{i}\,\varpi(\mathbf{v};\boldsymbol \Theta_{k,p})}}\,\big|\det J_{G^{-1}}(\mathbf{v};\boldsymbol\Theta_{k,p})\big|
\label{eq:transform_integrand_v}
\end{equation}
with \(
\varpi\paren{\mathbf{v};\boldsymbol \Theta_{k,p}}
  :
  = \mathbf{m}_{k,p}^\top G^{-1}\paren{\mathbf{v};\boldsymbol\Theta_{k,p})}.
\)

If the domain transformation $G$ is not chosen appropriately, it can amplify oscillations of the transformed integrand near the boundary of $[0,1]^k$ and thereby deteriorate the convergence of RQMC methods. To guide the choice of $G$, we analyze in Appendix~\ref{Appendix:boundary_oscillation} the boundary oscillatory behavior of $\tilde h_{k,p}^{(\nu)}$ and derive distribution-dependent oscillation counts. These results motivate adopting the density-driven change of variables as proposed in \cite{bayer_quasi-monte_2025}. 

The effectiveness of the domain transformation is governed by the choice of a reference density
 $\psi(\,\cdot\,;\boldsymbol\Theta_{\mathbf X_{k,p}})$, with associated shape matrix $\tilde{\boldsymbol\Sigma}_{k,p}$.
This reference density is chosen to control the boundary growth of the transformed integrand
$\tilde h_{k,p}^{(\nu)}\paren{\mathbf v,\mathbf m_{k,p}}$ defined in \eqref{eq:transform_integrand_v}.  For Gaussian marginals, the reference density is Gaussian, whereas for NIG marginals 
we employ an auxiliary exponential (Laplace-type) reference density arising from the 
mixture representation.

Given $\tilde{\boldsymbol\Sigma}_{k,p}$, let $\widetilde {\boldsymbol{L}}_{k,p}$ denote a Cholesky factor such that
$\tilde{\boldsymbol\Sigma}_{k,p} = \widetilde{\boldsymbol L}_{k,p}\widetilde{\boldsymbol L}_{k,p}^{\top}$.
For the loss vector models considered in Section \ref{sec:mapping_MSRM_Fourier}, we employ the following distribution-dependent inverse transformations $G^{-1}$.
\begin{itemize}
\item \textbf{Gaussian.}  
Let $G^{-1}=T_{\mathrm{Gauss}}:[0,1]^k\to\mathbb{R}^k$ be defined by
\[
\mathbf v_{1:k} \;\mapsto\; \mathbf u := \tilde{\boldsymbol L}_{k,p}\,\Psi^{-1}(\mathbf v_{1:k}; \mathbf I_k),
\]
where $\Psi$ denotes the standard Gaussian CDF applied componentwise.
\item \textbf{NIG.}  
Let $W\sim\mathrm{Exp}(1)$ be an auxiliary mixing variable with CDF $\Psi_W(w)=1-e^{-w}$.
Define $G^{-1}=T_{\mathrm{NIG}}:[0,1]^{k+1}\to\mathbb{R}^k\times(0,\infty)$ by
\[
(\mathbf v_{1:k},v_{k+1}) \;\mapsto\; (\mathbf u,w)
:=\Big(\sqrt{\Psi_W^{-1}(v_{k+1})}\,\tilde{\boldsymbol L}_{k,p}\Psi^{-1}(\mathbf v_{1:k};\mathbf I_k),\;\Psi_W^{-1}(v_{k+1})\Big).
\]
\end{itemize}
where $\mathbf v_{1:k}:=(v_1,\ldots,v_k)$, and $\boldsymbol I_k$ denotes the $k\times k$ identity matrix.
 
The resulting choice for $\tilde {\boldsymbol \Sigma}_{k,p}$ in this work is presented in Table \ref{tab:choice_of_matrix_trans}.  We derive this choice in detail in Appendix \ref{Appendix:choice_of_trans_matrix}, where  we quantify the boundary oscillations induced by the transformation and show how the choice of $ \boldsymbol{\tilde\Sigma}_{k,p}$ and the scaling parameter $c$ controls the resulting oscillatory behavior. The scalar parameter $c>1$ acts as a regularity control, trading off boundary oscillations
against concentration of the transformed integrand.
\begin{table}[H]
\centering
\renewcommand{\arraystretch}{2}
\begin{tabular}{|c|c|}
\hline
\textbf{Distribution} & $\tilde {\boldsymbol \Sigma}_{k,p}$ \\ \hline
\textbf{Gaussian}& $\displaystyle c \, {\boldsymbol \Sigma}_{k,p}^{-1}  $ \\ \hline
\textbf{NIG} & $\displaystyle \frac{2c}{\delta_{k,p}^2 }\, \boldsymbol{\Gamma}_{k,p}^{-1}$ \\ \hline
\end{tabular}
\caption{Choice of $\tilde{\boldsymbol\Sigma}_{k,p}$. In the Gaussian case, $\boldsymbol\Sigma_{k,p}$ is defined in Example~\ref{ex:Gaussian_components}. In the NIG case, $\boldsymbol\Gamma_{k,p}$ and $\delta_{k,p}$ are defined in Example~\ref{ex:MNIG_components}. The scaling parameter $c>1$.}
\label{tab:choice_of_matrix_trans}
\end{table}
With this choice of reference density and associated scaling, the domain transformation for the Gaussian case takes the form
\begin{equation}  
\int_{\mathbb{R}^k} h_{k,p}^{(\nu)}(\mathbf u;\mathbf{m}_{k,p}) \,\mathrm{d}\mathbf{u}
=
\int_{[0,1]^k}
\underbrace{
   \frac{
      h_{k,p}^{(\nu)}\!\left(
         \tilde{\boldsymbol L}_{k,p}\,
         \Psi^{-1}(\mathbf{v}; \boldsymbol I_k);
         \mathbf{m}_{k,p}
      \right)
   }{
      \psi_{k,p}\!\left(
         \tilde{\boldsymbol L}_{k,p}\,
         \Psi^{-1}(\mathbf{v}; \boldsymbol I_k)
      \right)
   }
}_{:=\tilde h_{k,p}^{(\nu)}(\mathbf v;\mathbf m_{k,p})}
\,\mathrm{d}\mathbf v,
\label{eq:domain_transform_MN}
\end{equation}
For the NIG case, this yields the transform
\begin{equation}
\int_{\mathbb{R}^k} h_{k,p}^{(\nu)}(\mathbf u; \mathbf{m}_{k,p}) \,\mathrm{d}\mathbf{u}
=
\int_{[0,1]^{k+1}}
\underbrace{
   \frac{
      h_{k,p}^{(\nu)}\!\left(
         \sqrt{\Psi_W^{-1}(v_{k+1})}\,
         \tilde{\boldsymbol L}_{k,p}\,
         \Psi^{-1}(\mathbf{v}_{1:k};\boldsymbol I_k);
         \mathbf{m}_{k,p}
      \right)
   }{
      \psi_{k,p}^{\mathrm{lap}}\!\left(
         \sqrt{\Psi_W^{-1}(v_{k+1})}\,
         \tilde{\boldsymbol L}_{k,p}\,
         \Psi^{-1}(\mathbf{v}_{1:k};\boldsymbol I_k)
      \right)
   }
}_{:=\tilde h_{k,p}^{(\nu)}(\mathbf v;\mathbf m_{k,p})}
\,\mathrm{d}\mathbf v,
\label{eq:domain_transform_MNIG}
\end{equation}
where $\psi^{\mathrm{lap}}$ denotes the reference mixing density induced by the exponential variable $W$ and the associated change of variables, including the Jacobian factors arising from the scaling $u \mapsto \sqrt{W}\,u$ and from the inverse CDF $\Psi_W^{-1}$.

The resulting transformed integrands can be shown (Appendix~\ref{Appendix:boundary_oscillation}–\ref{Appendix:choice_of_trans_matrix}) to satisfy the polynomial-type boundary growth condition \eqref{eq:bound_growth_derivatives_RQMC}, with exponents $A^{(\nu)}_{j,p}$ depending on the reference parameters and the scaling $c$.

Figures~\ref{fig:diff_scale_NIG_1D} and~\ref{fig:diff_scale_Gauss_2D} illustrate the distribution-dependent boundary oscillation behavior analyzed in Appendix~\ref{Appendix:boundary_oscillation}.  In the Gaussian case, the proposed transformation suppresses boundary oscillations more effectively than in the NIG case, even though the Gaussian example is higher-dimensional (10D) than the NIG example (3D).  When $c=1$, the transformed component integrands exhibit kink-like oscillations near the boundary in both settings, with the effect being more pronounced under the NIG transformation. Increasing the scale parameter to $c>1$ improves the regularity of the transformed integrands, leading to smoother behavior and reduced boundary oscillations.
\enlargethispage{2\baselineskip} 
\begin{figure}
    \centering
    \begin{subfigure}[t]{0.45\textwidth}
        \centering
        \includegraphics[width=1\linewidth]{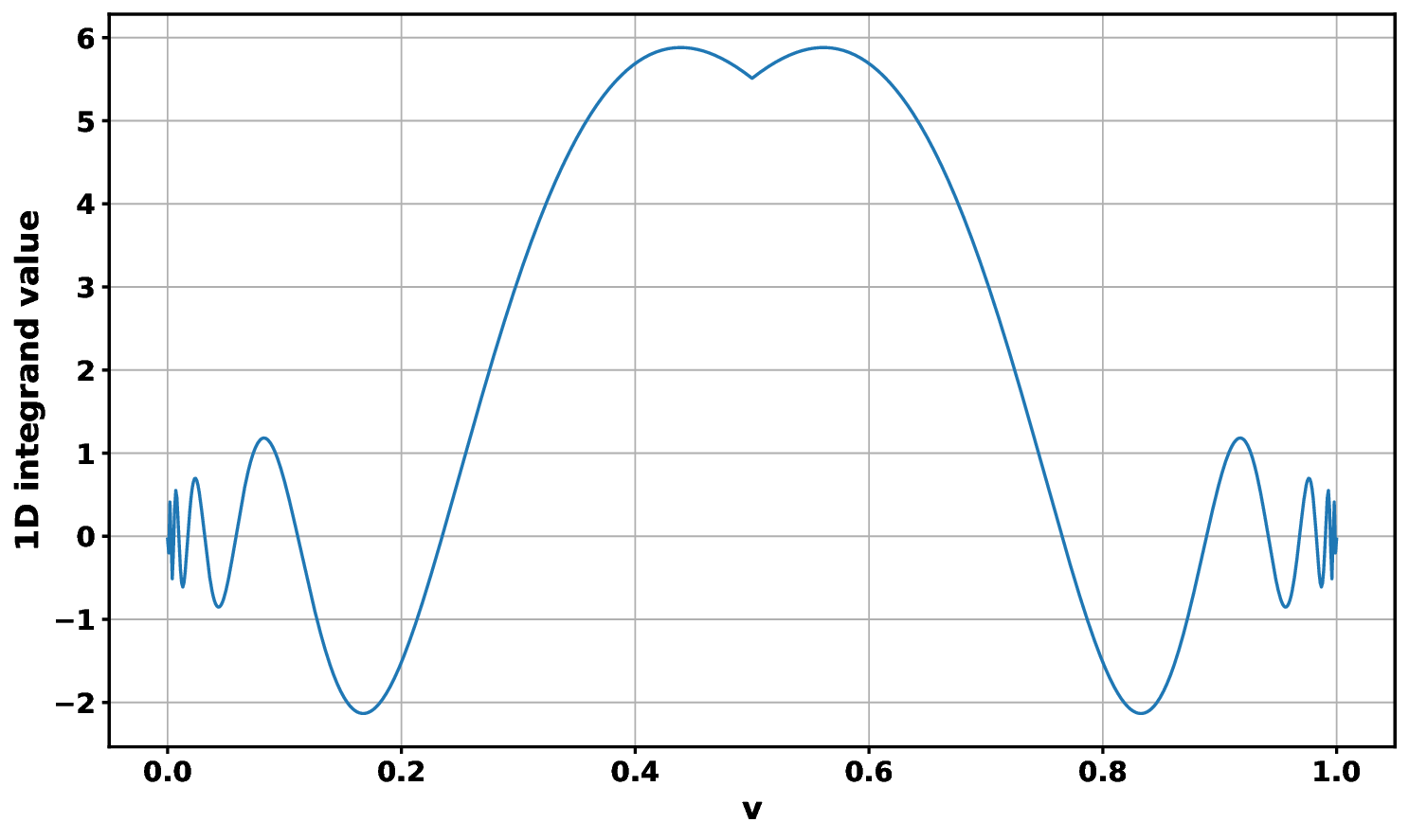}
        \caption{$c=1$}
        \label{fig:scale_1}
    \end{subfigure}
    \hfill
    \begin{subfigure}[t]{0.45\textwidth}
        \centering
        \includegraphics[width=1\linewidth]{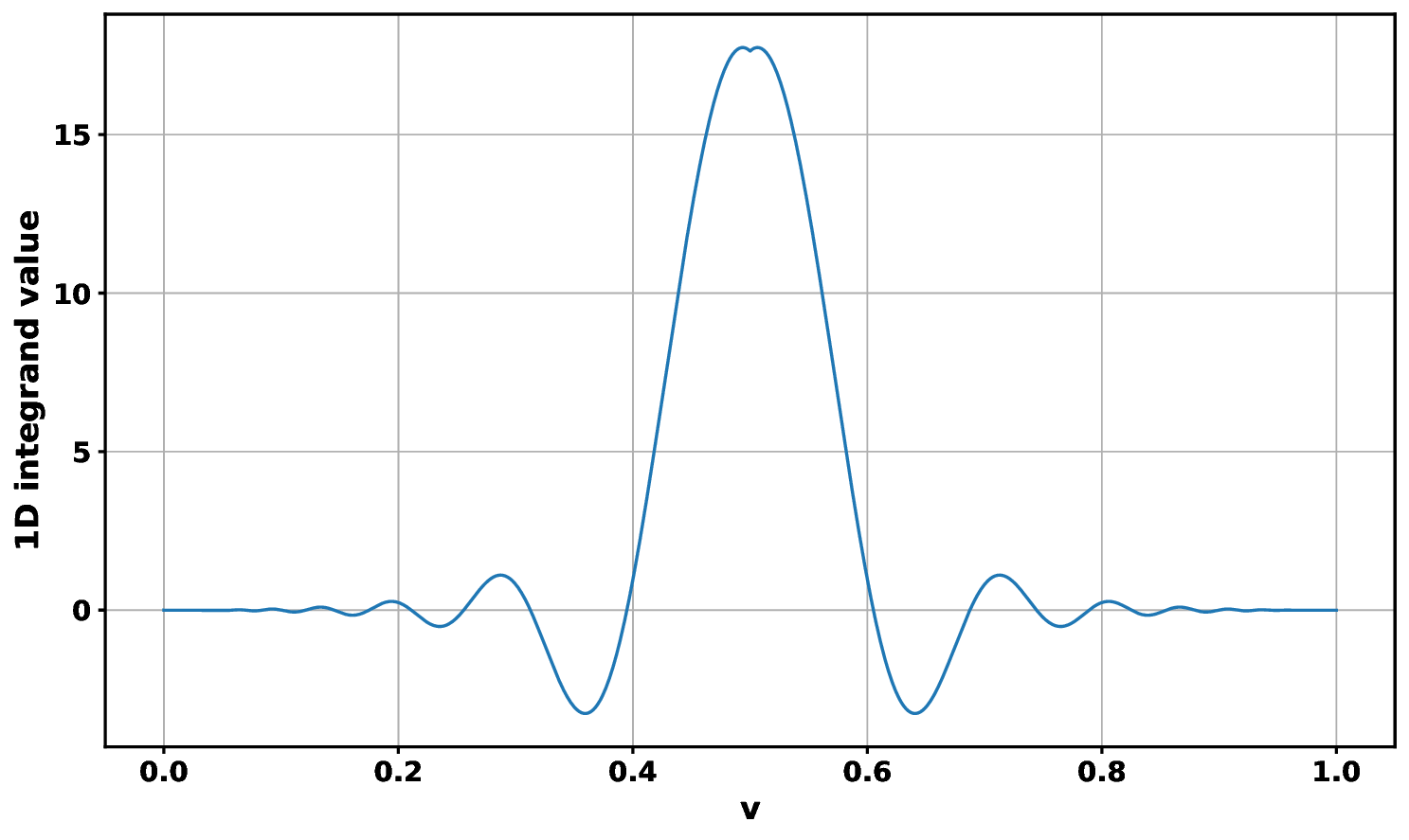}
         \caption{$c = 8$}
        \label{fig:scale_10}
    \end{subfigure}
   \caption{\small Transformed integrand component $\tilde h^{(0)}_{1,1}$ for the QPC loss and   a $3$-dimensional NIG loss vector, with $\mathbf K^{(0)}_{1,1}=4.5$ and $\mathbf m_{1,1}=-0.8$ (parameter setting in Section~\ref{subsec:3D_NIG_qcl}).}

    \label{fig:diff_scale_NIG_1D}
\end{figure}
\begin{figure}[H]
    \centering
    \begin{subfigure}[t]{0.48\textwidth}
        \centering
        \includegraphics[width=\linewidth]{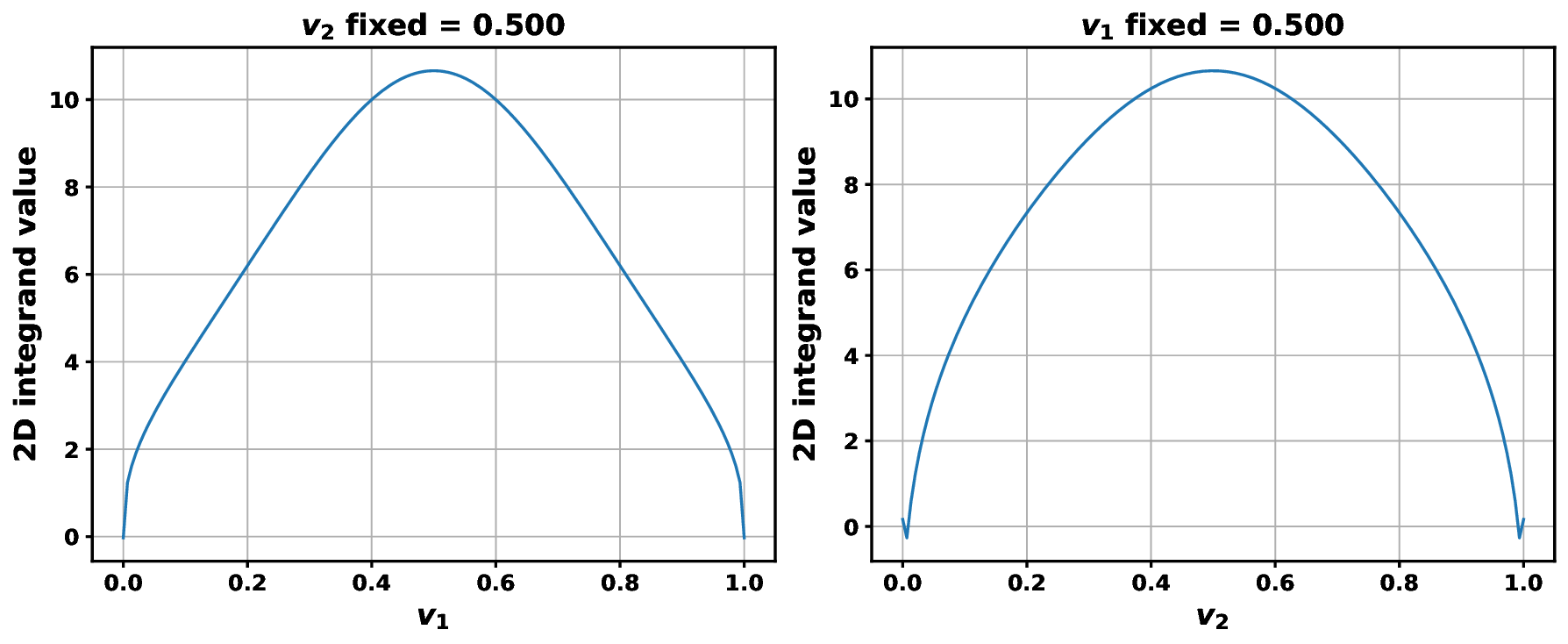}
        \caption{$c=1$}
        \label{fig:scale_1_Gauss}
    \end{subfigure}
    \hfill
    \begin{subfigure}[t]{0.48\textwidth}
        \centering
        \includegraphics[width=\linewidth]{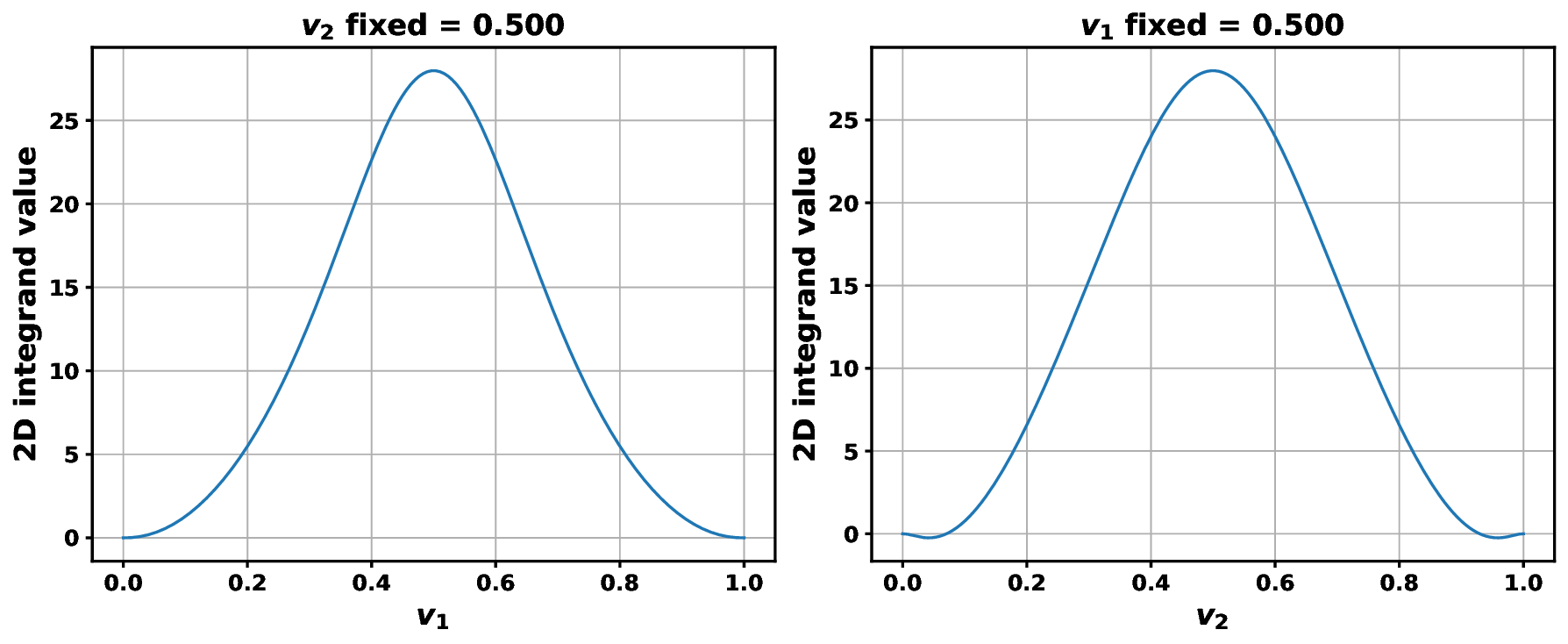}
         \caption{$c = 2.5$}
        \label{fig:scale_2.5_gauss}
    \end{subfigure}
    \vspace{-2mm}
    \caption{\small 1D slice of the transformed integrand component ${\tilde h_{2,(3,4)}^{(0)}}$ with $\mathbf{K}_{2,(3,4)}^{(0)} = \brac{2.763,0.523}, \mathbf{m}_{2,(3,4)} =\brac{0.255,0.105}$ for the QPC loss and a $10$-dimensional Gaussian loss vector (parameter setting in Section \ref{subsec:10D_gaussian_qcl}).}
    \label{fig:diff_scale_Gauss_2D}
\end{figure}
\begin{remark}
We choose the domain transformation to mitigate boundary-induced oscillations by taking a scaling parameter $c>1$, while avoiding excessive concentration of the reference density that would lead to poor numerical conditioning of the transformed integrands. In practice, the parameter $c$ trades off two competing effects: values close to $1$ may leave residual oscillations near the boundary, whereas excessively large values of $c$ lead to overly concentrated (peaked) integrands due to strong decay. In our numerical experiments, we select $c$ from a moderate range, specifically $c \in [4,10]$, which yields stable and robust performance. A principled, theoretically optimal choice of $c$ is left for future work.
\end{remark}
With this choice of transformation for the component integrands, we next develop the multilevel extension of the Fourier-RQMC framework.
\subsection{Iteration-Indexed Multilevel Fourier–RQMC Approximation}\label{subsec:Multilevel Fou-RQMC}

Recall from Section \ref{sec:Problem Setting and Background} that the primal–dual  variable of the MSRM optimization problem is  $\mathbf z := (\mathbf m,\lambda)$. At the optimization step $j$, we denote the iterate by $\mathbf z^{(j)} := (\mathbf m^{(j)},\lambda^{(j)})$. In Algorithm \ref{alg:SLSQP_full}, the constrained optimizer requires repeated evaluations of the Fourier-based objective, gradient, and Hessian at successive iterates $\mathbf z^{(j)}$. Directly recomputing these quantities via RQMC at every iteration can be computationally expensive. This section introduces an iteration-indexed multilevel Fourier–RQMC construction that exploits the strong correlation between consecutive iterates $\mathbf z^{(j-1)}$ and $\mathbf z^{(j)}$. By expressing gradient and Hessian evaluations in a block form and estimating differences across iterations, we obtain a multilevel estimator with reduced variance and improved efficiency. For notational convenience in this section and in Section \ref{sec:num_analysis}, we introduce the following block-valued mappings. 
\begin{notation}\label{not:fou_rqmc_KKT}
For $\nu=0,1,2$, define the block-valued mappings $\mathcal{H}^{(\nu)}(\cdot;\mathbf z)$ by
\[
\mathcal{H}^{(0)}(\cdot;\mathbf z)
:= \sum_{k=1}^d m_k + \lambda\,\tilde h^{(0)}(\cdot;\mathbf{m}),
\qquad
\mathcal{H}^{(1)}(\cdot;\mathbf z)
:=
\begin{bmatrix}
\mathbf 1 - \lambda\,\tilde h^{(1)}(\cdot;\mathbf{m})\\[3pt]
-\,\tilde h^{(0)}(\cdot;\mathbf{m})
\end{bmatrix},
\]
and
\[
\mathcal{H}^{(2)}(\cdot;\mathbf z)
:=
\begin{bmatrix}
-\lambda\,\tilde h^{(2)}(\cdot;\mathbf{m})
&
\big(\tilde h^{(1)}(\cdot;\mathbf{m})\big)^{\!\top}\\[3pt]
\tilde h^{(1)}(\cdot;\mathbf{m}) & 0
\end{bmatrix}.
\]
Then the Fourier-RQMC approximations of
$\hat{\mathcal{L}}^{\mathrm{Fou}}(\mathbf z)$,
$\hat{\mathcal{L}}^{\mathrm{Fou}}_{\nabla_{\mathbf z}}(\mathbf z)$, and
$\hat{\mathcal{L}}^{\mathrm{Fou}}_{\nabla_{\mathbf z}^2}(\mathbf z)$
are denoted by
$I_{N,S_{\mathrm{shift}}}^{\mathrm{RQMC}}\!\big[\mathcal{H}^{(0)}(\cdot;\mathbf z)\big]$,
$I_{N,S_{\mathrm{shift}}}^{\mathrm{RQMC}}\!\big[\mathcal{H}^{(1)}(\cdot;\mathbf z)\big]$, and
$I_{N,S_{\mathrm{shift}}}^{\mathrm{RQMC}}\!\big[\mathcal{H}^{(2)}(\cdot;\mathbf z)\big]$, respectively.
Here $I_{N,S_{\mathrm{shift}}}^{\mathrm{RQMC}}[\cdot]$ is applied componentwise to vector-valued
$\tilde h^{(1)}(\cdot;\mathbf{m})$ and entrywise to matrix-valued $\tilde h^{(2)}(\cdot;\mathbf{m})$.
\label{nota:Fou_RQMC_KKT_represent}
\end{notation}
The mappings $\mathcal H^{(0)}$, $\mathcal H^{(1)}$, and $\mathcal H^{(2)}$ in Notation \ref{not:fou_rqmc_KKT}  correspond to the Fourier-based representations of the objective, first-order optimality conditions, and second-order conditions w.r.t. the primal–dual variable $\mathbf z$.

Within the single-level Fourier–RQMC framework, Algorithm \ref{alg:Single-level RQMC} evaluates at each optimization iteration $j\in\{1,\ldots,J\}$ the Fourier-based gradient $\widehat{\mathcal L}^{\mathrm{Fou}}_{\nabla_ \mathbf{z}}\paren{\mathbf{z}^{(j)}}$. This quantity is approximated using a RQMC estimator $I^{\mathrm{RQMC}}_{N,S_{\mathrm{shift}}}$, constructed from a fixed set of $N$ Sobol points and $S_{\mathrm{shift}}$ digital shifts. In particular, the component $\hat g^{(v),\mathrm{Fou}}(\mathbf{m}^{(j)})$ is approximated by
\begin{equation}
\begin{aligned}
I^{\mathrm{RQMC}}_{N,S_{\mathrm{shift}}}\!\left[\tilde h^{(\nu)}\paren{\,\cdot\,; \mathbf{m}^{(j)}}\right]
=
\sum_{k\in \mathcal I_{q_{\ell}}} \sum_{\mathbf{p}\in\mathcal I_k}
I^{\mathrm{RQMC}}_{N,S_{\mathrm{shift}}}\!\left[\tilde h^{(\nu)}_{k,p}\paren{\,\cdot\,; \mathbf{m}^{(j)}_{k,p}}\right].
\end{aligned}
\label{eq:single_RQMC_est}
\end{equation}
Evaluating  \eqref{eq:single_RQMC_est} at every optimization iteration can be computationally demanding, particularly when the integrand dimension $k$ is large, when many component integrands $\tilde h^{(v)}_{k,p}$ must be evaluated, or when the number of optimization iterations is itself substantial. From a numerical optimization perspective, however, the successive iterates $\mathbf{z}^{(j-1)}$ and $\mathbf{z}^{(j)}$ are typically strongly correlated, since the optimization algorithm evolves gradually toward a solution. This observation motivates the use of a control-variate strategy, whereby the estimator at iteration $\mathbf{z}^{(j-1)}$ is exploited to reduce the variance of the estimator at the current iteration $\mathbf{z}^{(j)}$.

To formalize this idea, we express the Fourier-based gradient at iteration j as the telescoping sum
\begin{equation*}
\begin{aligned}
\hat{\mathcal{L}}_{\nabla_{\mathbf{z}}}^{\mathrm{Fou}}\paren{\mathbf{z}^{(j)}} &= \hat{\mathcal{L}}_{\nabla_{\mathbf{z}}}^{\mathrm{Fou}}\paren{\mathbf{z}^{(1)}} + \sum_{j=2}^{J} \left[\hat{\mathcal{L}}_{\nabla_{\mathbf{z}}}^{\mathrm{Fou}}\paren{\mathbf{z}^{(j)}} - \hat{\mathcal{L}}_{\nabla_{\mathbf{z}}}^{\mathrm{Fou}}\paren{\mathbf{z}^{(j-1)}}\right] \\
    & = \hat{\mathcal{L}}_{\nabla_{\mathbf{z}}}^{\mathrm{Fou}}\paren{\mathbf{z}^{(1)}} + \sum_{j=2}^{J} \Delta \hat{\mathcal{L}}_{\nabla_{\mathbf{z}}}^{\mathrm{Fou}}\paren{\mathbf{z}^{(j)},\mathbf{z}^{(j-1)}}
\end{aligned}  
\end{equation*}
where $\Delta \hat{\mathcal{L}}_{\nabla_{\mathbf{z}}}^{\mathrm{Fou}}\paren{\mathbf{z}^{(j)},\mathbf{z}^{(j-1)}} := \hat{\mathcal{L}}_{\nabla_{\mathbf{z}}}^{\mathrm{Fou}}\paren{\mathbf{z}^{(j)}} - \hat{\mathcal{L}}_{\nabla_{\mathbf{z}}}^{\mathrm{Fou}}\paren{\mathbf{z}^{(j-1)}}$. The increments in this decomposition capture differences between consecutive optimization iterates and typically exhibit smaller variance and improved regularity compared to the original estimator. This structure enables an iteration-indexed multilevel RQMC construction, conceptually related to Multilevel Monte Carlo (MLMC) methods \cite{giles_multilevel_2015,bayer_multilevel_2024}, with the crucial distinction that here the “levels” correspond to optimization iterations rather than to discretization levels of a stochastic differential equation.

We first estimate \( \hat{\mathcal{L}}_{\nabla_{\mathbf{z}}}^{\mathrm{Fou}}\paren{\mathbf{z}^{(1)}} \) by its RQMC approximation 
\({I}_{N,S_{\mathrm{shift}}}^{\mathrm{RQMC}} \brac{\mathcal{H}^{(1)}\paren{\cdot;\mathbf{z}_{1}}}\) using \eqref{eq:single_RQMC_est} 
with \(N\) Sobol points and \(S_{\mathrm{shift}}\) digital shifts. 
At each subsequent iteration \( j \in \{2,\dots,J\} \), we evaluate the incremental difference $\Delta \hat{\mathcal{L}}_{\nabla_{\mathbf{z}}}^{\mathrm{Fou}}\paren{\mathbf{z}^{(j)},\mathbf{z}^{(j-1)}}$ by ${I}_{N_j,S_{\mathrm{shift}}}^{\mathrm{RQMC}} \brac{\Delta \mathcal{H}^{(1)}\paren{\cdot;\mathbf{z}^{(j)},\mathbf{z}^{(j-1)}} }$ using \(N_j\) Sobol points and the same number \(S_{\mathrm{shift}}\) of randomizations, but with \emph{independent} digital shifts (i.e., a fresh set of RQMC randomizations for each iteration),
whose components are given by 
\({I}_{N_j,S_{\mathrm{shift}}}^{\mathrm{RQMC}} \brac{\Delta \tilde{h}^{(\nu)}}\) 
\begin{equation*}
\begin{aligned}
{I}_{N_j,S_{\mathrm{shift}}}^{\mathrm{RQMC}} \brac{\Delta \tilde h^{(\nu)}\paren{\cdot; \mathbf{m}^{(j)}, \mathbf{m}^{(j-1)}}}
&:=
\sum_{k\in \mathcal{I}_{q_{\ell}}}\sum_{\mathbf{p} \in \mathcal{I}_k}
I_{N_j,S_{\mathrm{shift}}}^{\mathrm{RQMC}}
\!\left[
\Delta \tilde h_{k,p}^{(\nu)}\!\paren{\cdot;\mathbf{m}^{(j)}_{k,p},\mathbf{m}^{(j-1)}_{k,p}}
\right],
\end{aligned}
\end{equation*}
where 
\begin{equation*}
\begin{aligned}
    {I}_{N_j,S_{\mathrm{shift}}}^{\mathrm{RQMC}} \brac{\Delta \tilde h_{k,p}^{(\nu),{\mathrm{Fou}}}\paren{\cdot;\mathbf{m}^{(j)}, \mathbf{m}^{(j-1)}}}&:= \frac{1}{S_{\text{shift}}} \sum_{s=1}^{S_{\text{shift}}} \frac{1}{N_j} 
    \sum_{n=1}^{N_j} 
    \Delta \tilde h_{k,p}^{(\nu)}\!\paren{\mathbf{v}_n^{(s,j)};\mathbf{m}^{(j)}_{k,p},\mathbf{m}^{(j-1)}_{k,p}},
\end{aligned}
\end{equation*}
and the differences of the transformed integrands are defined as
\begin{equation*}
\begin{aligned}
\Delta \tilde h_{k,p}^{(\nu)}\!\paren{\mathbf{v};\mathbf{m}_{k,p}^{(j)},\mathbf{m}_{k,p}^{(j-1)}} 
&:= 
\tilde h_{k,p}^{(\nu)}\!\left(\mathbf{v};\mathbf{m}_{k,p}^{(j)}\right)
-
\tilde h_{k,p}^{(\nu)}\!\left(\mathbf{v};\mathbf{m}_{k,p}^{(j-1)}\right).
\end{aligned}
\end{equation*}
The advantage of the multilevel method is that the difference terms $\Delta \tilde h_{k,p}^{(\nu)}\!\left(\mathbf{v};\mathbf{m}^{(j)}_{k,p},\mathbf{m}^{(j-1)}_{k,p}\right)$
often have better regularity and smaller variability than the original integrands, 
$\tilde h_{k,p}^{(\nu)}\!\left(\mathbf{v};\mathbf{m}^{(j)}_{k,p}\right)$(see Figure \ref{fig:contraction_1D_gauss} for illustration), and then we can choose the number of Sobol points $N_j$ in a level-dependent manner through the optimization process, which can reduce our computational time. This will be discussed in more detail in Section \ref{subsubsec:Multilevel Fou-RQMC comp complexity}.
\begin{figure}
    \centering
    \includegraphics[width=0.8\linewidth]{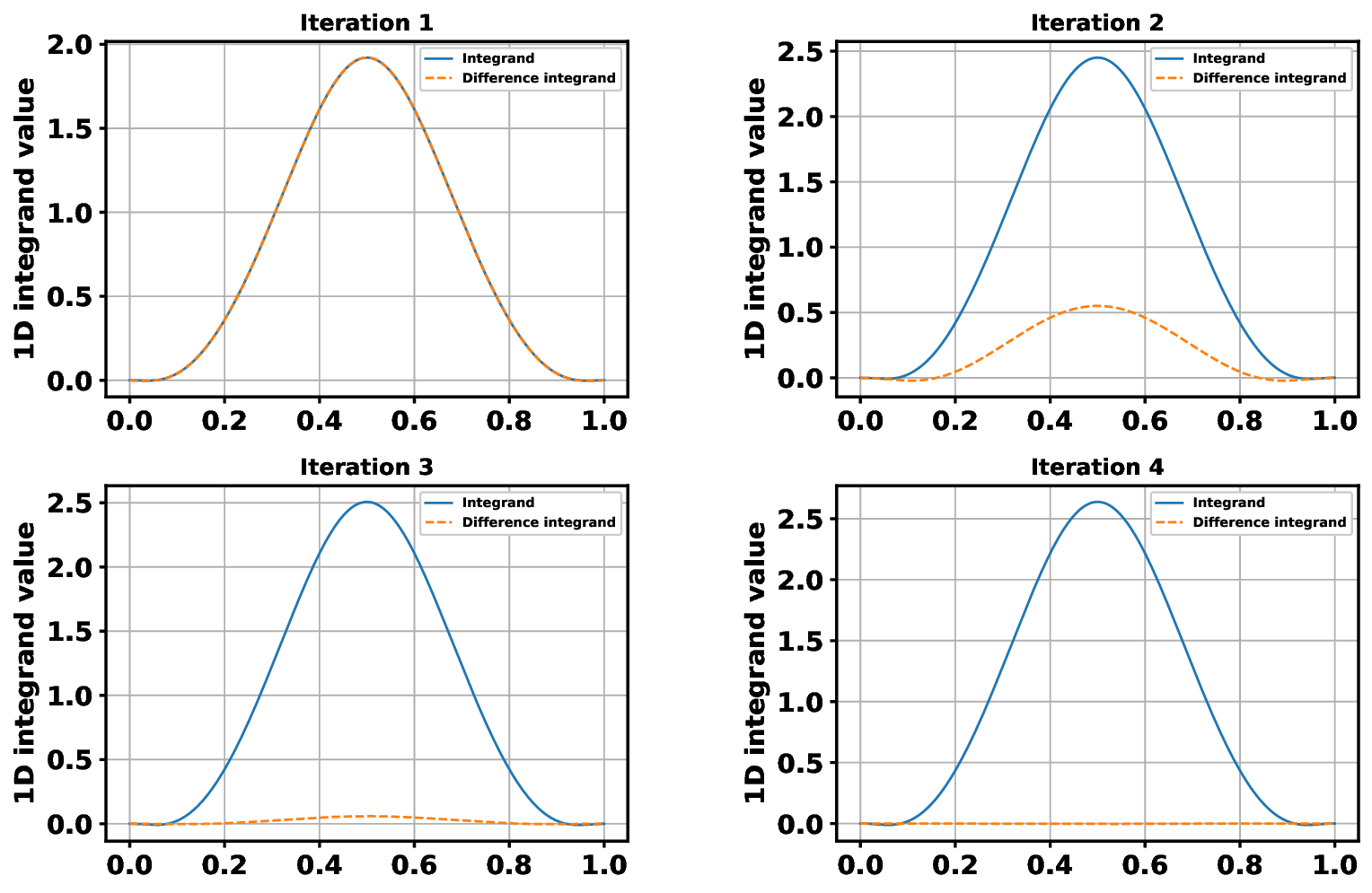}
    \caption{\small {Transformed integrand component $\tilde h_{1,1}^{(0)}$ (solid) and the corresponding difference integrand arising in the multilevel construction (dashed) across successive optimization iterations, for the QPC loss and 10D Gaussian loss vector (parameter setting in Section \ref{subsec:10D_gaussian_qcl}).}}
    \label{fig:contraction_1D_gauss}
\end{figure}
\begin{remark}[Damping for difference integrands]
Instead of determining two separate damping vectors
$\mathbf K^{(\nu,j-1)}_{k,p}$ and $\mathbf K^{(\nu,j)}_{k,p}$ for the integrands
$\tilde h^{(\nu)}_{k,p}\paren{\mathbf{v}; \mathbf m^{(j-1)}_{k,p}}$ and
$\tilde h^{(\nu)}_{k,p}\paren{\mathbf{v}; \mathbf m^{(j)}_{k,p}}$,
we apply Algorithm~\ref{alg:Fou_opti_damp} directly to the difference integrand
\[
\Delta \tilde h^{(\nu)}_{k,p}\paren{\mathbf{v}; \mathbf m^{(j)}_{k,p}, \mathbf m^{(j-1)}_{k,p}}
=
\tilde h^{(\nu)}_{k,p}\paren{\mathbf{v}; \mathbf m^{(j)}_{k,p}}
-
\tilde h^{(\nu)}_{k,p}\paren{\mathbf{v}; \mathbf m^{(j-1)}_{k,p}},
\]
and compute a single damping vector $\mathbf K^{(\nu,j)}_{k,p}$ for this term.

This choice is justified because the difference integrand typically inherits an admissible analyticity strip given by the intersection of the strips of the two terms and, in practice, often exhibits milder oscillations and boundary growth due to cancellation. In particular, cancellations between consecutive optimization iterates reduce oscillatory behavior and boundary growth. As a result, the difference integrands exhibit improved regularity and smaller variability, which is advantageous both for contour selection and for variance reduction in the multilevel estimator.
\end{remark}
A concise overview of the multilevel Fourier–RQMC algorithm is provided in Algorithm \ref{alg:RQMC_Fou_multi}.
\begin{algorithm}[H]
\small
\caption{Multilevel Fourier-RQMC at optimization step $j$}
\label{alg:RQMC_Fou_multi}
\begin{algorithmic}[1]
\Require
Allocation $\mathbf m^{(j)}$; baseline Sobol size $N_1 = N\in\mathbb N$; level sizes $\{N_l\}_{l=2}^j$;
number of shifts $S_{\mathrm{shift}}$; for  $k \in \mathcal{I}_{q_{\ell}}$; index set $\mathcal I_k$, for $\mathbf{p} \in \mathcal{I}_k$: component \emph{difference} integrands
$\Delta h_{k,p}^{(\nu,l)}\paren{\mathbf{v}; \mathbf{m}_{k,p}^{(l)},\mathbf{m}_{k,p}^{(l-1)}}$ for levels $l=2,\dots,j$ with corresponding optimal damping vectors $\mathbf{K}_{k,p}^{(\nu,l)}$; RQMC estimates at \emph{level-1} 
${I}_{N_1,S_{\mathrm{shift}}}^{\mathrm{RQMC}} \brac{\tilde h^{(\nu)}\paren{\cdot;\mathbf{m}^{(1)}}}$ from Algorithm~\ref{alg:Single-level RQMC}.

\State For each $(k,p)$, apply the same transform \eqref{eq:domain_transform_MN} or \eqref{eq:domain_transform_MNIG} used at the level $1$ to obtain $\Delta \tilde h_{k,p}^{(\nu,l)}\paren{\mathbf{v}; \mathbf{m}_{k,p}^{(l)},\mathbf{m}_{k,p}^{(l-1)}}$.
\State \textbf{(Coupled base nets across levels)} For $k \in \mathcal{I}_{q_{\ell}}$:
\begin{itemize}
  \item Generate a single \emph{unshifted} base-2 digital net
        $\{\mathbf u^{(s)}_n\}_{n=1}^{N_1}\subset[0,1]^{k}$. Reuse this same \emph{unshifted} net for \emph{all} levels $l=1,\dots,j$ and all $\mathbf{p}\in\mathcal I_k$.
\end{itemize}
\State Initialize the multi-level estimators from level $2$ to $j$, ${I}_{N_{2:l},S_{\mathrm{shift}}}^{\mathrm{RQMC}} \brac{\Delta \tilde h^{(\nu)}} \gets 0$ 
\For{$l=2,\dots,j$} 
  \State Draw an \emph{independent} digital shift $\boldsymbol\Delta^{(s,l)}\in[0,1]^{k}$ (with a new seed, independent across $k$ and $l$).
  \State Form the shifted points $\mathbf v^{(s,l)}_n=\mathbf u^{(s)}_n\oplus\boldsymbol\Delta^{(s,l)}$, $n=1,\dots,N_l$.
  \State Initialize the level-$l$ increments: 
$
{I}_{N_l,S_{\mathrm{shift}}}^{\mathrm{RQMC}} \brac{\Delta \tilde h^{(\nu)}\paren{\cdot;\mathbf{m}^{(l)}, \mathbf{m}^{(l-1)}}} \gets 0.  
  $
  \For{$k \in \mathcal{I}_{q_{\ell}}$}
    \For{$\mathbf{p}\in\mathcal I_k$}
      \State Compute
      \(
I^{\mathrm{RQMC}}_{N_l,S_{\mathrm{shift}}}\!\left[\Delta \tilde h_{k,p}^{(\nu)}\paren{\cdot;\mathbf{m}_{k,p}^{(l)},\mathbf{m}_{k,p}^{(l-1)}}\right]
      \)
      using $\{\mathbf v^{(s,l)}_n\}_{n=1}^{N_l}$.
      \State Accumulate
      $
        {I}_{N_l,S_{\mathrm{shift}}}^{\mathrm{RQMC}} \brac{\Delta \tilde h^{(\nu)}\paren{\cdot;\mathbf{m}^{(l)}, \mathbf{m}^{(l-1)}}}
      \mathrel{+}= I^{\mathrm{RQMC}}_{N_l,S_{\mathrm{shift}}}\!\left[\Delta \tilde h_{k,p}^{(\nu)}\paren{\cdot;\mathbf{m}_{k,p}^{(l)},\mathbf{m}_{k,p}^{(l-1)}}\right].
      $
    \EndFor
  \EndFor
  \State Update the multi-level estimators:
  \(
    {I}_{N_{2:l},S_{\mathrm{shift}}}^{\mathrm{RQMC}} \brac{\Delta \tilde h^{(\nu)}} \mathrel{+}=  {I}_{N_j,S_{\mathrm{shift}}}^{\mathrm{RQMC}} \brac{\Delta \tilde h^{(\nu)}\paren{\cdot;\mathbf{m}^{(l)}, \mathbf{m}^{(l-1)}}}.
  \)
\EndFor
\State \textbf{Output at step $j$:}
\(
   {I}_{N_j,S_{\mathrm{shift}}}^{\mathrm{RQMC}} \brac{\tilde h^{(\nu)}\paren{\cdot;\mathbf{m}^{(j)}}}
  =
  {I}_{N_1,S_{\mathrm{shift}}}^{\mathrm{RQMC}} \brac{\tilde h^{(\nu)}\paren{\cdot;\mathbf{m}^{(1)}}}
  +{I}_{N_{2:l},S_{\mathrm{shift}}}^{\mathrm{RQMC}} \brac{\Delta \tilde h^{(\nu)}}.  
\)
\end{algorithmic}
\end{algorithm}
\enlargethispage{2\baselineskip} 

\section{Error and Complexity Analysis for Fourier–RQMC Methods}\label{sec:num_analysis}
This section analyzes the error and computational complexity of the proposed single-level and multilevel Fourier–RQMC schemes. We decompose the total numerical error into (i) an optimization error due to the SQP solver (Section~\ref{subsec:opt_err}) and (ii) a quadrature-induced surrogate error due to Fourier–RQMC approximation of the expectation terms (Section~\ref{subsec:stat_err_analysis}). Throughout, we assume the KKT system \eqref{eq:KKT_system}  admits a unique solution $\mathbf z^\ast=(\mathbf m^\ast,\lambda^\ast)$, ensuring the target allocation is well defined. A convergence analysis of the SAA method for the MSRM problem is presented in Appendix~\ref{appendix:converge_analysis_SAA}, providing a benchmark for comparison with our  Fourier–RQMC surrogates. We now introduce additional notation that will be used in Sections~\ref{subsec:opt_err}--\ref{subsec:computational_complex_Fou_RQMC}.
\begin{notation}\
\begin{itemize}
\item Let $M \subset \mathbb{R}^d$ be a nonempty compact set. For each component $(k,p)$, define
$M_{k,p} := P_{k,p} M \subset \mathbb{R}^k$, with $P_{k,p}$ is defined in Notation \ref{not:component_integrands}.
Moreover, for a given $\bar\lambda > 0$, set
$\mathcal{Z} := M \times (0,\bar\lambda] \subset \mathbb{R}^{d+1}$.
   \item  $\mathbf{z}^{*} = \paren{\mathbf{m^*},\lambda^*}$ denotes the exact unique solution obtained using the 
    true Lagrangian  
    $\mathcal{L}(\mathbf{z})\paren{\hat{\mathcal{L}}^{\mathrm{Fou}}(\mathbf{z})}$.
    \item 
${\mathbf{z}}_{N,S_{\mathrm{shift}}}^{\mathrm{RQMC},*} := \paren{{\mathbf{m}}_{N,S_{\mathrm{shift}}}^{\mathrm{RQMC},*},{\lambda}_{N,S_{\mathrm{shift}}}^{\mathrm{RQMC},*}}$ denotes the solution obtained by 
    replacing the true Lagrangian 
    with its  Fourier-RQMC approximation $I_{N,S_{\mathrm{shift}}}^{\mathrm{RQMC}}
        \!\left( 
            \mathcal{H}^{(0)}(\cdot;\mathbf z)
        \right).$
    \item ${\mathbf{z}}_{N,S_{\mathrm{shift}}}^{(\mathrm{RQMC},j)}:= \paren{{\mathbf{m}}_{N,S_{\mathrm{shift}}}^{(\mathrm{RQMC},j)}, {\lambda}_{N,S_{\mathrm{shift}}}^{(\mathrm{RQMC},j)}} $ be the $j$-th iterate solution, which is returned by the numerical optimization solver (SLSQP) applied to the Fourier-RQMC problem $I_{N,S_{\mathrm{shift}}}^{\mathrm{RQMC}}
        \!\left( 
            \mathcal{H}^{(0)}(\cdot;\mathbf z)
        \right).$
\end{itemize}
\label{nota:decompose_error}
\end{notation}
By the triangle inequality, we obtain the following decomposition of the total error
\begin{equation}
\bigl\|{\mathbf{z}}_{N,S_{\mathrm{shift}}}^{(\mathrm{RQMC},j)}-\mathbf z^*\bigr\|
\;\le\;
\underbrace{\bigl\| {\mathbf{z}}_{N,S_{\mathrm{shift}}}^{(\mathrm{RQMC},j)}-{\mathbf{z}}_{N,S_{\mathrm{shift}}}^{\mathrm{RQMC},*}\bigr\|}_{\varepsilon_{\mathrm{opt}}(j)}
\;+\;
\underbrace{\bigl\|{\mathbf{z}}_{N,S_{\mathrm{shift}}}^{\mathrm{RQMC},*}-\mathbf z^*\bigr\|}_{\varepsilon_{\mathrm{stat}}^{\mathrm{RQMC}}(N)}.
\label{eq:terminal_error_decompose}
\end{equation}
where
\begin{itemize}
\item ${\varepsilon_{\mathrm{opt}}(j)}$ is optimization error, set by the deterministic solver (SLSQP) on the Fourier-RQMC surrogate.
\item ${\varepsilon_{\mathrm{stat}}^{\mathrm{RQMC}}(N)}$ is a statistical error when approximating the exact system with the Fourier-RQMC surrogate.
\end{itemize}
\begin{remark}
    We emphasize that the optimization error $\varepsilon_{\mathrm{opt}}(j)$ is governed solely by the convergence properties of the SQP algorithm applied to the  Fourier–RQMC surrogate problem and is therefore insensitive to whether a single-level or multilevel Fourier–RQMC estimator is employed. In contrast, the statistical error $\varepsilon^{\mathrm{RQMC}}_{\mathrm{stat}}(N)$ depends explicitly on the structure of the underlying Fourier–RQMC estimator, and it is at this level that the distinction between single-level and multilevel constructions becomes essential for variance reduction and complexity improvements.
\label{rema:separate_opt_stat_err}
\end{remark}
We now analyze these two error contributions separately, starting with the optimization error $\varepsilon_{\mathrm{opt}}(j)$.
\subsection{Optimization error}\label{subsec:opt_err}
We perform numerical optimization based on the Fourier–RQMC surrogates. To derive convergence rates for the associated numerical optimization scheme implemented via SQP,  we introduce in Appendix \ref{appendix:additional_opt_err}  a set of regularity assumptions, adapted from \cite{boggs_sequential_1995} and tailored to the Fourier–RQMC surrogate optimization problem, and obtain the following result from \cite[Theorem 3.4]{boggs_sequential_1995}.
\begin{theorem}
Suppose that Assumption \ref{assump:regularity_Fou_RQMC} holds, and let the sequence
$\{{\mathbf{z}}_{N,S_{\mathrm{shift}}}^{(\mathrm{RQMC},j)}\}_{j\ge 0}$ be generated by the SQP algorithm.
Assume further that, for almost every realization of the RQMC shifts, the SQP iterates satisfy ${\mathbf{z}}_{N,S_{\mathrm{shift}}}^{(\mathrm{RQMC},j)} \to {\mathbf{z}}_{N,S_{\mathrm{shift}}}^{\mathrm{RQMC},*}$ as $j\to\infty$. Then the convergence is superlinear: there exists a sequence of constants
$\{\eta_j\}_{j\ge 0}$ with $\eta_j>0$ and $\eta_j \to 0$ such that
\begin{equation}
\norm{\mathbf{z}_{N,S_\mathrm{shift}}^{(\mathrm{RQMC},j+1)}-{\mathbf{z}}_{N,S_{\mathrm{shift}}}^{\mathrm{RQMC},*}}
\leq
\eta_j \,
\norm{{\mathbf{z}}_{N,S_{\mathrm{shift}}}^{(\mathrm{RQMC},j)}-{\mathbf{z}}_{N,S_{\mathrm{shift}}}^{\mathrm{RQMC},*}}.
\end{equation}
\label{theorem:superlinear_converge_SQP}
\end{theorem}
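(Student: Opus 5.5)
The plan is to reduce the statement to \cite[Theorem 3.4]{boggs_sequential_1995} applied pathwise, i.e.\ for each fixed realization of the RQMC shifts. Once the digital shifts are drawn, the surrogate $I_{N,S_{\mathrm{shift}}}^{\mathrm{RQMC}}\!\left(\mathcal{H}^{(0)}(\cdot;\mathbf z)\right)$ and its derivatives $I_{N,S_{\mathrm{shift}}}^{\mathrm{RQMC}}[\mathcal H^{(1)}]$ and $I_{N,S_{\mathrm{shift}}}^{\mathrm{RQMC}}[\mathcal H^{(2)}]$ are deterministic functions of $\mathbf z$. Each is a finite equal-weight average of the transformed component integrands $\tilde h^{(\nu)}_{k,p}(\mathbf v_n^{(s)};\mathbf m_{k,p})$. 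These depend smoothly on $\mathbf m$ through the factor $e^{\langle \mathbf K-\mathrm i\mathbf u,\mathbf m\rangle}$. Hence the surrogate problem is a deterministic equality-constrained nonlinear program on $\mathcal Z$, and Algorithm~\ref{alg:SLSQP_full} is exactly a (quasi-)Newton SQP method applied to it.

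The first step is to check the hypotheses of Boggs--Tolle one by one, taken from Assumption~\ref{assump:regularity_Fou_RQMC}. These are: twice continuous (Lipschitz) differentiability of the surrogate objective and constraint in a neighbourhood of $\mathbf z^{\mathrm{RQMC},*}_{N,S_{\mathrm{shift}}}$; linear independence of the active constraint gradient, which is automatic since $\hat g_\nabla\neq 0$ near the solution (note $\lambda^*\hat g_\nabla=\mathbf 1$); second-order sufficiency on the null space of $\hat g_\nabla^\top$, which follows from strict convexity on $\mathcal U$ as in (A6); and nonsingularity of the KKT matrix $\hat{\mathcal L}^{\mathrm{Fou}}_{\nabla_{\mathbf z}^2}$. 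The differentiability can be obtained by differentiating under the finite sum. Because the sum is finite, no interchange issue arises beyond pointwise smoothness in $\mathbf m$ of each summand. This in turn holds since $\mathbf m$ ranges over the compact $M$ and the damping is fixed on the relevant neighbourhood.

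The second step is the Dennis--Moré-type condition, which is the core of the Boggs--Tolle theorem: superlinear convergence holds iff the Hessian approximation $\hat{\mathbf B}^{\mathrm{Fou}}(\mathbf m^{(j)})$ satisfies
\[
\frac{\bigl\|P_j\bigl(\hat{\mathbf B}^{\mathrm{Fou}}(\mathbf m^{(j)})-\nabla^2_{\mathbf m\mathbf m}\hat{\mathcal L}(\mathbf z^{\mathrm{RQMC},*}_{N,S_{\mathrm{shift}}})\bigr)\mathbf d^{(j)}\bigr\|}{\|\mathbf d^{(j)}\|}\to 0 .
\]
Here $P_j$ is the projection onto the tangent space of the linearized constraint. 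I would take this directly from Assumption~\ref{assump:regularity_Fou_RQMC}, or justify it via the BFGS update of Remark~\ref{rema:SLSQP} under the standard positive-curvature condition $\mathbf y^{(j),\top}\mathbf s^{(j)}>0$. With the exact Hessian estimator it holds trivially, giving even quadratic convergence. I also need full steps $\alpha^{(j)}=1$ eventually, which is part of the Boggs--Tolle local setting once the iterates converge, the latter being assumed in the statement.

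Given these, Boggs--Tolle Theorem 3.4 yields $\|\mathbf z^{(j+1)}-\mathbf z^*\|=o(\|\mathbf z^{(j)}-\mathbf z^*\|)$. Setting $\eta_j$ equal to the ratio of successive errors, or a small positive number if an error vanishes, gives the claim for almost every shift realization. The main obstacle is that the conclusion must hold for almost every realization of the shifts. The delicate point is that the regularity constants, in particular the nonsingularity of the KKT matrix and second-order sufficiency at the random limit $\mathbf z^{\mathrm{RQMC},*}_{N,S_{\mathrm{shift}}}$, must hold almost surely rather than only for the exact problem. I would handle this by stating them pathwise in Assumption~\ref{assump:regularity_Fou_RQMC}, and by noting that the set of shifts where they fail is the intersection of countably many measurable events of probability zero.
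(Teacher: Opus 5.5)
Your proposal follows the paper's own proof. That proof consists solely of invoking \cite[Theorem~3.4]{boggs_sequential_1995} for each realization of the shifts, with second-order sufficiency, nonsingularity, the uniform bounds on the BFGS matrices and the projected Dennis--Mor\'e condition all supplied as almost-sure hypotheses by Assumption~\ref{assump:regularity_Fou_RQMC}. Drop your derivation of second-order sufficiency for the surrogate from \eqref{ass:A6}: unlike an SAA average, the Fourier--RQMC quadrature is taken in frequency space and need not inherit the convexity of $\ell$ (and strict convexity would not give a definite Hessian anyway), so take it from \eqref{ass:C4} as the paper does.
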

\begin{proof}
    The detailed proof of Theorem \ref{theorem:superlinear_converge_SQP} is in \cite{boggs_sequential_1995}.
\end{proof}
\begin{remark}
Under the conditions of Theorem~\ref{theorem:superlinear_converge_SQP}, there exists an iteration index
$J_{\mathrm{loc}}$ such that, for all $j \ge J_{\mathrm{loc}}$, ${\mathbf{z}}_{N,S_{\mathrm{shift}}}^{(\mathrm{RQMC},j)}$ exhibit the corresponding superlinear contraction. We refer to this regime as the \emph{local convergence stage}.
\label{rema:local_convergence_stage}
\end{remark}
Using Theorem~\ref{theorem:superlinear_converge_SQP} together with
Remark~\ref{rema:local_convergence_stage}, the optimization error, measured w.r.t. the Fourier-RQMC surrogate solution, $\mathbf{z}_{N,S_{\mathrm{shift}}}^{\mathrm{RQMC},*}$, 
$\varepsilon_{\mathrm{opt}}(j)$ admits the superlinear bound
\begin{equation}
    \varepsilon_{\mathrm{opt}}(j)
    = \mathcal{O}\paren{\|e_{J_{\mathrm{loc}}}\|^{\,p^{\,j-J_{\mathrm{loc}}}}} ,
    \quad 1<p<2,
    \qquad j \ge J_{\mathrm{loc}},
    \label{eq:opti_err_SAA_slinear}
\end{equation}
where
\(
\|e_{J_{\mathrm{loc}}}\|
:=
\norm{{\mathbf{z}}_{N,S_{\mathrm{shift}}}^{(\mathrm{RQMC},J_{\mathrm{loc}})}-{\mathbf{z}}_{N,S_{\mathrm{shift}}}^{\mathrm{RQMC},*}}.
\)

We next analyze the statistical error $\varepsilon_{\mathrm{stat}}^{\mathrm{RQMC}}(N)$ induced by the Fourier-RQMC approximation of the expectation terms.

\subsection{Statistical Error and Asymptotic Analysis}\label{subsec:stat_err_analysis}
In order to bound the statistical error of the Fourier-RQMC solution, we first establish a uniform strong law of large numbers (USLLN) for the estimators $ I_{N,\bar S}^{\mathrm{RQMC}}\!\left[\tilde h_{k,p}^{(\nu)}(\cdot;\mathbf m_{k,p})\right]$, $\nu\in\{0,1,2\}$,  $k \in \mathcal{I}_{q_{\ell}}$, and $\mathbf{p} \in \mathcal{I}_k$, in the regime $N\to\infty$ for fixed $S_{\mathrm{shift}}=\bar{S}$.

As noted in \cite{owen_strong_2021}, almost sure convergence in $N$ need not hold for arbitrary randomizations of low-discrepancy nets. Therefore, throughout the remainder of this section, we work with Sobol point sets randomized via \emph{nested uniform scrambling} \cite{owen_randomly_1995}. For numerical experiments (Section ~\ref{subsec:sing-RQMC}), we instead employ digitally shifted Sobol sequences.
Moreover, in what follows, we work under Assumptions~\ref{ass:A7} for $\hat g_{k,p}^{(\nu),\mathrm{Fou}}$ and~\ref{ass:A10} for $\tilde h_{k,p}^{(\nu)}$, with $\nu\in\{0,1,2\}$. 

The following lemma provides uniform convergence of the Fourier–RQMC estimators over the decision set; this is the key input for consistency of the surrogate solution.
\begin{lemma}[Uniform convergence of Fourier-RQMC estimators]
\label{lem:uniform_conv_RQMC_Fou}
Let $\{\mathbf v_n\}_{n=1}^N$ be a Sobol sequence in $[0,1]^k$ with a uniformly bounded gain coefficient.\footnote{See \cite[Theorem~1]{owen_scrambling_1998}.}
Fix $S_{\mathrm{shift}}=\bar S$, and for each $s=1,\dots,S_{\mathrm{shift}}$ let
$\{\mathbf v_n^{(s)}\}_{n=1}^N$ be obtained by applying {nested uniform scrambling} to $\{\mathbf v_n\}_{n=1}^N$ as in \cite{owen_randomly_1995}.
Then, for each $\nu\in\{0,1,2\}$,
the RQMC estimator $I_{N,\bar S}^{\mathrm{RQMC}}\!\left[\tilde h_{k,p}^{(\nu)}(\cdot;\mathbf m_{k,p})\right]$
satisfies a USLLN on $M_{k,p}$,
\begin{equation}
\sup_{\mathbf m_{k,p}\in M_{k,p}}
\left\|
I_{N,\bar S}^{\mathrm{RQMC}}\!\left[\tilde h_{k,p}^{(\nu)}(\cdot;\mathbf m_{k,p})\right]
-\hat g_{k,p}^{(\nu),\mathrm{Fou}}(\mathbf m_{k,p})
\right\|
\xrightarrow[\;N\to\infty\;]{\mathrm{a.s.}} 0.
\label{eq:uniform_SLLN_rqmc_fou}
\end{equation}
\end{lemma}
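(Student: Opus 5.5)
We will combine a pointwise strong law for scrambled nets with an equicontinuity argument in the parameter $\mathbf m_{k,p}$, via a standard finite-net (Glivenko--Cantelli type) construction on the compact set $M_{k,p}$.

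\textbf{Pointwise step.} Fix $\mathbf m_{k,p}\in M_{k,p}$ and $\nu$. By Assumption~\ref{ass:A10}, $\tilde h_{k,p}^{(\nu)}(\cdot;\mathbf m_{k,p})\in L^2([0,1]^k)$. For a Sobol sequence with uniformly bounded gain coefficients and nested uniform scrambling, the variance of the scrambled-net average satisfies $\mathrm{Var}\le \Gamma\,\sigma^2/N$, with $\Gamma$ the gain bound \cite[Theorem~1]{owen_scrambling_1998}. This alone gives only convergence in probability. To obtain almost sure convergence, we invoke the strong law for scrambled nets of \cite{owen_strong_2021}, which holds for $L^2$ (indeed $L^p$, $p>1$) integrands under nested uniform scrambling. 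Averaging over the fixed number $\bar S$ of independent replicates preserves almost sure convergence, since it is a finite average. By unbiasedness, the limit is $\hat g_{k,p}^{(\nu),\mathrm{Fou}}(\mathbf m_{k,p})$. The same holds componentwise for $\nu=1,2$.

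\textbf{Uniformity.} We use the explicit parameter dependence of the integrand. By \eqref{eq:transform_integrand_v}, $\mathbf m_{k,p}$ enters only through the factor $e^{\langle \mathbf K,\mathbf m\rangle}e^{-\mathrm i\,\mathbf m^\top G^{-1}(\mathbf v)}$. Hence for $\mathbf m,\mathbf m'\in M_{k,p}$,
\[
\bigl|\tilde h_{k,p}^{(\nu)}(\mathbf v;\mathbf m)-\tilde h_{k,p}^{(\nu)}(\mathbf v;\mathbf m')\bigr|\le \|\mathbf m-\mathbf m'\|\,B(\mathbf v),\qquad B(\mathbf v):=C_M\bigl(1+\|G^{-1}(\mathbf v)\|\bigr)\,\bigl|\tilde a(\mathbf v)\bigr|,
\]
where $\tilde a$ is the transformed amplitude (the damping being fixed along the compact set, or chosen continuously in $\mathbf m$ with values in a compact subset of $\delta_{K_{k,p}}^{(\nu)}$). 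We need $B\in L^2([0,1]^k)$, or at least $B\in L^1$ with an a.s. strong law for $B$. This follows from the Gaussian/exponential tails of $|\Phi_{\mathbf X_{k,p}}\hat\ell_{k,p}^{(\nu)}|$ relative to the reference density, since the polynomial factor $\|\mathbf u\|$ does not destroy integrability. It is essentially the moment analogue of Assumption~\ref{ass:A10}, and we would state it as part of that assumption if needed.

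\textbf{Covering argument.} Given $\epsilon>0$, we cover $M_{k,p}$ by finitely many balls of radius $\delta$ centred at $\mathbf m^1,\dots,\mathbf m^L$. Then
\begin{align*}
\sup_{\mathbf m}\bigl|I_N[\tilde h(\cdot;\mathbf m)]-\hat g(\mathbf m)\bigr|\le{}& \max_{i}\bigl|I_N[\tilde h(\cdot;\mathbf m^i)]-\hat g(\mathbf m^i)\bigr|\\
&+\delta\, I_N[B]+\delta\,\textstyle\int B .
\end{align*}
The first term tends to $0$ a.s. because it is a finite maximum of pointwise strong laws. Applying the pointwise step to $B$ gives $I_N[B]\to\int B$ a.s. Consequently, the $\limsup$ is at most $2\delta\int B$ a.s. Letting $\delta\downarrow0$ along a countable sequence yields \eqref{eq:uniform_SLLN_rqmc_fou}. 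Finally, the finite sums over $(k,p)$ in \eqref{eq:aggregate_components} transfer the result to the aggregated estimators.

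\textbf{Main obstacle.} The delicate step is the pointwise a.s.\ convergence for scrambled Sobol points with merely square-integrable, possibly boundary-singular integrands. Chebyshev plus Borel--Cantelli works only along a subsequence $N=2^m$. To extend to all $N$, we would first try to restrict to $N=2^m$, which is the natural setting for nets. Otherwise we would rely directly on \cite{owen_strong_2021}. A secondary obstacle is verifying integrability of the envelope $B$ uniformly over the chosen damping vectors.
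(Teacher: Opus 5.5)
Your proof is correct and follows the same skeleton as the paper's, which adapts \cite[Proposition~7]{shapiro_monte_2003}. Both use pointwise strong laws from \cite[Theorem~4.2]{owen_strong_2021} at finitely many centres of a cover of the compact set $M_{k,p}$. Both then apply that strong law once more to an envelope that controls the fluctuation between centres. Your ``main obstacle'' is settled in the paper exactly as in your fallback, by citing Owen--Rudolf.

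The real difference is the envelope. You use a global Lipschitz envelope $B(\mathbf v)=C_M(1+\|G^{-1}(\mathbf v)\|)\,|\tilde a(\mathbf v)|$. This gives a uniform cover radius, a single extra strong law, and the explicit bound $2\delta\int B$. The price is that $\|G^{-1}(\mathbf v)\|\,|\tilde a(\mathbf v)|$ must lie in $L^p$ for some $p>1$, and Assumption~\ref{ass:A10} does not give this.

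The paper instead uses, for each centre $\bar{\mathbf m}_{k,p}$, the local modulus $\delta^{(\nu)}_{\bar{\mathbf m}_{k,p},\eta}(\mathbf v)=\sup_{\|\mathbf m_{k,p}-\bar{\mathbf m}_{k,p}\|\le\eta}\|\tilde h^{(\nu)}_{k,p}(\mathbf v;\mathbf m_{k,p})-\tilde h^{(\nu)}_{k,p}(\mathbf v;\bar{\mathbf m}_{k,p})\|$. This tends to $0$ pointwise by continuity in $\mathbf m_{k,p}$ and is dominated by $2H$. Dominated convergence then makes its mean small, and the strong law is applied to it directly. The term $\hat g^{(\nu),\mathrm{Fou}}_{k,p}(\mathbf m_{k,p}^{(i)})-\hat g^{(\nu),\mathrm{Fou}}_{k,p}(\mathbf m_{k,p})$ is handled by uniform continuity.

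This route needs only continuity, and the envelope costs nothing. With the damping fixed, $|\tilde h^{(\nu)}_{k,p}(\mathbf v;\mathbf m_{k,p})|$ depends on $\mathbf m_{k,p}$ only through the bounded factor $e^{\langle \mathbf K^{(\nu)}_{k,p},\mathbf m_{k,p}\rangle}$. Hence $H$ is a constant multiple of $|\tilde h^{(\nu)}_{k,p}(\cdot;\bar{\mathbf m}_{k,p})|$, which lies in $L^2$ by Assumption~\ref{ass:A10}.

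You can keep your global structure and still drop the extra hypothesis. Bound the phase difference by $\min(2,\|\mathbf m-\mathbf m'\|\,\|G^{-1}(\mathbf v)\|)$ instead of $\|\mathbf m-\mathbf m'\|\,\|G^{-1}(\mathbf v)\|$. The resulting centre-independent envelopes $B_\delta$ are dominated by a multiple of $|\tilde a|\in L^2$, and $\int B_\delta\to 0$ by dominated convergence. It then suffices to have $\limsup\le 2\int B_\delta$ almost surely for each $\delta$ in a countable sequence. This hybrid also avoids the centre-dependent radii that the paper's write-up passes over when it extracts a finite cover using a single $\eta$.
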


\begin{proof}
    The detailed proof is presented in Appendix \ref{Appendix:proof_uniform_conv_RQMC}
\end{proof}
To pass from uniform convergence of the estimators to convergence of the corresponding solution, a stability condition on the underlying F.O.C. system \eqref{eq:KKT_system} is required,  which we formalize via the notion of \emph{strong regularity} \cite{robinson_strongly_1980} in Definition \ref{def:strong_regular_solution}.
\begin{definition}[Strong regularity of optimal solution]
    Suppose that our Fourier-based representation for the true Lagrangian gradient $\hat{\mathcal{L}}_{\nabla_{\mathbf{z}}}^{\mathrm{Fou}}(\mathbf{z})$  is continuously differentiable. We say that a solution $\mathbf{z}^*$ is strongly regular if there exist neighborhoods $U_1$ and $U_2$ of  $\mathbf{0}_{\mathcal{Z}}$ and $\mathbf{z^*}$, such that for every $\delta_{\mathbf{z}} \in U_1$ the linearized  equation
    \begin{equation}
        \delta_{\mathbf{z}} + \hat{\mathcal{L}}_{\nabla_{\mathbf{z}}}^{\mathrm{Fou}}(\mathbf{z^*}) + \hat{\mathcal{L}}_{\nabla_{\mathbf{z}}^2}^{\mathrm{Fou}}(\mathbf{z^*})\paren{\mathbf{z}-\mathbf{z^*}} = 0
    \label{eq:strong_regular_sol}
    \end{equation}
has a unique solution 
 in $U_2$, denoted $\tilde {\mathbf{z}} = \tilde {\mathbf{z}} (\delta_\mathbf{z}) $, and $\tilde {\mathbf{z}}(.)$ is Lipschitz continuous on $U_1$.
\label{def:strong_regular_solution}
\end{definition}
Next, we state Theorem~\ref{theorem:consistentcy-RQMC-Fou}, which establishes the consistency of the Fourier–RQMC solution w.r.t.$N$.
\begin{theorem}[Consistency of solution from Fourier-RQMC problem]\label{theorem:consistentcy-RQMC-Fou}
Fix $S_{\mathrm{shift}} = \bar S$, and let $\curly{\mathbf{v}_n^{(s)}}_{n=1}^{N}$ be constructed as in  Lemma \ref{lem:uniform_conv_RQMC_Fou}. Suppose that Assumption \ref{assump:regularity_exact_KKT} holds and that the exact solution $\mathbf z^\ast$ is strongly regular in the sense of Definition \ref{def:strong_regular_solution}. \footnote{Under Assumption \ref{assump:regularity_exact_KKT}, strong regularity of $\mathbf z^\ast$ follows from \cite[Proposition~16]{sun_strong_2006}.} Then, as $N \to \infty$, the Fourier-RQMC problem admits a  (locally) unique solution
${\mathbf{z}}_{N,\bar S}^{\mathrm{RQMC},*} \in \mathcal{Z}$ , and
\[
{\mathbf{z}}_{N,\bar S}^{\mathrm{RQMC},*}
\xrightarrow[]{\mathrm{a.s.}}
\mathbf{z}^* .
\]
\end{theorem}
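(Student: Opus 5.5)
The argument follows the classical SAA consistency route for generalized equations (Robinson's implicit-function theorem for strongly regular solutions, as in Shapiro--Dentcheva--Ruszczy\'nski), with the USLLN of Lemma~\ref{lem:uniform_conv_RQMC_Fou} playing the role of the uniform law of large numbers. Write the exact KKT map as $F(\mathbf z):=\hat{\mathcal L}^{\mathrm{Fou}}_{\nabla_{\mathbf z}}(\mathbf z)$ and its surrogate as $F_N(\mathbf z):=I^{\mathrm{RQMC}}_{N,\bar S}[\mathcal H^{(1)}(\cdot;\mathbf z)]$ on $\mathcal Z=M\times(0,\bar\lambda]$, where $M$ is compact and $\mathbf z^\ast$ is interior.

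\textbf{Step 1 (uniform convergence of $F_N$ and its Jacobian).} By the decomposition \eqref{eq:aggregate_components}, each block of $F_N-F$ is a finite sum over $(k,\mathbf p)$ of terms $P_{k,p}^\top\bigl(I^{\mathrm{RQMC}}_{N,\bar S}[\tilde h^{(\nu)}_{k,p}]-\hat g^{(\nu),\mathrm{Fou}}_{k,p}\bigr)$ with $\nu\in\{0,1\}$, multiplied by $\lambda\le\bar\lambda$ or by constants. The same structure holds with $\nu=2$ for the Hessian blocks of $\mathcal H^{(2)}$. The index set is finite and $\lambda$ is bounded, so Lemma~\ref{lem:uniform_conv_RQMC_Fou} gives, on a single probability-one event,
\[
\sup_{\mathbf z\in\mathcal Z}\|F_N(\mathbf z)-F(\mathbf z)\|\to0,\qquad \sup_{\mathbf z\in\mathcal Z}\|\nabla F_N(\mathbf z)-\nabla F(\mathbf z)\|\to0 .
\]
We intersect countably many null-set complements, one per $(\nu,k,\mathbf p)$. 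For this step we need that the RQMC estimator of the $\nu=1$ integrand is the $\mathbf m$-derivative of the $\nu=0$ estimator, and similarly for $\nu=2$. This holds because the transformed integrands are differentiated pointwise in $\mathbf m$ and the map $\mathbf m\mapsto e^{\langle\mathbf K,\mathbf m\rangle-\mathrm i\langle\mathbf u,\mathbf m\rangle}$ is smooth. We also need the damping vectors to be held fixed on a neighbourhood of $\mathbf z^\ast$, which I would state as part of Assumption~\ref{assump:regularity_exact_KKT}.

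\textbf{Step 2 (existence and local uniqueness via strong regularity).} Let $A:=\nabla F(\mathbf z^\ast)=\hat{\mathcal L}^{\mathrm{Fou}}_{\nabla^2_{\mathbf z}}(\mathbf z^\ast)$. Strong regularity (Definition~\ref{def:strong_regular_solution}) makes the solution map $\delta\mapsto\tilde{\mathbf z}(\delta)$ of $\delta+F(\mathbf z^\ast)+A(\mathbf z-\mathbf z^\ast)=0$ single-valued and $L$-Lipschitz on $U_1$. Rewrite $F_N(\mathbf z)=0$ as the fixed-point problem
\[
\mathbf z=\tilde{\mathbf z}\bigl(r_N(\mathbf z)\bigr),\qquad r_N(\mathbf z):=F_N(\mathbf z)-F(\mathbf z^\ast)-A(\mathbf z-\mathbf z^\ast).
\]
Decompose $r_N(\mathbf z)=[F_N-F](\mathbf z)+[F(\mathbf z)-F(\mathbf z^\ast)-A(\mathbf z-\mathbf z^\ast)]$. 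Continuous differentiability of $F$ lets us choose a closed ball $B_\rho(\mathbf z^\ast)\subset U_2$ on which the second bracket is $(1/(2L))$-Lipschitz with small sup-norm. Step 1 then gives, for all $N\ge N_0(\omega)$, that $\mathbf z\mapsto F_N(\mathbf z)-F(\mathbf z)$ is also $(1/(4L))$-Lipschitz on $B_\rho$, with sup-norm at most $\epsilon_N\to0$. Hence $\Gamma_N(\mathbf z):=\tilde{\mathbf z}(r_N(\mathbf z))$ is a contraction with constant at most $3/4$. It maps $B_\rho$ into itself once $\rho$ is small and $N$ is large, because $\|\Gamma_N(\mathbf z)-\mathbf z^\ast\|\le L\|r_N(\mathbf z)\|$ and $\tilde{\mathbf z}(0)=\mathbf z^\ast$. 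Banach's theorem then yields a unique zero $\mathbf z_{N,\bar S}^{\mathrm{RQMC},\ast}\in B_\rho$, which gives local uniqueness. This is Robinson's theorem \cite{robinson_strongly_1980} applied to a perturbation that is uniformly small in $C^1$.

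\textbf{Step 3 (convergence).} From the fixed-point identity and the Lipschitz bound,
\[
\|\mathbf z_{N,\bar S}^{\mathrm{RQMC},\ast}-\mathbf z^\ast\|\le 4L\sup_{B_\rho}\|F_N-F\|\le 4L\epsilon_N\to0
\]
almost surely. To identify this local zero with the surrogate KKT solution returned by the optimizer, and not some spurious stationary point elsewhere in $\mathcal Z$, I would add a standard argument. Uniform convergence of the surrogate objective and constraint, combined with uniqueness of $\mathbf z^\ast$ (Theorem~\ref{theorem:existence_uniq_allocation}), forces every KKT point of the surrogate in $\mathcal Z$ to accumulate only at $\mathbf z^\ast$. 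Hence such points eventually lie in $B_\rho$, where they coincide with the fixed point.

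\textbf{Main obstacle.} I expect the main difficulty to be Step 1's uniform convergence of the Jacobians. Lemma~\ref{lem:uniform_conv_RQMC_Fou} gives uniform convergence for each fixed damping vector, but Algorithm~\ref{alg:Fou_opti_damp} lets $\mathbf K$ depend on $\mathbf m$, which could break differentiability of $F_N$ in $\mathbf z$. My first approach would be to freeze the damping at $\mathbf K^{(\nu),\ast}_{k,p}(\mathbf m^\ast_{k,p})$ on $B_\rho$. This is legitimate because the value of the Fourier integral \eqref{eq:g_fou} does not depend on $\mathbf K$ within the strip. If freezing were disallowed, the fallback would be a Kantorovich-type argument that uses only uniform $C^0$ convergence together with the Lipschitz stability of the solution map.
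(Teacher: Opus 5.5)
Your proposal is correct and follows essentially the paper's argument. It combines uniform convergence of $F_N$ and of its Jacobian (the $\nu=2$ estimator) from Lemma~\ref{lem:uniform_conv_RQMC_Fou}, a Newton-type contraction on a small ball around $\mathbf z^\ast$ built from invertibility of the linearization, Banach's fixed-point theorem, and the stability bound $\|\mathbf z_{N,\bar S}^{\mathrm{RQMC},*}-\mathbf z^\ast\|\le C\,\|F_N(\mathbf z^\ast)\|\to 0$ a.s.

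The only real difference is the fixed-point map. In this equation setting your map reduces to $\tilde{\mathbf z}(r_N(\mathbf z))=\mathbf z-A^{-1}F_N(\mathbf z)$, with the Jacobian frozen at $\mathbf z^\ast$, whereas the paper's $T_N$ uses $\hat{\mathcal L}^{\mathrm{Fou}}_{\nabla^2_{\mathbf z}}(\mathbf z)^{-1}$ at the moving point and then drops the term coming from its variation in $\mathbf z$. Your version, together with the damping-freezing and global identification steps you make explicit and the paper leaves tacit, is if anything the cleaner one.
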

\begin{proof}
The detailed proof is presented in Appendix \ref{appendix:proof_prop_consistency}.
\end{proof}
In RQMC, the CLT for the estimators is obtained by letting the number of i.i.d shifts $S_\mathrm{shift} \to \infty$. In the Fourier-RQMC setting for the MSRM problem, the CLT over shifts will describe the fluctuations of $\mathbf{z}_{N,S_{\mathrm{shift}}}^{\mathrm{RQMC},*}$ around $\mathbf{z}_{N,\infty}^{\mathrm{RQMC},*}$ not the exact solution $\mathbf{z}^*$. To express a limiting distribution centered at $\mathbf z^\ast$, we therefore consider the joint regime in which $S_\mathrm{shift} \to \infty,N \to \infty$, and impose the following assumptions.
\begin{assumption}\
\begin{enumerate}[label=(\roman*), ref=\roman*]
\item \label{ass:i} \(\sqrt{S_{\mathrm{shift}}} N^{r}\,\norm{\mathbf{z}_{N,\infty}^{\mathrm{RQMC},*} - \mathbf{z}^*} \xrightarrow[]{\mathbb{P}} 0\), as \(S_{\mathrm{shift}} \to \infty, N\to \infty\) . 
\item \label{ass:ii} There exists a positive semidefinite matrix $\boldsymbol{H}(\mathbf{z^*})$ such that: \begin{equation*} \lim_{N \to \infty} N^{2r} \mathrm{Var}_S\!\paren{ I_{N}^{\mathrm{RQMC}} \brac{ \mathcal{H}^{(1)}\paren{\mathbf{v}_n^{(s)},\mathbf{z}^{*}} }} = \boldsymbol H\!\left( \mathbf{z}^{*} \right). 
\end{equation*} 
\end{enumerate} 
with $r$ is defined in \eqref{eq:stat_error_RQMC_rate}.
\label{assump:joint_grow_N_S}
\end{assumption}
Assumption \eqref{ass:i} requires $N$ to grow sufficiently fast relative to $S_\mathrm{shift}$, so the term  $\mathbf{z}_{N,\infty}^{\mathrm{RQMC},*} - \mathbf{z}^*$ is negligible at $\sqrt{S_\mathrm{shift}} N^r$ scale. Joint growth conditions of this type for many randomization schemes are discussed in detail in \cite{nakayama_sufficient_2024}. Assumption \eqref{ass:ii} allows us to identify the asymptotic covariance and recenter the CLT at $\mathbf{z}^*$. 

We now state Theorem~\ref{theorem:efficiency-RQMC-Fou}, which describes the asymptotic behavior of $\mathbf{z}_{N,S_{\mathrm{shift}}}^{\mathrm{RQMC},*}$.

\begin{theorem}[CLT for the Fourier-RQMC solution]
\label{theorem:efficiency-RQMC-Fou}
Let $\curly{\mathbf{v}_n^{(s)}}_{n=1}^{N}$ be constructed as in Lemma \ref{lem:uniform_conv_RQMC_Fou}. Suppose that Assumption \ref{assump:joint_grow_N_S} holds. Then, as \(S_{\mathrm{shift}} \to \infty, N \to \infty\),
\begin{equation}
    \sqrt{S_{\mathrm{shift}}}N^{r}\,
    \paren{
    {\mathbf{z}}_{N,S_{\mathrm{shift}}}^{\mathrm{RQMC},*} - \mathbf{z}^{*}
    }
    \xrightarrow{\mathrm{law}}
    \mathcal{N}\!\left( \mathbf{0}, \boldsymbol V\paren{\mathbf{z}^*} \right).
    \label{eq:CLT_solution_m_1}
\end{equation}
, and the sandwich covariance matrix \(\boldsymbol V\paren{\mathbf{z}^*}\) is given by
\begin{equation}
\boldsymbol V\paren{\mathbf{z}^*}
:=
\paren{
\hat{\mathcal{L}}_{\nabla_{\mathbf{z}}^2}^{\mathrm{Fou}}\big( \mathbf{z}^{*}\big)
}^{-1}
\,\boldsymbol H\!\left( \mathbf{z}^{*} \right)\,
\paren{
\hat{\mathcal{L}}_{\nabla_{\mathbf{z}}^2}^{\mathrm{Fou}}\big( \mathbf{z}^{*}\big)
}^{-1}.
\label{eq:CLT_solution_m_2}
\end{equation}
\end{theorem}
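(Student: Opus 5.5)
The argument is the standard delta-method for Z-estimators, applied to the surrogate F.O.C. system. Write $\hat{\mathcal L}_N(\mathbf z):=I^{\mathrm{RQMC}}_{N,S_{\mathrm{shift}}}[\mathcal H^{(1)}(\cdot;\mathbf z)]$, so that $\hat{\mathcal L}_N(\mathbf z^{\mathrm{RQMC},*}_{N,S_{\mathrm{shift}}})=\mathbf 0$ and $\hat{\mathcal L}^{\mathrm{Fou}}_{\nabla_{\mathbf z}}(\mathbf z^*)=\mathbf 0$. We split
\[
\mathbf z^{\mathrm{RQMC},*}_{N,S_{\mathrm{shift}}}-\mathbf z^*
=\bigl(\mathbf z^{\mathrm{RQMC},*}_{N,S_{\mathrm{shift}}}-\mathbf z^{\mathrm{RQMC},*}_{N,\infty}\bigr)+\bigl(\mathbf z^{\mathrm{RQMC},*}_{N,\infty}-\mathbf z^*\bigr).
\]
By Assumption~\ref{assump:joint_grow_N_S}\eqref{ass:i}, the second term is $o_{\mathbb P}\bigl(S_{\mathrm{shift}}^{-1/2}N^{-r}\bigr)$. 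By Slutsky's lemma it therefore suffices to treat the fluctuation term. Equivalently, we may work directly with the expansion around $\mathbf z^*$, absorbing the bias into an $o_{\mathbb P}$ term.

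\textbf{Step 1: consistency.} By Theorem~\ref{theorem:consistentcy-RQMC-Fou}, $\mathbf z^{\mathrm{RQMC},*}_{N,S_{\mathrm{shift}}}\to\mathbf z^*$ almost surely. The scrambled estimators are averages over shifts of nested-scrambled nets, and the USLLN of Lemma~\ref{lem:uniform_conv_RQMC_Fou} holds for every fixed number of shifts. I would verify that the argument there (or a Chebyshev bound, using that the RQMC variance is $O(N^{-2r})/S_{\mathrm{shift}}$) gives at least convergence in probability as $N,S_{\mathrm{shift}}\to\infty$ jointly. This is all that is needed for the delta method.

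\textbf{Step 2: Taylor expansion.} Applying the mean-value theorem componentwise to $\hat{\mathcal L}_N$ between $\mathbf z^*$ and $\hat{\mathbf z}:=\mathbf z^{\mathrm{RQMC},*}_{N,S_{\mathrm{shift}}}$ gives
\[
\mathbf 0=\hat{\mathcal L}_N(\mathbf z^*)+\bar{\mathbf J}_N\,(\hat{\mathbf z}-\mathbf z^*),
\]
where the rows of $\bar{\mathbf J}_N$ are rows of $I^{\mathrm{RQMC}}[\mathcal H^{(2)}(\cdot;\bar{\mathbf z}_i)]$ evaluated at intermediate points. Two facts then give $\bar{\mathbf J}_N\to\hat{\mathcal L}^{\mathrm{Fou}}_{\nabla^2_{\mathbf z}}(\mathbf z^*)$ in probability:
\begin{itemize}
\item the uniform convergence of the Hessian surrogates ($\nu=2$ in Lemma~\ref{lem:uniform_conv_RQMC_Fou}, together with the linear dependence on $\lambda$ over the compact set $\mathcal Z$);
\item the continuity of $\hat{\mathcal L}^{\mathrm{Fou}}_{\nabla^2_{\mathbf z}}$, which follows from Assumption~\ref{assump:regularity_exact_KKT}.
\end{itemize}
Strong regularity (Definition~\ref{def:strong_regular_solution}) makes $\hat{\mathcal L}^{\mathrm{Fou}}_{\nabla^2_{\mathbf z}}(\mathbf z^*)$ nonsingular. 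Hence $\bar{\mathbf J}_N$ is invertible with probability tending to one, and
\[
\sqrt{S_{\mathrm{shift}}}N^r(\hat{\mathbf z}-\mathbf z^*)=-\bar{\mathbf J}_N^{-1}\,\sqrt{S_{\mathrm{shift}}}N^r\,\hat{\mathcal L}_N(\mathbf z^*).
\]

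\textbf{Step 3: CLT for $\hat{\mathcal L}_N(\mathbf z^*)$.} Write $\hat{\mathcal L}_N(\mathbf z^*)-\mathbb E[\hat{\mathcal L}_N(\mathbf z^*)]=S_{\mathrm{shift}}^{-1}\sum_s Y_{N,s}$, where the $Y_{N,s}$ are i.i.d.\ across shifts with covariance $\mathrm{Var}_S(I_N^{\mathrm{RQMC}}[\mathcal H^{(1)}(\cdot,\mathbf z^*)])$. Since nested scrambling is unbiased and $\hat{\mathcal L}^{\mathrm{Fou}}_{\nabla_{\mathbf z}}(\mathbf z^*)=\mathbf 0$, the mean is zero. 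The normalized array $N^rY_{N,s}$ is a triangular array whose covariance converges to $\boldsymbol H(\mathbf z^*)$ by Assumption~\ref{assump:joint_grow_N_S}\eqref{ass:ii}. The Lindeberg--Feller CLT (via Cram\'er--Wold) then yields $\sqrt{S_{\mathrm{shift}}}N^r\hat{\mathcal L}_N(\mathbf z^*)\Rightarrow\mathcal N(\mathbf 0,\boldsymbol H(\mathbf z^*))$. Slutsky's lemma and the symmetry of the Lagrangian Hessian then give the sandwich form~\eqref{eq:CLT_solution_m_2}.

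\textbf{Main obstacle.} The delicate point is the Lindeberg condition in Step~3. Because $N$ grows with $S_{\mathrm{shift}}$, uniform integrability of $\|N^rY_{N,s}\|^2$ is needed, not merely convergence of variances. I would first try to obtain it from a $(2+\delta)$-moment bound $\mathbb E\|N^rY_{N,s}\|^{2+\delta}=O(1)$. This follows from the boundary-growth condition~\eqref{eq:bound_growth_derivatives_RQMC} with a slightly smaller $A^*_{\mathrm{sing}}$ (absorbed in $\varsigma$) via Owen's $L^p$ bounds for scrambled nets, and the bound then yields Lyapunov's condition. A secondary issue is that Step~1 needs joint-regime consistency rather than the fixed-$\bar S$ statement. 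I would handle it by noting that the Hessian error is $O_{\mathbb P}(N^{-r}S_{\mathrm{shift}}^{-1/2})$ pointwise, upgraded to uniform convergence by the equicontinuity used in Lemma~\ref{lem:uniform_conv_RQMC_Fou}.
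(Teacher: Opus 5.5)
Your proposal is correct in outline, modulo the moment condition you flag yourself, but it follows a genuinely different route from the paper.

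\textbf{The paper's route.} The paper splits $\mathbf z^{\mathrm{RQMC},*}_{N,S_{\mathrm{shift}}}-\mathbf z^*$ into a fluctuation around $\mathbf z^{\mathrm{RQMC},*}_{N,\infty}$ and a bias $\mathbf z^{\mathrm{RQMC},*}_{N,\infty}-\mathbf z^*$. For the fluctuation it fixes $N$ and linearizes at $\mathbf z^{\mathrm{RQMC},*}_{N,\infty}$, using consistency in $S_{\mathrm{shift}}$ (Proposition~\ref{prop:consistentcy-RQMC-Fou-S}) and strong regularity there. It then applies the ordinary i.i.d.\ CLT over shifts to get $\sqrt{S_{\mathrm{shift}}}(\cdot)\xrightarrow{\mathrm{law}}\mathcal N(\mathbf 0,\boldsymbol V^{\mathrm{RQMC}}_{N,\infty})$. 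The bias is removed by Assumption~\ref{assump:joint_grow_N_S}(i). Finally, $N^{2r}\boldsymbol V^{\mathrm{RQMC}}_{N,\infty}\to\boldsymbol V(\mathbf z^*)$ is obtained from (ii) and continuity, and Slutsky concludes. This avoids joint-regime consistency and any triangular-array CLT. The cost is that it is really an iterated limit ($S_{\mathrm{shift}}\to\infty$ at fixed $N$, then $N\to\infty$). The fixed-$N$ CLT is not shown to hold uniformly in $N$. Its linearization remainder is controlled only as $o_{\mathbb P}(S_{\mathrm{shift}}^{-1/2})$ and is not shown to stay negligible after multiplication by $N^r$.

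\textbf{Your route.} Your exact mean-value identity around $\mathbf z^*$ uses rows at separate intermediate points and leaves no remainder, so it targets the joint statement actually claimed. It needs the two ingredients you identify. The joint-consistency upgrade is routine: the scrambled-net variance is at most $\Gamma\sigma^2/(NS_{\mathrm{shift}})$, combined with the equicontinuity and finite-cover argument of Lemma~\ref{lem:uniform_conv_RQMC_Fou}.

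\textbf{The extra hypothesis.} Uniform integrability of $\|N^rY_{N,s}\|^2$ is genuinely additional. Assumption (ii) gives only convergence of second moments. If $N$ grows fast relative to $S_{\mathrm{shift}}$, part of that variance can sit on events of probability $o(1/S_{\mathrm{shift}})$ that no shift ever samples. The limiting covariance is then strictly smaller than $\boldsymbol H(\mathbf z^*)$. Your $(2+\delta)$-moment bound therefore has to be added as a hypothesis or proved. Deriving it is plausible when $(2+\delta)A^*_{\mathrm{sing}}<1$, using the net structure (each elementary interval holds $b^t$ points, each uniform after scrambling) together with the boundary-growth condition. Your appeal to a general Owen $L^p$ bound is not a standard result in that form.

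\textbf{Two side remarks.} First, your Step~3 centres exactly by unbiasedness and never uses (i). This reflects the fact that $\mathbf z^{\mathrm{RQMC},*}_{N,\infty}=\mathbf z^*$, since the shift-average limit of an unbiased RQMC estimator is the exact Fourier integral; the paper's bias term therefore vanishes identically. Second, suppose \eqref{eq:stat_error_RQMC_rate} is taken at face value for every $\varsigma>0$. Applying it with $\varsigma/2$ forces $\boldsymbol H(\mathbf z^*)=\mathbf 0$ in (ii). Step~3 then reduces to a Chebyshev bound, and the Lindeberg issue matters only when $N^{-r}$ is the sharp rate.
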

\begin{proof}
To prove this theorem, we also need to establish  the consistency for the solution w.r.t. $S_{\mathrm{shift}}$, which is mentioned in Proposition~\ref{prop:consistentcy-RQMC-Fou-S}. The detailed proof is provided in Appendix~\ref{appendix:proof_theorem_efficiency}.
\end{proof}
Using Theorem~\ref{theorem:efficiency-RQMC-Fou} together with \eqref{eq:stat_error_RQMC_rate}, the statistical error satisfies
\begin{equation}
\varepsilon_{\mathrm{stat}}^{\mathrm{RQMC}}(N)
   = \mathcal{O}\paren{N^{-r}}   .
\label{eq:statistical_error_RQMC_sol}
\end{equation}
\begin{remark}[Estimating $\boldsymbol V$ under digital shift randomization]
\label{rema:replace_inverse_hessian_RQMC}
In the numerical experiments of Section \ref{subsec:sing-RQMC}, we employ digital shift randomization to compute RQMC estimators. For this randomization, a USLLN w.r.t. $N$ does not generally hold for Fourier–RQMC estimators; see  \cite{owen_strong_2021}. Nevertheless, the statistical error and asymptotic variance can still be characterized via a CLT by letting $S_{\mathrm{shift}}\to\infty$. In this setting, the covariance matrix  $\boldsymbol V$ in \eqref{eq:CLT_solution_m_2} is obtained by  replacing $\mathbf{z}^*$ with $\mathbf{z}_{N,\infty}^{\mathrm{RQMC},*}$,
$\hat{\mathcal{L}}_{\nabla_{\mathbf{z}}^2}^{\mathrm{Fou}}\big(\mathbf{z}^{*}\big)$
with
\(
I_{N,S_{\mathrm{shift}}}^{\mathrm{RQMC}}
\brac{
\mathcal{H}^{(2)}
\big(\cdot;\mathbf{z}_{N,\infty}^{\mathrm{RQMC},*}\big)
}
\), and $\boldsymbol{H}(\mathbf{z^*})$ with $\boldsymbol{H}_{N,\infty}^{\mathrm{RQMC}}\paren{\cdot;\mathbf{z}_{N,\infty}^{\mathrm{RQMC},*}}$.
This substitution is justified because a USLLN does hold for Fourier-RQMC estimators w.r.t. $S_{\mathrm{shift}}$; see Lemma \ref{lem:uniform_conv_RQMC_Fou_S} and Proposition \ref{prop:consistentcy-RQMC-Fou-S}.
\end{remark}
We conclude with remarks on how to compute the statistical error for single-level RQMC in practice.
\begin{remark}[Estimation of $\boldsymbol H(\mathbf z^*)$]
The true variance w.r.t. the random-shift measure is generally unknown and must be estimated numerically.  
A natural estimator is the sample variance of the RQMC estimator,
\(
\mathrm{Var}\!\left(
I_{N,S_{\mathrm{shift}}}^{\mathrm{RQMC}}
\!\left[
\mathcal{H}^{(1)}(\cdot;\mathbf z^*)
\right]
\right).
\)
This estimator is well defined since, under Assumption~\ref{ass:A10}, the transformed integrands admit finite second moments. Then, by the LLN applied to the i.i.d shifts $s$, we have
\[
\mathrm{Var}\!\left(
I_{N,S_{\mathrm{shift}}}^{\mathrm{RQMC}}
\!\left[
\mathcal{H}^{(1)}(\cdot;\mathbf z^*)
\right]
\right)
\;\xrightarrow[]{}\;
\mathrm{Var}_S\!\left(
I_{N}^{\mathrm{RQMC}}
\!\left[
\mathcal{H}^{(1)}
\bigl(\mathbf v_n^{(s)}, \mathbf z^*\bigr)
\right]
\right),
\], as $S_{\mathrm{shift}}\to\infty$.
\label{rema:var_sing_rqmc_estim}
\end{remark}
\begin{remark}[Statistical error of the single-level Fourier--RQMC solution]\label{rema:compute_stat_err_sol_sing}
We construct a practical plug-in estimator of the asymptotic covariance matrix $\boldsymbol{V}(\mathbf{z}^*)$ from two components:\footnote{If $\mathbf z^*$ is unavailable, we replace it by the solution returned at the last step of the optimization process $\mathbf z_{J,N,S_{\mathrm{shift}}}^{\mathrm{RQMC}}$.}
\begin{enumerate}
\item \emph{Gradient variance.}  
The variance of the RQMC estimator of the gradient block, $\mathrm{Var}\!\left(
I^{\mathrm{RQMC}}_{N,S_{\mathrm{shift}}}
\big[\mathcal H^{(1)}(\cdot;\mathbf z^\ast)\big]
\right)$ which is estimated using the sample variance over the random digital shifts as described in Remark~\ref{rema:var_sing_rqmc_estim}, with componentwise contributions $\mathrm{Var}\!\big(
I^{\mathrm{RQMC}}_{N,S_{\mathrm{shift}}}
[\tilde h^{(\nu)}(\cdot;\mathbf{m}^\ast)]
\big)$.

\item \emph{Hessian approximation.}  
A Fourier--RQMC approximation of the Hessian term
$\widehat{\mathcal L}^{\mathrm{Fou}}_{\nabla_z^2}(\mathbf z^\ast)$, computed via
\(
I^{\mathrm{RQMC}}_{N,S_{\mathrm{shift}}}
[\mathcal H^{(2)}(\cdot;\mathbf z^\ast)].
\)
\end{enumerate}
For the first component, independence of the randomized digital nets across interaction orders
$k \in \mathcal I_{q_\ell}$ implies that variances add across $k$, yielding
\begin{equation*}
\mathrm{Var}\!\left(
I^{\mathrm{RQMC}}_{N,S_{\mathrm{shift}}}
[\tilde h^{(\nu)}(\cdot;\mathbf m^\ast)]
\right)
=
\sum_{k\in \mathcal I_{q_\ell}}
\mathrm{Var}\!\left(
\sum_{\mathbf p\in\mathcal I_k}
I^{\mathrm{RQMC}}_{N,S_{\mathrm{shift}}}
[\tilde h^{(\nu)}_{k,p}(\cdot;\mathbf m^\ast_{k,p})]
\right).
\end{equation*}

For a fixed interaction order $k$, all components $\mathbf p\in\mathcal I_k$ share the same digital net and the same random shifts, and are therefore correlated. Consequently, the variance within each $k$ expands as
\begin{align*}
\mathrm{Var}\!\left(
\sum_{\mathbf p\in\mathcal I_k}
I^{\mathrm{RQMC}}_{N,S_{\mathrm{shift}}}
[\tilde h^{(\nu)}_{k,p}(\cdot;\mathbf m^\ast_{k,p})]
\right)
&=
\sum_{p\in\mathcal I_k}
\mathrm{Var}\!\left(
I^{\mathrm{RQMC}}_{N,S_{\mathrm{shift}}}
[\tilde h^{(\nu)}_{k,p}(\cdot;\mathbf m^\ast_{k,p})]
\right) \notag \\
&\quad
+ 2 \!\!\!
\sum_{\substack{\mathbf p, \mathbf t\in\mathcal I_k\\ \mathbf p<\mathbf t}}
\mathrm{Cov}\!\left(
I^{\mathrm{RQMC}}_{N,S_{\mathrm{shift}}}
[\tilde h^{(\nu)}_{k, p}],
\,
I^{\mathrm{RQMC}}_{N,S_{\mathrm{shift}}}
[\tilde h^{(\nu)}_{k, t}]
\right).
\end{align*}
Combining the gradient variance estimator in~(i) with the Hessian approximation in~(ii) yields the plug-in covariance estimator
$V^{\mathrm{RQMC,sing}}_{N,S_{\mathrm{shift}}}(\cdot;\mathbf z^\ast)$.
The resulting statistical error of the single-level Fourier--RQMC solution is then estimated by
\begin{equation}
\varepsilon^{\mathrm{RQMC,sing}}_{N,S_{\mathrm{shift}}}(\mathbf z^\ast)
=
\frac{C_\alpha}{\sqrt{S_{\mathrm{shift}}}}
\sqrt{
\left\|
V^{\mathrm{RQMC,sing}}_{N,S_{\mathrm{shift}}}(\cdot;\mathbf z^\ast)
\right\|
},
\label{eq:stat_err_sol_sing_RQMC}
\end{equation}
where $C_\alpha$ is defined in~\eqref{eq:RMSE_RQMC}.  
All variances and covariances above are taken with respect to the random digital shifts, conditional on the underlying Sobol base nets.
\end{remark}
\subsection{Computational Complexity}\label{subsec:computational_complex_Fou_RQMC}
Combining the optimization error bound \eqref{eq:opti_err_SAA_slinear} with the statistical error rate \eqref{eq:statistical_error_RQMC_sol}, we now derive the computational complexity of the Fourier–RQMC methods for both single-level and multilevel constructions. 
\subsubsection{Single-level Fourier-RQMC}\label{subsubsec:Single Fou-RQMC comp complexity}
We first analyze the computational complexity of the single-level Fourier–RQMC scheme, summarized in Corollary~\ref{coro:complexity_sing_RQMC} below.
\begin{corollary}[Single-level Fourier-RQMC work complexity]\label{coro:complexity_sing_RQMC}
Consider the single-level Fourier-RQMC estimator defined in \eqref{eq:RQMC_estimator} with $N$ points and $S_{\mathrm{shift}}$ shifts. Let $\varepsilon$ denote the target total error and allocate the error budget by enforcing
$\varepsilon_{\mathrm{stat}}^{\mathrm{RQMC}}(N)\le \varepsilon/2$ and $\varepsilon_{\mathrm{opt}}(J)\le \varepsilon/2$.
Choose $N=N(\varepsilon)$ accordingly, and  $J=J(\varepsilon)$ denote the number of optimization iterations. Then the total work satisfies
\begin{equation}
W_{\mathrm{sing}}^{\mathrm{RQMC}}(\varepsilon)
= \mathcal O\!\left(\varepsilon^{-\tfrac{1}{r}}\,\log\!\log(1/\varepsilon)\right),
\label{eq:complexity_sing_RQMC_coro}
\end{equation}
where $r$ is defined in \eqref{eq:stat_error_RQMC_rate}.
\end{corollary}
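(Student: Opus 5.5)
The total work is the number of optimization iterations times the cost per iteration, so the plan is to bound $N(\varepsilon)$ and $J(\varepsilon)$ separately from the error split in \eqref{eq:terminal_error_decompose} and multiply. Each SQP iteration $j$ of Algorithm~\ref{alg:SLSQP_full} calls Algorithm~\ref{alg:Single-level RQMC} once (plus, for the damping of Algorithm~\ref{alg:Fou_opti_damp}, a low-dimensional convex problem whose cost does not depend on $\varepsilon$). This call evaluates the component integrands $\tilde h^{(\nu)}_{k,p}$ at $N S_{\mathrm{shift}}$ points for every $k\in\mathcal I_{q_\ell}$ and $\mathbf p\in\mathcal I_k$. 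We treat $d$, $q_\ell$, $|\mathcal I_k|$, $S_{\mathrm{shift}}=\bar S$ and the cost of one evaluation of $\Phi_{\mathbf X_{k,p}}$ and $\hat\ell^{(\nu)}_{k,p}$ as $\varepsilon$-independent constants. The cost per iteration is then $C_{\mathrm{it}}\,N$, and $W^{\mathrm{RQMC}}_{\mathrm{sing}}(\varepsilon)\le C_{\mathrm{it}}\,N(\varepsilon)\,J(\varepsilon)$.

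For the statistical part, we use \eqref{eq:statistical_error_RQMC_sol}, which is the consequence of Theorem~\ref{theorem:efficiency-RQMC-Fou} and \eqref{eq:stat_error_RQMC_rate}: $\varepsilon^{\mathrm{RQMC}}_{\mathrm{stat}}(N)\le C_{\mathrm{stat}}N^{-r}$. Requiring this to be at most $\varepsilon/2$ gives $N(\varepsilon)=\lceil (2C_{\mathrm{stat}}/\varepsilon)^{1/r}\rceil=\mathcal O(\varepsilon^{-1/r})$.

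For the optimization part, we use the local superlinear bound \eqref{eq:opti_err_SAA_slinear}, which follows from Theorem~\ref{theorem:superlinear_converge_SQP} and Remark~\ref{rema:local_convergence_stage}. Within the local stage we may assume $\|e_{J_{\mathrm{loc}}}\|\le\rho<1$, so $\varepsilon_{\mathrm{opt}}(J)\le C\rho^{p^{J-J_{\mathrm{loc}}}}$. Imposing $C\rho^{p^{J-J_{\mathrm{loc}}}}\le\varepsilon/2$ and taking logarithms twice gives
\[
J(\varepsilon)=J_{\mathrm{loc}}+\left\lceil \frac{\log\bigl(\log(2C/\varepsilon)/\log(1/\rho)\bigr)}{\log p}\right\rceil=\mathcal O\bigl(\log\log(1/\varepsilon)\bigr),
\]
because $J_{\mathrm{loc}}$ and $p\in(1,2)$ do not depend on $\varepsilon$. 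Multiplying the two bounds gives \eqref{eq:complexity_sing_RQMC_coro}.

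The main obstacle is making sure that the constants $J_{\mathrm{loc}}$, $\rho$ and $C$ in the superlinear bound are uniform in $N$. The SQP iterates are run on the surrogate, whose minimizer $\mathbf z^{\mathrm{RQMC},*}_{N,\bar S}$ changes with $N$. I would handle this with Theorem~\ref{theorem:consistentcy-RQMC-Fou} and Lemma~\ref{lem:uniform_conv_RQMC_Fou}. For $N$ large, the surrogate minimizer lies in a fixed neighborhood of $\mathbf z^*$, and the surrogate gradient and Hessian converge uniformly on $\mathcal Z$. Hence the regularity constants in Assumption~\ref{assump:regularity_Fou_RQMC}, and with them the local contraction, can be taken uniform almost surely for all sufficiently large $N$. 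This is where I expect the argument to need the most care. The remaining steps, and the $\mathcal O(1)$ cost of the damping and domain transformation per component, are routine.
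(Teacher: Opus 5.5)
Your proposal is correct and follows essentially the same route as the paper. The common steps are a per-iteration cost linear in $N$ (the paper itemizes a one-off sampling cost plus evaluation, BFGS and QP costs, all absorbed in your $C_{\mathrm{it}}N$), $N(\varepsilon)=\mathcal O(\varepsilon^{-1/r})$ from \eqref{eq:statistical_error_RQMC_sol}, $J(\varepsilon)=\mathcal O(\log\log(1/\varepsilon))$ from local superlinear convergence, and multiplying the two. Two points in your version go beyond the paper, which simply treats the local-stage constants as fixed: you derive the double-log count explicitly from the order-$p$ bound \eqref{eq:opti_err_SAA_slinear} rather than from bare superlinear contraction, and you raise the question of whether those constants are uniform in $N$.
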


\begin{proof}
Appendix \ref{appendix:proof_coro_sing_allocation} presents the proof.
\end{proof}
\begin{remark}[On constants in the single-level work complexity]
    The hidden constant in $W_{\mathrm{sing}}^{\mathrm{RQMC}}(\varepsilon)$ may depend on the decision dimension $d$ and on the number of component integration problems $N_{\mathrm{comp}}$; see Appendix \ref{appendix:proof_coro_sing_allocation} for the explicit cost decomposition.
\end{remark}
We next derive the total work complexity of the multilevel Fourier–RQMC estimator, highlighting its complexity improvements relative to the single-level scheme.
\subsubsection{Multilevel Fourier-RQMC}\label{subsubsec:Multilevel Fou-RQMC comp complexity}
When moving to the multilevel setting (Section~\ref{subsec:Multilevel Fou-RQMC}), we allow the sample size to vary across levels, i.e., we use $\{N_j\}_{j=1}^{J}$ together with $S_{\mathrm{shift}}$  shifts at each level $j$. As described in Section ~\ref{subsec:Multilevel Fou-RQMC}, the multilevel construction couples levels through a shared Sobol base net, while independent shifts are applied across levels. Following Remark \ref{rema:separate_opt_stat_err}, to determine $\{N_j\}_{j=1}^{J}$, we can solve a preliminary constrained optimization problem in which we minimize the total work complexity subject to achieving the prescribed statistical error at $\varepsilon_{\mathrm{stat}}^{\mathrm{RQMC}} \leq \varepsilon/2$ at final iteration $J$. The resulting work-optimal allocation is stated in Corollary~\ref{coro:multilevel_allocation}.

\begin{corollary}[Work-optimal multilevel allocation]
\label{coro:multilevel_allocation}
For each level $j$, let the smoothness parameter
\(
A_{\mathrm{mult},j}^{*}
\)
correspond to the most singular component among the level-$j$ difference
integrands $\Delta \tilde h_{k,p}^{(\nu,j)}$.
Assume that \(
A_{\mathrm{mult},j}^{*} = A_{\mathrm{sing}}^{*}
\) for all $j=1,\dots,J$. Then the work-optimal sample across levels is given by
\begin{equation}
    N_j \;\propto\;
    \boldsymbol D_j^{\tfrac{1}{2r+1}} \, c_j^{-\tfrac{1}{2r+1}},
    \label{eq:raw_N_j}
\end{equation}
and the corresponding multilevel computational work scales as
\begin{equation}
    W_{\mathrm{mult}}^{\mathrm{RQMC}}(\varepsilon)
    \;\propto\;
    S_1^{\tfrac{2r+1}{2r}} \, \varepsilon^{-\tfrac{1}{r}}.
    \label{eq:min_work_mult}
\end{equation}
where $r$ is defined in \eqref{eq:stat_error_RQMC_rate},
\(
\boldsymbol D_j :=
\norm{
\mathrm{Var}\!\left[
I_{N_j,S_{\mathrm{shift}}}^{\mathrm{RQMC}}
\brac{
\Delta \mathcal{H}^{(1)}
\paren{\cdot;\mathbf{z}^{(j)},\mathbf{z}^{(j-1)}}
}
\right]
}
\)\footnote{To simplify notation, throughout this section we denote the solution
returned by the Fourier-RQMC problem at iteration $j$ by
$\mathbf{z}_{N_j,S_{\mathrm{shift}}}^{(\mathrm{RQMC},j)} \equiv \mathbf{z}^{(j)}$
and $\mathbf{m}_{N_j,S_{\mathrm{shift}}}^{(\mathrm{RQMC},j)} \equiv \mathbf{m}^{(j)}$.},
\(
S_1 := \sum_{j=1}^{J} \boldsymbol D_{j}^{\frac{1}{1+2r}} c_j^{\frac{2r}{1+2r}},
\) and $c_j$ is the total cost at level $j$.
\end{corollary}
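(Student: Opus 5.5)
We treat this as a standard MLMC-type Lagrange-multiplier allocation problem in which the level variance decays like $N_j^{-2r}$. The first step is to fix a cost and variance model per level. Under the assumption $A^*_{\mathrm{mult},j}=A^*_{\mathrm{sing}}$, the boundary-growth bound \eqref{eq:bound_growth_derivatives_RQMC} holds for every difference integrand $\Delta\tilde h^{(\nu,j)}_{k,p}$ with the same worst exponent. Hence \eqref{eq:stat_error_RQMC_rate} gives the same rate $r$ at every level, and we model the variance of the level-$j$ increment estimator as
\[
\mathrm{Var}_j \simeq \boldsymbol D_j\, N_j^{-2r},
\]
where $\boldsymbol D_j$ is read as the normalized variance constant of $\Delta\mathcal H^{(1)}(\cdot;\mathbf z^{(j)},\mathbf z^{(j-1)})$; for $j=1$ it is that of $\mathcal H^{(1)}(\cdot;\mathbf z^{(1)})$. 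The work at level $j$ is modeled as $c_j N_j$, where $c_j$ is the per-point cost aggregated over all $(k,p)$ components and shifts.

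Because fresh independent shifts are drawn at each level, the level estimators are independent, and the variance of the telescoped estimator is $\sum_j \boldsymbol D_j N_j^{-2r}$. By Theorem~\ref{theorem:efficiency-RQMC-Fou}, the solution error is obtained from this gradient-block variance through the fixed linear map $(\hat{\mathcal L}^{\mathrm{Fou}}_{\nabla^2_{\mathbf z}})^{-1}$. It therefore suffices, up to a constant absorbed into $\varepsilon$, to impose
\[
\sum_{j=1}^J \boldsymbol D_j N_j^{-2r}\le \varepsilon^2/4 .
\]

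Next we solve
\[
\min_{N_j}\ \sum_j c_jN_j \quad\text{subject to}\quad \sum_j \boldsymbol D_jN_j^{-2r}=\varepsilon^2/4,
\]
treating the $N_j$ as continuous variables. Stationarity of the Lagrangian gives
\[
c_j = 2r\mu\,\boldsymbol D_jN_j^{-2r-1},
\]
so that
\[
N_j=(2r\mu)^{\frac1{2r+1}}\boldsymbol D_j^{\frac1{2r+1}}c_j^{-\frac1{2r+1}},
\]
which is \eqref{eq:raw_N_j}. Substituting into the constraint yields
\[
\boldsymbol D_jN_j^{-2r}=(2r\mu)^{-\frac{2r}{2r+1}}\boldsymbol D_j^{\frac1{2r+1}}c_j^{\frac{2r}{2r+1}}.
\]
Summing over $j$ gives $(2r\mu)^{-\frac{2r}{2r+1}}S_1=\varepsilon^2/4$, hence
\[
(2r\mu)^{\frac1{2r+1}}=(4S_1/\varepsilon^2)^{\frac1{2r}}.
\]
The total work is then
\[
\sum_j c_jN_j=(2r\mu)^{\frac1{2r+1}}\sum_j \boldsymbol D_j^{\frac1{2r+1}}c_j^{\frac{2r}{2r+1}}=(2r\mu)^{\frac1{2r+1}}S_1\propto S_1^{1+\frac1{2r}}\varepsilon^{-1/r},
\]
which is \eqref{eq:min_work_mult}. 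Convexity of $N\mapsto N^{-2r}$ together with linearity of the cost shows that this stationary point is the global minimizer, and rounding the $N_j$ up to integers changes the work only by an additive $\sum_j c_j$, which is lower order.

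The main obstacle is justifying the variance model $\boldsymbol D_jN_j^{-2r}$ with $\boldsymbol D_j$ independent of $N_j$. Corollary~\ref{coro:multilevel_allocation} defines $\boldsymbol D_j$ through the variance at $N_j$ itself, which is circular. I would first fix this by defining $\boldsymbol D_j$ as the limit $\lim N^{2r}\mathrm{Var}$, in the spirit of Assumption~\ref{assump:joint_grow_N_S}(ii), applied to the difference integrands. The uniform exponent assumption then guarantees that this limit exists, or at least is bounded, with the same $r$ at every level, so the allocation is only asymptotically optimal.

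A secondary subtlety is that the iterates $\mathbf z^{(j)}$ themselves depend on the earlier random estimates. I would handle this by conditioning on the past: the level-$j$ shifts are independent of the previous levels, so the conditional variances still add. The optimization error is left aside, as Remark~\ref{rema:separate_opt_stat_err} allows, with $J$ fixed as in Corollary~\ref{coro:complexity_sing_RQMC}.
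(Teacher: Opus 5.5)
Your proposal is correct and follows the paper's own argument. The paper likewise poses $\min \sum_j S_{\mathrm{shift}} c_j N_j$ subject to $\sum_j \boldsymbol D_j/(S_{\mathrm{shift}} N_j^{2r}) \le \varepsilon^2/4$, solves the Lagrangian first-order conditions to obtain $N_j \propto \boldsymbol D_j^{1/(2r+1)} c_j^{-1/(2r+1)}$, and substitutes back into the constraint to get the $S_1^{(2r+1)/(2r)}\varepsilon^{-1/r}$ scaling; the fixed $S_{\mathrm{shift}}$ factors it carries explicitly only change constants. Your extra remarks go beyond what the paper writes but do not change the route: redefining $\boldsymbol D_j$ as the normalized constant $\lim N^{2r}\mathrm{Var}$ to remove its circular dependence on $N_j$, the convexity argument for global optimality, integer rounding, and conditioning on past iterates.
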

\begin{proof}
Appendix \ref{appendix:proof_coro_multi_allocation} presents the proof.
\end{proof}
\begin{remark}[Improved regularity at higher levels]\label{rem:better rates}
In practice, the difference integrands often exhibit improved smoothness as $j$ increases, which corresponds to larger effective rates $r_j$ at finer levels. Allowing level-dependent rates in the allocation problem can further improve the constants (and potentially the work) compared to the worst-case bound in Corollary \ref{coro:multilevel_allocation}; we keep the uniform-rate assumption here to obtain a closed-form allocation and  complexity estimate.
\end{remark}
In order to derive a closed-form expression for $\{N_j\}_{j=1}^J$ in Proposition~\ref{prop:choosing_Nj}, we impose an additional regularity assumption on the Fourier-RQMC difference surrogates.
\begin{assumption}\label{ass:diff_surrogate_Lipschitz}
For all $j\ge J_{\mathrm{loc}}$,
$I_{N_j,S_{\mathrm{shift}}}^{\mathrm{RQMC}}\!\big[\Delta \mathcal H^{(1)}(\cdot;\mathbf z^{(j)},\mathbf z^{(j-1)})\big]$
satisfy a mean-square Lipschitz property w.r.t. the iterates; that is, there exist a constant $L_H \geq 0$ such that
\begin{equation*}
\mathbb{E}\brac{\norm{{I}_{N_j,S_{\mathrm{shift}}}^{\mathrm{RQMC}} \brac{\Delta \mathcal{H}^{(1)}\paren{\cdot;\mathbf{z}^{(j)},\mathbf{z}^{(j-1)}}}}^2}\leq
L_H \,\mathbb{E}\brac{\norm{\mathbf{z}^{(j)} - \mathbf{z}^{(j-1)}}^2}
\end{equation*}
\end{assumption}
\begin{proposition}[Adaptive choice of the sample size $N_j$]
Let $J_{\mathrm{loc}}$ denote the iteration index at which the SQP iterates enter
their \emph{local convergence region}. Suppose that Assumption \ref{ass:diff_surrogate_Lipschitz} holds; then the sample size at iteration $j$ can be chosen as
\begin{equation}
\label{eq:Nj_rule}
N_j
=
\max\!\left\{
N_1\,\mathbf{1}_{\{j<J_{\mathrm{loc}}\}}
+
C_{\mathrm{loc},j-1}\,N_1\,\tilde\eta_{j-1}^{\frac{2}{2r+1}}
\,\mathbf{1}_{\{j\ge J_{\mathrm{loc}}\}},
\;
N_{\min}
\right\},
\end{equation}
where $N_1$ denotes the initial number of QMC points,
$N_{\min}$ is a prescribed minimal sample size,
$C_{\mathrm{loc},j-1}>0$ is a level-dependent constant, and
$\tilde\eta_{j-1} \to 0$ as $j\to\infty$.

If we make the local convergence rate constant
$\tilde\eta_{j-1}=\eta\in(0,1)$ for all $j\ge J_{\mathrm{loc}}$,
the allocation \eqref{eq:Nj_rule} reduces to
\begin{equation}
N_j
=
\max\!\left\{
N_1\,\mathbf{1}_{\{j<J_{\mathrm{loc}}\}}
+
C_{\mathrm{loc}}\,N_1\,\eta^{\frac{2(j-1-J_{\mathrm{loc}})}{2r+1}}
\,\mathbf{1}_{\{j\ge J_{\mathrm{loc}}\}},
\;
N_{\min}
\right\}.
\label{eq:Nj_same_eta}
\end{equation}
\label{prop:choosing_Nj}
\end{proposition}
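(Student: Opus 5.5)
The plan is to start from the work-optimal allocation of Corollary~\ref{coro:multilevel_allocation},
\[
N_j \propto \boldsymbol D_j^{\frac{1}{2r+1}} c_j^{-\frac{1}{2r+1}},
\]
and to make the level variances $\boldsymbol D_j$ explicit through the SQP contraction. All constants will then be normalised by the level-one choice $N_1$. The per-level cost $c_j$ is governed by the same component integrands at every level, namely the same $(k,p)$ and the same domain transformation. I will therefore treat $c_j$ as comparable to $c_1$ up to a bounded factor, and absorb this factor into $C_{\mathrm{loc},j-1}$. This gives $N_j/N_1 \approx (\boldsymbol D_j/\boldsymbol D_1)^{1/(2r+1)}$.

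\textbf{Pre-asymptotic phase, $j<J_{\mathrm{loc}}$.} Here no contraction is available, so I only use the crude bound that $\boldsymbol D_j$ is comparable to $\boldsymbol D_1$. This holds by Assumption~\ref{ass:A10} together with uniform boundedness of the integrands over the compact set $\mathcal Z$. The conservative choice $N_j=N_1$ in this phase preserves the target statistical accuracy.

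\textbf{Local phase, $j\ge J_{\mathrm{loc}}$.}
\begin{enumerate}
\item Bound $\boldsymbol D_j$ by the second moment of the difference estimator. Applying Assumption~\ref{ass:diff_surrogate_Lipschitz} gives
\[
\boldsymbol D_j \le L_H\,\mathbb E\norm{\mathbf z^{(j)}-\mathbf z^{(j-1)}}^2 .
\]
\item Control the step length by the errors to the surrogate solution, using the triangle inequality and Theorem~\ref{theorem:superlinear_converge_SQP}:
\[
\norm{\mathbf z^{(j)}-\mathbf z^{(j-1)}}\le \norm{e_j}+\norm{e_{j-1}}\le(1+\eta_{j-1})\norm{e_{j-1}}.
\]
\item Iterate this bound back to $J_{\mathrm{loc}}$, obtaining $\norm{e_{j-1}}\le \prod_{i<j-1}\eta_i\,\norm{e_{J_{\mathrm{loc}}}}$.
\item Define the effective contraction factor $\tilde\eta_{j-1}$ so that
\[
\boldsymbol D_j \le C\,\tilde\eta_{j-1}^{2}\,\boldsymbol D_1 .
\]
The constants $L_H$, $(1+\eta_{j-1})^2$, $\norm{e_{J_{\mathrm{loc}}}}^2$ and $\boldsymbol D_1^{-1}$ are absorbed into $C_{\mathrm{loc},j-1}$. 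Since $\eta_j\to0$, we get $\tilde\eta_{j-1}\to0$.
\item Raise to the power $1/(2r+1)$ to obtain the factor $\tilde\eta_{j-1}^{2/(2r+1)}$.
\end{enumerate}
The maximum with $N_{\min}$ is added so that the RQMC estimator and its shift-based variance estimate \eqref{eq:RMSE_RQMC} stay meaningful when the formula would give vanishing sample sizes. This yields \eqref{eq:Nj_rule}.

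\textbf{Constant-rate case.} For $\tilde\eta_{j-1}=\eta$, I will use a linear-rate reading of the contraction, with the cumulative product $\prod\eta_i$ replaced by $\eta^{\,j-1-J_{\mathrm{loc}}}$. Then $\tilde\eta_{j-1}^{2}$ becomes $\eta^{2(j-1-J_{\mathrm{loc}})}$, and substituting this gives \eqref{eq:Nj_same_eta}.

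\textbf{Main obstacle.} The difficulty is the mismatch between the pathwise, shift-dependent contraction of Theorem~\ref{theorem:superlinear_converge_SQP} and the expectations appearing in Assumption~\ref{ass:diff_surrogate_Lipschitz}. The contraction constants are random through the RQMC shifts, and the iterates themselves depend on previously sampled levels. My first attempt is to condition on the shifts of levels $<j$, so that $\mathbf z^{(j-1)}$ and $\mathbf z^{(j)}$ are fixed. I would then apply the contraction on a high-probability event where the $\eta_i$ are uniformly bounded, and absorb this bound into $C_{\mathrm{loc},j-1}$. This is why the constant is allowed to depend on the level.
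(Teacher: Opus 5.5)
Your proof has the same skeleton as the paper's. You bound $\boldsymbol D_j$ by the second moment of the difference estimator (the paper does this explicitly via $\mathrm{Var}[\mathbf X]\preceq\mathbb E[\mathbf X\mathbf X^\top]$ and Jensen), invoke Assumption~\ref{ass:diff_surrogate_Lipschitz}, control the step length through Theorem~\ref{theorem:superlinear_converge_SQP}, and substitute into \eqref{eq:raw_N_j}, absorbing everything else into $C_{\mathrm{loc},j-1}$.

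The only real difference is how the step length is controlled. The paper uses both the triangle and the reverse triangle inequality, $(1-\eta_{j-1})\|\mathbf z^{(j-1)}-\mathbf z^*\|\le\|\mathbf z^{(j)}-\mathbf z^{(j-1)}\|\le(1+\eta_{j-1})\|\mathbf z^{(j-1)}-\mathbf z^*\|$, to bound the ratio of consecutive increments. It sets $\tilde\eta_{j-1}:=\frac{1+\eta_j}{1-\eta_{j-1}}\eta_{j-1}$, a per-step contraction factor, and then simply posits a level-dependent $C_{j-1}$ with $\|\mathbf z^{(j)}-\mathbf z^{(j-1)}\|\le C_{j-1}\tilde\eta_{j-1}$. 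You use only the upper bound and chain the error contraction back to $J_{\mathrm{loc}}$, so your $\tilde\eta_{j-1}$ is essentially the cumulative product $\prod_{i=J_{\mathrm{loc}}}^{j-2}\eta_i$.

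Each choice buys something different:
\begin{itemize}
\item The paper's $\tilde\eta_{j-1}$ is a ratio of increments of the iterates. It can therefore be estimated online without knowing the surrogate solution, which fits the word ``adaptive''.
\item Your route makes the geometric factor $\eta^{2(j-1-J_{\mathrm{loc}})/(2r+1)}$ in \eqref{eq:Nj_same_eta}, with exactly $j-1-J_{\mathrm{loc}}$ factors, fall out of the derivation. The paper reaches \eqref{eq:Nj_same_eta} by ``fixing $\eta_{j-1}=\eta$''. Substituted literally into \eqref{eq:Nj_rule}, that gives only $\eta^{2/(2r+1)}$, so the paper silently relies on the same chaining being hidden in $C_{\mathrm{loc},j-1}$.
\end{itemize}

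The obstacle you raise, random contraction constants versus the expectation in Assumption~\ref{ass:diff_surrogate_Lipschitz}, is real. The paper does not address it; it treats $\eta_j$ and $C_{j-1}$ as deterministic, so the deterministic reading suffices to match its proof. If you do pursue it, restricting to a fixed high-probability event leaves a contribution of order $\mathrm{diam}(\mathcal Z)^2\,\mathbb P(\text{bad event})$ in the expectation that does not decay in $j$. You would therefore also have to control that probability, or work with a conditional (pathwise) version of the assumption.
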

\begin{proof}
Proof for Proposition \ref{prop:choosing_Nj} is presented in Appendix \ref{appendix:proof_prop_choosing_Nj}.
\end{proof}
\begin{remark}[Choice of Sobol points]
    Since Sobol sequences exhibit their best performance when the number of points is a power of two \cite{sobol_construction_2011}, in our numerical experiments we select,
    \[
        N_j \;\approx\; 2^{\left\lfloor \log_2 \!\paren{C_{\mathrm{loc}}\,N_1\,\eta^{\frac{2(j-1-J_{\mathrm{loc}})}{2r+1}}} \right\rfloor},
    \]
    i.e., the nearest power of two to $C_{\mathrm{loc}} N_1 \eta^{\frac{2(j-1)}{r+1}}$, and 
    likewise choose $N_{\min}$ as a power of two.
\label{remark: choice_Sobol_QMC}
\end{remark}
Under the multilevel sampling design described above, we now derive the asymptotic computational complexity.
\begin{proposition}[Computational Complexity for multilevel Fourier-RQMC] 
\label{prop:work_multi_level}
Assume that the per-iteration cost of multilevel Fourier-RQMC is level-independent, i.e.,
\(c_{j} \approx c\) for all \(j\) where $c$ denotes the per-iteration cost of the single-level RQMC.
Assume further \(\eta_j \approx \eta\) within the local convergence region \(j \ge J_{\mathrm{loc}}\),
then
\begin{equation}
\frac{W^{\mathrm{RQMC}}_{\mathrm{sing}}(\varepsilon)}{W^{\mathrm{RQMC}}_{\mathrm{mult}}(\varepsilon)}
 = \mathcal{O}(J).
\label{eq:W_sing_bound}
\end{equation}
Moreover, the total computational cost of the multilevel Fourier-RQMC method satisfies
\begin{equation}
W_{\mathrm{mult}}(\varepsilon)
=
\mathcal{O}\!\left(
\varepsilon^{-\tfrac{1}{r}}
\right),
\label{eq:W_mult_final}
\end{equation}
where \(r=1-A_{\mathrm{sing}}^*-\varsigma\).
\end{proposition}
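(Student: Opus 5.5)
The plan is to compare the two work expressions explicitly, using the allocation rule of Proposition~\ref{prop:choosing_Nj} together with the cost model $c_j\approx c$.

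\emph{Single-level work.} Corollary~\ref{coro:complexity_sing_RQMC} gives that single-level RQMC uses the same $N=N(\varepsilon)$ points at every one of the $J$ iterations. Since $\varepsilon_{\mathrm{stat}}^{\mathrm{RQMC}}(N)=\mathcal O(N^{-r})$ by \eqref{eq:statistical_error_RQMC_sol}, this requires $N\asymp\varepsilon^{-1/r}$. Hence
\[
W_{\mathrm{sing}}^{\mathrm{RQMC}}(\varepsilon)\asymp c\,J\,N_1,\qquad N_1\asymp \varepsilon^{-1/r},
\]
where $J=J(\varepsilon)=\mathcal O(\log\log(1/\varepsilon))$ comes from the superlinear bound \eqref{eq:opti_err_SAA_slinear}.

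\emph{Multilevel work.} The multilevel scheme costs $W_{\mathrm{mult}}\asymp c\sum_{j=1}^J N_j$. I will insert \eqref{eq:Nj_same_eta} and split the sum into two parts.
\begin{itemize}
\item The pre-asymptotic levels $j<J_{\mathrm{loc}}$ contribute $(J_{\mathrm{loc}}-1)N_1$. Here $J_{\mathrm{loc}}$ is a fixed index that does not depend on $\varepsilon$, so this part is $\mathcal O(N_1)$.
\item The local levels contribute at most $C_{\mathrm{loc}}N_1\sum_{i\ge0}\eta^{2i/(2r+1)}+ (J-J_{\mathrm{loc}}+1)N_{\min}$. The geometric series converges because $\eta\in(0,1)$ and $2/(2r+1)>0$, so this part is at most
\[
\frac{C_{\mathrm{loc}}N_1}{1-\eta^{2/(2r+1)}}+J\,N_{\min}.
\]
\end{itemize}
Together, $W_{\mathrm{mult}}=\mathcal O(N_1+J N_{\min})$. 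Because $N_{\min}$ is a fixed constant and $J$ grows only like $\log\log(1/\varepsilon)\ll\varepsilon^{-1/r}$, the second term is negligible. This gives $W_{\mathrm{mult}}=\mathcal O(N_1)=\mathcal O(\varepsilon^{-1/r})$, which is \eqref{eq:W_mult_final}. It is consistent with \eqref{eq:min_work_mult}, since $S_1$ is bounded uniformly in $J$: the summable $D_j^{1/(1+2r)}$ under Assumption~\ref{ass:diff_surrogate_Lipschitz} with $\|\mathbf z^{(j)}-\mathbf z^{(j-1)}\|\lesssim\eta^{j}$.

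\emph{Ratio.} Dividing the two estimates gives
\[
\frac{W_{\mathrm{sing}}}{W_{\mathrm{mult}}}\asymp \frac{J N_1}{N_1\,(J_{\mathrm{loc}}+C)}=\mathcal O(J),
\]
which is \eqref{eq:W_sing_bound}.

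\emph{Main obstacle.} The delicate step is that the multilevel estimator must also meet $\varepsilon_{\mathrm{stat}}\le\varepsilon/2$ with the same $N_1\asymp\varepsilon^{-1/r}$. I plan to argue this as follows.
\begin{itemize}
\item Independent shifts across levels make the variances add.
\item The level-$j$ variance behaves like $D_j N_j^{-2r}$, with $D_j\lesssim L_H\eta^{2(j-J_{\mathrm{loc}})}$ from Assumption~\ref{ass:diff_surrogate_Lipschitz} and the contraction of Theorem~\ref{theorem:superlinear_converge_SQP}.
\item With the choice \eqref{eq:Nj_same_eta}, each level term becomes $N_1^{-2r}\eta^{2(j-J_{\mathrm{loc}})/(2r+1)}$, up to constants, which is again summable.
\end{itemize}
The total variance is therefore $\mathcal O(N_1^{-2r})$. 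Where the floor $N_{\min}$ is active, the terms are $D_j N_{\min}^{-2r}$; these are summable geometrically but need $N_{\min}$ large enough relative to the $\eta$-decay. I would handle this by absorbing it into the constant $C_{\mathrm{loc}}$, or by noting that $D_j$ decays faster than $N_{\min}^{2r}/N_1^{2r}$ once $j$ is large.
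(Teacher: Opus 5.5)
Your argument is correct and is essentially the paper's. Both split the levels at $J_{\mathrm{loc}}$ and control the local part by a geometric series with ratio $\eta^{2/(2r+1)}$; the only difference is that you sum the explicit sample sizes from \eqref{eq:Nj_same_eta}, so you must re-verify the variance budget but obtain $W_{\mathrm{mult}}=\mathcal O(\varepsilon^{-1/r})$ directly, whereas the paper inserts the decay of $\boldsymbol D_j$ into the closed-form optimal work \eqref{eq:min_work_mult}, where the budget is built in, and concludes $S_1=\mathcal O(1)$. Your worry about the floor is unnecessary: replacing $N_j$ by $\max\{N_j,N_{\min}\}$ only enlarges the sample sizes, so it can only lower each level's variance and requires no condition on $N_{\min}$, while adding at most $J N_{\min}$ to the cost.
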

\begin{proof}
Proof for Proposition \ref{prop:work_multi_level} is presented in Appendix \ref{appendix:proof_prop_work_multi_level}.
\end{proof}

\begin{remark}[On constants in the multilevel work complexity]
From Proposition~\ref{prop:work_multi_level}, the total multilevel computational work is, asymptotically, reduced by a factor proportional to the number of optimization iterations $J$ when compared with the total single-level work. Beyond this asymptotic gain, the leading constant in the multilevel work complexity can be further improved in practice because	the difference integrands typically exhibit substantially smaller variance than the original integrands, which reduces the variance prefactors $D_j$ entering the multilevel allocation and hence lowers the overall computational cost. These constant-level improvements are not captured by the worst-case bounds in Proposition~\ref{prop:work_multi_level} but are consistently observed in numerical experiments in Section \ref{sec:num_exp_results}.
\label{rema:improve_const_mult}
\end{remark}
\begin{remark}[Statistical error of multilevel Fourier-RQMC solution]\label{rema:compute_var_mult_RQMC}
In the multilevel setting, the variance $\mathrm{Var}\paren{I_{N,S_{\mathrm{shift}}}^{\mathrm{RQMC}}\brac{\tilde h^{(\nu)}\paren{\cdot;\mathbf{m}^{*}}}}$ is decomposed as
\begin{equation}
\begin{aligned}
      \mathrm{Var}\!\left(
I_{N,S_{\mathrm{shift}}}^{\mathrm{RQMC}}
\Big[
\tilde h^{(\nu)}(;\mathbf m^{*})
\Big]
\right)
&=
\mathrm{Var}\!\left(
I_{N_1,S_{\mathrm{shift}}}^{\mathrm{RQMC}}
\Big[
\tilde h^{(\nu)}\big(\cdot;\mathbf m^{(\mathrm{RQMC},1)}\big)
\Big]
\right)
\\
&\quad+
\sum_{j=2}^{J}
\mathrm{Var}\!\left(
I_{N_j,S_{\mathrm{shift}}}^{\mathrm{RQMC}}
\Big[
\Delta \tilde h^{(\nu)}\big(\cdot;\mathbf m^{(\mathrm{RQMC},j)},\mathbf m^{(\mathrm{RQMC},j-1)}\big)
\Big]
\right). 
\end{aligned}
 \label{eq:var_mult}
\end{equation}
Since independent shifts are generated at each level $j$, the level-wise variances are estimated separately, using the same  \emph{Gradient variance} estimator as in~Remark~\ref{rema:compute_stat_err_sol_sing}. For levels $j\ge 2$, the estimator is applied to the difference integrands $\Delta \tilde h^{(\nu)}$ (componentwise, $\Delta \tilde h_{k,p}^{(\nu)}$). Combining \eqref{eq:var_mult} with Remark~\ref{rema:compute_stat_err_sol_sing} yields an estimator $\boldsymbol V_{N,S_{\mathrm{shift}}}^{\mathrm{RQMC,mult}}(\cdot;\mathbf z^*)$ of $\boldsymbol V(\mathbf z^*)$, and the corresponding statistical error of the solution is
\begin{equation}
\varepsilon_{N,S_{\mathrm{shift}}}^{\mathrm{RQMC,mult}}(\mathbf z^*)
= \frac{C_\alpha}{\sqrt{S_{\mathrm{shift}}}}
\,\sqrt{\norm{\boldsymbol V_{N,S_{\mathrm{shift}}}^{\mathrm{RQMC,mult}}(\cdot;\mathbf z^*)}}.
\label{eq:stat_err_sol_mult_RQMC}
\end{equation}
with $C_\alpha$ is defined in~\eqref{eq:RMSE_RQMC}.
\end{remark}
\section{Numerical Experiments and Results}\label{sec:num_exp_results}
In this section, we evaluate the performance of the proposed Fourier–RQMC methods for the MSRM problem on three representative test cases: an exponential loss under a bivariate Gaussian model (see Section \ref{subsec:expo_loss}) and a QPC loss under both a 10D Gaussian model (see Section \ref{subsec:10D_gaussian_qcl})  and a 3D NIG model (see Section \ref{subsec:3D_NIG_qcl}). The  component-wise Fourier transforms for given loss functions are derived in Appendix \ref{Appendix: Fourier_transform_loss}. For the loss vector $\mathbf{X}$, Gaussian parameter sets are taken from \cite{armenti_multivariate_2018} and NIG parameter sets from \cite{kaakai_multivariate_2022}, and the corresponding component CFs are computed using the formulas in Appendix~\ref{appendix:loss_vector_distr}.

For the numerical implementation, the optimization error is determined by the numerical optimization solver (SLSQP).\footnote{In all experiments, we control the deterministic optimization error through the SLSQP stopping tolerance \texttt{ftol}, denoted $f_{\mathrm{tol}}$. Since \texttt{ftol} bounds several stopping criteria, we heuristically model the resulting optimization error as $\varepsilon_{\mathrm{opt}}\approx f_{\mathrm{tol}}^{\,2}$, consistent with the local superlinear regime in Theorem~\ref{theorem:superlinear_converge_SQP}.}
The statistical error of the solution is quantified as follows:
(i) for single-level Fourier-RQMC, we use \eqref{eq:stat_err_sol_sing_RQMC};
(ii) for multilevel Fourier-RQMC, we use \eqref{eq:stat_err_sol_mult_RQMC};
(iii) for SAA, we use \eqref{eq:stat_err_sol_SAA}.
In all cases, we report the statistical error using the maximum diagonal norm $\|\cdot\|_{\mathrm{diag},\infty}$ \footnote{For a matrix $\boldsymbol A\in\mathbb R^{d\times d}$, the maximum diagonal norm is defined as
\(
\|\boldsymbol A\|_{\mathrm{diag},\infty}
:= \max_{1\le i\le d} |A_{ii}|.
\)}
, and choose $C_\alpha = 1.96$ for 95\% confidence level.  For method comparison, we also report the relative statistical error
\begin{equation}
    \varepsilon_{\mathrm{stat,rel}}
:=\frac{\varepsilon_{\mathrm{stat}}}{\|\mathbf z^{\mathrm{ref}}\|_\infty},
\label{eq:relative_stat_err}
\end{equation}
where $\mathbf z^{\mathrm{ref}} := \paren{\mathbf{m}^{\mathrm{ref}}, \lambda^{\mathrm{ref}}}$ denotes a reference solution (available in closed form when possible, or otherwise approximated by SAA with $N=10^8$ samples, where the reference is taken as the final solution returned by the solver using an optimization tolerance $\varepsilon_{\mathrm{opt}}=10^{-6}$).

Reported computational times exclude the cost of estimating statistical errors and account solely for the runtime of the optimization procedures. All experiments were conducted using Python~3.13.2 on a MacBook Pro  with an Apple~M4~Pro. The code implementing the proposed methods is available on GitHub\footnote{\href{https://github.com/nnt2306/Single-and-Multi-level-Fourier-RQMC-for-MSRM}{https://github.com/nnt2306/Single-and-Multi-level-Fourier-RQMC-for-MSRM}}.
\subsection{Exponential Loss with Two-Dimensional Gaussian Loss Vector}\label{subsec:expo_loss}
In this section, we test the Fourier-RQMC method using the exponential loss function from \cite{kaakai_estimation_2024}, defined in \eqref{eq:multi_entropy}, with $\alpha=1$ and $\beta=1$. We consider a bivariate Gaussian loss vector with mean $\boldsymbol{\mu}=(0,0)^\top$ and $\boldsymbol{\Sigma}=\begin{pmatrix} 1 & \rho \\ \rho & 1 \end{pmatrix}$, where $\rho\in\{-0.5,\,0.5\}$. In this setting, the optimal allocation $\mathbf{m}^*$ admits a closed-form expression and is symmetric, i.e., $m_1^*=m_2^*$ (see \cite[Lemma~3.1]{kaakai_estimation_2024}). Table~\ref{tab:exact_and_ci} reports this closed-form solution together with the single-level Fourier-RQMC estimate and its $95\%$ confidence interval (CI).
\begin{table}[H]
\centering
\begin{tabular}{ccc}
\toprule
$\rho$ & $m_1^*=m_2^*$  & CI for $m_1(=m_2)$ \\
\midrule
$-0.5$ & $0.3868$ & $[0.38688,\,0.38690]$ \\
$0.5$  & $0.6364$ & $[0.63645,\,0.63647]$ \\
\bottomrule
\end{tabular}
\vspace{-2mm}
\caption{\small Exact optimal allocations and 95\% CI from single-level Fourier--RQMC ($N=2048$, $S_{\mathrm{shift}}=32$).}
\label{tab:exact_and_ci}
\end{table}
We see the convergence of the Fourier-RQMC solution at order $10^{-4}$ to the exact solution from Table \ref{tab:exact_and_ci}, and this is further supported by the convergence plot shown in Figure \ref{fig:conv-expo-grid}, where the local superlinear convergence of the optimization solver is observed.
\begin{figure}[H]
  \centering
  \begin{subfigure}{0.5\textwidth}
    \centering
    \includegraphics[width=\linewidth]{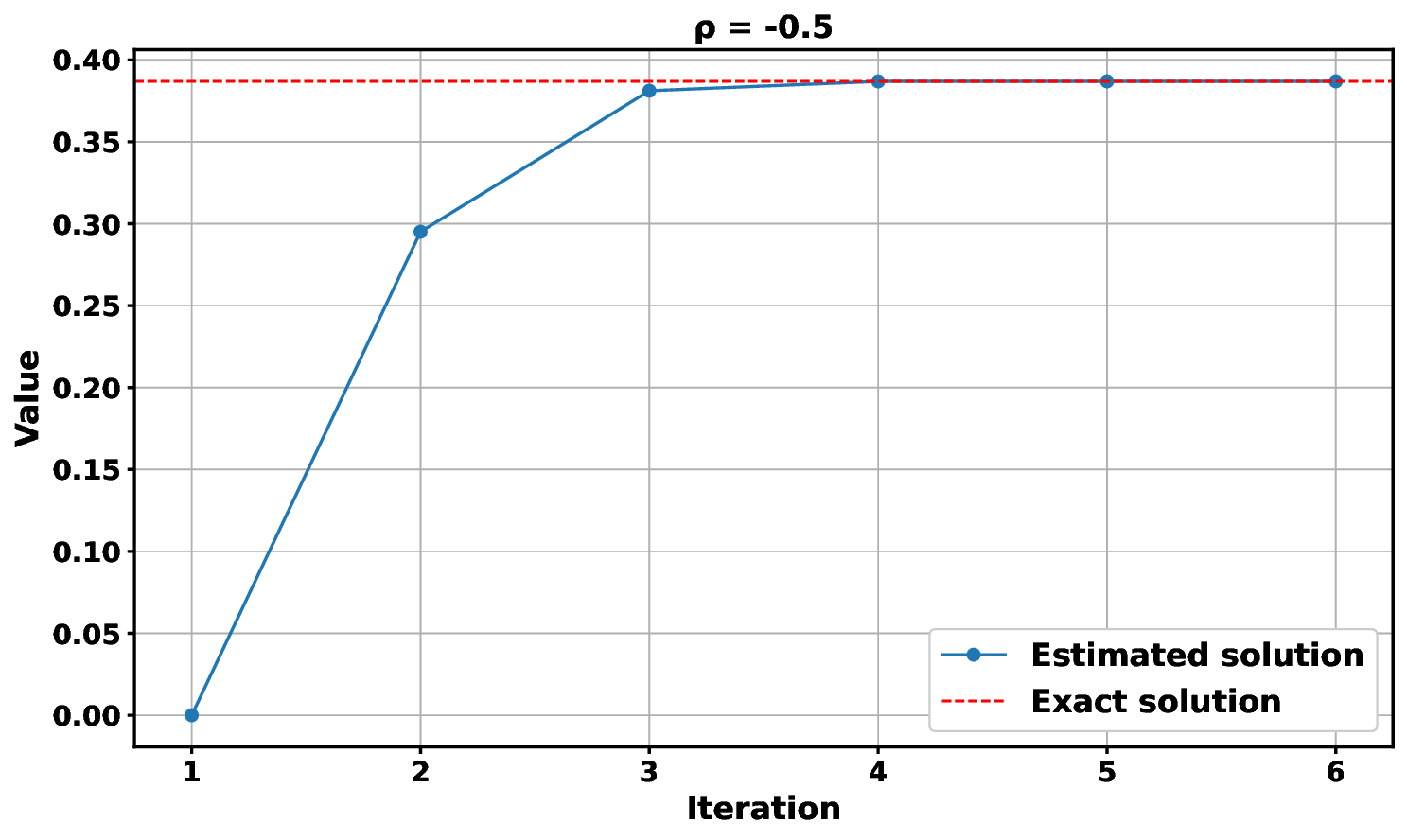}
  \end{subfigure}\hfill
  \begin{subfigure}{0.5\textwidth}
    \centering
    \includegraphics[width=\linewidth]{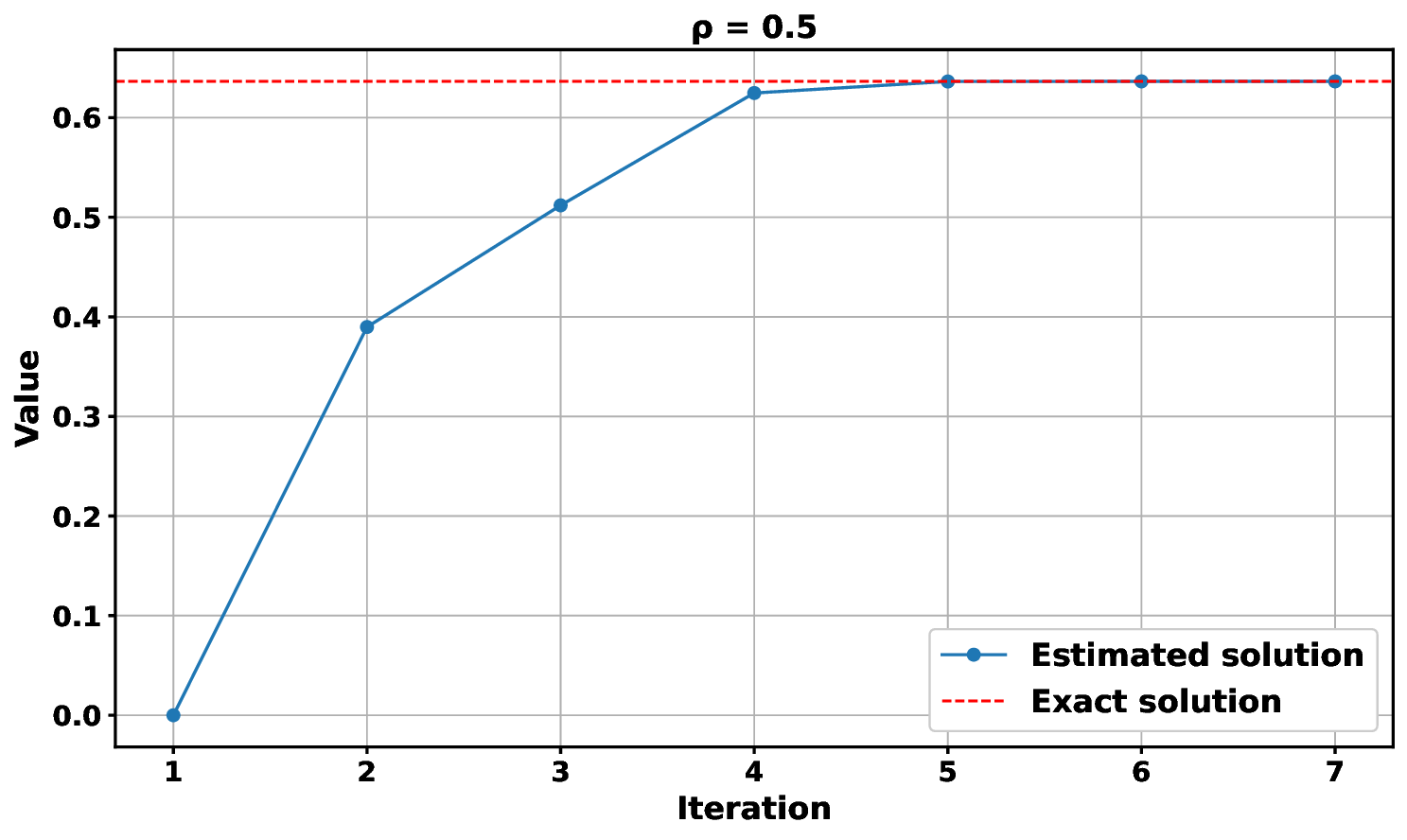}
  \end{subfigure}
  \caption{\small Exponential loss with a two-dimensional Gaussian loss vector: convergence of the single-level Fourier--RQMC iterate $m^{(j)}$ to the exact solution $m^*$ for $\rho=-0.5$ (left) and $\rho=0.5$ (right).}
  \label{fig:conv-expo-grid}
\end{figure}
Next, we choose the optimization tolerance $\varepsilon_{\mathrm{opt}}$ sufficiently small relative to $\varepsilon_{\mathrm{stat}}$, so that the observed error is dominated by the statistical component. This allows us to directly assess the rate of convergence of the statistical error as a function of the total sampling budget B for single-level Fourier–RQMC and  SAA. Figure~\ref{fig:err_sample_expo} shows that, for any prescribed relative statistical tolerance $\varepsilon_{\mathrm{stat,rel}}$, single-level Fourier–RQMC achieves the target accuracy with a substantially smaller budget than SAA, with the performance gap widening as $\varepsilon_{\mathrm{stat,rel}}$ decreases. The empirical convergence rates for Fourier–RQMC, estimated as $r = 1.49$ for $\rho=-0.5$ and $r = 1.29$ for $\rho=0.5$, are significantly higher than the
rate $r = 1/2$ observed for SAA. While these empirical rates are faster than the asymptotic rate in \eqref{eq:stat_error_RQMC_rate}, they are consistent with Remark~\ref{rema:non_asymptoic_err_decay} and can be attributed to the domain transformation introduced in Section~\ref{subsubsec:domain_trans_QMC}, which mitigates boundary-induced oscillations and enhances the effective smoothness of the integrand.
\begin{figure}[H]
  \centering
  \begin{subfigure}{0.5\textwidth}
    \centering
    \includegraphics[width=\linewidth]{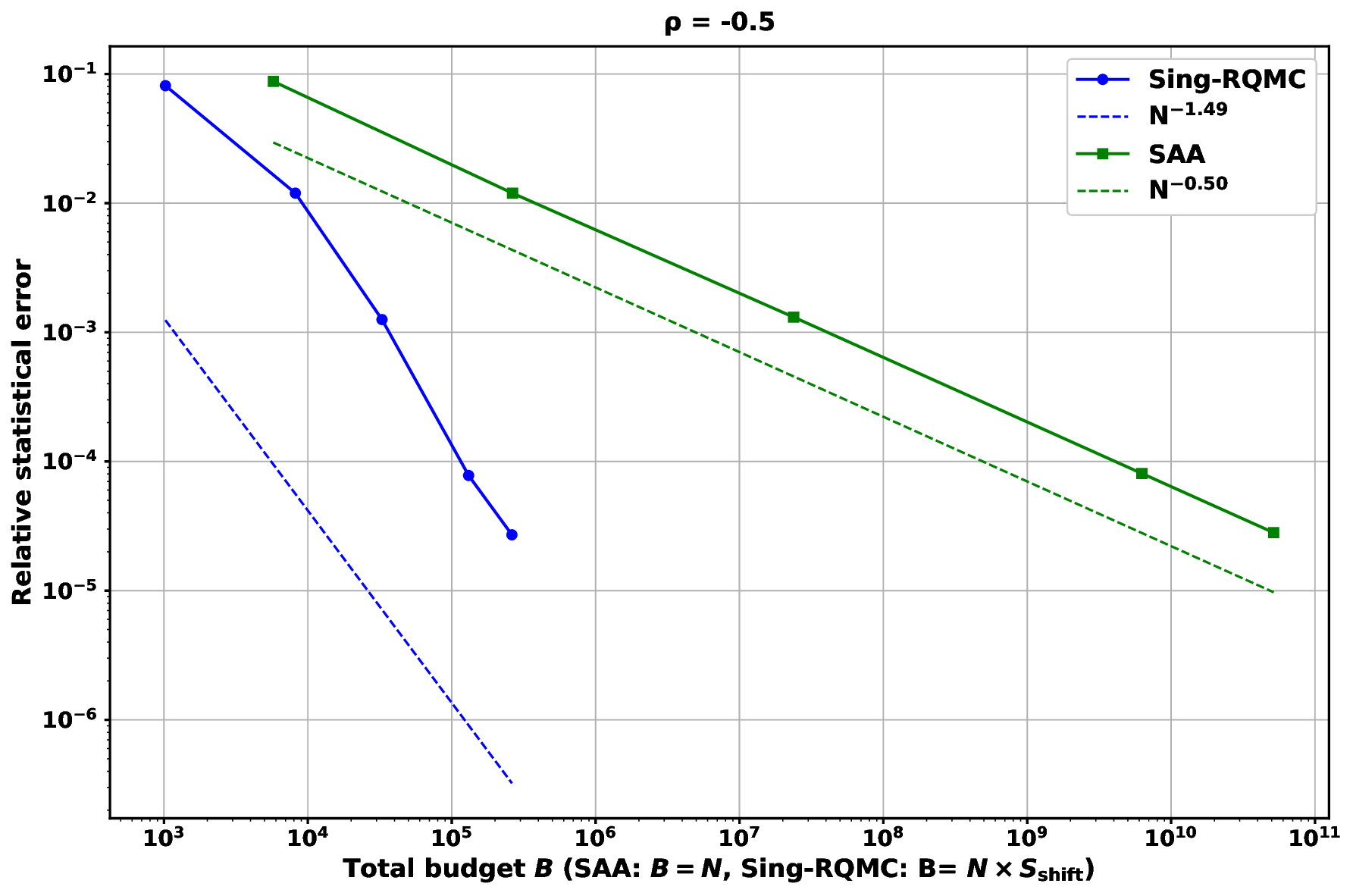}
  \end{subfigure}\hfill
  \begin{subfigure}{0.5\textwidth}
    \centering
    \includegraphics[width=\linewidth]{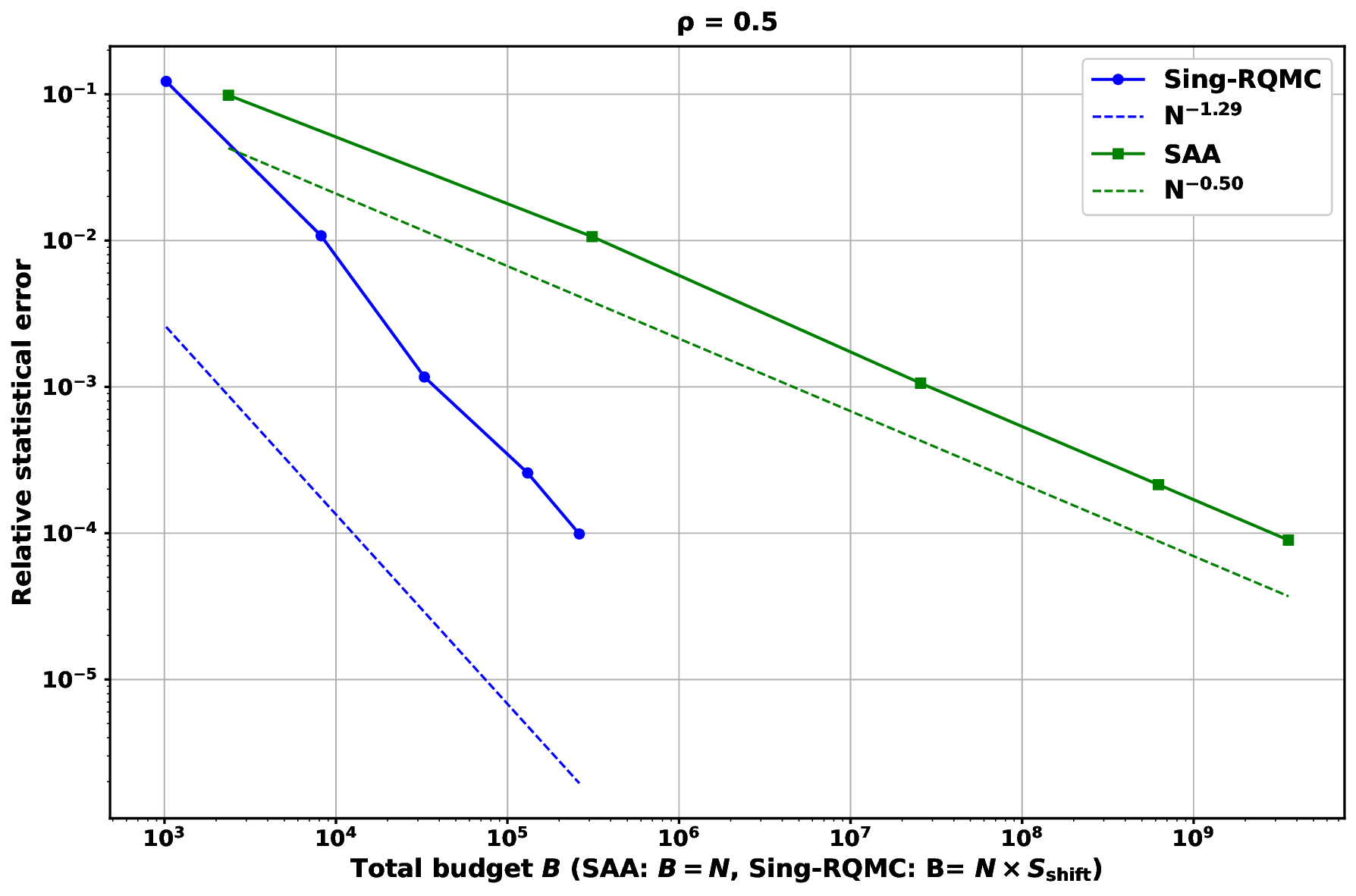}
  \end{subfigure}
  \caption{\small Exponential loss with a two-dimensional Gaussian loss vector: relative statistical error $\varepsilon_{\mathrm{stat,rel}}$ versus total sampling budget $B$ for SAA and single-level Fourier--RQMC, with $\rho=-0.5$ (left) and $\rho=0.5$ (right). Here $B_{\mathrm{SAA}}=N$ and $B_{\mathrm{RQMC}}=N S_{\mathrm{shift}}$.}
  \label{fig:err_sample_expo}
\end{figure}
To further assess the computational performance of the Fourier–RQMC methods, we compare the average  computational time required to reach a prescribed relative total error $\varepsilon_{\mathrm{rel}} $ against SAA and stochastic approximation (SA). For SA, we follow \cite{kaakai_estimation_2024} and employ the constrained Robbins–Monro scheme with parameters $(c,t,\gamma)=(2,10,0.7)$, combined with Polyak–Ruppert averaging to estimate the statistical error $\varepsilon_{\mathrm{stat}}$. The corresponding relative error $\varepsilon_{\mathrm{stat,rel}}$ is computed via \eqref{eq:relative_stat_err}. Figure~\ref{fig:error_vs_time_expo} reports the average runtime required to achieve a given $\varepsilon_{\mathrm{rel}}$, using the splitting strategy $\varepsilon_{\mathrm{opt}}=\varepsilon_{\mathrm{stat}}=\varepsilon/2$. \footnote{\textcolor{black}{$\varepsilon_{\mathrm{stat}}$ is not enforced \emph{a priori}. We first run the optimizer with a coarse surrogate, targeting $\varepsilon_{\mathrm{opt}}\le \varepsilon/2$, to obtain a rough solution estimate. We then refine the sample sizes so that $\varepsilon_{\mathrm{stat}}\le \varepsilon/2$.}} For SA, since no explicit optimization error is available, we directly report the runtime to reach $\varepsilon_{\mathrm{rel}}$.

Across both correlation settings, single-level Fourier–RQMC consistently outperforms both SAA and SA in terms of numerical complexity, exhibiting improved complexity rates that closely match the theoretical analysis of Section~\ref{sec:num_analysis}. In particular, for relative tolerances of order $10^{-4}$, Fourier–RQMC attains the target accuracy with approximately $10^{4}$ times fewer samples than SAA and up to $10^{6}$ times fewer samples than SA. Finally, the multilevel variant provides only a limited additional improvement in this setting, which is consistent with the fact that only a small number of iterations (approximately $2\text{--}3$) are spent in the local convergence regime.
\begin{figure}[H]
  \centering
  \begin{subfigure}{0.5\textwidth}
    \centering
    \includegraphics[width=\linewidth]{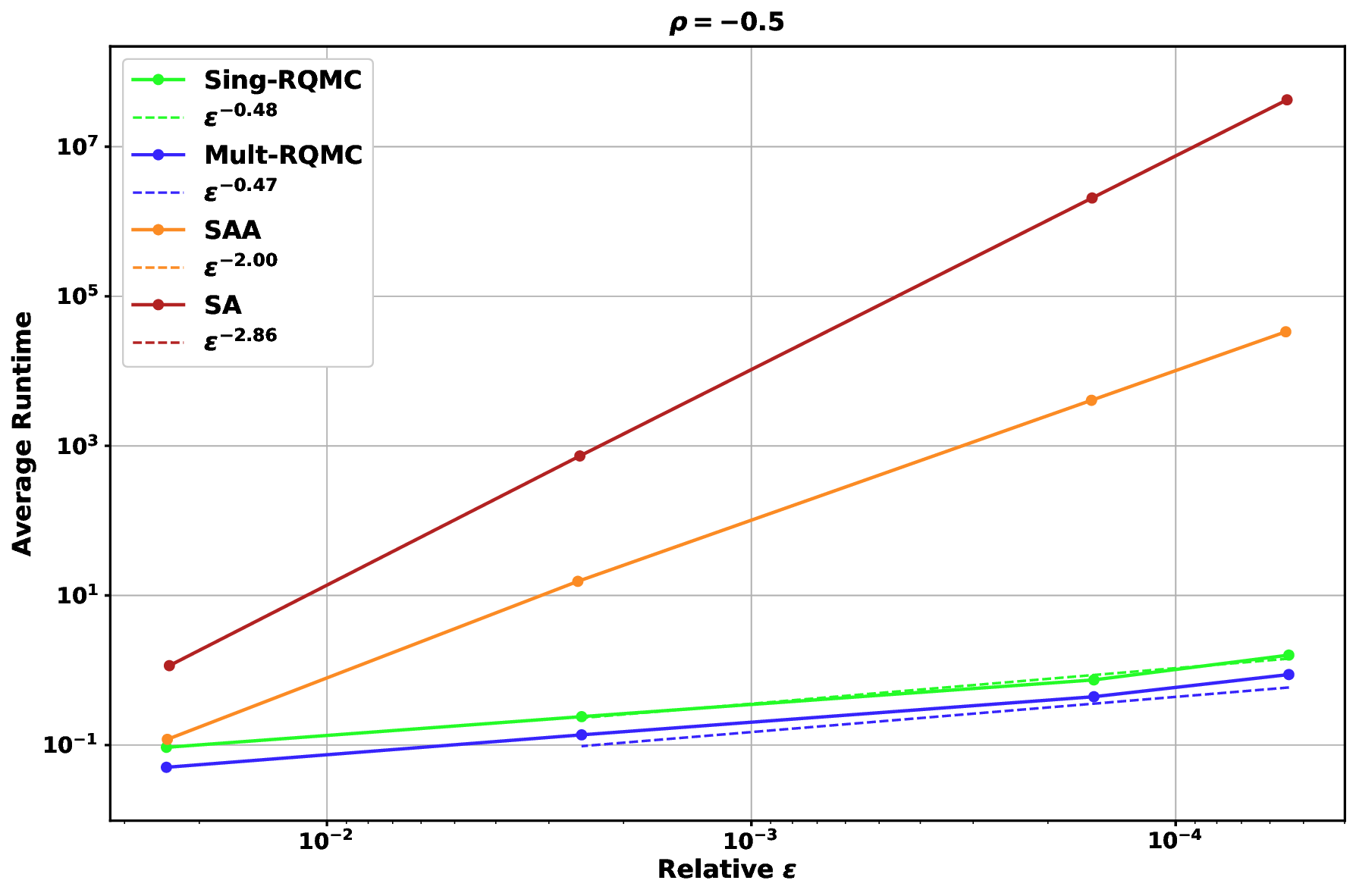}
  \end{subfigure}\hfill
  \begin{subfigure}{0.5\textwidth}
    \centering
    \includegraphics[width=\linewidth]{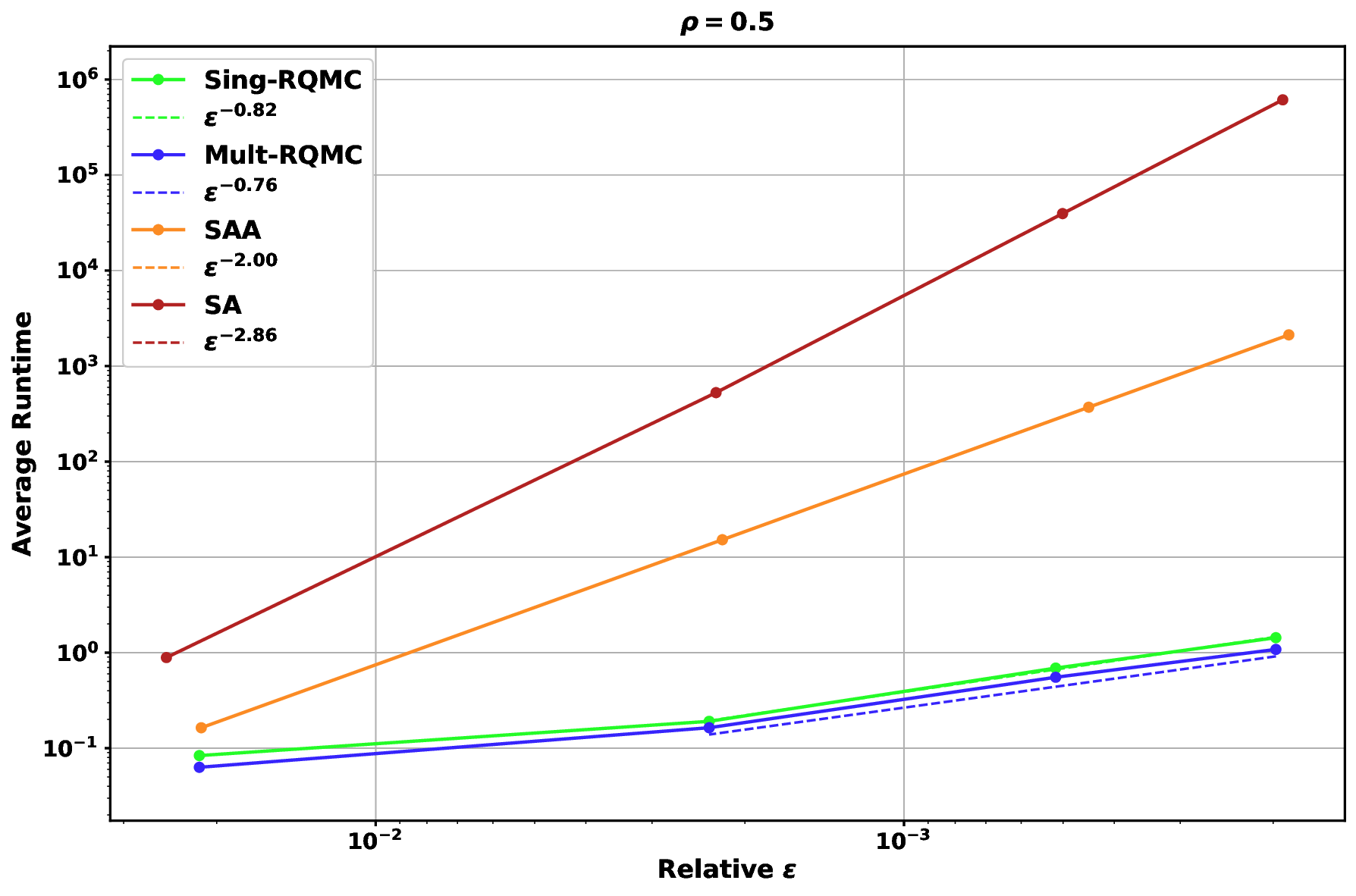}
  \end{subfigure}
  \caption{\small  Exponential loss with a two-dimensional Gaussian loss vector: average runtime (seconds) versus prescribed relative total tolerance $\varepsilon_{\mathrm{rel}}$ for $\rho=-0.5$ (left) and $\rho=0.5$ (right).}
  \label{fig:error_vs_time_expo}
\end{figure}
As shown in Figure~\ref{fig:error_vs_time_expo}, the SA method is consistently less efficient than the other approaches, in line with the findings of \cite{kaakai_estimation_2024}. We therefore omit SA as a baseline in the subsequent numerical experiments.

\subsection{QPC Loss with Ten-Dimensional Gaussian Loss Vector}\label{subsec:10D_gaussian_qcl}

We now consider the QPC loss defined in~\eqref{eq:multi_qcl} with $\alpha = 1$, applied to a 10-dimensional Gaussian loss vector as in \cite{armenti_multivariate_2018}, with mean $\boldsymbol\mu=(0,\dots,0)^\top$ and covariance matrix
\[
\boldsymbol \Sigma =
\begin{pmatrix}
2.11 & 0.37 & -0.42 & \cdots & -0.94 \\
0.37 & 1.78 & -0.45 & \cdots & -0.48 \\
\vdots & \vdots & \ddots & \vdots & \vdots \\
-0.94 & -0.48 & 0.45 & \cdots & 0.88
\end{pmatrix}.
\]
Since no closed-form solution is available for this loss setting, we compute a reference solution using SAA, as described at the beginning of Section~\ref{sec:num_exp_results}.

The right panel of Figure \ref{fig:error_vs_time_10D_MN_qcl} shows that single-level Fourier–RQMC remains up to two orders of magnitude more efficient than SAA for achieving a relative error of order $10^{-3}$. Moreover, the computational advantage of the multilevel estimator over the single-level Fourier–RQMC becomes more pronounced in this regime. For the prescribed tolerance $\varepsilon_{\mathrm{rel}}$, the optimizer typically requires approximately $8\text{--}12$ iterations and enters the local convergence regime after roughly $4\text{--}6$ iterations. Within this regime, the multilevel estimator achieves substantial variance reduction for the component integrands, leading to a lower per-iteration computational cost (see the left panel of  Figure~\ref{fig:error_vs_time_10D_MN_qcl}) and an improved asymptotic complexity compared with the single-level method, in agreement with the analysis in Section~\ref{subsec:computational_complex_Fou_RQMC}. Nevertheless, the overall computational gain of the multilevel approach remains limited, since independent random shifts must be generated at each level, whereas the single-level method draws random shifts only once at initialization (see Algorithms~\ref{alg:Single-level RQMC} and~\ref{alg:RQMC_Fou_multi}).

\begin{figure}
    \centering
    \begin{subfigure}{0.5\textwidth}
        \centering
        \includegraphics[width=\linewidth]{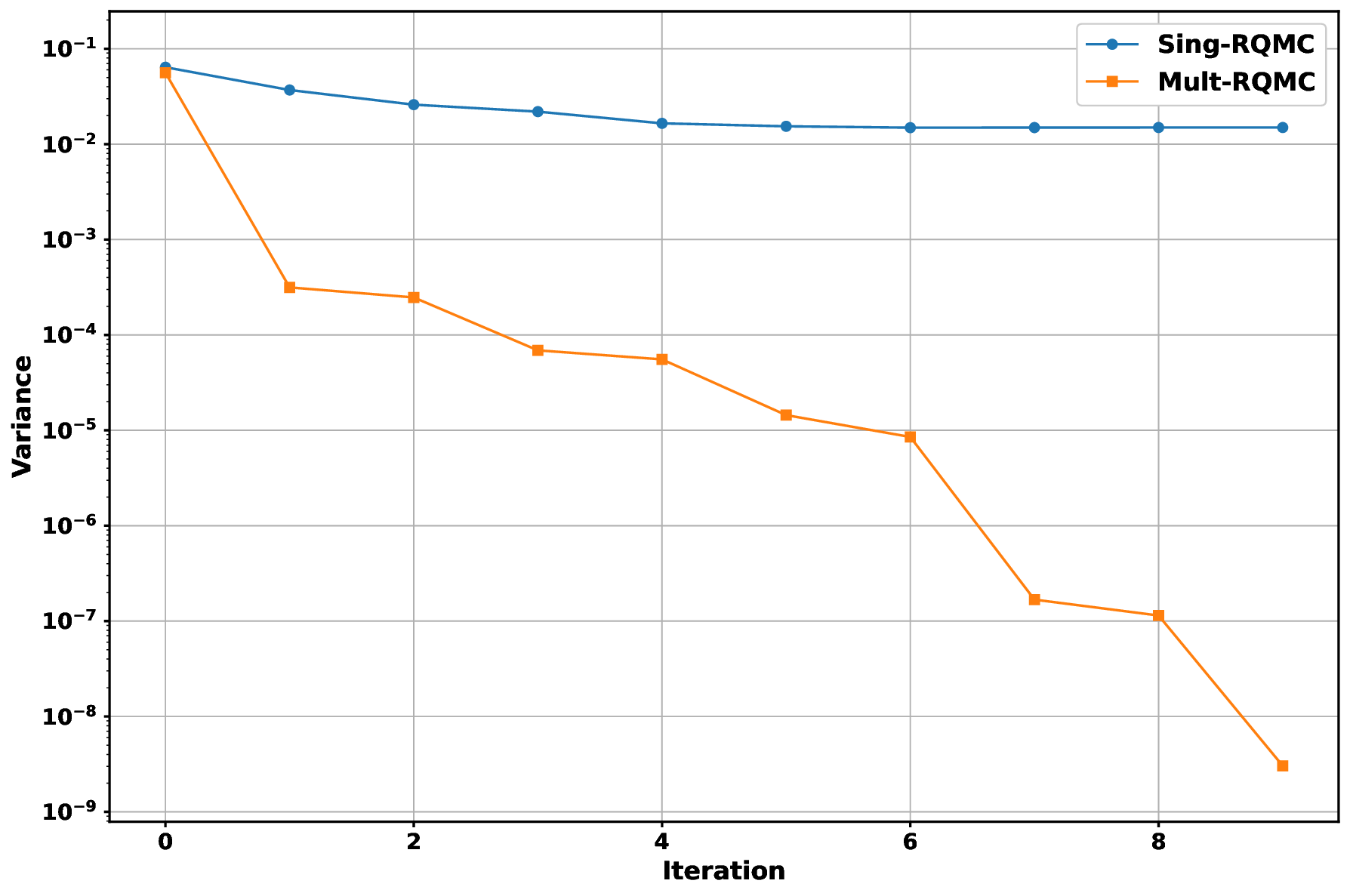}
    \end{subfigure}\hfill
    \begin{subfigure}{0.5\textwidth}
        \centering
        \includegraphics[width=\linewidth]{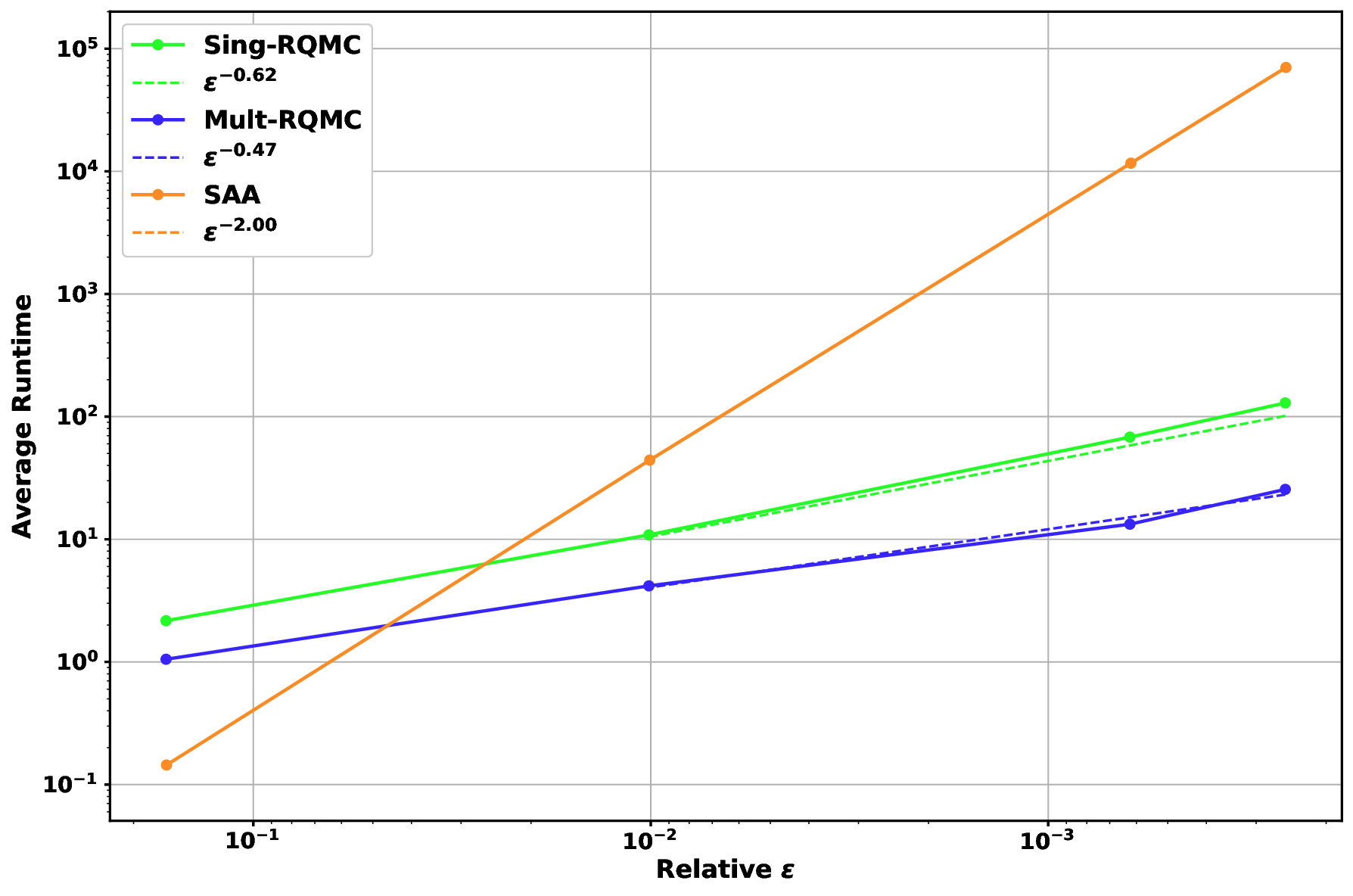}
    \end{subfigure}
    \caption{\small QPC loss with a ten-dimensional Gaussian loss vector: 
(Left):  Estimated variance across shifts of the  estimator of $m_5$ over optimization iterations.
 (Right): Average runtime in seconds  w.r.t. prescribed relative total tolerance $\varepsilon_{\mathrm{rel}}$ .
}
\label{fig:error_vs_time_10D_MN_qcl}
\end{figure}

\subsection{QPC Loss with Three-Dimensional NIG Loss Vector}\label{subsec:3D_NIG_qcl}
We next assess the performance of the Fourier–RQMC methods for the same QPC loss under a heavier-tailed distribution for the loss vector $\mathbf{X}$, namely a three-dimensional NIG distribution. Specifically, we consider $\mathbf{X} \sim \mathcal{NIG}(\alpha,\boldsymbol\beta,\delta,\boldsymbol\mu,\boldsymbol\Gamma)$, with parameters from \cite{kaakai_multivariate_2022}\[
\alpha = 365.78,\;
\delta = 0.373,\;
\boldsymbol\theta = (2,2,2)^\top,\; 
\boldsymbol\mu = (0.00084,0.00024,0.00055)^\top,\; 
\boldsymbol\beta = (-64.27,41.45,7.35)^\top
\]
and covariance matrix
\[
\boldsymbol\Gamma =
\begin{pmatrix}
2.338 & 1.796 & 2.080 \\
1.796 & 2.327 & 2.088 \\
2.080 & 2.088 & 2.555
\end{pmatrix}.
\]
In this NIG setting, boundary-induced oscillations in the transformed integrands become more pronounced, as discussed in Section~\ref{subsubsec:domain_trans_QMC}. As a consequence, the choice of the scaling parameter $c$ for single-level Fourier–RQMC becomes critical: inappropriate values can significantly degrade both numerical stability and optimization performance. In contrast, the multilevel Fourier–RQMC estimator partially mitigates these oscillatory effects by operating on the difference integrands, which tend to cancel boundary oscillations across successive optimization iterates. This behavior is illustrated in Figure~\ref{fig:contraction_grad_qcl_3D_nig}, where the difference integrands exhibit markedly reduced oscillations compared with the corresponding single-level integrands, leading to a more favorable variance structure and improved numerical robustness, implying more significant  computation gain of the multilevel method over the single-level one compared to the previous examples.

Compared with the Gaussian examples in Sections \ref{subsec:expo_loss}-\ref{subsec:10D_gaussian_qcl}, the component integrands under the NIG model are less smooth. Accordingly, the numerical complexity rate deteriorates   relative to the Gaussian case but remain close to the asymptotic rate predicted in \eqref{eq:stat_error_RQMC_rate}, namely $r \approx 1$. Despite this reduced smoothness, the right panel of  Figure~\ref{fig:error_vs_time_mnig_qcl}  shows that Fourier–RQMC methods continue to substantially outperform SAA in terms of computational complexity. For instance, at a relative tolerance of order $10^{-1}$, Fourier–RQMC methods achieves the prescribed accuracy with a computational cost approximately $10^{5}$ times smaller than that of SAA.

This significant gain can be attributed to two complementary effects. First, the variance  of the estimators is evaluated in Fourier space using RQMC, which appears to be significantly less sensitive to rare-event contributions of the gradient terms than MC estimators in physical space when using SAA, leading to smaller  constants. Second, the method requires the numerical inversion of the Hessian matrix $\nabla_{\mathbf z}^2 \mathcal L \paren{\hat{\mathcal{L}}_{\nabla_{\mathbf z}^2}^{\mathrm{Fou}}}$ at the final iteration $J$. When this Hessian is estimated in physical space using MC, it may become poorly conditioned, causing the associated error constants to grow excessively. In contrast, within the Fourier–RQMC framework, the Hessian is evaluated in Fourier space, where the transformed integrands exhibit increased smoothness. This additional regularity results in improved numerical conditioning of the Hessian matrix and, consequently, more stable error constants, as clearly observed in the left panel of Figure~\ref{fig:error_vs_time_mnig_qcl}.
\begin{figure}
    \centering
    \begin{subfigure}{0.5\textwidth}
        \centering
        \includegraphics[width=\linewidth]{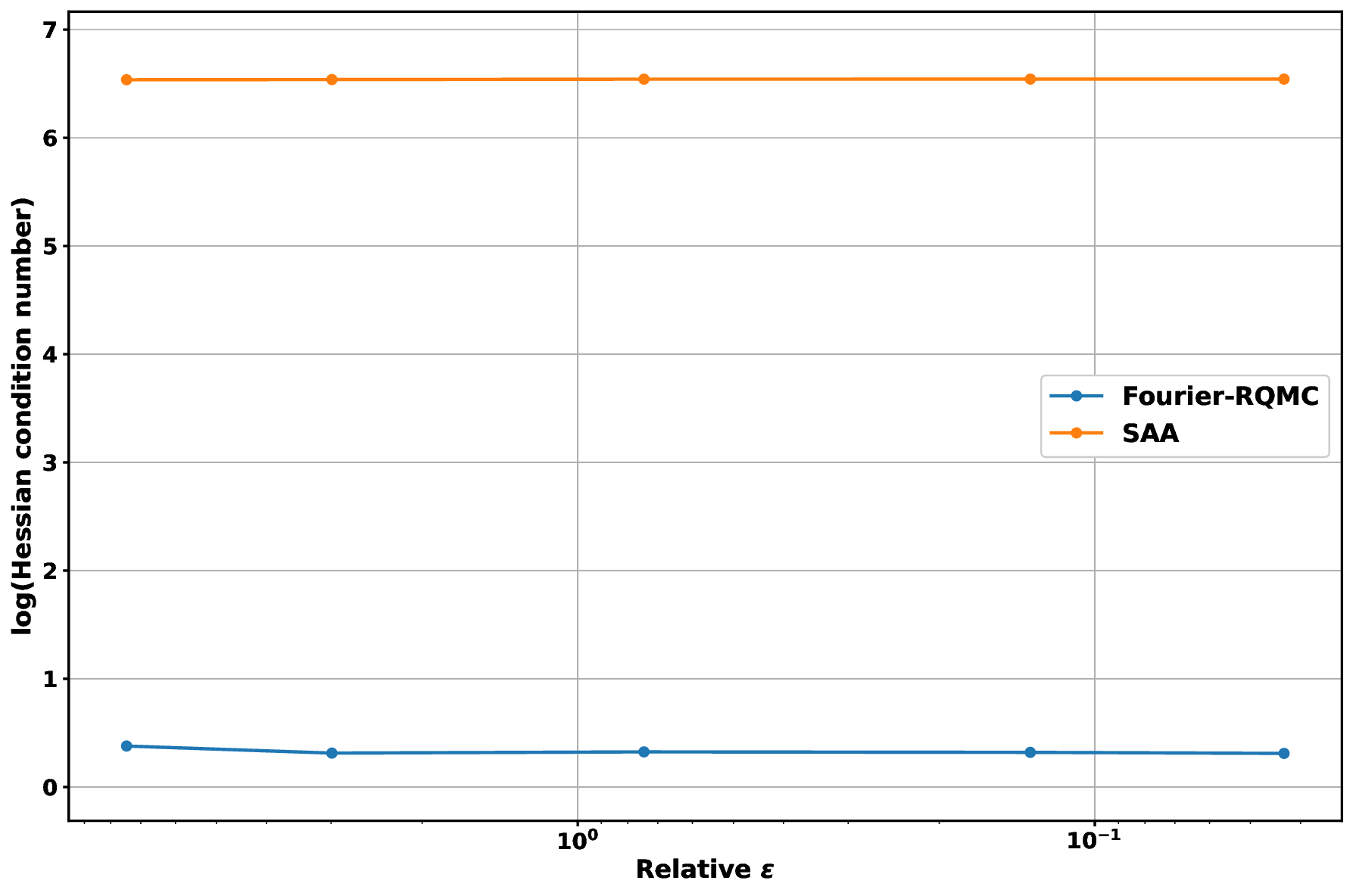}
    \end{subfigure}\hfill
    \begin{subfigure}{0.5\textwidth}
        \centering
        \includegraphics[width=\linewidth]{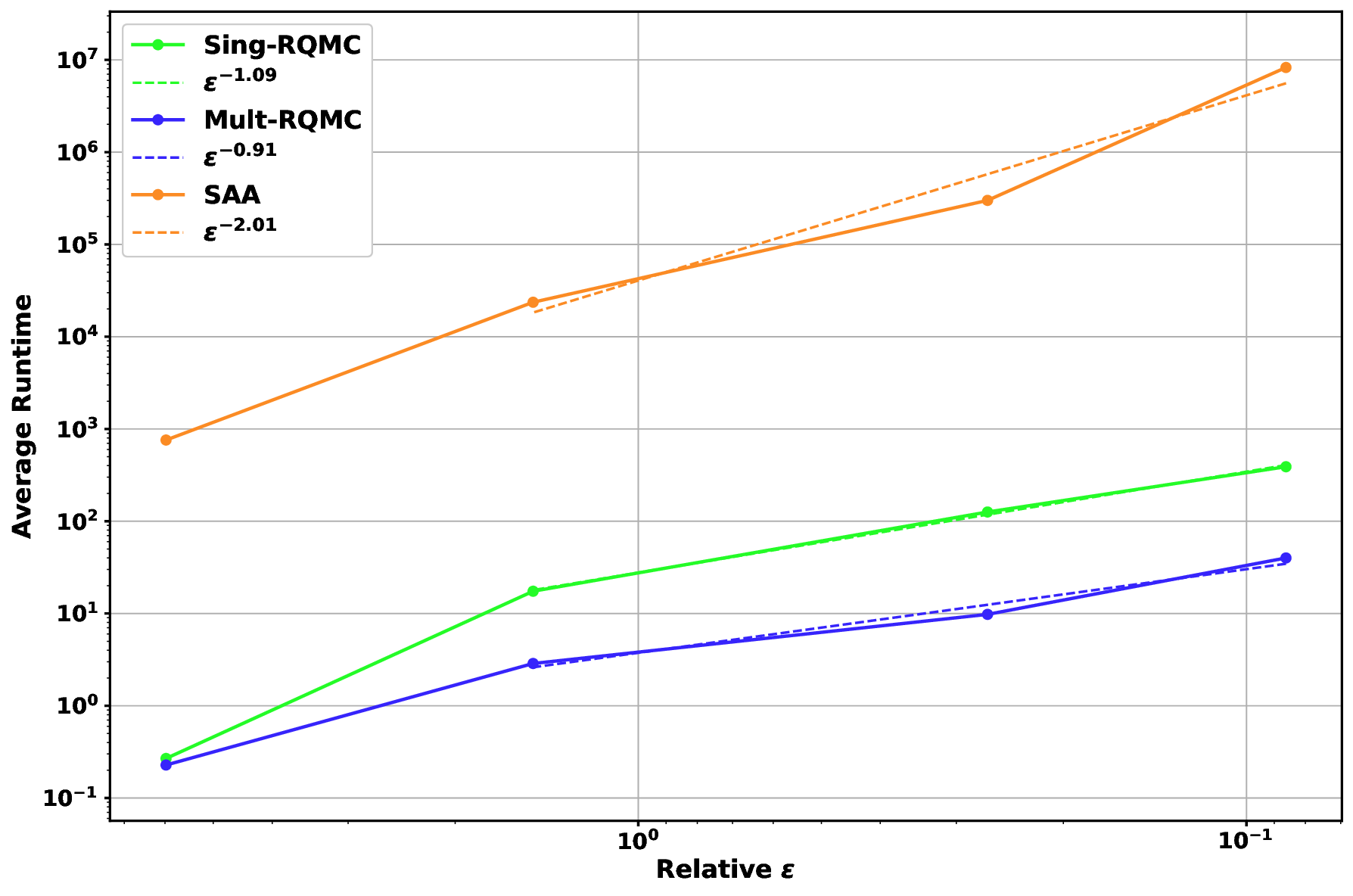}
    \end{subfigure}
    \caption{QPC loss with a three-dimensional NIG loss vector: (left) spectral condition number $\kappa(\nabla_z^2\widehat{\mathcal L}(z^{(J)}))$ at the final iterate $J$; (right) average runtime (seconds) versus prescribed relative total tolerance $\varepsilon_{\mathrm{rel}}$.}
\label{fig:error_vs_time_mnig_qcl}
\end{figure}
\begin{center}
\begin{center}

\includegraphics[width=0.75\textwidth]{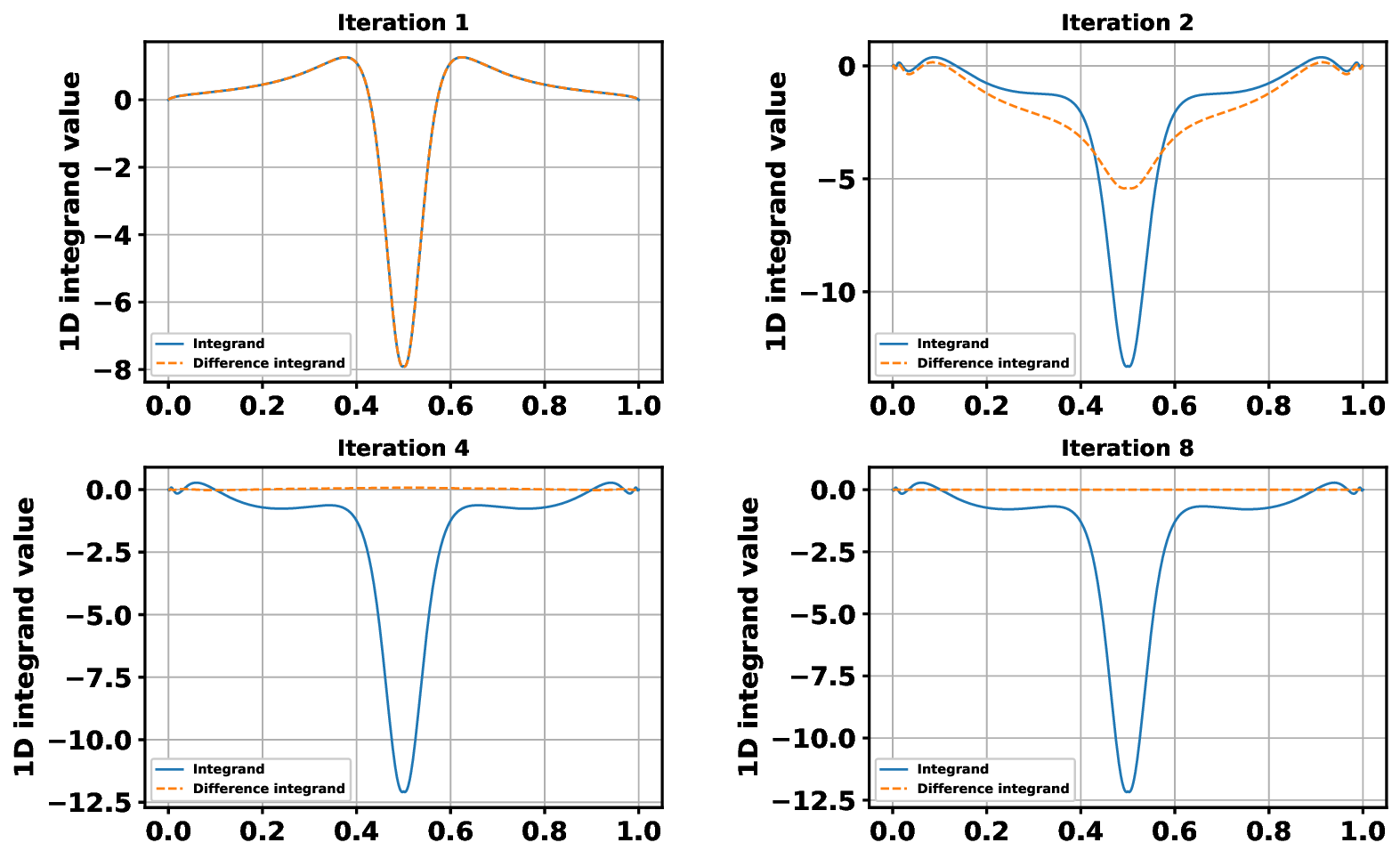}

\small (a) $c=4$

\vspace{1em}

\includegraphics[width=0.75\textwidth]{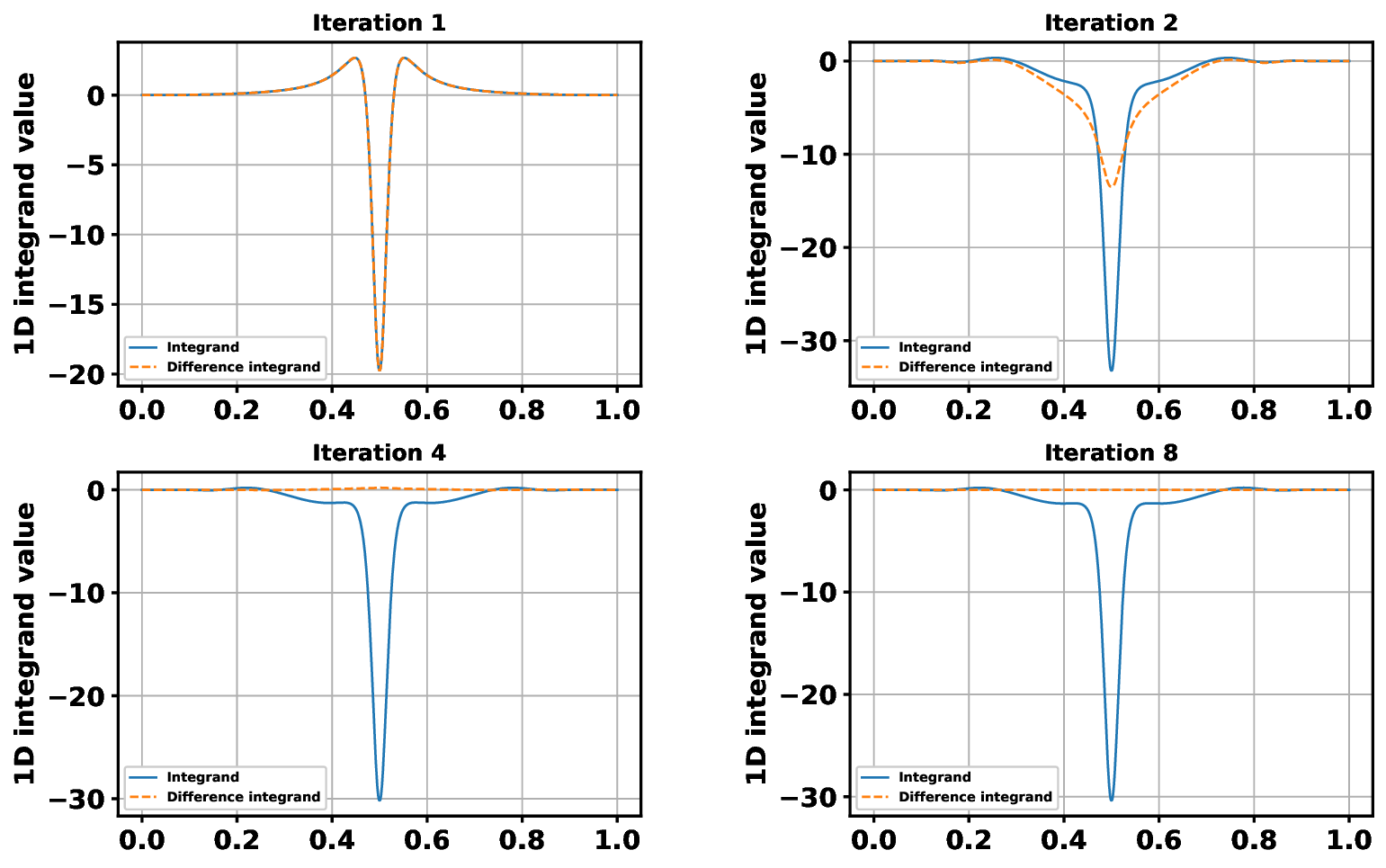}

\small (b) $c=10$

\captionof{figure}{\small QPC loss with a three-dimensional NIG loss vector: Transformed integrand component $\tilde h_{1,2}^{(1)}$ (solid) and the corresponding difference integrand arising in the multilevel construction (dashed) across successive optimization iterations with different scaling $c$.}
\label{fig:contraction_grad_qcl_3D_nig}
\end{center}

\end{center}




\bibliographystyle{plain}
\bibliography{references_mult_MSRM} 
\appendix
\section{Supplementary results for Sections \ref{sec:num_analysis}}\label{appendix:sup_err_analysis}
\subsection{Additional Assumptions for the Error Analysis}\label{appendix:additional_opt_err}
\begin{assumption}[Regularity conditions for the exact problem]\label{assump:convergence_SLSQP}
The point $\mathbf{m^*}$ is a (local) solution of equation \eqref{eq:deter_optimization_MSRM} at which the following conditions hold:
\begin{enumerate}[label=(C\arabic*), ref=C\arabic*]
    \item \label{ass:C1} For every $\mathbf{m_0} \in \mathbb{R}^d$, the mapping $\mathbf{m} \mapsto \nabla\ell(\mathbf{X-m})$ is differentiable at $\mathbf{m}_0$ a.s.

    \item\label{ass:C2} The strong second-order sufficient conditions (SSOSC) hold at the point $\mathbf{m}^*$. Let the tangent space at $\mathbf{m}^*$
\begin{equation*}
    \mathcal{K}(\mathbf{m}^*)
:= \curly{
\mathbf{a} \in \mathbb{R}^d \ :\ 
\nabla g(\mathbf{m}^*)^{\!\top} \mathbf{a} = 0 
}.
\end{equation*}
Then $\mathbf{a}^\top \nabla^2 g(\mathbf{m}^*) \mathbf{a} < 0,\quad  \forall  \mathbf{a}\in \mathcal{K}(\mathbf{m}^*) \setminus \left\{0\right\}.$ \footnote{$\nabla^2 \mathcal{L}(\mathbf{z}^*) = -\lambda^* \nabla^2 g(\mathbf{m}^*)$ as in \eqref{eq:hessian_KKT_system} with $\lambda^* >0$, so the normal  SSOSC in \cite{boggs_sequential_1995} become $\mathbf{a}^\top \nabla^2 g(\mathbf{m}^*) \mathbf{a} < 0$ }
\end{enumerate}
\label{assump:regularity_exact_KKT}
\end{assumption}
\begin{assumption}[Regularity conditions for the Fourier–RQMC problem]\
\begin{enumerate}[label=(C\arabic*), ref=C\arabic*]
    \setcounter{enumi}{2}
    \item \label{ass:C4}
The SSOSC holds at the point ${\mathbf{m}}_{N,S_{\mathrm{shift}}}^{\mathrm{RQMC},*}$ \textcolor{black}{a.s.} \footnote{a.s. in Assumption \ref{assump:regularity_Fou_RQMC} will be understood as for almost every realization of the RQMC shifts $S$.}Define the tangent space at ${\mathbf{m}}_{N,S_{\mathrm{shift}}}^{\mathrm{RQMC},*}$
\begin{equation*}
\mathcal{K}_{N,S_\mathrm{shift}}\paren{{\mathbf{m}}_{N,S_{\mathrm{shift}}}^{\mathrm{RQMC},*}}
:= \curly{
\mathbf{a} \in \mathbb{R}^d \ :\ 
I_{N,S_{\mathrm{shift}}}^{\mathrm{RQMC}}\brac{\tilde h^{(1)}\paren{{\cdot;\mathbf{m}}_{N,S_{\mathrm{shift}}}^{\mathrm{RQMC},*}}}^\top \mathbf{a} = 0 
}.
\end{equation*}
Then $\mathbf{a}^\top I_{N,S_{\mathrm{shift}}}^{\mathrm{RQMC}}\brac{\tilde h^{(2)}\paren{{\cdot;\mathbf{m}}_{N,S_{\mathrm{shift}}}^{\mathrm{RQMC},*}}}\mathbf{a} < 0,\quad  \forall  \mathbf{a}\in \mathcal{K}_{N,S_\mathrm{shift}}\paren{{\mathbf{m}}_{N,S_{\mathrm{shift}}}^{\mathrm{RQMC},*}}\setminus \left\{0\right\}.$
    \item \label{ass:C5}
    The Hessian matrix $I_{N,S_{\mathrm{shift}}}^{\mathrm{RQMC}}\brac{\tilde h^{(2)}\paren{{\cdot;\mathbf{m}}_{N,S_{\mathrm{shift}}}^{\mathrm{RQMC},*}}}$,
    is invertible a.s.
    \item \label{ass:C6}
For each iteration $j$, the BFGS approximation matrix $\mathbf{B}_{N,S_{\mathrm{shift}}}^{(\mathrm{RQMC},j)}\paren{{\cdot;\mathbf{m}}_{N,S_{\mathrm{shift}}}^{(\mathrm{RQMC},j)}}$
is $\mu_1$-strongly convex a.s. on the tangent space $\mathcal{K}_{N,S_\mathrm{shift}}\paren{{\mathbf{m}}_{N,S_{\mathrm{shift}}}^{(\mathrm{RQMC},j)}}$, i.e.,
\begin{equation*}
\mu_1 \,\boldsymbol I \;\preceq\; \mathbf{B}_{N,S_{\mathrm{shift}}}^{(\mathrm{RQMC},j)}\paren{{\cdot;\mathbf{m}}_{N,S_{\mathrm{shift}}}^{(\mathrm{RQMC},j)}} ,
\qquad  0<\mu_1  < \infty,
\label{eq:BFGS_mu_L_bounds}
\end{equation*}

\item \label{ass:C7}
For each iteration $j$, the approximation matrix
$\mathbf{B}_{N,S_{\mathrm{shift}}}^{(\mathrm{RQMC},j)}\paren{{\cdot;\mathbf{m}}_{N,S_{\mathrm{shift}}}^{(\mathrm{RQMC},j)}}$
is invertible a.s., and both it and its inverse are uniformly bounded a.s., i.e.,
\begin{equation*}
\Bigl\|\mathbf{B}_{N,S_{\mathrm{shift}}}^{(\mathrm{RQMC},j)}\paren{{\cdot;\mathbf{m}}_{N,S_{\mathrm{shift}}}^{(\mathrm{RQMC},j)}}\Bigr\|
\leq B_1,
\qquad
\Bigl\|\Bigl(\mathbf{B}_{N,S_{\mathrm{shift}}}^{(\mathrm{RQMC},j)}\paren{{\cdot;\mathbf{m}}_{N,S_{\mathrm{shift}}}^{(\mathrm{RQMC},j)}}\Bigr)^{-1}\Bigr\|
\leq B_2,
\label{eq:BFGS_bounds}
\end{equation*}
for some constants $B_1>0$ and $B_2>0$.
    \item \label{ass:C8}
    Let $P_j$ denote the projection matrix onto the tangent space $\mathcal{K}_{N,S_\mathrm{shift}}\paren{{\mathbf{m}}_{N,S_{\mathrm{shift}}}^{(\mathrm{RQMC},j)}}$.
    The approximation matrix $\mathbf{B}_{N,S_{\mathrm{shift}}}^{(\mathrm{RQMC},j)}\paren{{\cdot;\mathbf{m}}_{N,S_{\mathrm{shift}}}^{(\mathrm{RQMC},j)}}$ satisfies, a.s.,
    \begin{equation*}
        \lim_{j \to \infty}
        \frac{
        \norm{
        P_j \paren{
        \mathbf{B}_{N,S_{\mathrm{shift}}}^{(\mathrm{RQMC},j)}\paren{{\cdot;\mathbf{m}}_{N,S_{\mathrm{shift}}}^{(\mathrm{RQMC},j)}}
        - I_{N,S_{\mathrm{shift}}}^{\mathrm{RQMC}}\brac{\tilde h^{(2)} \paren{{\cdot;\mathbf{m}}_{N,S_{\mathrm{shift}}}^{\mathrm{RQMC},*}}}
        \paren{{\mathbf{m}}_{j+1,N,S_{\mathrm{shift}}}^{\mathrm{RQMC}}- {\mathbf{m}}_{N,S_{\mathrm{shift}}}^{(\mathrm{RQMC},j)}}
        }
        }
        }{
        \norm{{\mathbf{m}}_{j+1,N,S_{\mathrm{shift}}}^{\mathrm{RQMC}}- {\mathbf{m}}_{N,S_{\mathrm{shift}}}^{(\mathrm{RQMC},j)}}
        }
        = 0 .
    \end{equation*}
\end{enumerate}
\label{assump:regularity_Fou_RQMC}
\end{assumption}

\subsection{Proof for Lemma \ref{lem:uniform_conv_RQMC_Fou}}\label{Appendix:proof_uniform_conv_RQMC}
The argument is based on the uniform convergence proof for unconstrained SAA problems in
\cite[Proposition~7]{shapiro_monte_2003}. We adapt that reasoning by replacing the SAA estimators with our Fourier-RQMC
estimators and present the full proof for completeness. Recall that our Fourier-RQMC estimator with $\bar S$ independent shifts can be written as
\begin{equation}
I_{N,\bar S}^{\mathrm{RQMC}}\!\left[\tilde h_{k,p}^{(\nu)}(\cdot;\mathbf m_{k,p})\right]
= \frac{1}{\bar S}\sum_{s=1}^{\bar S}\frac{1}{N}\sum_{n=1}^N
\tilde h_{k,p}^{(\nu)}\!\left(\mathbf v_n^{(s)},\mathbf m_{k,p}\right),
\label{eq:RQMC_estimator_recall}
\end{equation}
and the corresponding target quantity $\hat g_{k,p}^{(\nu),\mathrm{Fou}}(\mathbf m_{k,p})$ satisfies Assumption \ref{ass:A7} is
\begin{equation}
\hat g_{k,p}^{(\nu),\mathrm{Fou}}(\mathbf m_{k,p})
= \mathbb{E}_\mathbf{V}\!\left[\tilde h_{k,p}^{(\nu)}(\mathbf V,\mathbf m_{k,p})\right],
\label{eq:target_integral_recall}
\end{equation}
where $\mathbf V$ denotes a generic random vector  distributed on $[0,1]^k$.

Fix $\bar{\mathbf m}_{k,p}\in M_{k,p}$. For $\eta>0$,
\[
U(\bar{\mathbf m}_{k,p},\eta)
:= \bigl\{\mathbf m_{k,p}\in M_{k,p}:\ \|\mathbf m_{k,p}-\bar{\mathbf m}_{k,p}\|\le \eta\bigr\}.
\]
Define 
$
\delta_{\bar{\mathbf m}_{k,p},\eta}^{(\nu)}(\mathbf v)
:= \sup_{\mathbf m_{k,p}\in U(\bar{\mathbf m}_{k,p},\eta)}
\norm{\tilde h_{k,p}^{(\nu)}(\mathbf v,\mathbf m_{k,p})-\tilde h_{k,p}^{(\nu)}(\mathbf v,\bar{\mathbf m}_{k,p})}.
$
From \eqref{eq:transform_integrand_v}, we have that $\tilde h_{k,p}^{(\nu)}(\mathbf v,\cdot)$ is continuous with $\mathbf m_{k,p}$ (they only depend on $\mathbf{m}_{k,p}$ through $\exp(.)$ part) for each fixed $\mathbf v$, we have
\begin{equation}
\delta_{\bar{\mathbf m}_{k,p},\eta}^{(\nu)}\paren{\mathbf{v}}\ \xrightarrow[]{}\ 0,
\qquad \text{as }\eta\downarrow 0,\ \ \forall \mathbf v\in[0,1]^k.
\label{eq:delta_pointwise}
\end{equation}
Moreover, from the representation of $\tilde h_{k,p}^{(\nu)}$ in~\eqref{eq:transform_integrand_v}, for $\mathbf m_{k,p}\in M_{k,p}$ along with the admissible damping vector $\mathbf K_{k,p}^{(\nu)}$ from  Assumption \ref{ass:A7}, we also have 
\[\sup_{\mathbf m_{k,p}\in M_{k,p}} \exp\!\langle \mathbf K_{k,p}^{(\nu)},\mathbf m_{k,p}\rangle \le C_M\] for some finite constant $C_M$. Consequently, there exists an integrable envelope $H$ such that $\norm{\tilde h_{k,p}^{(\nu)}(\mathbf v;\mathbf m_{k,p})}\le H(\mathbf v)$ for all $\mathbf m_{k,p}$ in a neighborhood of $\bar{\mathbf m}_{k,p}$. Hence,
\begin{equation*}
0\le \delta_{\bar{\mathbf m}_{k,p},\eta}^{(\nu)}(\mathbf v)\le 2H(\mathbf v),
\qquad \forall \mathbf v\in[0,1]^k.
\end{equation*}
Therefore, by the Dominated Convergence Theorem,
\begin{equation}
\lim_{\eta \to 0} \mathbb{E}_\mathbf{V}\!\left[\delta_{\bar{\mathbf m}_{k,p},\eta}^{(\nu)}\paren{\mathbf{V}}\right]
 = \mathbb{E}_\mathbf{V}\brac{\lim_{\eta \to 0}\delta_{\bar{\mathbf m}_{k,p},\eta}^{(\nu)}\paren{\mathbf{V}}}
\label{eq:rho_to_zero}
\end{equation}
For any $\mathbf m_{k,p}\in U(\bar{\mathbf m}_{k,p},\eta)$, by \eqref{eq:RQMC_estimator_recall},
\begin{equation}
\norm{
I_{N,\bar S}^{\mathrm{RQMC}}\!\left[\tilde h_{k,p}^{(\nu)}(\cdot;\mathbf m_{k,p})\right]
-
I_{N,\bar S}^{\mathrm{RQMC}}\!\left[\tilde h_{k,p}^{(\nu)}(\cdot;\bar{\mathbf m}_{k,p})\right]
}
\le
I_{N,\bar S}^{\mathrm{RQMC}}\!\left[\delta_{\bar{\mathbf m}_{k,p},\eta}^{(\nu)}\right].
\label{eq:local_fluctuation_bound}
\end{equation}
By Assumption~\ref{ass:A10}, from \cite[Theorem 4.2]{owen_strong_2021}, we have the SLLN for the nested uniform scrambling Sobol sequence (applied to the integrable function
$\delta_{\bar{\mathbf m}_{k,p},\eta}^{(\nu)}$), we have
\begin{equation}
I_{N,\bar S}^{\mathrm{RQMC}}\!\left[\delta_{\bar{\mathbf m}_{k,p},\eta}^{(\nu)}\right]
\xrightarrow[]{\mathrm{a.s.}}
\mathbb{E}_{\mathbf{V}}\!\left[\delta_{\bar{\mathbf m}_{k,p},\eta}^{(\nu)}\paren{\mathbf{V}}\right],
\qquad N\to\infty.
\label{eq:SLLN_delta}
\end{equation}
Combining \eqref{eq:delta_pointwise}--\eqref{eq:SLLN_delta}, for any $\epsilon>0$, there exists small enough
$\eta>0$ and  sufficient large $N_0$  such that for all $N\ge N_0$,
\begin{equation}
\sup_{\mathbf m_{k,p}\in U(\bar{\mathbf m}_{k,p},\eta)}
\norm{
I_{N,\bar S}^{\mathrm{RQMC}}\!\left[\tilde h_{k,p}^{(\nu)}(\cdot;\mathbf m_{k,p})\right]
-
I_{N,\bar S}^{\mathrm{RQMC}}\!\left[\tilde h_{k,p}^{(\nu)}(\cdot;\bar{\mathbf m}_{k,p})\right]
}
\le \epsilon
\quad\text{a.s.}
\label{eq:local_equicontinuity_rqmc}
\end{equation}
Since $M_{k,p}$ is compact, for the above $\eta>0$ there exist points
$\mathbf m_{k,p}^{(1)},\dots,\mathbf m_{k,p}^{(E)}\in M_{k,p}$ such that
\(
M_{k,p}\subset \bigcup_{i=1}^E U\paren{\mathbf m_{k,p}^{(i)},\eta}.
\)
Fix such a finite cover; for any $\mathbf m_{k,p}\in M_{k,p}$, choose an index $i$ such that
$\mathbf m_{k,p}\in U\paren{\mathbf m_{k,p}^{(i)},\eta}$. Then, by the triangle inequality,
\begin{equation}
 \begin{aligned}
\norm{
I_{N,\bar S}^{\mathrm{RQMC}}\!\left[\tilde h_{k,p}^{(\nu)}(\cdot;\mathbf m_{k,p})\right]
- \hat g_{k,p}^{(\nu),\mathrm{Fou}}(\mathbf m_{k,p})
}
&\le
\underbrace{\norm{
I_{N,\bar S}^{\mathrm{RQMC}}\!\left[\tilde h_{k,p}^{(\nu)}(\cdot;\mathbf m_{k,p})\right]
-
I_{N,\bar S}^{\mathrm{RQMC}}\!\left[\tilde h_{k,p}^{(\nu)}\!\paren{\cdot;\mathbf{m}_{k,p}^{(i)}}\right]
}}_{(\mathrm{I})}
\\[4pt]
&+
\underbrace{\norm{
I_{N,\bar S}^{\mathrm{RQMC}}\!\left[\tilde h_{k,p}^{(\nu)}\!\paren{\cdot;\mathbf{m}_{k,p}^{(i)}}\right]
- \hat g_{k,p}^{(\nu),\mathrm{Fou}}\!\paren{\mathbf{m}_{k,p}^{(i)}}
}}_{(\mathrm{II})}
\\[4pt]
&+
\underbrace{\norm{
\hat g_{k,p}^{(\nu),\mathrm{Fou}}\!\paren{\mathbf{m}_{k,p}^{(i)}}
- \hat g_{k,p}^{(\nu),\mathrm{Fou}}(\mathbf m_{k,p})
}}_{(\mathrm{III})}.
\end{aligned}
\label{eq:tri_decomp}
\end{equation}

\begin{itemize}
\item Term \((\mathrm{I})\) is controlled uniformly by \eqref{eq:local_equicontinuity_rqmc}:
\[
\sup_{\mathbf m_{k,p}\in M_{k,p}}(\mathrm{I})\le \epsilon \quad \text{a.s. for $N$ sufficient large.}
\]
\item Under Assumption~\ref{ass:A7}, $\hat g_{k,p}^{(\nu),\mathrm{Fou}}$ is continuous on $M_{k,p}$. Since $M_{k,p}$ is compact, $\hat g^{(\nu),\mathrm{Fou}}_{k,p}$ is uniformly continuous on $M_{k,p}$. Consequently, term~$(\mathrm{III})$ can be controlled deterministically. In particular, for any $\mathbf m_{k,p}\in U\paren{\mathbf m_{k,p}^{(i)},\eta}$,
\begin{equation}
 \sup_{\mathbf{m}_{k,p} \in M_{k,p}}\norm{\hat g_{k,p}^{(\nu),\mathrm{Fou}}\paren{\mathbf m_{k,p}}-\hat g_{k,p}^{(\nu),\mathrm{Fou}}\paren{\mathbf m_{k,p}^{(i)}}}  \leq  \epsilon
\label{eq:second_bound_rqmc}
\end{equation}

\item For Term \((\mathrm{II})\), note that it involves only finitely many points $\mathbf m_{k,p}^{(i)}$. Again by the SLLN for the nested uniform scrambling Sobol sequence  (Assumption~\ref{ass:A10} and \cite[Theorem 4.2]{owen_strong_2021}) for $\tilde h_{k,p}^{(\nu)}$,
\[
I_{N,\bar S}^{\mathrm{RQMC}}\!\left[\tilde h_{k,p}^{(\nu)}\paren{\cdot;\mathbf m_{k,p}^{(i)}}\right]
\xrightarrow[]{\mathrm{a.s.}}
\hat g_{k,p}^{(\nu),\mathrm{Fou}}\paren{\mathbf m_{k,p}^{(i)}},
\qquad N\to\infty,
\]
with $\hat g_{k,p}^{(\nu),\mathrm{Fou}}\paren{\mathbf m_{k,p}^{(i)}} = \mathbb{E}_\mathbf{V}\!\left[\tilde h_{k,p}^{(\nu)}\paren{\mathbf V,\mathbf m_{k,p}^{(i)}}\right]$ for each $i=1,\dots,E$. Taking the maximum over finitely many $i$ preserves almost sure convergence, hence
\(
\max_{1\le i\le E}(\mathrm{II}) \xrightarrow[]{\mathrm{a.s.}} 0
\). In particular, for $N$ sufficient large, $\sup_{\mathbf m_{k,p}\in M_{k,p}}(\mathrm{II})\le \epsilon$ almost surely.
\end{itemize}
Putting the three bounds into \eqref{eq:tri_decomp}, we obtain that for $N$ sufficient large,
\[
\sup_{\mathbf m_{k,p}\in M_{k,p}}
\norm{
I_{N,\bar S}^{\mathrm{RQMC}}\!\left[\tilde h_{k,p}^{(\nu)}(\cdot;\mathbf m_{k,p})\right]
- \hat g_{k,p}^{(\nu),\mathrm{Fou}}(\mathbf m_{k,p})
}
\le 3\epsilon
\qquad \text{a.s.}
\]
Since $\epsilon>0$ is arbitrary, the first convergence in \eqref{eq:uniform_SLLN_rqmc_fou} follows.

\subsection{Proof for Theorem \ref{theorem:consistentcy-RQMC-Fou}}\label{appendix:proof_prop_consistency}

The proof follows standard local stability and implicit-function arguments for stochastic generalized equations (see \cite{dontchev_implicit_2009}).
For $\eta>0$, define the neighborhood of the optimal solution $\mathbf{z}^*$ by
\begin{equation}
    U\paren{\mathbf{z}^*,\eta}
:= \bigl\{\mathbf{z} \in \mathcal{Z}:\ \norm{\mathbf{z}-\mathbf{z^*}}\le \eta\bigr\}.
\label{eq:neigbor_opt_sol}
\end{equation}
Recall that the aggregate Fourier-based integrands admit the finite decompositions. By Lemma \ref{lem:uniform_conv_RQMC_Fou}, the USLLN holds on each $M_{k,p} \subset \mathbb{R}^k$ for 
$I_{N,\bar S}^{\mathrm{RQMC}}\!\left[\tilde h^{(\nu)}_{k,p}(\cdot;\mathbf{m}_{k,p})\right]$.
Applying the triangle inequality yields, for $N$ sufficiently large, the corresponding USLLN for the aggregate integrands:
\begin{align}
\sup_{\mathbf m \in M}
\norm{
I_{N,\bar S}^{\mathrm{RQMC}}\!\left[\tilde h^{(\nu)}(\cdot;\mathbf{m})\right]
- \hat g^{(\nu),\mathrm{Fou}}(\mathbf m)
}
&\xrightarrow[]{\mathrm{a.s.}} 0.
\label{eq:uniform_SLLN_aggregate}
\end{align}
Next, by the strong regularity of solution $\mathbf{z}^*$, from  Definition~\ref{def:strong_regular_solution}, the Fourier representation of the true Hessian
$\hat{\mathcal{L}}_{\nabla^2_{\mathbf{z}}}^{\mathrm{Fou}}(\mathbf{z}^*)$ is invertible. Moreover, by continuity of $\hat{\mathcal{L}}_{\nabla^2_{\mathbf{z}}}^{\mathrm{Fou}}(\cdot)$, there exists $\eta>0$ and a constant $C_L>0$ such that
\begin{equation}
   \norm{\hat{\mathcal{L}}_{\nabla^2_{\mathbf{z}}}^{\mathrm{Fou}} (\mathbf{z})^{-1}} \leq C_L ,
   \qquad \forall\,\mathbf{z}\in U\paren{\mathbf{z}^*,\eta}.
   \label{eq:bound_inverse_norm}
\end{equation}
Define the map $T_N$ associated with the Fourier-RQMC problem by
\begin{equation}
    T_N(\mathbf{z})
    := \mathbf{z} - 
    \brac{\hat{\mathcal{L}}_{\nabla^2_{\mathbf{z}}}^{\mathrm{Fou}}(\mathbf{z})^{-1}}
    I_{N,\bar S}^{\mathrm{RQMC}}\brac{\mathcal{H}^{(1)} (\cdot;\mathbf{z})},
\label{eq:contraction_map_RQMC}
\end{equation}
By construction,
\(
I_{N,\bar S}^{\mathrm{RQMC}}\brac{\mathcal{H}^{(1)} (\cdot;\mathbf{z})}=0
\quad\Longleftrightarrow\quad
T_N(\mathbf{z})=\mathbf{z}.
\)

\medskip
\noindent\textbf{Step 1: $T_N$ is a contraction on $U\paren{\mathbf{z}^*,\eta}$.}

Let $\mathbf{z}_1,\mathbf{z}_2\in U\paren{\mathbf{z}^*,\eta}$.
Since $\hat{\mathcal{L}}_{\nabla_{\mathbf{z}}}^{\mathrm{Fou}}$ is continuously differentiable on $U\paren{\mathbf{z}^*,\eta}$, the mean value theorem yields
\begin{equation*}
    I_{N,\bar S}^{\mathrm{RQMC}}\brac{\mathcal{H}^{(1)}(\cdot;\mathbf{z}_1) }
    - I_{N,\bar S}^{\mathrm{RQMC}}\brac{\mathcal{H}^{(1)} (\cdot;\mathbf{z}_2)}
    =
    I_{N,\bar S}^{\mathrm{RQMC}}\brac{\mathcal{H}^{(2)}(\cdot;\bar{\mathbf{z}})} 
    \,(\mathbf{z}_1-\mathbf{z}_2),
\end{equation*}
for some $\bar{\mathbf{z}}$ on the line segment between $\mathbf{z}_1$ and $\mathbf{z}_2$.
Substituting into \eqref{eq:contraction_map_RQMC} gives
\begin{equation}
\begin{aligned}
    T_N\paren{\mathbf{z}_1}- T_N\paren{\mathbf{z}_2}
    &=
    \paren{\boldsymbol I
    - \,\hat{\mathcal{L}}_{\nabla^2_{\mathbf{z}}}^{\mathrm{Fou}} (\bar{\mathbf{z}})^{-1}\,
    I_{N,\bar S}^{\mathrm{RQMC}}\brac{\mathcal{H}^{(2)}(\cdot;\bar{\mathbf{z}})} }
    (\mathbf{z}_1-\mathbf{z}_2).
\end{aligned}
\end{equation}
From \eqref{eq:uniform_SLLN_aggregate} (restricted to $U$) we obtain the uniform convergence,
\begin{equation}
\sup_{\bar{\mathbf z} \in U}
\Bigl\|
I_{N,\bar S}^{\mathrm{RQMC}}\!\brac{\mathcal{H}^{(2)}(\cdot;\bar{\mathbf z})}
- \hat {\mathcal{L}}_{\nabla^2_{\mathbf{z}}}^{\mathrm{Fou}}(\bar{\mathbf z})
\Bigr\|
\xrightarrow[]{\mathrm{a.s.}} 0,
\label{eq:uniform_SLLN_hessian_z}
\end{equation}
and combining \eqref{eq:bound_inverse_norm} with \eqref{eq:uniform_SLLN_hessian_z} yields: for $N$ sufficient large, there exists $q\in(0,1)$ such that
\[
\sup_{\bar{\mathbf z}\in U}
\Bigl\|
\boldsymbol I
- \,\hat{\mathcal{L}}_{\nabla^2_{\mathbf{z}}}^{\mathrm{Fou}} (\bar{\mathbf z})^{-1}\,
I_{N,\bar S}^{\mathrm{RQMC}}\brac{\mathcal{H}^{(2)}(\cdot;\bar{\mathbf z})} 
\Bigr\|
\le q.
\]
Hence, $\norm{T_N(\mathbf{z}_1)-T_N(\mathbf{z}_2)}\le q\,\norm{\mathbf{z}_1-\mathbf{z}_2}$ for all $\mathbf{z}_1,\mathbf{z}_2\in U$, i.e., $T_N$ is a contraction on $U$.

\medskip
\noindent\textbf{Step 2: $T_N$ maps $U\paren{\mathbf{z}^*,\eta}$ into itself.}

By \eqref{eq:uniform_SLLN_aggregate} (restricted to $U$) we also have
\begin{equation}
\sup_{\mathbf z \in U}
\Bigl\|
I_{N,\bar S}^{\mathrm{RQMC}}\!\left[\mathcal{H}^{(1)}(\cdot;\mathbf{z})\right]
- \hat {\mathcal{L}}_{\nabla_{\mathbf{z}}}^{\mathrm{Fou}}(\mathbf z)
\Bigr\|
\xrightarrow[]{\mathrm{a.s.}} 0.
\label{eq:uniform_SLLN_grad_z}
\end{equation}
Since $\hat {\mathcal{L}}_{\nabla}^{\mathrm{Fou}}(\mathbf z^*) =0$, it follows from
\eqref{eq:contraction_map_RQMC} and \eqref{eq:uniform_SLLN_grad_z} that
$\norm{T_N(\mathbf{z}^*)- \mathbf{z}^*} \xrightarrow[]{\mathrm{a.s.}} 0$.
Therefore, for all sufficiently large $N$ , we can enforce
$\norm{T_N(\mathbf{z}^*)- \mathbf{z}^*} \le (1-q)\eta$, which implies that for any $\mathbf{z}\in U$,
\[
\norm{T_N(\mathbf{z})-\mathbf{z}^*}
\le
\norm{T_N(\mathbf{z})-T_N(\mathbf{z}^*)}
+\norm{T_N(\mathbf{z}^*)-\mathbf{z}^*}
\le
q\,\norm{\mathbf{z}-\mathbf{z}^*}+(1-q)\eta
\le \eta,
\]
i.e. $T_N(U)\subset U$.

By \textbf{Steps~1--2}, for all sufficiently large $N$ , $T_N$ is a Banach contraction on $U$ and maps $U$ into itself. Hence, by the Banach fixed point theorem, there exists a unique fixed point
${\mathbf{z}}_{N,\bar S}^{\mathrm{RQMC},*}\in U$ such that
$T_N\!\paren{{\mathbf{z}}_{N,\bar S}^{\mathrm{RQMC},*}}={\mathbf{z}}_{N,\bar S}^{\mathrm{RQMC},*}$, equivalently
$I_{N,\bar S}^{\mathrm{RQMC}}\!\brac{\mathcal{H}^{(1)}\paren{\cdot;{\mathbf{z}}_{N,\bar S}^{\mathrm{RQMC},*}}}=0$, Finally
\begin{align*}
\norm{{\mathbf{z}}_{N,\bar S}^{\mathrm{RQMC},*} - \mathbf{z}^*}
&= \norm{T_N\paren{{\mathbf{z}}_{N,\bar S}^{\mathrm{RQMC},*}}- T_N\paren{\mathbf{z}^*} + T_N\paren{\mathbf{z}^*} - \mathbf{z^*}} \\
&\leq q\, \norm{{\mathbf{z}}_{N,\bar S}^{\mathrm{RQMC},*} - \mathbf{z}^*}
    + \norm{T_N(\mathbf{z}^*)-\mathbf{z}^*}\\
&= q\, \norm{{\mathbf{z}}_{N,\bar S}^{\mathrm{RQMC},*} - \mathbf{z}^*}
    + \norm{ \,\hat{\mathcal{L}}_{\nabla^2_{\mathbf{z}}}^{\mathrm{Fou}} ({\mathbf{z^*}})^{-1}\,
I_{N,\bar S}^{\mathrm{RQMC}}\brac{\mathcal{H}^{(1)} (\cdot;{\mathbf{z^*}})}}\\
&\leq q\, \norm{{\mathbf{z}}_{N,\bar S}^{\mathrm{RQMC},*} - \mathbf{z}^*}
    +  \norm{\hat{\mathcal{L}}_{\nabla^2_{\mathbf{z}}}^{\mathrm{Fou}} ({\mathbf{z^*}})^{-1}}
      \norm{I_{N,\bar S}^{\mathrm{RQMC}}\brac{\mathcal{H}^{(1)} (\cdot;{\mathbf{z^*}})}} .
\end{align*}
Rearranging yields
\[
(1-q)\,\norm{{\mathbf{z}}_{N,\bar S}^{\mathrm{RQMC},*} - \mathbf{z}^*}
\le
 \norm{\hat{\mathcal{L}}_{\nabla^2_{\mathbf{z}}}^{\mathrm{Fou}} ({\mathbf{z^*}})^{-1}}
\norm{I_{N,\bar S}^{\mathrm{RQMC}}\brac{\mathcal{H}^{(1)} (\cdot;{\mathbf{z^*}})}},
\]
and \(
 \norm{\hat{\mathcal{L}}_{\nabla^2_{\mathbf{z}}}^{\mathrm{Fou}} ({\mathbf{z^*}})^{-1}}
\norm{I_{N,\bar S}^{\mathrm{RQMC}}\brac{\mathcal{H}^{(1)} (\cdot;{\mathbf{z^*}})}}
\xrightarrow[]{\mathrm{a.s.}} 0,
\)
we conclude that ${\mathbf{z}}_{N,\bar S}^{\mathrm{RQMC},*} \xrightarrow[]{\mathrm{a.s.}} \mathbf{z}^*$ as $N\to\infty$.

For the CLT in Theorem~\ref{theorem:efficiency-RQMC-Fou}, we work under the joint regime $S_{\mathrm{shift}}\to\infty$. We therefore provide uniform convergence of the Fourier-RQMC estimators in Lemma~\ref{lem:uniform_conv_RQMC_Fou_S}, and then deduce consistency of the optimizer in Proposition~\ref{prop:consistentcy-RQMC-Fou-S} with $S_{\mathrm{shift}}$.
\begin{lemma}[Uniform convergence of Fourier-RQMC estimators with $S_\mathrm{shift}$]
\label{lem:uniform_conv_RQMC_Fou_S}
 Fix $N = \bar N$, let $\{\mathbf{v}_n\}_{n=1}^{\bar N}$ be the Sobol sequence, and $\curly{\mathbf{v}_n^{(s)}}_{s=1}^{S_{\mathrm{shift}}}$ be the sequence obtained by applying suitable randomization (i.e., nested uniform scrambling, digital shifting) to
$\{\mathbf{v}_n\}_{n=1}^{\bar N}$.
Then the Fourier-RQMC estimators
$I_{\bar N,S_{\mathrm{shift}}}^{\mathrm{RQMC}}\!\left[\tilde h_{k,p}^{(\nu)}(\cdot;\mathbf{m}_{k,p})\right]$
satisfy a USLLN on $M_{k,p}$, that is,
\begin{equation}
\begin{aligned}
\sup_{\mathbf m_{k,p} \in M_{k,p}}
\norm{
I_{\bar N,S_{\mathrm{shift}}}^{\mathrm{RQMC}}\!\left[\tilde h_{k,p}^{(\nu)}(\cdot;\mathbf{m}_{k,p})\right]
- \hat g_{k,p}^{(\nu),\mathrm{Fou}}(\mathbf m_{k,p})
}
&\xrightarrow[]{\mathrm{a.s.}} 0, 
\end{aligned}
\label{eq:uniform_SLLN_rqmc_fou_S}
\end{equation}
as $S_{\mathrm{shift}} \to \infty$.
\end{lemma}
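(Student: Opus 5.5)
The plan is to mirror the proof of Lemma~\ref{lem:uniform_conv_RQMC_Fou}, with the number of randomizations $S_{\mathrm{shift}}$ as the asymptotic parameter instead of $N$. The key simplification is that, for fixed $\bar N$, the estimator is an average of i.i.d.\ random variables. Define
\[
Y_s(\mathbf m_{k,p}) := \frac{1}{\bar N}\sum_{n=1}^{\bar N}\tilde h_{k,p}^{(\nu)}\bigl(\mathbf v_n^{(s)};\mathbf m_{k,p}\bigr),
\]
where the independent randomizations are indexed by $s=1,\dots,S_{\mathrm{shift}}$. Then
\[
I_{\bar N,S_{\mathrm{shift}}}^{\mathrm{RQMC}}\bigl[\tilde h_{k,p}^{(\nu)}(\cdot;\mathbf m_{k,p})\bigr]=\frac{1}{S_{\mathrm{shift}}}\sum_{s=1}^{S_{\mathrm{shift}}} Y_s(\mathbf m_{k,p}).
\]
The $Y_s$ are i.i.d.\ random continuous functions of $\mathbf m_{k,p}$. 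For both nested uniform scrambling and digital shifting, each point $\mathbf v_n^{(s)}$ is marginally uniform on $[0,1]^k$, so $\mathbb E[Y_s(\mathbf m_{k,p})]=\hat g_{k,p}^{(\nu),\mathrm{Fou}}(\mathbf m_{k,p})$. The classical Kolmogorov SLLN therefore replaces the scrambled-net SLLN of \cite{owen_strong_2021}, and this is also why no restriction on the randomization type is needed here.

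The argument then follows the Shapiro-type uniform LLN of \cite[Proposition~7]{shapiro_monte_2003}.

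\textbf{Step 1 (envelope).} Using the representation \eqref{eq:transform_integrand_v} and the bound $\sup_{\mathbf m_{k,p}\in M_{k,p}} e^{\langle \mathbf K^{(\nu)}_{k,p},\mathbf m_{k,p}\rangle}\le C_M$ on the compact set $M_{k,p}$, take the same envelope $H$ as in the proof of Lemma~\ref{lem:uniform_conv_RQMC_Fou}. Then $\sup_{\mathbf m_{k,p}}\|Y_s(\mathbf m_{k,p})\|\le \bar N^{-1}\sum_n H(\mathbf v_n^{(s)})$, whose expectation is $\mathbb E_{\mathbf V}[H(\mathbf V)]<\infty$ by Assumption~\ref{ass:A10} and marginal uniformity.

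\textbf{Step 2 (local modulus).} For fixed $\bar{\mathbf m}_{k,p}$ and $\eta>0$, define
\[
\Delta_s(\eta):=\sup_{\mathbf m_{k,p}\in U(\bar{\mathbf m}_{k,p},\eta)}\|Y_s(\mathbf m_{k,p})-Y_s(\bar{\mathbf m}_{k,p})\|.
\]
Since each $\tilde h_{k,p}^{(\nu)}(\mathbf v;\cdot)$ is continuous in $\mathbf m_{k,p}$ (the dependence enters only through exponentials), $\Delta_s(\eta)\to0$ pointwise as $\eta\downarrow0$. It is also dominated by twice the envelope from Step~1, so dominated convergence gives $\mathbb E[\Delta_1(\eta)]\to0$.

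\textbf{Step 3 (pointwise and local control).} Apply the i.i.d.\ SLLN to $\Delta_s(\eta)$, and to $Y_s$ at the finitely many centres of a finite $\eta$-cover of $M_{k,p}$ (which exists by compactness).

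\textbf{Step 4 (combine).} Use the three-term triangle decomposition exactly as in \eqref{eq:tri_decomp}. The middle term is handled by the finite-point SLLN, and the third term by uniform continuity of $\hat g_{k,p}^{(\nu),\mathrm{Fou}}$ under Assumption~\ref{ass:A7}. The first term is bounded by $S_{\mathrm{shift}}^{-1}\sum_s\Delta_s(\eta)$, which converges almost surely to $\mathbb E[\Delta_1(\eta)]$, and this is at most $\epsilon$ for $\eta$ chosen small.

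This yields a bound $\le3\epsilon$ almost surely for large $S_{\mathrm{shift}}$, and \eqref{eq:uniform_SLLN_rqmc_fou_S} follows since $\epsilon>0$ is arbitrary.

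The main obstacle is Step~1, namely ensuring a uniform-in-$\mathbf m_{k,p}$ integrable envelope and measurability of the supremum $\Delta_s(\eta)$. Measurability I would handle by restricting the supremum to a countable dense subset, which is legitimate by continuity in $\mathbf m_{k,p}$. For the envelope, one must check that the damping vector $\mathbf K^{(\nu)}_{k,p}$ is held fixed, or ranges over a compact subset of $\delta^{(\nu)}_{K_{k,p}}$, as $\mathbf m_{k,p}$ varies. Otherwise the factor $\Phi_{\mathbf X_{k,p}}(\cdot+\mathrm i\mathbf K)\hat\ell^{(\nu)}_{k,p}(\cdot+\mathrm i\mathbf K)$ is not uniformly controlled. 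I would state this explicitly as part of the standing setting, as was implicitly done in Lemma~\ref{lem:uniform_conv_RQMC_Fou}.
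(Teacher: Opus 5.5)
Your proposal is correct and follows essentially the same route as the paper. For fixed $\bar N$ the per-randomization averages are i.i.d.\ with mean $\hat g_{k,p}^{(\nu),\mathrm{Fou}}(\mathbf m_{k,p})$, so the classical SLLN replaces the scrambled-net SLLN inside the envelope/finite-cover/three-term argument of Lemma~\ref{lem:uniform_conv_RQMC_Fou}; your remarks on measurability of the supremum and on keeping $\mathbf K^{(\nu)}_{k,p}$ fixed (so the envelope is uniform in $\mathbf m_{k,p}$) make explicit what the paper leaves implicit.
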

\begin{proof}
The argument follows the same lines as the proof of
Lemma~\ref{lem:uniform_conv_RQMC_Fou}.
The only difference is that, instead of taking expectations w.r.t. a generic random vector $\mathbf V \in [0,1]^k$, we take expectations w.r.t. the random shift index $S$. Since $\bar N$ and $\{\mathbf v_n^{(s)}\}_{s=1}^{S_{\mathrm{shift}}}$ are i.i.d.\,  the standard SLLN applies without using \cite[Theorem 4.2]{owen_strong_2021}.
Consequently,
\begin{align*}
I_{\bar N,S_{\mathrm{shift}}}^{\mathrm{RQMC}}
\!\left[\delta_{\bar{\mathbf m}_{k,p},\eta}^{(\nu)}\right]
&\xrightarrow[]{\mathrm{a.s.}}
\mathbb{E}_{S}\!\left[
I_{\bar N}^{\mathrm{RQMC}}
\!\left[\delta_{\bar{\mathbf m}_{k,p},\eta}^{(\nu)} \paren{\mathbf{v}_n^{(s)}}\right]
\right], \\[0.5em]
I_{\bar N,S_{\mathrm{shift}}}^{\mathrm{RQMC}}
\!\left[\tilde h^{(\nu)}\!\left(\cdot;\mathbf m_{k,p}\right)\right]
&\xrightarrow[]{\mathrm{a.s.}}
\mathbb{E}_{S}\!\left[
I_{\bar N}^{\mathrm{RQMC}}
\!\left[\tilde h^{(\nu)}\!\paren{\mathbf{v}_n^{(s)},\mathbf m_{k,p}}\right]
\right].
\end{align*}
as $S_{\mathrm{shift}} \to \infty$, and
\[
\mathbb{E}_{S}\!\left[
I_{\bar N}^{\mathrm{RQMC}}
\!\left[\tilde h^{(\nu)}\!\paren{\mathbf{v}_n^{(s)},\mathbf m_{k,p}}\right]
\right]
=
\hat g_{k,p}^{(\nu),\mathrm{Fou}}(\mathbf m_{k,p})
\].
\end{proof}

\begin{proposition}[Consistency of solution from Fourier-RQMC problem with $S_{\mathrm{shift}}$]\label{prop:consistentcy-RQMC-Fou-S}
Fix $N = \bar N$, let $\{\mathbf{v}_n\}_{n=1}^N$ be the Sobol sequence, and $\curly{\mathbf{v}_n^{(s)}}_{s=1}^{S_{\mathrm{shift}}}$ be the sequence obtained by applying suitable randomization (i.e., nested uniform scrambling, digital shifting) to
$\{\mathbf{v}_n\}_{n=1}^N$. Suppose that Assumption
   \eqref{ass:C4} holds.
Then, as $S_{\mathrm{shift}} \to \infty$, the Fourier-RQMC problem admits a unique solution
${\mathbf{z}}_{\bar N,S_{\mathrm{shift}}}^{\mathrm{RQMC},*} \in \mathcal{Z}$, and
\[
{\mathbf{z}}_{\bar N,S_{\mathrm{shift}}}^{\mathrm{RQMC},*}
\xrightarrow[]{\mathrm{a.s.}}
\mathbf z_{\bar N,\infty}^{\mathrm{RQMC},*} .
\]
\end{proposition}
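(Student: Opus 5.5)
The plan is to mirror the contraction argument used for Theorem~\ref{theorem:consistentcy-RQMC-Fou}, now in the regime $S_{\mathrm{shift}}\to\infty$ with $N=\bar N$ fixed. The limit object $\mathbf z_{\bar N,\infty}^{\mathrm{RQMC},*}$ is the zero of the averaged map
\[
F(\mathbf z):=\mathbb E_S\!\left[I_{\bar N}^{\mathrm{RQMC}}\big[\mathcal H^{(1)}(\cdot;\mathbf z)\big]\right].
\]
By the unbiasedness of the randomization, as recorded at the end of the proof of Lemma~\ref{lem:uniform_conv_RQMC_Fou_S}, this map coincides with $\hat{\mathcal L}^{\mathrm{Fou}}_{\nabla_{\mathbf z}}(\mathbf z)$. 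Likewise, the averaged Hessian $\mathbb E_S[I_{\bar N}^{\mathrm{RQMC}}[\mathcal H^{(2)}(\cdot;\mathbf z)]]$ equals $\hat{\mathcal L}^{\mathrm{Fou}}_{\nabla^2_{\mathbf z}}(\mathbf z)$. We therefore work in a neighbourhood $U(\mathbf z_{\bar N,\infty}^{\mathrm{RQMC},*},\eta)$ as in \eqref{eq:neigbor_opt_sol}.

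\textbf{Step 0: the limit is a well-posed center.} Assumption~\eqref{ass:C4}, which gives SSOSC for the surrogate, implies that the KKT matrix at the center is nonsingular. This is the standard argument: SSOSC on the tangent space together with a nonzero constraint gradient $\nabla g$ yields invertibility of the bordered matrix \eqref{eq:hessian_KKT_system}. By continuity of the averaged Hessian in $\mathbf z$ (dominated convergence, using the envelope $H$ from the proof of Lemma~\ref{lem:uniform_conv_RQMC_Fou}), we obtain a bound $\|(\cdot)^{-1}\|\le C_L$ on a small ball, exactly as in \eqref{eq:bound_inverse_norm}.

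\textbf{Step 1: uniform laws of large numbers for the aggregate maps.} Summing the componentwise USLLN of Lemma~\ref{lem:uniform_conv_RQMC_Fou_S} over the finitely many $(k,\mathbf p)$, for $\nu=0,1,2$, gives a.s. uniform convergence on $M$ of $I_{\bar N,S_{\mathrm{shift}}}^{\mathrm{RQMC}}[\mathcal H^{(1)}]$ to $F$ and of $I_{\bar N,S_{\mathrm{shift}}}^{\mathrm{RQMC}}[\mathcal H^{(2)}]$ to the averaged Hessian. The block structure of Notation~\ref{not:fou_rqmc_KKT} is affine in $\lambda\in(0,\bar\lambda]$, so the convergence is also uniform over $\mathcal Z$.

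\textbf{Step 2: contraction and self-mapping.} Define
\[
T_S(\mathbf z):=\mathbf z-\big[\hat{\mathcal L}^{\mathrm{Fou}}_{\nabla^2_{\mathbf z}}(\mathbf z)\big]^{-1}I_{\bar N,S_{\mathrm{shift}}}^{\mathrm{RQMC}}\big[\mathcal H^{(1)}(\cdot;\mathbf z)\big].
\]
Repeat Steps 1--2 of the proof of Theorem~\ref{theorem:consistentcy-RQMC-Fou}: a mean-value expansion together with the uniform Hessian convergence gives a contraction constant $q<1$ on $U$, and $\|T_S(\mathbf z^*_{\bar N,\infty})-\mathbf z^*_{\bar N,\infty}\|\to0$ a.s. yields $T_S(U)\subset U$ for large $S_{\mathrm{shift}}$. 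Banach's fixed point theorem then gives a unique zero $\mathbf z_{\bar N,S_{\mathrm{shift}}}^{\mathrm{RQMC},*}\in U$. The same rearrangement as before gives
\[
(1-q)\|\mathbf z_{\bar N,S}^{*}-\mathbf z_{\bar N,\infty}^*\|\le C_L\,\big\|I_{\bar N,S_{\mathrm{shift}}}^{\mathrm{RQMC}}[\mathcal H^{(1)}(\cdot;\mathbf z^*_{\bar N,\infty})]\big\|\to0 \text{ a.s.}
\]

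\textbf{Main obstacle.} The mean-value step for a vector-valued map is not literally an equality. It must be replaced by the integral form $\int_0^1 I^{\mathrm{RQMC}}[\mathcal H^{(2)}(\cdot;\mathbf z_2+t(\mathbf z_1-\mathbf z_2))]\,dt$. This requires differentiating $\tilde h^{(\nu)}$ in $\mathbf m$ under the finite sum, which is legitimate since the dependence on $\mathbf m$ is only through the exponential factor. A second subtlety is that the damping vectors must be held fixed on $U$ so that the integrands are continuous in $\mathbf m$. I would handle this by freezing $\mathbf K^{(\nu)}_{k,p}$ at the center value for the duration of the argument.
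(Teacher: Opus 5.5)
Your proposal is correct and follows the paper's proof: it reruns the Banach fixed-point argument from Theorem~\ref{theorem:consistentcy-RQMC-Fou}, with Lemma~\ref{lem:uniform_conv_RQMC_Fou_S} supplying uniform convergence as $S_{\mathrm{shift}}\to\infty$ and Assumption~\eqref{ass:C4} supplying nondegeneracy at the limit point $\mathbf z_{\bar N,\infty}^{\mathrm{RQMC},*}$ (you, like the paper, read \eqref{ass:C4} as holding for the $S_{\mathrm{shift}}=\infty$ surrogate). Your additional steps tighten details the paper passes over: deriving invertibility of the bordered KKT matrix from SSOSC, replacing the pointwise mean-value identity with its integral form, and freezing the damping vectors on the ball.
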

\begin{proof}
The proof follows the same arguments as that of
Theorem~\ref{theorem:consistentcy-RQMC-Fou}.
The only modification is that we invoke
Lemma~\ref{lem:uniform_conv_RQMC_Fou_S} to obtain uniform convergence of the component integrands under random shifting. Moreover, by Assumption~\eqref{ass:C4}, the limiting solution
$\mathbf z_{\bar N,\infty}^{\mathrm{RQMC},*}$ is strongly regular.
As a result, the same Banach fixed-point argument applies to 
\(
I_{\bar N,\infty}^{\mathrm{RQMC}}\!\left[\mathcal{H}^{(0)}\right],
\)
which yields the desired consistency result.
\end{proof}

\subsection{Proof for Theorem \ref{theorem:efficiency-RQMC-Fou}}\label{appendix:proof_theorem_efficiency}
We decompose
\[
{\mathbf{z}}_{N,S_{\mathrm{shift}}}^{\mathrm{RQMC},*} - \mathbf{z}^*
=
\underbrace{{\mathbf{z}}_{N,S_{\mathrm{shift}}}^{\mathrm{RQMC},*} - {\mathbf{z}}_{N,\infty}^{\mathrm{RQMC},*}}_{(\mathrm{A})}
+
\underbrace{{\mathbf{z}}_{N,\infty}^{\mathrm{RQMC},*} - \mathbf{z}^*}_{(\mathrm{B})}.
\]

\paragraph{Term (A)}\mbox{}\\
For each fixed $N$,
Proposition~\ref{prop:consistentcy-RQMC-Fou-S} yields
\[
{\mathbf{z}}_{N,S_{\mathrm{shift}}}^{\mathrm{RQMC},*} \xrightarrow[]{a.s} \mathbf z_{N,\infty}^{\mathrm{RQMC},*},
\qquad S_{\mathrm{shift}}\to\infty.
\]
Moreover, by strong regularity at $\mathbf z_{N,\infty}^{\mathrm{RQMC},*}$ and the corresponding linearized equation (Definition~\ref{def:strong_regular_solution}), we have the expansion
\begin{equation*}
\begin{aligned}
\mathbf z_{N,S_{\mathrm{shift}}}^{\mathrm{RQMC},*}- \mathbf z_{N,\infty}^{\mathrm{RQMC},*}
&=
- \Big(I_{N,\infty}^{\mathrm{RQMC}}\!\big[\mathcal{H}^{(2)}(\,\cdot\,;\mathbf z_{N,\infty}^{\mathrm{RQMC},*})\big]\Big)^{-1} \\
&\quad \times \Big(
I_{N,\infty}^{\mathrm{RQMC}}\!\big[\mathcal{H}^{(1)}(\,\cdot\,;\mathbf z_{N,\infty}^{\mathrm{RQMC},*})\big]
- I_{N,S_{\mathrm{shift}}}^{\mathrm{RQMC}}\!\big[\mathcal{H}^{(1)}(\,\cdot\,;\mathbf z_{N,\infty}^{\mathrm{RQMC},*})\big]
\Big)
+ a_N .
\end{aligned}
\end{equation*}
where \(
a_N
:=
o_{\mathbb{P}}\!\paren{
\norm{I_{N,\infty}^{\mathrm{RQMC}}\brac{\mathcal{H}^{(1)}\big( \cdot;\mathbf z_{N,\infty}^{\mathrm{RQMC},*}\big)}
-I_{N,S_{\mathrm{shift}}}^{\mathrm{RQMC}}\brac{\mathcal{H}^{(1)}\big( \cdot;\mathbf z_{N,\infty}^{\mathrm{RQMC},*}\big)}}}
=
o_{\mathbb{P}}\!\paren{S_\mathrm{shift}^{-\tfrac{1}{2}}},
\)
the latter equality following from \eqref{eq:RMSE_RQMC}. Multiplying by $\sqrt{S_{\mathrm{shift}}}$ gives
\begin{equation}\label{eq:delta_taylor_MC_2}
\begin{aligned}
\sqrt{S_{\mathrm{shift}}}
\bigl(
\mathbf z_{N,S_{\mathrm{shift}}}^{\mathrm{RQMC},*}
-
\mathbf z_{N,\infty}^{\mathrm{RQMC},*}
\bigr)
&=
-
\brac{I_{N,\infty}^{\mathrm{RQMC}}\brac{\mathcal{H}^{(2)}\big( \cdot;\mathbf z_{N,\infty}^{\mathrm{RQMC},*}\big)}}^{-1}
\\[1mm]
&\quad\times
\sqrt{S_{\mathrm{shift}}}
\paren{
I_{N,\infty}^{\mathrm{RQMC}}
\brac{
\mathcal{H}^{(1)}
\paren{\cdot;
\mathbf z_{N,\infty}^{\mathrm{RQMC},*}
}
}
-
I_{N,S_{\mathrm{shift}}}^{\mathrm{RQMC}}
\brac{
\mathcal{H}^{(1)}
\paren{\cdot;
\mathbf z_{N,\infty}^{\mathrm{RQMC},*}
}
}
}
\\
&\quad
+ o_{\mathbb{P}}(1).
\end{aligned}
\end{equation}
Since we use $S_\mathrm{shift}$ independent shifts, we can apply the multivariate CLT as $S_\mathrm{shift}\to\infty$,
\[
\sqrt{S_{\mathrm{shift}}}\!\paren{
I_{N,S_{\mathrm{shift}}}^{\mathrm{RQMC}}\brac{\mathcal{H}^{(1)}\big( \cdot;\mathbf z_{N,\infty}^{\mathrm{RQMC},*}\big)}
-I_{N,\infty}^{\mathrm{RQMC}}\brac{\mathcal{H}^{(1)}\big( \cdot;\mathbf z_{N,\infty}^{\mathrm{RQMC},*}\big)}}
\ \xrightarrow[]{\mathrm{law}}\
\mathcal{N}\!\left(\mathbf{0},\,\boldsymbol H_{N,\infty}^{\mathrm{RQMC}}\big(\mathbf z_{N,\infty}^{\mathrm{RQMC},*}\big)\right),
\]
where
\(
\boldsymbol H_{N,\infty}^{\mathrm{RQMC}}\big(\mathbf z_{N,\infty}^{\mathrm{RQMC},*}\big)
=
\mathrm{Var}_s\!\Big(I_{N}^{\mathrm{RQMC}}\brac{\mathcal{H}^{(1)}\big( \mathbf{v}_n^{(s)},\mathbf z_{N,\infty}^{\mathrm{RQMC},*}\big)}\Big).
\)
Combining this with \eqref{eq:delta_taylor_MC_2} and the invertibility of
\(I_{N,\infty}^{\mathrm{RQMC}}\brac{\mathcal{H}^{(2)}\big( \cdot;\mathbf z_{N,\infty}^{\mathrm{RQMC},*}\big)}\),
it yields
\[
\sqrt{S_{\mathrm{shift}}}\!\left({\mathbf{z}}_{N,S_{\mathrm{shift}}}^{\mathrm{RQMC},*} - \mathbf z_{N,\infty}^{\mathrm{RQMC},*} \right)
\ \xrightarrow[]{\mathrm{law}}\
\mathcal{N}\!\left(\mathbf{0},\,\boldsymbol V_{N,\infty}^{\mathrm{RQMC}}\big(\cdot; \mathbf z_{N,\infty}^{\mathrm{RQMC},*}\big)\right),
\]
with the sandwich covariance
\[
\boldsymbol V_{N,\infty}^{\mathrm{RQMC}}\big(\cdot; \mathbf z_{N,\infty}^{\mathrm{RQMC},*}\big)
:=
\paren{I_{N,\infty}^{\mathrm{RQMC}}\brac{\mathcal{H}^{(2)}\big( \cdot;\mathbf z_{N,\infty}^{\mathrm{RQMC},*}\big)}}^{-1}
\boldsymbol H_{N,\infty}^{\mathrm{RQMC}}\big( \mathbf z_{N,\infty}^{\mathrm{RQMC},*} \big)
\paren{I_{N,\infty}^{\mathrm{RQMC}}\brac{\mathcal{H}^{(2)}\big( \cdot;\mathbf z_{N,\infty}^{\mathrm{RQMC},*}\big)}}^{-1}.
\]

\paragraph{Term (B)}\mbox{}\\
 By Theorem~\ref{theorem:consistentcy-RQMC-Fou},
\[
{\mathbf{z}}_{N,\infty}^{\mathrm{RQMC},*}\xrightarrow[]{a.s}\mathbf z^*,
\qquad N\to\infty.
\]
for the joint regime, by Assumption \ref{assump:joint_grow_N_S} \eqref{ass:i},
we have that term (B) is negligible at the $\sqrt{S_{\mathrm{shift}}}N^{r}$scale. Hence,
\[
\sqrt{S_{\mathrm{shift}}}N^{r}\!\left({\mathbf{z}}_{N,S_{\mathrm{shift}}}^{\mathrm{RQMC},*}-\mathbf z^*\right)
=
\sqrt{S_{\mathrm{shift}}}N^{r}\!\left({\mathbf{z}}_{N,S_{\mathrm{shift}}}^{\mathrm{RQMC},*}-\mathbf z_{N,\infty}^{\mathrm{RQMC},*}\right)
+ o_{\mathbb{P}}(1),
\]
and the limiting distribution is governed by term (A). 

Finally, to identify the limiting covariance at $\mathbf z^*$, we use
\[
\begin{aligned}
\norm{
\boldsymbol N^{2r} \boldsymbol V_{N,\infty}^{\mathrm{RQMC}}\big(\cdot; \mathbf z_{N,\infty}^{\mathrm{RQMC},*}\big)
- \boldsymbol V\paren{\mathbf{z}^*}
}
\;\leq\;&\;
N^{2r}\norm{
\boldsymbol V_{N,\infty}^{\mathrm{RQMC}}\big(\cdot; \mathbf z_{N,\infty}^{\mathrm{RQMC},*}\big)
- \boldsymbol V_{N,\infty}^{\mathrm{RQMC}}\big(\cdot; \mathbf z^*\big)
} \\
&\;+\;
\norm{
N^{2r}\boldsymbol V_{N,\infty}^{\mathrm{RQMC}}\big(\cdot; \mathbf z^*\big)
- \boldsymbol V\paren{\mathbf{z}^*}
}.
\end{aligned}
\]
As $N \to \infty$, the first term converges to $0$ by continuity of $\boldsymbol V_{N,\infty}^{\mathrm{RQMC}}(\cdot)$ on a neighborhood of $\mathbf z^*$
together with ${\mathbf{z}}_{N,\infty}^{\mathrm{RQMC},*}\to\mathbf z^*$. The second term converges to $0$ by the
 USLLN with $N$ of 
$I_{N,\infty}^{\mathrm{RQMC}}\brac{\mathcal{H}^{(2)}} \to \hat{\mathcal{L}}_{\nabla_{\mathbf{z}}^2}^{\mathrm{Fou}}$ from \eqref{eq:uniform_SLLN_aggregate}, the uniform
invertibility of $\hat{\mathcal{L}}_{\nabla_{\mathbf{z}}^2}^{\mathrm{Fou}}$ on that neighborhood due to the strong regularity of $\mathbf{z}^*$ and  $N^{2r} H_{N,\infty}\big( \mathbf z^{*} \big) \xrightarrow{} \boldsymbol{H}(\mathbf{z}^*)$,  from Assumption \ref{assump:joint_grow_N_S} \eqref{ass:ii}.
Therefore,
\(
\boldsymbol V_{N,\infty}^{\mathrm{RQMC}}\big(\cdot;\mathbf z_{N,\infty}^{\mathrm{RQMC},*}\big)\to \boldsymbol V(\mathbf z^*).
\)
By Slutsky's theorem, as $S_{\mathrm{shift}}\to\infty$ and $N\to\infty$, we have
\[
\sqrt{S_{\mathrm{shift}}}N^r\!\left({\mathbf{z}}_{N,S_{\mathrm{shift}}}^{\mathrm{RQMC},*}-\mathbf z^*\right)
\ \xrightarrow[]{\mathrm{law}}\
\mathcal{N}\!\left(\mathbf 0,\ \boldsymbol V\paren{\mathbf z^*}\right).
\]

\subsection{Proof for Corrolary \ref{coro:complexity_sing_RQMC}}\label{appendix:proof_coro_sing_allocation}
For a fixed $N$ and $S_{\mathrm{shift}}$ digital shifts, the total work decomposes into a one-off sampling cost and the cumulative cost of $J$ SQP iterations:
\begin{equation}
    W_{\mathrm{sing}}^{\mathrm{RQMC}}(N,J)
= C_{\mathrm{draw}}(N)+J\,C_{\mathrm{iter}}(N,d).
\label{eq:work_sing_RQMC}
\end{equation}

\begin{itemize}
    \item \emph{Sampling cost.} For each $k\in \mathcal{I}_{q_{\ell}}$ we generate a single fixed RQMC design. The associated cost is
    \begin{equation}
      C_{\mathrm{draw}}(N)
    :=
    \mathcal{O}\paren{N S_{\mathrm{shift}}\,c_{\mathrm{draw}}},
    \qquad
    c_{\mathrm{draw}}
    :=
    \max_{k \in \mathcal{I}_{q_{\ell}}} c_{\mathrm{draw},k},  
    \label{eq:cost_draw_sing_RQMC}
    \end{equation}
    where $c_{\mathrm{draw},k}$ denotes the cost of drawing one sample at the dependence level $k$.

    \item \emph{Per-iteration cost.}
    Each SQP iteration incurs the following costs:
    \begin{itemize}
        \item \emph{Function and gradient evaluation.}
        Evaluating the aggregate integrands $\tilde h^{(\nu)}$ involves summing over all components, hence
        \begin{equation}
           C_{\mathrm{eval}}(N)
        :=
        \mathcal{O}\paren{N S_{\mathrm{shift}}\,c_{\mathrm{eval}}},
        \qquad
        c_{\mathrm{eval}}
        \approx
        c_{\max}\,N_{\mathrm{comp}}, 
        \label{eq:cost_eval_sing_RQMC}
        \end{equation}
        where $c_{\max}$ denotes the maximum cost $c_{k,p}$ incurred in evaluating the component integrands $\tilde h_{k,p}^{(\nu)}$.
        \begin{equation}
           c_{\max}
        :=
        \max_{\nu \in \{0,1\}}
        \max_{k \in \mathcal{I}_{q_{\ell}}}\ \max_{\mathbf{p}\in\mathcal{I}_k} c_{k,p},
        \qquad
        N_{\mathrm{comp}}
        :=
        \sum_{k \in \mathcal{I}_{q_{\ell}}}\ \sum_{\mathbf{p}\in\mathcal{I}_k} 1 . 
        \label{eq:c_and_N_comp}
        \end{equation}

        \item \emph{BFGS update.}
        Forming the BFGS update costs $\mathcal{O}(d^2)$ (outer products and matrix-vector products).
        In addition, we still incur the evaluation cost $C_{\mathrm{eval}}(N)$ to compute the required gradient differences.

        \item \emph{QP solve.}
        With one active inequality constraint, solving the resulting dense QP costs $\mathcal{O}((d+1)^3)$.
    \end{itemize}
    Collecting the dominant terms
, the per-iteration cost can be summarized as
    \begin{equation}
        C_{\mathrm{iter}}(N,d)
        =
        \mathcal{O}\paren{N S_{\mathrm{shift}}\,c_{\mathrm{eval}} + (d+1)^3}
    \label{eq:per_iteration_cost_sing_RQMC}
    \end{equation}
\end{itemize}
By \eqref{eq:statistical_error_RQMC_sol}, choosing
$N=N(\varepsilon)$ such that $\varepsilon_{\mathrm{stat}}^{\mathrm{RQMC}}(N)\le \varepsilon/2$ yields
\[
N(\varepsilon)=\mathcal O\!\paren{\varepsilon^{-\tfrac{1}{r}}}.
\]
Moreover, by \eqref{eq:superlinear_err}, achieving $\varepsilon_{\mathrm{opt}}(j)\le \varepsilon/2$ requires
\[
J(\varepsilon)=\mathcal O\!\big(\log\log(1/\varepsilon)\big).
\]
Substituting the choices $N(\varepsilon)$ and $J(\varepsilon)$ into \eqref{eq:work_sing_RQMC} yields
\[
W_{\mathrm{sing}}^{\mathrm{RQMC}}(\varepsilon)
=\mathcal O\!\Big(\varepsilon^{-\tfrac{1}{r}}\log\log(1/\varepsilon)\Big),
\]
This concludes the proof.

\subsection{Proof for Corollary \ref{coro:multilevel_allocation}}
\label{appendix:proof_coro_multi_allocation}
Following Algorithm~\ref{alg:RQMC_Fou_multi}, at each level $j$ we generate a fresh set of
$S_{\mathrm{shift}}$ independent digital shifts
$\curly{\mathbf{v}_n^{(s,j)}}_{s=1}^{S_{\mathrm{shift}}}$.
The resulting per-iteration cost follows the same structure as the single-level case \eqref{eq:per_iteration_cost_sing_RQMC}
where the per-sample cost at level $j$ is 
\begin{equation}
    c_j := c_{\mathrm{eval}}^{(j)} + c_{\mathrm{draw}}^{(j)}.
    \label{eq:per_iter_multi_cost}
\end{equation}
The per-drawing cost at level $j$ is defined as
\(c_{\mathrm{draw}}^{(j)} := \max_{k \in \mathcal{I}_{q_{\ell}}} c_{\mathrm{draw},k}^{(j)}\),
while the per-evaluation cost at level $j$ satisfies
\(c_{\mathrm{eval}}^{(j)} \approx c_{\max}^{(j)}\,N_{\mathrm{comp}}\),
where
\(c_{\max}^{(j)} := \max_{\nu \in \{0,1\}} \max_{k \in \mathcal{I}_{q_{\ell}}} \max_{\mathbf{p} \in \mathcal{I}_k} c_{k,p}^{(j)}\)
denotes the maximum  cost incurred in evaluating the level-$j$ component difference integrands
\(\Delta \tilde h_{k,p}^{(\nu,j)}\). The MSE at level $J$ is computed as
\begin{equation}
    \mathrm{MSE}_{\mathrm{stat}}^{\mathrm{RQMC}}
    = \sum_{j=1}^{J} \frac{\boldsymbol D_j}{S_{\mathrm{shift}} N_j^{2r}}.
\label{eq:MSE_level_j}
\end{equation}
From \eqref{eq:per_iter_multi_cost} and \eqref{eq:MSE_level_j}, our constrained optimization problem is
\begin{equation}
    \min_{N_j \geq 1} \sum_{j=1}^{J} S_{\mathrm{shift}} c_j N_j
    \qquad \text{ s.t } \quad
    \sum_{j=1}^{J} \frac{\boldsymbol D_j}{S_{\mathrm{shift}} N_j^{2r}}
    \leq \frac{\varepsilon^2}{4}.
\label{eq:minize_N_j_work}
\end{equation}
The Lagrangian associated with \eqref{eq:minize_N_j_work}. with $\mu >0$ is:
\begin{equation*}
    \mathcal{L}(N_j, \mu)
    = \sum_{j=1}^{J} S_{\mathrm{shift}} c_j N_j
    +\mu \left(\sum_{j=1}^{J} \frac{4\boldsymbol D_j}{S_{\mathrm{shift}} N_j^{2r}} - \varepsilon^2 \right).
\end{equation*}
The F.O.C.\ gives:
\begin{equation}
\left\{
\begin{aligned}
   \frac{\partial \mathcal{L}}{\partial N_j}
   &= S_{\mathrm{shift}} c_j
      - \frac{8\mu r\boldsymbol D_j}{S_{\mathrm{shift}}N_j^{2r+1}}
   = 0, \\[4pt]
   \frac{\partial \mathcal{L}}{\partial \mu}
   &= \sum_{j=1}^{J}\frac{4\boldsymbol D_j}{S_{\mathrm{shift}} N_j^{2r}}
      - \varepsilon^2
   = 0. 
\end{aligned}
\label{eq:FOC_condition_mult}
\right.
\end{equation}
From the first equation in \eqref{eq:FOC_condition_mult}, we obtain
\begin{equation}
    N_j =  \left(\frac{8\mu r \boldsymbol D_j} {S^{2}_{\mathrm{shift}}c_j}\right)^\frac{1}{2r+1},
\label{eq:N_j_mult}
\end{equation}
which yields \eqref{eq:raw_N_j}. Replacing \eqref{eq:N_j_mult} into the second
equation in \eqref{eq:FOC_condition_mult}, we find
\begin{equation}
    4\paren{8r\mu}^{-\tfrac{2r}{2r+1}}
    S_{\mathrm{shift}}^{\tfrac{2r-1}{2r+1}} S_1
    = \varepsilon^2,
\label{eq:mu_multi}
\end{equation}
Combining \eqref{eq:N_j_mult} with \eqref{eq:mu_multi},
we have \eqref{eq:min_work_mult}.

\subsection{Proof for Proposition \ref{prop:choosing_Nj}}\label{appendix:proof_prop_choosing_Nj}
From Theorem~\ref{theorem:superlinear_converge_SQP}, there exists an index
$J_{\mathrm{loc}}$ such that for all $j\ge J_{\mathrm{loc}}$,
\begin{equation}
\norm{\mathbf{z}^{(j)}-{\mathbf{z}}^{*}}
\leq
\eta_{j-1} \,
\norm{{\mathbf{z}}^{(j-1)}-{\mathbf{z}}^{*}},
\qquad \text{with }\ \eta_{j-1}\to 0.
\label{eq:superlinear_err}
\end{equation}
Applying the reverse triangle inequality and the triangle inequality yields, for all $j\ge J_{\mathrm{loc}}$,
\begin{equation}
\paren{1-\eta_{j-1}} \norm{{\mathbf{z}}^{(j-1)}-{\mathbf{z}}^{*}}
\leq
\norm{\mathbf{z}^{(j)} - \mathbf{z}^{(j-1)}}
\leq
\paren{1+\eta_{j-1}} \norm{{\mathbf{z}}^{(j-1)}-{\mathbf{z}}^{*}} .
\label{eq:increment_vs_error}
\end{equation}
Consequently, combining \eqref{eq:increment_vs_error} at indices $j$ and $j+1$ with \eqref{eq:superlinear_err} gives
\begin{equation}
\begin{aligned}
\frac{\norm{\mathbf{z}^{(j+1)} - \mathbf{z}^{(j)}}}{\norm{\mathbf{z}^{(j)} - \mathbf{z}^{(j-1)}}}
&\leq
\frac{\paren{1+\eta_{j}} \norm{{\mathbf{z}}^{(j)}-{\mathbf{z}}^{*}}}{\paren{1-\eta_{j-1}} \norm{{\mathbf{z}}^{(j-1)}-{\mathbf{z}}^{*}}} \\
&\leq
\underbrace{\frac{1+\eta_{j}}{1-\eta_{j-1}} \eta_{j-1}}_{:= \tilde \eta_{j-1}} .
\end{aligned}
\label{eq:step_ratio_bound}
\end{equation}
Since $\eta_{j-1}\to 0$ as $J\to\infty$, we also have $\tilde \eta_{j-1}\to 0$. For each $j\ge J_{\mathrm{loc}}$ there exists a finite constant
$C_{j-1}$ such that
\begin{equation}
\norm{\mathbf{z}^{(j)} - \mathbf{z}^{(j-1)}} \leq C_{j-1} \tilde \eta_{j-1}.
\label{eq:contraction_sol_bound}
\end{equation}
Combine \eqref{eq:contraction_sol_bound} with Assumption \ref{ass:diff_surrogate_Lipschitz}, for all $j\ge J_{\mathrm{loc}}$, we obtain
\begin{equation}
\begin{aligned}
\mathbb{E}\brac{\norm{{I}_{N_j,S_{\mathrm{shift}}}^{\mathrm{RQMC}} \brac{\Delta \mathcal{H}^{(1)}\paren{\cdot;\mathbf{z}^{(j)},\mathbf{z}^{(j-1)}}}}^2}
&\leq
C_{H,j-1} \eta_{j-1}^2.
\end{aligned}
\label{eq:second_moment_bound}
\end{equation}
where $C_{H,j-1}:= L_H C_{j-1}^2$. Moreover, for a random vector $\mathbf{X}\in\mathbb{R}^d$,
\[
\text{Var}\brac{\mathbf{X}}
=
\mathbb{E}\brac{\mathbf{X}\mathbf{X}^\top} - \mathbb{E}\brac{\mathbf{X}}\mathbb{E}\brac{\mathbf{X}^\top}
\preceq
\mathbb{E}\brac{\mathbf{X}\mathbf{X}^\top}.
\]
Using Jensen's inequality for the matrix norm, we have
\[
\norm{\text{Var}\brac{\mathbf{X}}}
\leq
\norm{\mathbb{E}\brac{\mathbf{X}\mathbf{X}^\top}}
\leq
\mathbb{E}\brac{\norm{\mathbf{X}\mathbf{X}^\top}}
=
\mathbb{E}\brac{\norm{\mathbf{X}}^2}.
\]
Applying this to
$\mathbf{X}={I}_{N_j,S_{\mathrm{shift}}}^{\mathrm{RQMC}} \brac{\Delta \mathcal{H}^{(1)}\paren{\cdot;\mathbf{z}^{(j)},\mathbf{z}^{(j-1)}}}$
and using \eqref{eq:second_moment_bound} gives
\begin{equation}
\boldsymbol{D}_j
=\norm{\text{Var}\brac{{I}_{N_j,S_{\mathrm{shift}}}^{\mathrm{RQMC}} \brac{\Delta \mathcal{H}^{(1)}\paren{\cdot;\mathbf{z}^{(j)},\mathbf{z}^{(j-1)}}}}}
\leq
\mathbb{E}\brac{\norm{{I}_{N_j,S_{\mathrm{shift}}}^{\mathrm{RQMC}} \brac{\Delta \mathcal{H}^{(1)}\paren{\cdot;\mathbf{z}^{(j)},\mathbf{z}^{(j-1)}}}}^2}
\leq
C_{H,j-1} \eta_{j-1}^2.
\label{eq:var_contract_mult}
\end{equation}
Using this variance contraction property of $\boldsymbol{D}_j$, we substitute it into \eqref{eq:raw_N_j} to obtain the expression in
\eqref{eq:N_j_mult}, with constant $C_{\mathrm{loc},j-1}$. Moreover, by fixing
$\eta_{j-1} = \eta$, we recover the formula in \eqref{eq:Nj_same_eta} with constant $C_{\mathrm{loc}}$. This
completes the proof.

\subsection{Proof for Proposition \ref{prop:work_multi_level}}\label{appendix:proof_prop_work_multi_level}
In the single-level RQMC setting, the uniform number of points
\(N_{\mathrm{sing}}\) is used for all iterations \(j\), and is derived from
\eqref{eq:statistical_error_RQMC_sol} as
\begin{equation}
    N_{\mathrm{sing}}
    =
    \left( \frac{\boldsymbol V_{\max}}{S_{\mathrm{shift}} \, \varepsilon^2} \right)^{\tfrac{1}{2r}},
\end{equation}
where \(\boldsymbol V_{\max} := \max_{1 \leq j \leq J} \boldsymbol V_j\),
\(\boldsymbol V_j := \norm{\mathrm{Var}\!\paren{ {I}_{N_1,S_{\mathrm{shift}}}^{\mathrm{RQMC}} \brac{\mathcal{H}^{(1)}\paren{\cdot;\mathbf{z}^{(j)}}}}}\),
and 
\[\boldsymbol V_1 = \boldsymbol D_1
= \norm{\mathrm{Var}\!\paren{ I_{N_1,S_{\mathrm{shift}}}^{\mathrm{RQMC}}
\brac{\mathcal H^{(1)}\paren{\cdot;\mathbf z^{(1)}}}}}.\]
The total work across all \(J\) iterations is
\begin{equation}
    W^{\text{RQMC}}_{\mathrm{sing}}(\varepsilon)
    = \sum_{j=1}^J S_{\mathrm{shift}} \, c \, N_{\mathrm{sing}}
    \approx J \, S_{\mathrm{shift}} \, \boldsymbol V_{\max}^{\tfrac{1}{2r}} \, \varepsilon^{-\tfrac{1}{r}}.
\label{eq:W_single}
\end{equation}
From \eqref{eq:min_work_mult}, we obtain
\begin{equation}
    \frac{W^{\text{RQMC}}_{\mathrm{sing}}(\varepsilon)}{W^{\text{RQMC}}_{\mathrm{mult}}(\varepsilon)}
    \approx \frac{J \, \boldsymbol V_{\max}^{\tfrac{1}{2r}}}{ \left(V_1^{\tfrac{1}{2r+1}} + \sum_{j=2}^{J_\mathrm{loc}-1} \boldsymbol D_j^{\tfrac{1}{2r+1}} + \sum_{j=J_\mathrm{loc}}^{J}\boldsymbol D_j^{\tfrac{1}{2r+1}}\right)^{\tfrac{2r+1}{2r}}}.
\end{equation}
 Using the contraction property for $\boldsymbol D_j$ from Proposition \ref{prop:choosing_Nj}, we obtain
\begin{equation}
\sum_{j=J_{\mathrm{loc}}}^{J} 
\boldsymbol D_j^{\tfrac{1}{2r+1}}
=
C_{\mathrm{loc}}^{\tfrac{1}{2r+1}}
\sum_{j=J_{\mathrm{loc}}}^{J} 
\!\left(\eta^{\tfrac{1}{2r+1}}\right)^{\!2j-2}
=
C_{\mathrm{loc}}^{\tfrac{1}{2r+1}}
\frac{
\eta^{\tfrac{2J_{\mathrm{loc}}-2}{2r+1}}
-
\eta^{\tfrac{2J}{2r+1}}
}{
1 - \eta^{\tfrac{2}{2r+1}}
}.
\label{eq:sum_A_convergence}
\end{equation}
Hence,
\[
\sum_{j=J_{\mathrm{loc}}}^{J} \boldsymbol D_j^{\tfrac{1}{2r+1}}
\xrightarrow[J \to \infty]{}
C_{\mathrm{loc}}^{\tfrac{1}{2r+1}}
\frac{
\eta^{\tfrac{2J_{\mathrm{loc}}-2}{2r+1}}
}{
1 - \eta^{\tfrac{2}{2r+1}}
}
=
\mathcal{O}(1),
\]
which implies \eqref{eq:W_sing_bound}. Finally, the multilevel work complexity \eqref{eq:W_mult_final} follows from Corollary \ref{eq:complexity_sing_RQMC_coro}.
\section{Supplementary results for Section \ref{subsec:opti_damp}}\label{Appendix:supp_optimal_damping}
\subsection{Damping rule and convexity properties}

\subsubsection{Proof for Corollary \ref{coro:optimal_damping}}\label{appendix:proof_optimal_damping}
Following the construction in Appendix~\ref{Appendix: Fourier_transform_loss}, the loss components
$\ell^{(\nu)}_{k,p}$ (for the exponential loss and QPC loss, excluding the linear term $\ell(x)=x$) are nonnegative,
and the marginal densities $f_{\mathbf X_{k,p}}$ are also nonnegative.
Therefore, by \cite[Proposition~3.2]{bayer_optimal_2023}, the associated Fourier factors satisfy the ridge property, i.e.,

\begin{align*}
   \abs{\hat{\ell}^{(\nu)}_{k,p}\paren{\mathbf{u} + \mathrm{i}\mathbf{K}_{k,p}^{(\nu)}}}  &\leq \abs{\hat{\ell}^{(\nu)}_{k,p}\paren{\mathrm{i}\mathbf{K}_{k,p}^{(\nu)}}}, \quad \forall \mathbf{u} \in \mathbb{R}^k, \, \mathbf{K}_{k,p}^{(\nu)} \in \delta_{\ell_{k,p}}^{(\nu)},\\
  \abs{ \Phi_{\mathbf{X}_{k,p}}\paren{\mathbf{u} + \mathrm{i}\mathbf{K}_{k,p}^{(\nu)}}}
    &\le
 \abs{\Phi_{\mathbf{X}_{k,p}}\paren{\mathrm{i}\mathbf{K}_{k,p}^{(\nu)}}},
  \qquad 
  \forall\, \mathbf{u} \in \mathbb{R}^k,\ \mathbf{K}_{k,p}^{(\nu)} \in \delta_{X_{k,p}}. 
\end{align*}
Combining these bounds with the definition of $h^{(\nu)}_{k,p}$ in \eqref{eq:loss_component_integrands} yields (B.1) and shows that  the supremum over $\mathbf u$ is attained at $\mathbf u=0$.
\begin{equation}\label{eq:f_bound}
\begin{aligned}
   \abs{h_{k,p}^{(\nu)}\paren{\mathbf{u},\mathbf{m}_{k,p},{\mathbf{K}_{k,p}^{(\nu)}},\boldsymbol \Theta_{k,p}}}
    &\leq (2\pi)^{-k} e^{\langle \mathbf{K}_{k,p}^{(\nu)},\mathbf{m}_{k,p} \rangle} \left| e^{-\mathrm{i}\langle \mathbf{u},\mathbf{m}_{k,p} \rangle} \right| \left| \mathbf{\Phi}_{\mathbf{X}_{k,p}}\paren{\mathrm{i}\mathbf{K}_{k,p}^{(\nu)}} \right|\left| \hat{\ell}^{(\nu)}_{k,p}\paren{\mathrm{i}\mathbf{K}_{k,p}^{(\nu)}} \right|\\ 
    &= \left|h_{k,p}^{(\nu)}\paren{\mathbf{0}_{\mathbb{R}^k}; \mathbf{m}_{k,p} ,\mathbf{K}_{k,p}^{(\nu)},\boldsymbol\Theta_{k,p}} \right|, \quad \forall \mathbf{u} \in \mathbb{R}^k, \, \mathbf{K}_{k,p}^{(\nu)} \in \delta_{K_{k,p}}^{(\nu)}. 
\end{aligned}
\end{equation}
Also, the quantity $\abs{h_{k,p}^{(\nu)}\paren{\mathbf{0}_{\mathbb{R}^k};\mathbf m_{k,p},\mathbf K_{k,p}^{(\nu)},\boldsymbol \Theta_{k,p}}}$ is strictly positive, so taking logarithms is valid and directly yields \eqref{eq:K_m_relate}. Since $\log(\cdot)$ is strictly increasing on $(0,\infty)$, the logarithmic transformation preserves the minimizer. This completes the proof.
\subsubsection{\texorpdfstring{On the Convexity of $\upsilon_{k,p}^{(\nu)}$ in in \eqref{eq:K_m_relate}}{Convexity of upsilon(k,p) (nu)}}
We need an additional assumption below to prove Proposition \ref{prop:psi_convex}
 \begin{enumerate}[label=(A\arabic*), ref=A\arabic*]
    \setcounter{enumi}{6}
    \item \label{ass:A9} For all $\mathbf{K}_{k,p}^{(\nu)} \in \delta^{(\nu)}_{K_{k,p}}$, we assume
\begin{equation*}
\int_{\mathbb{R}^k} \|\mathbf{x}\|^2 
e^{-\langle \mathbf{K}_{k,p}^{(\nu)}, \mathbf{x} \rangle}
f_{\mathbf{X}_{k,p}}(\mathbf{x}) \, d\mathbf{x}
< \infty, \quad \int_{\mathbb{R}^k} \|\mathbf{x}\|^2 
e^{\langle \mathbf{K}_{k,p}^{(\nu)}, \mathbf{x} \rangle}
\ell_{k,p}^{(\nu)}(\mathbf{x}) \, d\mathbf{x}
< \infty.
\end{equation*}
Moreover, for every compact set $C \subset \delta_{K_{k,p}}^{(\nu)}$ there exist
integrable functions $\varphi^{\mathbf{X}}_{k,p},\varphi^{\ell}_{k,p}:\R^k\to(0,\infty)$ such that,
for all $\mathbf K_{k,p}^{(\nu)}\in C$ and all $\mathbf x\in\R^k$,
\[
\|\mathbf x\|^2 e^{-\langle \mathbf K_{k,p}^{(\nu)},\mathbf x\rangle}\, f_{\mathbf X_{k,p}}(\mathbf x)
\le \varphi^{\mathbf{X}}_{k,p}(\mathbf x),
\qquad
\|\mathbf x\|^2 e^{\langle \mathbf K_{k,p}^{(\nu)},\mathbf x\rangle}\, \ell_{k,p}^{(\nu)}(\mathbf x)
\le \varphi^{\ell}_{k,p}(\mathbf x).
\]

\end{enumerate} 
\begin{proposition}\label{prop:psi_convex}
Suppose Assumptions \ref{ass:A7} and \eqref{ass:A9} hold. Define the normalized weights (Esscher transforms) as:
\[
w_{\boldsymbol \Phi_{k,p}}\paren{\mathbf x;\mathbf K_{k,p}^{(\nu)}} \;:=\; \frac{e^{-\langle \mathbf K_{k,p}^{(\nu)},\mathbf x\rangle} f_{\mathbf{X}_{k,p}}(\mathbf x)}{\int_{\mathbb{R}^k} e^{-\langle \mathbf K_{k,p}^{(\nu)},\mathbf x\rangle} f_{\mathbf{X}_{k,p}}(\mathbf x) \mathrm{d}\mathbf{x}},
\qquad
w_{\ell_{k,p}^{(\nu)}}\paren{\mathbf x;\mathbf K_{k,p}^{(\nu)}} \;:=\; \frac{e^{\langle \mathbf K_{k,p}^{(\nu)},\mathbf x\rangle} \ell_{k,p}^{(\nu)}(\mathbf x)}{\int_{\mathbb{R}^k} e^{\langle \mathbf K_{k,p}^{(\nu)},\mathbf x\rangle} \ell_{k,p}^{(\nu)}(\mathbf x) \mathrm{d}\mathbf{x}}
.\]
Then the Hessian of $\upsilon_{k,p}^{(\nu)}\!\left(\mathbf{m}_{k,p},\mathbf{K}_{k,p}^{(\nu)},\boldsymbol \Theta_{k,p}\right)$ (defined in \eqref{eq:K_m_relate}) is expressed as: \footnote{$\mathbb{E}_{w}\brac{\mathbf X},\operatorname{Cov}_{w}\brac{\mathbf X}$ denotes the expectation and covariance matrix of $\mathbf{X}$, respectively, under the given probability density $w$.}
\begin{equation}
\nabla^2_{\mathbf K_{k,p}^{(\nu)}} \upsilon_{k,p}^{(\nu)}\!\left(\mathbf{m}_{k,p},\mathbf{K}_{k,p}^{(\nu)},\boldsymbol \Theta_{k,p}\right)\;=\;\operatorname{Cov}_{w_{\boldsymbol \Phi_{k,p}}}\brac{\mathbf{X}_{k,p}}\;+\;\operatorname{Cov}_{w_{ \ell_{k,p}^{(\nu)}}}\brac{\mathbf{X}_{k,p}}\;\succeq\;0.
\label{eq:hessian_cov_sum}
\end{equation}
Moreover, $\upsilon_{k,p}^{(\nu)}(\mathbf m_{k,p},\mathbf K_{k,p}^{(\nu)},\boldsymbol \Theta_{k,p})$ is strictly convex on $\delta_{K_{k,p}}^{(\nu)}$ if, for every
$\mathbf K_{k,p}^{(\nu)}\in\delta_{K_{k,p}}^{(\nu)}$, at least one of the two covariance matrices
$\operatorname{Cov}_{w_{\boldsymbol \Phi_{k,p}}}[\mathbf X_{k,p}]$ or
$\operatorname{Cov}_{w_{\ell_{k,p}^{(\nu)}}}[\mathbf X_{k,p}]$
${\succ 0}$. If, in addition, there exists a compact set
$C \subset \delta_{K_{k,p}}^{(\nu)}$ and a constant $\mu_{k,p}>0$ such that \footnote{$\lambda_{\min}(\boldsymbol A)$ denotes the smallest eigenvalue of a square matrix $\boldsymbol A$.}
\[
\lambda_{\min}\!\Big(
  \nabla^2_{\mathbf K_{k,p}^{(\nu)}} \,\upsilon_{k,p}^{(\nu)}(\mathbf m_{k,p},\mathbf K_{k,p}^{(\nu)},\boldsymbol \Theta_{k,p})
\Big)
\;\ge\; \mu_{k,p},
\qquad \forall\,\mathbf K_{k,p}^{(\nu)}\in C,
\]
 then $\upsilon_{k,p}^{(\nu)}(\mathbf m_{k,p},\mathbf K_{k,p}^{(\nu)},\boldsymbol \Theta_{k,p})$
is $\mu_{k,p}$-strongly convex on $C$
\end{proposition}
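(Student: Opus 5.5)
We write $\upsilon_{k,p}^{(\nu)}$ as the sum of a linear term and two log-partition functions, and differentiate each twice under the integral sign. From \eqref{eq:K_m_relate},
\[
\upsilon_{k,p}^{(\nu)}=-k\ln(2\pi)+\langle \mathbf K,\mathbf m_{k,p}\rangle+\ln\abs{\Phi_{\mathbf X_{k,p}}(\mathrm i\mathbf K)}+\ln\abs{\hat\ell^{(\nu)}_{k,p}(\mathrm i\mathbf K)} ,
\]
writing $\mathbf K=\mathbf K_{k,p}^{(\nu)}$ for brevity. The first two terms are affine in $\mathbf K$, so they do not contribute to the Hessian.

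Next we identify the remaining two terms as integrals of positive functions. By the convention $\Phi_{\mathbf X}(\mathbf y)=\mathbb E[e^{\mathrm i\langle \mathbf y,\mathbf X\rangle}]$ we get $\Phi_{\mathbf X_{k,p}}(\mathrm i\mathbf K)=\int e^{-\langle\mathbf K,\mathbf x\rangle}f_{\mathbf X_{k,p}}(\mathbf x)\,\mathrm d\mathbf x=:Z_\Phi(\mathbf K)>0$. Likewise $\hat\ell^{(\nu)}_{k,p}(\mathrm i\mathbf K)=\int e^{\langle\mathbf K,\mathbf x\rangle}\ell^{(\nu)}_{k,p}(\mathbf x)\,\mathrm d\mathbf x=:Z_\ell(\mathbf K)>0$. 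Both are finite on $\delta_{K_{k,p}}^{(\nu)}$ by Assumption \ref{ass:A7}, and $\ell^{(\nu)}_{k,p}\ge 0$ as noted in the proof of Corollary \ref{coro:optimal_damping}. Hence the absolute values can be dropped, and $\upsilon=\text{affine}+\ln Z_\Phi+\ln Z_\ell$.

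The main technical step is to justify differentiating $Z_\Phi$ and $Z_\ell$ twice under the integral. Fix $\mathbf K$ in the open admissible set and take a compact ball $C$ around it. Assumption \eqref{ass:A9} provides integrable dominating functions $\varphi^{\mathbf X}_{k,p}$ and $\varphi^\ell_{k,p}$ for the second-derivative integrands $\mathbf x\mathbf x^\top e^{\mp\langle\mathbf K,\mathbf x\rangle}(\cdot)$, uniformly on $C$. The first-derivative integrands are dominated by $(1+\|\mathbf x\|^2)e^{\mp\langle\mathbf K,\mathbf x\rangle}(\cdot)$, which is integrable by combining Assumption \ref{ass:A7} with \eqref{ass:A9}. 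Dominated convergence then gives
\[
\nabla Z_\Phi=-\int\mathbf x\,e^{-\langle\mathbf K,\mathbf x\rangle}f\,\mathrm d\mathbf x,\qquad \nabla^2 Z_\Phi=\int\mathbf x\mathbf x^\top e^{-\langle\mathbf K,\mathbf x\rangle}f\,\mathrm d\mathbf x ,
\]
and analogously for $Z_\ell$ with the sign of the first derivative flipped.

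With these derivatives in hand, the log-partition identity gives
\[
\nabla^2\ln Z=\frac{\nabla^2Z}{Z}-\frac{\nabla Z\,\nabla Z^\top}{Z^2}=\mathbb E_w[\mathbf X\mathbf X^\top]-\mathbb E_w[\mathbf X]\mathbb E_w[\mathbf X]^\top=\operatorname{Cov}_w[\mathbf X_{k,p}] ,
\]
where $w$ is the corresponding normalized Esscher weight. The sign in the first derivative is squared away, so it does not matter. Summing the two terms yields \eqref{eq:hessian_cov_sum}, and both covariances are positive semidefinite since $\mathbf a^\top\operatorname{Cov}_w\mathbf a=\operatorname{Var}_w(\langle\mathbf a,\mathbf X\rangle)\ge0$.

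For strict convexity, assume one of the covariances is positive definite at every $\mathbf K$. Then the Hessian is positive definite everywhere on $\delta_{K_{k,p}}^{(\nu)}$. This set is convex, since $Z_\Phi$ and $Z_\ell$ are log-convex by Hölder's inequality, so their finiteness domains are convex. Restricting to segments, a $C^2$ function with positive second derivative along every segment is strictly convex.

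For strong convexity on $C$, assume $\lambda_{\min}(\nabla^2\upsilon)\ge\mu_{k,p}$ on $C$, and take $C$ convex (or restrict to segments lying in it). A second-order Taylor expansion with integral remainder then gives
\[
\upsilon(\mathbf K_2)\ge\upsilon(\mathbf K_1)+\langle\nabla\upsilon(\mathbf K_1),\mathbf K_2-\mathbf K_1\rangle+\tfrac{\mu_{k,p}}2\|\mathbf K_2-\mathbf K_1\|^2 ,
\]
which is $\mu_{k,p}$-strong convexity.
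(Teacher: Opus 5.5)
Your proof is correct and follows essentially the same route as the paper. Both arguments use dominated convergence under Assumption~\eqref{ass:A9} to differentiate $\Phi_{\mathbf X_{k,p}}(\mathrm i\mathbf K)$ and $\hat\ell^{(\nu)}_{k,p}(\mathrm i\mathbf K)$ twice under the integral, remove the absolute values by nonnegativity, and apply the log-partition identity to obtain the sum of the two Esscher covariances. Your extra checks are sound and are details the paper leaves implicit: convexity of $\delta_{K_{k,p}}^{(\nu)}$ via H\"older, uniform domination of the first-derivative integrands, and convexity of $C$ for the strong-convexity inequality.
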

\begin{proof}
   From Assumption
\ref{ass:A7}, the integrals
\[
\int_{\mathbb{R}^k} e^{-\langle \mathbf K_{k,p},\mathbf x\rangle}
      f_{\mathbf X_{k,p}}(\mathbf x)\,d\mathbf x,
\qquad
\int_{\mathbb{R}^k} e^{\langle \mathbf K_{k,p},\mathbf x\rangle}
      \ell_{k,p}^{(\nu)}(\mathbf x)\,d\mathbf x
\]
are finite for all $\mathbf{K}_{k,p}^{(\nu)} \in \delta_{K_{k,p}}^{(\nu)}$, combining with Assumption \eqref{ass:A9}, the normalized Esscher weights $w_{\boldsymbol\Phi_{k,p}}\paren{\mathbf x;\mathbf K_{k,p}^{(\nu)}}$ and
$w_{\ell_{k,p}^{(\nu)}}\paren{\mathbf x;\mathbf K_{k,p}^{(\nu)}}$ are well-defined probability densities on $\mathbb R^k$. Also, Assumption~\eqref{ass:A9} provides the uniform integrability needed to apply the dominated convergence theorem. Hence, the differentiation w.r.t. $\mathbf K_{k,p}^{(\nu)}$ can be passed through the integrals, and the first- and second-order derivatives are given by

\begin{equation*}
\nabla_{\mathbf{K}_{k,p}^{(\nu)}}\mathbf{\Phi}_{\mathbf{X}_{k,p}}\paren{\mathrm{i}\mathbf K_{k,p}^{(\nu)}} :=- \int_{\mathbb{R}^k}  \mathbf{x} e^{-\langle \mathbf K_{k,p}^{(\nu)},\mathbf x\rangle}f_{\mathbf{X}_{k,p}}(\mathbf x) \mathrm{d} \mathbf{x},\quad \nabla_{\mathbf{K}_{k,p}^{(\nu)}}\hat{\ell}^{(\nu)}_{k,p}\paren{\mathrm{i}\mathbf K_{k,p}^{(\nu)}} := \int_{\mathbb{R}^k}  \mathbf{x} e^{\langle \mathbf K_{k,p}^{(\nu)},\mathbf{x}\rangle}\ell_{k,p}^{(\nu)}(\mathbf x)\mathrm{d} \mathbf{x}
\end{equation*}

\begin{equation*}
\nabla^2_{\mathbf{K}_{k,p}^{(\nu)}}\mathbf{\Phi}_{\mathbf{X}_{k,p}}\paren{\mathrm{i}\mathbf K_{k,p}^{(\nu)}} := \int_{\mathbb{R}^k}  \mathbf{x}\mathbf{x}^T e^{-\langle \mathbf K_{k,p}^{(\nu)},\mathbf x\rangle}f_{\mathbf{X}_{k,p}}(\mathbf x) \mathrm{d} \mathbf{x}, \quad
\nabla^2_{\mathbf{K}_{k,p}^{(\nu)}}\hat{\ell}^{(\nu)}_{k,p}\paren{\mathrm{i}\mathbf K_{k,p}^{(\nu)}} := \int_{\mathbb{R}^k}  \mathbf{x}\mathbf{x}^T e^{\langle \mathbf K_{k,p}^{(\nu)},\mathbf{x}\rangle}\ell_{k,p}^{(\nu)}(\mathbf x)\mathrm{d} \mathbf{x}
\end{equation*}
By the nonnegativity of the loss components $\ell_{k,p}^{(\nu)}$ established in Appendix~\ref{Appendix: Fourier_transform_loss}, together with the nonnegativity of the marginal densities $f_{\mathbf{X}_{k,p}}$, we obtain
\begin{equation*}
    \ln \abs{\Phi_{\mathbf X_{k,p}}\paren{\mathrm i\,\mathbf K_{k,p}^{(\nu)}}} = \ln \Phi_{\mathbf X_{k,p}}\paren{\mathrm i\,\mathbf K_{k,p}^{(\nu)}}, \quad \ln \abs{\ln \widehat{\ell}^{(\nu)}_{k,p}\paren{\mathrm i\,\mathbf K_{k,p}^{(\nu)}}} = \ln \widehat{\ell}^{(\nu)}_{k,p}\paren{\mathrm i\,\mathbf K_{k,p}^{(\nu)}}
\end{equation*}
Now using the chain rule, we obtain
\begin{equation}
\begin{aligned}
\nabla^2_{\mathbf K_{k,p}^{(\nu)}} \ln \Phi_{\mathbf X_{k,p}}\paren{\mathrm i\,\mathbf K_{k,p}^{(\nu)}}
&=
\frac{
    \nabla^2_{\mathbf K_{k,p}^{(\nu)}} \Phi_{\mathbf X_{k,p}}\paren{\mathrm i\,\mathbf K_{k,p}^{(\nu)}}
}{
    \Phi_{\mathbf X_{k,p}}\paren{\mathrm i\,\mathbf K_{k,p}^{(\nu)}}
}
-
\frac{
    \nabla_{\mathbf K_{k,p}^{(\nu)}} \Phi_{\mathbf X_{k,p}}\paren{\mathrm i\,\mathbf K_{k,p}^{(\nu)}}
    \,
    \nabla_{\mathbf K_{k,p}^{(\nu)}} \Phi_{\mathbf X_{k,p}}\paren{\mathrm i\,\mathbf K_{k,p}^{(\nu)}}^{\top}
}{
    \Phi_{\mathbf X_{k,p}}^{2}\paren{\mathrm i\,\mathbf K_{k,p}^{(\nu)}}
},
\\[2mm]
\nabla^2_{\mathbf K_{k,p}^{(\nu)}} \ln \widehat{\ell}^{(\nu)}_{k,p}\paren{\mathrm i\,\mathbf K_{k,p}^{(\nu)}}
&=
\frac{
    \nabla^2_{\mathbf K_{k,p}^{(\nu)}} \widehat{\ell}^{(\nu)}_{k,p}\paren{\mathrm i\,\mathbf K_{k,p}^{(\nu)}}
}{
    \widehat{\ell}^{(\nu)}_{k,p}\paren{\mathrm i\,\mathbf K_{k,p}^{(\nu)}}
}
-
\frac{
    \nabla_{\mathbf K_{k,p}^{(\nu)}} \widehat{\ell}^{(\nu)}_{k,p}\paren{\mathrm i\,\mathbf K_{k,p}^{(\nu)}}
    \,
    \nabla_{\mathbf K_{k,p}^{(\nu)}} \widehat{\ell}^{(\nu)}_{k,p}\paren{\mathrm i\,\mathbf K_{k,p}^{(\nu)}}^{\top}
}{
    \paren{\widehat{\ell}^{(\nu)}_{k,p}}^{2}\paren{\mathrm i\,\mathbf K_{k,p}^{(\nu)}}
}.
\end{aligned}
\label{equation:second_deri_fourier_1}
\end{equation}
 We can rewrite \eqref{equation:second_deri_fourier_1} in terms of the Esscher transform as:
\begin{equation}
    \begin{aligned}
       \nabla^2_{\mathbf K_{k,p}^{(\nu)}} \ln \Phi_{\mathbf X_{k,p}}\paren{\mathrm{i}\mathbf K_{k,p}^{(\nu)}} &= \mathbb{E}_{w_{\boldsymbol \Phi_{k,p}}}\brac{\mathbf{X}_{k,p}\mathbf{X}_{k,p}^\top}- \mathbb{E}_{w_{\boldsymbol \Phi_{k,p}}}\brac{\mathbf{X}_{k,p}}\mathbb{E}_{w_{\boldsymbol \Phi_{k,p}}}\brac{\mathbf{X}_{k,p}}^\top = \operatorname{Cov}_{w_{\boldsymbol \Phi_{k,p}}}\brac{\mathbf{X}_{k,p}}\\
       \nabla^2_{\mathbf K_{k,p}^{(\nu)}} \ln \widehat{\ell}^{(\nu)}_{k,p}\paren{\mathrm{i}\mathbf K_{k,p}^{(\nu)}}  &= \mathbb{E}_{w_{\boldsymbol \Phi_{k,p}}}\brac{\mathbf{X}_{k,p}\mathbf{X}_{k,p}^\top}- \mathbb{E}_{w_{\ell_{k,p}^{(\nu)}}}\brac{\mathbf{X}_{k,p}}\mathbb{E}_{w_{\ell_{k,p}^{(\nu)}}}\brac{\mathbf{X}_{k,p}}^\top = \operatorname{Cov}_{w_{\ell_{k,p}^{(\nu)}}}\brac{\mathbf{X}_{k,p}}
    \end{aligned}
\end{equation}
Since $\operatorname{Cov}_{w}[\mathbf X] \succeq 0$ for any probability density $w$, \eqref{eq:hessian_cov_sum} follows immediately. Strict convexity holds whenever, for every $\mathbf K_{k,p}^{(\nu)}$, at least one of the two covariance terms is $\succ 0$. The strong convexity statement is exactly the uniform curvature bound for $\nabla^2_{\mathbf K_{k,p}^{(\nu)}} \,\upsilon_{k,p}^{(\nu)}(\mathbf m_{k,p},\mathbf K_{k,p}^{(\nu)},\boldsymbol \Theta_{k,p})$ on $C$.
\end{proof}
\begin{remark}
In our numerical setting, $\mathbf X_{k,p}$ is either a non-degenerate Gaussian $(\Sigma_{k,p}\succ 0)$ or a non-degenerate NIG with full-dimensional dispersion/shape $(\Gamma_{k,p}\succ 0)$. Under such non-degeneracy, the corresponding Esscher-tilted measures remain non-degenerate, hence $\mathrm{Cov}_{w_\Phi}[X_{k,p}]\succ 0$; see also \cite{prause_generalized_1999,gerber_option_1994}. Hence $v^{(\nu)}_{k,p}$ is strictly convex on $\delta^{(\nu)}_{K_{k,p}}$.
\label{rema:postive_definite_esscher}
\end{remark}

\subsection{Regularized damping}\label{appendix:regularization_damping}

This appendix complements the discussion in Section~\ref{subsec:opti_damp} by providing a principled regularization of the peak-minimizing damping rule. In order to handle the closed-to-boundary behavior of the minimizer, we need to account also for the width of the integrand around its peak $\mathbf{u} = \mathbf{0}_{\mathbb{R}^k}$, which controls for how fast the integrand decays away from $\mathbf{0}_{\mathbb{R}^k}$. Based on analysis about the asymptotic behavior of the integral around its peak from \cite[Chapter 4]{bruijn_asymptotic_2014}, a Taylor expansion of $\upsilon_{k,p}^{(\nu)}\paren{\mathbf{u};\mathbf{m}_{k,p},\mathbf{K}_{k,p}^{(\nu)},\textcolor{black}{\boldsymbol \Theta_{k,p}} }$ around $\mathbf{u}= \mathbf{0}_{\mathbb{R}^k}$ gives \footnote{From \eqref{eq:K_m_relate}, we recall $\upsilon_{k,p}^{(\nu)}\paren{\mathbf{0}_{\mathbb{R}^k};\mathbf{m}_{k,p},\mathbf{K}_{k,p}^{(\nu)},\textcolor{black}{\boldsymbol \Theta_{k,p}}} \equiv \upsilon_{k,p}^{(\nu)}\paren{\mathbf{m}_{k,p},\mathbf{K}_{k,p}^{(\nu)},\textcolor{black}{\boldsymbol \Theta_{k,p}}}$}  
\begin{equation}
    \upsilon_{k,p}^{(\nu)}\paren{\mathbf{u};\mathbf{m}_{k,p},\mathbf{K}_{k,p}^{(\nu)},\textcolor{black}{\boldsymbol \Theta_{k,p}} } = \upsilon_{k,p}^{(\nu)}\paren{\mathbf{m}_{k,p},\mathbf{K}_{k,p}^{(\nu)},\textcolor{black}{\boldsymbol \Theta_{k,p}}}-\frac{1}{2} \mathbf{u}^\top \kappa\paren{\mathbf{K}_{k,p}^{(\nu)}}\mathbf{u} +\mathcal{O}\left(\|\mathbf{u}\|^3\right)
\end{equation}
with $\kappa\paren{\mathbf{K}_{k,p}^{(\nu)}} = \nabla_{\mathbf{u}}^2 \upsilon_{k,p}^{(\nu)}\paren{\mathbf{u};\mathbf{m}_{k,p},\mathbf{K}_{k,p}^{(\nu)},\textcolor{black}{\boldsymbol \Theta_{k,p}} }|_{\mathbf{u}=\mathbf{0}_{\mathbb{R}^k}} \succeq 0$. 

Near $\mathbf{u} = \mathbf{0}_{\mathbb{R}^k}$, the component integrand of interest along any direction $\mathbf{u}$ can be approximated as
\begin{equation*}
    \bigl|h_{k,p}^{(\nu)}\paren{\mathbf{u};\mathbf{m}_{k,p},\mathbf{K}_{k,p}^{(\nu)},\textcolor{black}{\boldsymbol \Theta_{k,p}}}\bigr| 
    \;\approx\;
    \exp\!\left\{\upsilon_{k,p}^{(\nu)}\paren{\mathbf{m}_{k,p},\mathbf{K}_{k,p}^{(\nu)},\textcolor{black}{\boldsymbol \Theta_{k,p}}}\right\}
    \exp\!\left(\frac{1}{2} \mathbf{u}^\top \kappa\paren{\mathbf{K}_{k,p}^{(\nu)}}\mathbf{u}\right)
\end{equation*}
where $\kappa\paren{\mathbf{K}_{k,p}^{(\nu)}}$ denotes the local curvature (Hessian-type) matrix. 

Controlling the peak width required bounding the curvature relative to a 
reference geometry $\boldsymbol W_{k,p} \succ 0$. A minimum admissible width is enforced by the upper bound,
\begin{equation*}
    \kappa\paren{\mathbf{K}_{k,p}^{(\nu)}} \preceq r_{\max} \boldsymbol W_{k,p}
    \quad\Longleftrightarrow\quad
    \mathbf{u}^\top \kappa\paren{\mathbf{K}_{k,p}^{(\nu)}} \mathbf{u}
    \le r_{\max}\,\mathbf{u}^\top \boldsymbol W_{k,p} \mathbf{u},
    \label{eq:min_width_bound}
\end{equation*}
while a maximum admissible width follows from the lower bound
\begin{equation*}
    \kappa\paren{\mathbf{K}_{k,p}^{(\nu)}} \succeq r_{\min} \boldsymbol W_{k,p}
    \quad\Longleftrightarrow\quad
    \mathbf{u}^\top \kappa\paren{\mathbf{K}_{k,p}^{(\nu)}} \mathbf{u}
    \ge r_{\min}\,\mathbf{u}^\top \boldsymbol W_{k,p} \mathbf{u}.
    \label{eq:max_width_bound}
\end{equation*}
These curvature bounds can lead to a trust-region formulation for the damping selection:
\begin{equation*}
    \min_{\mathbf{K}_{k,p}^{(\nu)}} \;\upsilon\paren{\mathbf{m}_{k,p},\mathbf{K}_{k,p}^{(\nu)}}
    \quad\text{s.t.}\quad 
    \|\mathbf{K}_{k,p}^{(\nu)}\|_{\boldsymbol W_{k,p}} \le R,\;\; \mathbf{K}_{k,p}^{(\nu)}\in\delta_{K_{k,p}}^{(\nu)},
\end{equation*}
where 
\(
R := \lvert r_{\max}-r_{\min}\rvert, \qquad
\|\mathbf K_{k,p}^{(\nu)}\|_{\boldsymbol W_{k,p}}^{2}
:= \paren{\mathbf K_{k,p}^{(\nu)}}^{\top}\boldsymbol W_{k,p}\mathbf K_{k,p}^{(\nu)},
\qquad
\boldsymbol W_{k,p}\succ 0
\). 

The above derivation leads to the  Tikhonov-regularized problem \eqref{eq:tikhonov_penalized_damping}.
\begin{remark}
The natural choice for the weighting matrix $\boldsymbol W_{k,p}$ is given by the 
dispersion (or shape) matrix associated with the distribution of the 
 marginal loss vector $\mathbf{X}_{k,p}$. Concretely, for the models considered, 
$\boldsymbol W_{k,p}$ is chosen as follows:
\begin{table}[H]
    \centering
    \begin{tabular}{|c|c|}
        \hline
        \textbf{Model} & \textbf{Choice of $\boldsymbol W_{k,p}$} \\ \hline
        Gaussian & $\boldsymbol{\Sigma}_{k,p}$ \\ \hline
       NIG & $\boldsymbol{\Gamma}_{k,p}$ \\ \hline
    \end{tabular}
    \caption{Choice of the weighting matrix $\boldsymbol W_{k,p}$.}
    \label{tab:anisotropic_choice}
\end{table}
\label{note:anistropic_choice}
\end{remark}

\section{Supplementary results for Section \ref{subsubsec:domain_trans_QMC}}\label{Appendix:supp_results_domain_trans}

\subsection{Boundary Oscillation Analysis}\label{Appendix:boundary_oscillation}
This appendix analyzes boundary-induced oscillations of the transformed integrands in  Section~\ref{subsubsec:domain_trans_QMC}. For notational convenience, we write $ \;a_{k,p}^{(\nu)}\paren{G^{-1}\paren{\mathbf{v};\boldsymbol\Theta_{k,p}};\mathbf{m}_{k,p},\mathbf{K}_{k,p}^{(\nu)},\boldsymbol\Theta_{k,p}} \equiv a_{k,p}^{(\nu)}\left(G^{-1}\paren{\mathbf{v};\boldsymbol\Theta_{k,p}}\right)$. Fix a boundary face $ \mathbf B\subset\partial[0,1]^k$, and consider a Lipschitz path \(
\Upsilon:[t_0,t_1]\to[0,1]^k\), $\mathbf{v} := \Upsilon(t)$ 
approaching $\mathbf{B}$ (e.g., by fixing all but one coordinate and letting 
$v_j \to 0$). Throughout this appendix, we assume that the inverse transformation $G^{-1}(\cdot;\boldsymbol\Theta_{k,p})$ is almost everywhere differentiable on $(0,1)^k$ and locally absolutely continuous along Lipschitz paths $\Upsilon$. The Jacobian $J_{G^{-1}}(\mathbf v;\boldsymbol\Theta_{k,p})$ denotes the derivative of $G^{-1}$ w.r.t. $\mathbf v$.

The phase advance, or equivalently the total variation of 
$\varpi$, along the path $\Upsilon$ is given by
\begin{equation*}
    \begin{aligned}
       \mathrm{TV}_{[t_0,t_1]}(\varpi\circ\Upsilon)
&:=\int_{t_0}^{t_1} \left|\partial_{t}\;\varpi\paren{\Upsilon(t),\boldsymbol \Theta_{k,p}}\cdot\Upsilon'(t)\right|\,dt\\
&=\int_{t_0}^{t_1}\left|\big(J_{G^{-1}}\paren{\Upsilon(t),\boldsymbol \Theta_{k,p}}^\top \mathbf{m}_{k,p}\big)\cdot\Upsilon'(t)\right|\,dt. 
    \end{aligned}
\label{eq:total_variation_lip}
\end{equation*}
The \emph{number of oscillations accumulated along $\Upsilon$} is
\begin{equation*}
\begin{aligned}
    N_{\text{osc}}(\Upsilon;[t_0,t_1])
&:=\frac{1}{2\pi}\,\mathrm{TV}_{[t_0,t_1]}(\varpi\circ\Upsilon). \\
&= \frac{\int_{t_0}^{t_1}\left|\big(J_{G^{-1}}\paren{\Upsilon(t),\boldsymbol \Theta_{k,p}}^\top \mathbf{m}_{k,p}\big)\cdot\Upsilon'(t)\right|\,dt. }{2 \pi}\\
&\geq \frac{\left|\mathbf{m}_{k,p}^\top \brac{G^{-1}\paren{\Upsilon(t_1),\boldsymbol \Theta_{k,p}}-G^{-1}\paren{\Upsilon(t_0,\boldsymbol \Theta_{k,p}}}\right|}{2\pi}
\end{aligned}
\label{eq:nb_oscillation_lip}
\end{equation*}
Suppose that the amplitude of the integrand falls below the prescribed tolerance $\xi$ for $t\ge t^*$, where
\begin{equation}
    t^* := \inf\left\{t \in [t_0,t_1]:
    \abs{a\paren{G^{-1}\paren{\Upsilon ({t}),\boldsymbol\Theta_{k,p}}}}\,
    \abs{\det J_{G^{-1}}\paren{\Upsilon (t),\boldsymbol\Theta_{k,p}}}
    \le \xi \right\}.
\label{eq:amplitude_threshold}
\end{equation}
Let $\mathbf v^*:=\Upsilon(t^*)$ and define
\begin{align*}
    r_\mathbf{B}
    &:=\lim_{t\to t_\mathbf{B}}
    \frac{G^{-1}\paren{\Upsilon (t),\boldsymbol\Theta_{k,p}}}
         {\norm{G^{-1}\paren{\Upsilon (t),\boldsymbol\Theta_{k,p}}}}
     \;=\;
     \lim_{\mathbf{v}\to \mathbf{B}}
    \frac{G^{-1}\paren{\mathbf{v},\boldsymbol\Theta_{k,p}}}
         {\norm{G^{-1}\paren{\mathbf{v},\boldsymbol\Theta_{k,p}}}},\\
    U_*
    &:=\norm{G^{-1}\paren{\Upsilon(t^*),\boldsymbol \Theta_{k,p}}}
     \;=\;
     \norm{G^{-1}\paren{\mathbf{v}^*,\boldsymbol \Theta_{k,p}}}.
\end{align*}
Here, $r_\mathbf{B}$ is the \emph{normalized direction} of $G^{-1}(\mathbf v,\boldsymbol\Theta_{k,p})$ as $\mathbf v$ approaches the boundary face $\mathbf B$, and $U_*$ is the \emph{truncation radius} at the threshold point $\mathbf v^*=\Upsilon(t^*)$.

Moreover, if the phase $\varpi$ is monotone in a neighborhood of $\mathbf B$ along the path $\Upsilon(t)$ and $G^{-1}(\Upsilon(t))$ is asymptotically radial, then the number of oscillations near the boundary can be approximated by: \footnote{By 
symmetry, the same argument applies when $\mathbf{v} \to \mathbf{1}_{\mathbb{R}^k}$; for definiteness, we 
consider here the limiting behavior as $\mathbf{v} \to \mathbf{0}_{\mathbb{R}^k}$.}
\begin{equation}
    N_{\text{osc}}(\mathbf{B}) \;\approx\; 
\frac{1}{2\pi}\,\bigl|\mathbf{m}_{k,p}^\top G^{-1}(\mathbf{v}^*,\boldsymbol{\Theta})\bigr|
\;=\;
\frac{1}{2\pi}\,\bigl|\mathbf{m}_{k,p}^\top r_\mathbf{B}\bigr|\,U_*,
\end{equation}
For related analysis of multivariate oscillatory integrals, emphasizing the role of stationary points $\paren{\partial_{t}\;\varpi(\Upsilon(t),\boldsymbol \Theta_{k,p})\cdot\Upsilon'(t) = 0}$, we refer to \cite{huybrechs_construction_2007,iserles_computation_2006}. Our notion of quantifying the number of  oscillations via phase advance along a given path  follows the same spirit as steepest descent based algorithms; see, for example, \cite{gibbs_numerical_2024}.

\begin{remark}[Boundary–Admissibility Condition (BAC)]\
In order to analyze the oscillatory behavior near the boundaries, we must ensure that the mapped amplitude $a_{k,p}^{(\nu)}$ decays appropriately as we approach the hypercube boundary $\mathbf{B}$. Otherwise, by \eqref{eq:amplitude_threshold}, the threshold point  $\mathbf{v}^*$ may not exist. Thus, for every $\mathbf{v}\to\mathbf{B}$, we require
\begin{equation}
   \limsup_{\mathbf{v}\to \mathbf{B}} 
\;\abs{a_{k,p}^{(\nu)}\!\left(G^{-1}\paren{\mathbf{v},\boldsymbol{\Theta}_{k,p}}\right)\,
   }\abs{\det J_{G^{-1}}\paren{\mathbf{v},\boldsymbol\Theta_{k,p}}}
   \;<\;\infty.
\end{equation}
\label{remark:BAC_condition}
\end{remark}
The BAC condition ensures that the envelope does not counteract the decay induced by the extended CF, thereby guaranteeing the existence of a truncation radius and a finite oscillation count.

In Appendix~\ref{Appendix: Fourier_transform_loss}, we show that $\widehat{\ell}^{(\nu)}_{k,p}(\mathbf u+i \mathbf K)$ decays at most polynomially in $\|\mathbf u\|$ for fixed admissible $ \mathbf K$. By contrast, for the Gaussian and NIG models considered in Section \ref{sec:mapping_MSRM_Fourier}, the extended CF, $\Phi_{\mathbf X_{k,p}}(\mathbf u+i \mathbf K)$ admits exponential-type decay in $\|u\|$.  Consequently, as $\mathbf v\to  \mathbf B$ and $\|G^{-1}(\mathbf v)\|\to\infty$, the boundary behavior of the product $\Phi_{X_{k,p}}(G^{-1}(\mathbf v)+i \mathbf K)\,\widehat{\ell}^{(\nu)}_{k,p}(G^{-1}(\mathbf v)+i \mathbf K)$ is governed by $\Phi_{\mathbf X_{k,p}}$, up to polynomial factors.

So, in order to characterize the oscillatory behavior for our component integrands, we examine the behavior of its corresponding extended CF. The expression for $N_{\text{osc}}$ near the boundaries, for a given loss distribution, $\mathbf{X}$ is provided in Table \ref{tab:nb_oscillatory}. A detailed derivation is provided in Appendix \ref{Appendix:trans_specific_loss_distr}.
\begin{table}[H]
\centering
\renewcommand{\arraystretch}{2.5}
\begin{tabular}{|c|c|}
\hline
\textbf{Distribution} & $N_{\mathrm{osc}}(\mathbf{v})$ \\ \hline
\textbf{Gaussian}& $\displaystyle C(\mathbf{v}^*)\frac{|\mathbf{m}_{k,p}^\top r_{\mathbf{B}}|}{\sqrt{2}\pi} \sqrt{\frac{1}{\lambda_{\min}(\boldsymbol \Sigma_{k,p})}} $ \\ \hline
\textbf{NIG} & $\displaystyle C(\mathbf{v}^*)\frac{|\mathbf{m}_{k,p}^\top r_{\mathbf{B}}|}{2\pi} \frac{1}{\delta_{k,p} \sqrt{\lambda_{\min} (\boldsymbol \Gamma_{k,p})}}$ \\ \hline
\end{tabular}
\caption{Asymptotic scaling of the oscillation count $N_{\mathrm{osc}}$ near a boundary face $\mathbf B$ for the Gaussian and NIG reference models (see Appendix \ref{Appendix:trans_specific_loss_distr}). In the Gaussian case, $\boldsymbol\Sigma_{k,p}$ is defined in Example~\ref{ex:Gaussian_components}. In the NIG case, $\boldsymbol\Gamma_{k,p}$ and $\delta_{k,p}$ are defined in Example~\ref{ex:MNIG_components}.}
\label{tab:nb_oscillatory}
\end{table}
\begin{equation*}
C(\mathbf v^*)
=
\begin{cases}
\displaystyle
\sqrt{\log\!\left(\dfrac{C\,\bigl|\det J_{G^{-1}}(\mathbf v^*,\boldsymbol\Theta_{k,p})\bigr|}{\xi}\right)},
& \text{if } \mathbf X \sim \mathcal N, \\[8pt]
\displaystyle
\log\!\left(\dfrac{C\,\bigl|\det J_{G^{-1}}(\mathbf v^*,\boldsymbol\Theta_{k,p})\bigr|}{\xi}\right),
& \text{if } \mathbf X \sim \mathcal{NIG}.
\end{cases}
\end{equation*}
\begin{remark}
The estimates in Table \ref{tab:nb_oscillatory} distinguish   two cases:  
for the Gaussian case, the decay of the CF damps oscillations much more rapidly, while in the NIG setting, the slower exponential decay allows oscillatory behavior to persist over a wider region near the boundary. 
\label{remark: damps_oscillation_count} 
\end{remark}
\begin{remark}
The factor 
\(
    \frac{1}{\sqrt{\lambda_{\min}(\,\cdot\,)}} 
\)
originates from the truncation radius $U_{*}$. As $\lambda_{\min}(\,\cdot\,) \to 0$, the envelope decays increasingly slowly, which causes both $U_{*}$ and, consequently, $N_{\text{osc}}$ to diverge. In the degenerate case $\lambda_{\min}(\,\cdot\,)=0$, there is no decay along at least one direction. Hence $U_{*}$ is not finite and $N_{\text{osc}}$ cannot be estimated using formulas in Table \ref{tab:nb_oscillatory}. In this situation, the integrand exhibits non-decaying tails and might have piecewise ``kinked'' behavior near the boundary.
\label{remark:eigen_0_boundary}
\end{remark}
From Remarks \ref{remark:BAC_condition}, \ref{remark:eigen_0_boundary} together with 
Table \ref{tab:nb_oscillatory}, it follows that 
the oscillatory behavior of the integrand can be improved by:
\begin{itemize}
    \item reducing the projection of $\abs{\mathbf{m}_{k,p}^\top r_\mathbf{B}}$ onto the 
    preimage direction $r_\mathbf{B}$;
    \item matching the decay of the determinant at $\mathbf{v}^*$,
    $\abs{\det J_{G^{-1}}(\mathbf{v}^*,\boldsymbol{\Theta}_{k,p})}$ with the decay of the extended CF $\Phi_{\mathbf{X}_{k,p}}$, while ensuring that the BAC condition holds 
    and that the smallest eigenvalue $\lambda_{\min}$ of the dispersion matrices 
    $(\boldsymbol{\Sigma}_{k,p}, \boldsymbol{\Gamma}_{k,p})$ does not deteriorate (i.e.\ $\lambda_{\min} \not\to 0$).
\end{itemize}
Regarding the former point, we always have $\abs{\mathbf{m}_{k,p}^\top r_\mathbf{B}} \leq \norm{\mathbf{m}_{k,p}}\norm{r_{\mathbf{B}}} =\norm{\mathbf{m}_{k,p}}$, and $\mathbf{m}_{k,p}$ is determined through the optimization procedure, so what matters more for our transformation is the latter point, which provides the rationale for the density-driven transformation mentioned in Section \ref{subsubsec:domain_trans_QMC}.
\subsubsection{Approximating the number of oscillation for specific loss distributions}\label{Appendix:trans_specific_loss_distr}

\paragraph{The Gaussian case. }\mbox{}
\noindent Using the formula from Example \ref{ex:Gaussian_components} in  Appendix \ref{appendix:loss_vector_distr} for the extended CF of the component, when $\mathbf{v} \to \mathbf{B}$ we have
\[
\abs{a_{k,p}^{(\nu)}\paren{G^{-1}(\mathbf v,\boldsymbol\Theta_{k,p})}}\,
\abs{\det J_{G^{-1}}(\mathbf v,\boldsymbol\Theta_{k,p})}
\;\le\;
C \big|\det J_{G^{-1}}(\mathbf v,\boldsymbol\Theta_{k,p})\big|
\exp\!\left(
-\frac{1}{2}\lambda_{\min}(\boldsymbol\Sigma_{k,p})
\abs{G^{-1}(\mathbf v,\boldsymbol\Theta_{k,p})}^{2}
\right),
\]
where $C>0$. Hence, the truncation radius $U_*$ in
\eqref{eq:amplitude_threshold} satisfies
\[
U_*
=
\sqrt{
\frac{2}{\lambda_{\min}(\boldsymbol\Sigma_{k,p})}
\log\!\frac{
C \big|\det J_{G^{-1}}(\mathbf v^*,\boldsymbol\Theta_{k,p})\big|
}{\xi}
}.
\]
Therefore, the number of oscillations near the boundary
\begin{equation}
N_{\text{osc}}(\mathbf v^*)
\;\approx\;
\frac{\abs{\mathbf m_{k,p}^{\top} r_{\mathbf B}}}{2\pi}
\sqrt{
\frac{2}{\lambda_{\min}(\boldsymbol\Sigma_{k,p})}
\log\!\frac{
C \big|\det J_{G^{-1}}(\mathbf v^*,\boldsymbol\Theta_{k,p})\big|
}{\xi}
}.
\label{eq:number_osc_MN}
\end{equation}

\paragraph{The NIG case.} \mbox{} Using the formula from Example \ref{ex:MNIG_components} in Appendix \ref{appendix:loss_vector_distr}, when
$\mathbf v \to \mathbf B$ we have that the extended CF of the component satisfies
\begin{align*}
\Phi_{\mathbf X_{k,p}}
\paren{G^{-1}(\mathbf v,\boldsymbol\Theta_{k,p})+\mathrm i\,\mathbf K_{k,p}}
&\le
C\exp\!\left(
-\,\delta_{k,p}\,
\abs{\boldsymbol\Gamma_{k,p}^{1/2}
\paren{G^{-1}(\mathbf v,\boldsymbol\Theta_{k,p})}}
\right)
\\
&\le
C\exp\!\left(
-\,\delta_{k,p}\,
\sqrt{\lambda_{\min}(\boldsymbol\Gamma_{k,p})}\,
\abs{G^{-1}(\mathbf v,\boldsymbol\Theta_{k,p})}
\right).
\end{align*}
\[
\Rightarrow\quad
\begin{aligned}
&\abs{a_{k,p}^{(\nu)}\paren{G^{-1}(\mathbf v,\boldsymbol\Theta_{k,p})}}\,
\abs{\det J_{G^{-1}}(\mathbf v,\boldsymbol\Theta_{k,p})} \\
&\qquad\le\;
C\,\big|\det J_{G^{-1}}(\mathbf v,\boldsymbol\Theta_{k,p})\big|\,
\exp\!\left(
-\,\delta_{k,p}\,
\sqrt{\lambda_{\min}(\boldsymbol\Gamma_{k,p})}\,
\abs{G^{-1}(\mathbf v,\boldsymbol\Theta_{k,p})}
\right).
\end{aligned}
\]
Hence the truncation radius $U_*$ in \eqref{eq:amplitude_threshold} satisfies
\[
U_*\;=\;\frac{1}{\delta_{k,p}\,\sqrt{\lambda_{\min}(\boldsymbol \Gamma_{k,p})}}\,
\log\!\frac{C\big|\det J_{G^{-1}}(\mathbf{v}^*,\boldsymbol\Theta_{k,p})\big|}{\xi}.
\]
Consequently,
\begin{equation}
   N_{\text{osc}}(\mathbf{v}^*)\;\approx\;\frac{|\mathbf{m}_{k,p}^\top r_{\mathbf{B}}|}{2\pi}\,
\frac{1}{\delta_{k,p}\sqrt{\lambda_{\min}(\boldsymbol \Gamma_{k,p})}}\,
\log\!\frac{C\big|\det J_{G^{-1}}(\mathbf{v}^*,\boldsymbol\Theta_{k,p})\big|}{\xi}.
\label{eq:number_osc_MNIG}
\end{equation}

\subsubsection{Choice of matrix \texorpdfstring{$\tilde{\boldsymbol{\Sigma}}_{k,p}$}{Sigma-tilde (k,p)}}\label{Appendix:choice_of_trans_matrix}
Let 
\(
    \mathbf{y} := \Psi^{-1}(\mathbf{v}_{1:k},\boldsymbol{I}_k).
\)
\paragraph{The Gaussian case.} \mbox{}
A  change of variables yields
\begin{equation}
    \Bigl|\det J_{\mathcal{T}_{\text{Gauss}}(\mathbf{v})}\Bigr|
    \;=\; \underbrace{\,|\det\tilde {\boldsymbol L}_{k,p}|}_{\partial \mathbf u/\partial \mathbf y} \cdot \underbrace{\prod_{i=1}^k \frac{1}{\varphi(y_i)}}_{\mathbf y=\Psi^{-1}_{\text{Gauss}}(\mathbf v_{1:k})} 
    \;\propto\; \left|\det \tilde {\boldsymbol L}_{k,p}\right|
    \exp\!\left(\tfrac{1}{2}\|\mathbf{y}\|^2\right).
\end{equation}

\medskip
\noindent
Accordingly, the truncation radius $U_*$ becomes
\begin{equation}
\begin{aligned}
   \abs{a_{k,p}^{(\nu)}\!\left(\mathcal{T}_{\text{Gauss}}(\mathbf{v}^*)\right)}  \abs{\det J_{\mathcal{T}_{\text{Gauss}}}(\mathbf{v}^*)} 
    & \;\propto\; \left|\det \tilde {\boldsymbol L}_{k,p}\right| \exp\!\left(\tfrac{1}{2}\|\mathbf{y^*}\|^2\right) \Phi_{\mathbf{X}_{k,p}}^{\mathrm{Gauss}}\paren{\tilde {\boldsymbol L}_{k,p} \mathbf{y^*}+ \mathrm{i} \mathbf{K}_{k,p}^{(\nu)}}  \\
    &\;\propto\; 
    \left|\det \tilde {\boldsymbol L}_{k,p}\right|
    \exp\!\left[-\tfrac{1}{2}\,\mathbf{y^*}^\top 
        \paren{\tilde {\boldsymbol L}_{k,p}^{\top} \boldsymbol\Sigma_{k,p} \tilde {\boldsymbol L}_{k,p} - \boldsymbol I_k}\,
        \mathbf{y^*}\right] \\
    &\;=\;\left|\det \tilde {\boldsymbol L}_{k,p}\right|
    \exp\!\left[-\tfrac{1}{2}\,\mathbf{u^*}^\top 
        \paren{\boldsymbol\Sigma_{k,p} - \tilde{\boldsymbol\Sigma}_{k,p}^{-1}}\,
        \mathbf{u^*}\right],
\end{aligned}
\label{eq:bounding_envelop_Gauss}
\end{equation}
\noindent To satisfy the BAC condition (Remark \ref{remark:BAC_condition}), it is necessary that
\begin{equation}
    \boldsymbol\Sigma_{k,p} - \tilde{\boldsymbol\Sigma}_{k,p}^{-1} \succeq 0
\end{equation}
, which coincides with the condition derived in \cite[Section~3.2.1]{bayer_quasi-monte_2025} for the GBM model with $T=1$.

\noindent
The number of oscillations near the boundary is given by
\begin{equation}
    N_{\text{osc}}^{\text{Gauss}} \;\propto\;
    \frac{\| \mathbf{m}_{k,p}\|}
    {\sqrt{\lambda_{\min}\!\bigl(\boldsymbol\Sigma_{k,p} - \tilde{\boldsymbol\Sigma}_{k,p}^{-1}\bigr)}} 
    \;\sqrt{{C}}.
\label{eq:gaussian_osc}
\end{equation}
To avoid the degeneracy for $\lambda_{\min}$ (Remark \ref{remark:eigen_0_boundary}), we require
\[
    \boldsymbol\Sigma_{k,p} \succ \tilde{\boldsymbol\Sigma}_{k,p}^{-1}.
\]

\medskip
\noindent
A convenient parametrization is
\begin{equation}
    \tilde{\boldsymbol\Sigma}_{k,p}^{-1} = \frac{1}{c}\,\boldsymbol\Sigma_{k,p},
    \qquad c>1.
\label{eq:choice_transform_MN}
\end{equation}
\paragraph{The NIG case.} \mbox{}
We have:
\begin{equation}
\begin{aligned}
    \abs{\det J_{\mathcal{T}_{\mathrm{NIG}}}(\mathbf v)}
\;&=\;
\underbrace{w^{k/2}\,\abs{\det\tilde {\boldsymbol L}_{k,p}}}_{\partial \mathbf u/\partial \mathbf y}
\cdot
\underbrace{\prod_{i=1}^k \frac{1}{\varphi(y_i)}}_{\mathbf y=\Phi_{\mathrm{Gauss}}^{-1}(\mathbf v_{1:k})}
\cdot
\underbrace{\frac{1}{\psi_W(w)}}_{w=\Psi_W^{-1}(v_{k+1})}\\
& \propto w^{k/2}\,\abs{\det\tilde {\boldsymbol L}_{k,p}}\,
   \frac{\exp\!\left(\tfrac{1}{2}\|\mathbf{y}\|^2\right)}{\psi_W(w)}\\
& \propto w^{k/2}\,\abs{\det\tilde {\boldsymbol L}_{k,p}}\,
   \frac{\exp\!\left(\tfrac{1}{2w}\,\mathbf{u}^\top \tilde{\boldsymbol\Sigma}_{k,p}^{-1}\,\mathbf{u}\right)}{\psi_W(w)}.
\end{aligned}
\label{eq:MNIG_det_v}
\end{equation}
Define the \emph{effective determinant} $\mathcal{D}_U(\mathbf{u})$ for the transform of $\mathbf{u}$ as
\begin{equation}
    \mathcal{D}_U(\mathbf{u})
    := \left[\int_{\mathbb{R}^{+}} 
    \frac{dw}{\big|\det J_{\mathcal{T}_{\mathrm{NIG}}}(\mathbf v)\big|}\right]^{-1}.
\label{eq:inverse_laplace_density}
\end{equation}
This quantity represents the inverse of the multivariate Laplace density $\psi^{\mathrm{lap}}$. In fact,
\begin{equation}
    \mathcal{D}_U(\mathbf{u}) = \frac{1}{\psi^{\mathrm{lap}}(\mathbf{u)}} \;\propto\; 
    \frac{|\det\tilde {\boldsymbol L}_{k,p}|}{ K_\lambda\!\left(\sqrt{2\,\mathbf{u}^\top \tilde{\boldsymbol\Sigma}_{k,p}^{-1}\,\mathbf{u}}\right)},
\end{equation}
where $K_\lambda(\cdot)$ denotes the modified Bessel function of the third kind. The truncation radius at $U_*$ then becomes
\begin{equation}
    \abs{a_{k,p}^{(\nu)}\!\left(\mathbf{u}^*\right)}  \abs{\mathcal{D}_U(\mathbf{u}^*)} 
    \;\propto\; 
    \frac{\abs{\det\tilde {\boldsymbol \Sigma}_{k,p}}^{\tfrac{1}{2}}}{ K_\lambda\!\left(\sqrt{2\,\mathbf{u^*}^\top \tilde{\boldsymbol\Sigma}_{k,p}^{-1}\,\mathbf{u^*}}\right)} 
    \,\times\, \Phi_{\mathbf{X}_{k,p}}^{\mathrm{NIG}}\paren{\mathbf{u^*}+\mathrm{i} \mathbf{K}_{k,p}^{(\nu)}}.
\end{equation}
Following the same arguments as in \cite[Section~3.2.2]{bayer_quasi-monte_2025} with $T=1$, we can have:
\begin{equation}
     \abs{a_{k,p}^{(\nu)}\!\left(\mathbf{u}^*\right)}  \abs{\mathcal{D}_U(\mathbf{u}^*)} 
    \;\propto\; \abs{\det\tilde {\boldsymbol \Sigma}_{k,p}}^{\tfrac{1}{2}} \exp \brac{-\paren{\delta_{k,p} \sqrt{\mathbf{u^*}^\top \boldsymbol{\Gamma}_{k,p}\mathbf{u^*}}-\sqrt{2\mathbf{u^*}^\top\tilde {\boldsymbol{\Sigma}}^{-1}_{k,p}\mathbf{u^*}}}}
\label{eq:bounding_envelope_NIG}
\end{equation}
To satisfy the BAC condition (Remark \ref{remark:BAC_condition}), it is required that
\begin{equation}
    \delta_{k,p}^2 \boldsymbol \Gamma_{k,p}- 2 \tilde{\boldsymbol\Sigma}_{k,p}^{-1} \succeq 0.
\end{equation}
Analogous to the Gaussian case in \eqref{eq:gaussian_osc}, 
we impose the strict inequality
\begin{equation*}
    \delta_{k,p}^2 \boldsymbol \Gamma_{k,p}\succ  2 \tilde{\boldsymbol\Sigma}_{k,p}^{-1},
\end{equation*}
and choose the parametrization
\begin{equation}
    \tilde{\boldsymbol\Sigma}_{k,p}^{-1} = \frac{\delta_{k,p}^2 \boldsymbol \Gamma_{k,p}}{2c},
    \qquad c>1.
\label{eq:transform_MNIG_case}
\end{equation}

\section{Convergence analysis for the SAA method}\label{appendix:converge_analysis_SAA}

Before presenting the convergence analysis of the SAA method, we introduce the notation and estimators used throughout this appendix. Let $\{\mathbf X^{(i)}\}_{i=1}^{N}$ be i.i.d.\ copies of $\mathbf X$, drawn once and kept fixed throughout the optimization. The MC estimators of $g, \nabla g$ and $\nabla^2g$ are defined by
\begin{equation}
\begin{aligned}
   \hat g^{(0),\mathrm{SAA}}_{N}(\mathbf m)
    &:= \frac{1}{N}\sum_{i=1}^{N} \ell\!\big(\mathbf X^{(i)}-\mathbf m\big), \\[4pt]
    \hat g^{(1),\mathrm{SAA}}_{N}(\mathbf m)
    &:= \frac{1}{N}\sum_{i=1}^{N} \nabla\ell\!\big(\mathbf X^{(i)}-\mathbf m\big),\\
     \hat g^{(2),\mathrm{SAA}}_{N}(\mathbf m)
    &:= \frac{1}{N}\sum_{i=1}^{N} \nabla^2\ell\!\big(\mathbf X^{(i)}-\mathbf m\big).
\end{aligned}
\label{eq:SAA_estimate}
\end{equation}
and the corresponding MC estimators for the Lagrangian as:
\[
\mathcal{L}_{N}^{(0),\mathrm{SAA}}(\mathbf z)
:= \sum_{k=1}^d m_k + \lambda\,\hat g^{(0),\mathrm{SAA}}_{N}(\mathbf m),
\qquad
\mathcal{L}_{N}^{(1),\mathrm{SAA}}(\mathbf z)
:=
\begin{bmatrix}
\mathbf 1 - \lambda\,\hat g^{(1),\mathrm{SAA}}_{N}(\mathbf m)\\[3pt]
-\,\hat g^{(0),\mathrm{SAA}}_{N}(\mathbf m)
\end{bmatrix},
\]
and
\[
\mathcal{L}_{N}^{(2),\mathrm{SAA}}(\mathbf z)
:=
\begin{bmatrix}
-\lambda\,\hat g^{(2),\mathrm{SAA}}_{N}(\mathbf m)
&
\big(\hat g^{(1),\mathrm{SAA}}_{N}(\mathbf m)\big)^{\!\top}\\[3pt]
\hat g^{(1),\mathrm{SAA}}_{N}(\mathbf m)) & 0
\end{bmatrix}.
\]

To establish consistency and asymptotic efficiency of the SAA solution in
Propositions~\ref{prop:consistency_SAA_solution} and~\ref{prop:efficiency-SAA}, we adopt the framework in \cite[Section~5.2]{shapiro_lectures_2014} and impose the following condition.
\begin{assumption}\
\begin{enumerate}
    \item  For every $\mathbf{m_0} \in \mathbb{R}^d$, the mapping $\mathbf{m} \mapsto \nabla^2\ell(\mathbf{X-m})$ is continuous at $\mathbf{m}_0$ a.s.
    \item There exist integrable random variables $D_0,D_1,D_2$ such that, for all $\mathbf m\in M$ a.s.,
\[
|\ell(\mathbf X-\mathbf m)|\le D_0,\qquad
\|\nabla\ell(\mathbf X-\mathbf m)\|\le D_1,\qquad
\|\nabla^2\ell(\mathbf X-\mathbf m)\|\le D_2.
\]
\end{enumerate}
\label{assump:SAA_dominate_integrable}
\end{assumption}

\begin{proposition}[Consistency of the SAA solution]
\label{prop:consistency_SAA_solution}
Suppose Assumptions \ref{assump:regularity_exact_KKT} and \ref{assump:SAA_dominate_integrable} hold. Then, as $N \to \infty$, the SAA problem admits a unique optimal solution
$\mathbf z_{N}^{\mathrm{SAA},*} \in \mathcal{Z}$, and
\[
\mathbf z_{N}^{\mathrm{SAA},*}
\xrightarrow[]{\mathrm{a.s.}}
\mathbf z^* .
\]
\end{proposition}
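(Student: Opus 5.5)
The plan mirrors the proof of Theorem~\ref{theorem:consistentcy-RQMC-Fou}. The Fourier--RQMC estimators are replaced by the sample averages \eqref{eq:SAA_estimate}, and the scrambled-net SLLN is replaced by the classical i.i.d.\ SLLN.

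\textbf{Step 1: uniform SLLN.} First we establish a USLLN on the compact set $M$:
\[
\sup_{\mathbf m\in M}\bigl\|\hat g^{(\nu),\mathrm{SAA}}_N(\mathbf m)-\nabla^{\nu} g(\mathbf m)\bigr\|\xrightarrow{\mathrm{a.s.}}0,\qquad \nu=0,1,2 .
\]
For this we invoke \cite[Section~5.2]{shapiro_lectures_2014}, or repeat the argument of Lemma~\ref{lem:uniform_conv_RQMC_Fou}. The ingredients are the following.
\begin{itemize}
\item Almost sure continuity of $\mathbf m\mapsto \ell(\mathbf X-\mathbf m)$, $\nabla\ell(\mathbf X-\mathbf m)$ and $\nabla^2\ell(\mathbf X-\mathbf m)$, from (\ref{ass:A2}), (\ref{ass:C1}) and Assumption~\ref{assump:SAA_dominate_integrable}(1).
\item The integrable envelopes $D_0,D_1,D_2$ from Assumption~\ref{assump:SAA_dominate_integrable}(2).
\end{itemize}
Dominated convergence then makes the local oscillation $\delta_{\bar{\mathbf m},\eta}(\mathbf X)$ have mean tending to $0$ as $\eta\downarrow0$. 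The ordinary SLLN applied to $\delta_{\bar{\mathbf m},\eta}$ and to the integrands at finitely many cover points, followed by the three-term triangle split over a finite $\eta$-net of $M$, gives uniform convergence. The same envelopes give continuity of $g,\nabla g,\nabla^2 g$ and justify interchanging differentiation and expectation (Remark~\ref{rema:interchange_diff_expectation}). Since $\lambda\in(0,\bar\lambda]$ is bounded, the uniform convergence transfers to $\mathcal L^{(1),\mathrm{SAA}}_N$ and $\mathcal L^{(2),\mathrm{SAA}}_N$ on $\mathcal Z$.

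\textbf{Step 2: invertibility near $\mathbf z^*$.} Under Assumption~\ref{assump:regularity_exact_KKT}, $\mathbf z^*$ is strongly regular; this is the standing footnote via \cite[Proposition~16]{sun_strong_2006}, which uses SSOSC (\ref{ass:C2}) together with $\nabla g(\mathbf m^*)\neq 0$. The latter holds because $\lambda^*\nabla g(\mathbf m^*)=\mathbf 1$. Consequently $\nabla^2_{\mathbf z}\mathcal L(\mathbf z^*)$ is invertible. By continuity there is a ball $U(\mathbf z^*,\eta)$ on which $\|\nabla^2_{\mathbf z}\mathcal L(\mathbf z)^{-1}\|\le C_L$.

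\textbf{Step 3: contraction argument.} Define
\[
T_N(\mathbf z):=\mathbf z-\nabla^2_{\mathbf z}\mathcal L(\mathbf z)^{-1}\mathcal L^{(1),\mathrm{SAA}}_N(\mathbf z).
\]
Fixed points of $T_N$ are exactly the SAA KKT points. We apply the mean value theorem to the a.s.\ continuously differentiable map $\mathcal L^{(1),\mathrm{SAA}}_N$, whose derivative is $\mathcal L^{(2),\mathrm{SAA}}_N$. Combined with the uniform Hessian convergence, this gives
\[
\sup_U\bigl\|\boldsymbol I-\nabla^2_{\mathbf z}\mathcal L^{-1}\mathcal L^{(2),\mathrm{SAA}}_N\bigr\|\le q<1
\]
for all large $N$, almost surely. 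Uniform gradient convergence together with $\nabla_{\mathbf z}\mathcal L(\mathbf z^*)=0$ gives $\|T_N(\mathbf z^*)-\mathbf z^*\|\le(1-q)\eta$. Hence $T_N$ maps $U$ into itself as a contraction. Banach's fixed point theorem yields a unique SAA KKT point $\mathbf z^{\mathrm{SAA},*}_N\in U$, and
\[
(1-q)\|\mathbf z_N^{\mathrm{SAA},*}-\mathbf z^*\|\le C_L\|\mathcal L^{(1),\mathrm{SAA}}_N(\mathbf z^*)\|\to0 \quad\text{a.s.}
\]

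\textbf{Main obstacle.} The delicate point is the mean value step. It needs $\mathcal L^{(1),\mathrm{SAA}}_N$ to be $C^1$ in $\mathbf z$, which requires a.s.\ pathwise continuity of $\nabla^2\ell(\mathbf X^{(i)}-\cdot)$ uniformly over the finite sample. For the QPC loss this fails on kink sets of $x^+$. I would handle it in two ways:
\begin{itemize}
\item argue that these sets have probability zero under the density of $\mathbf X$, so Assumption~\ref{assump:SAA_dominate_integrable}(1) holds a.s.;
\item otherwise, replace the mean value theorem by the integral form of Taylor's theorem along segments.
\end{itemize}
A secondary point is upgrading local uniqueness in $U$ to uniqueness in $\mathcal Z$. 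This follows from Theorem~\ref{theorem:existence_uniq_allocation}'s global uniqueness for the limit problem combined with uniform convergence: for large $N$, any KKT point outside $U$ would produce a residual bounded away from zero.
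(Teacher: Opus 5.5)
Your Steps~1--3 reproduce the paper's proof. The paper likewise adapts the Shapiro-type uniform SLLN to $\ell$, $\nabla\ell$, $\nabla^2\ell$, and then reruns the Banach contraction argument for $T_N$ on a ball $U(\mathbf z^*,\eta)$ from the proof of Theorem~\ref{theorem:consistentcy-RQMC-Fou}. Two of your additions improve on that sketch:
\begin{itemize}
\item The fixed-point argument alone gives uniqueness only inside $U$. Your residual argument upgrades this to uniqueness in $\mathcal Z$: the limiting residual is continuous on the compact set $M\times[0,\bar\lambda]$, its first block is $\mathbf 1$ at $\lambda=0$, and by the standing assumption of Section~\ref{sec:num_analysis} it vanishes only at $\mathbf z^*$.
\item The integral form along segments is the right way to write the contraction estimate even for smooth $\ell$. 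The single-intermediate-point mean value theorem used in the paper does not hold for vector-valued maps.
\end{itemize}

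The weak spot is the obstacle you flag, which the paper also passes over. The Fourier--RQMC surrogates depend on $\mathbf m$ only through $e^{\langle \mathbf K-\mathrm i\mathbf u,\mathbf m\rangle}$ and are smooth, but the SAA surrogates inherit the kinks of $\ell$. Neither of your remedies handles this under the stated hypotheses.

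\emph{First remedy.} It does not address the relevant step. Condition~\eqref{ass:C1} and Assumption~\ref{assump:SAA_dominate_integrable}(1) are pointwise statements: they hold for each fixed $\mathbf m_0$ off a null set that depends on $\mathbf m_0$. The contraction step instead concerns the fixed-sample map on all of $U$. Its nonsmooth points lie on the hyperplanes $\{m_k=X^{(i)}_k\}$, and these cut through $U$ once some $X^{(i)}_k$ is within $\eta$ of $m^*_k$.

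\emph{Second remedy.} It works if $\mathbf m\mapsto\nabla\ell(\mathbf x-\mathbf m)$ is Lipschitz. Then $\mathcal L^{(1),\mathrm{SAA}}_N$ is Lipschitz with a.e.\ derivative $\mathcal L^{(2),\mathrm{SAA}}_N$, and $D_2$ dominates the difference quotients, so $\mathbb E[\nabla^2\ell(\mathbf X-\mathbf m)]=\nabla^2 g(\mathbf m)$. Your argument then goes through, and this covers the exponential loss. It fails for the QPC loss, where the \emph{gradient} already jumps:
\[
\partial_{x_k}\ell_{\mathrm{qpc}}(\mathbf x)=1+x_k^{+}+\alpha\,\mathbf 1_{\{x_k>0\}}\sum_{j\neq k}x_j^{+}
\]
is discontinuous across $\{x_k=0,\ x_j>0\}$. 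Yet Condition~\eqref{ass:C1} and Assumption~\ref{assump:SAA_dominate_integrable} still hold (with the a.e.\ Hessian) when $\mathbf X$ has a density. Two things then break:
\begin{itemize}
\item $\mathcal L^{(1),\mathrm{SAA}}_N$ has jumps of order $1/N$. So $T_N$ is not a contraction, and the SAA KKT equation need not have an exact zero.
\item $\mathbb E[\nabla^2\ell(\mathbf X-\mathbf m)]$ misses the terms $\alpha f_{X_k}(m_k)\,\mathbb E[(X_j-m_j)^+\mid X_k=m_k]$, $j\neq k$, contained in $\partial^2_{m_k}g(\mathbf m)$. Hence the $\nu=2$ part of your Step~1 converges to the wrong limit.
\end{itemize}
Your argument is therefore complete only under an extra pathwise hypothesis, such as Lipschitz continuity of $\nabla\ell$. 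Losses like QPC need a different route, e.g.\ epi-convergence of the convex SAA problems together with subdifferential optimality conditions.
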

\begin{proof}
   The proof follows the same line of arguments as that of Theorem~\ref{theorem:consistentcy-RQMC-Fou}. Uniform convergence of the SAA estimators is established in \cite[Proposition~7]{shapiro_monte_2003} ; here, we adapt this result to the functions $\ell$, $\nabla \ell$, and $\nabla^{2}\ell$ under the given norm $\norm{.}$. With these uniform convergence properties in place, the Banach fixed-point argument applies in the same manner as in the Fourier-RQMC case. 
\end{proof}

\begin{proposition}[Asymptotic behavior of the SAA solution]
\label{prop:efficiency-SAA}
Suppose Assumptions \ref{assump:regularity_exact_KKT} and \ref{assump:SAA_dominate_integrable} hold.
Then, as $N \to \infty$, the SAA solution satisfies
\begin{equation}
    \sqrt{N}
    \bigl(
        \mathbf{z}_{N}^{\mathrm{SAA},*}
        -
        \mathbf{z}^{*}
    \bigr)
    \xrightarrow{\mathrm{law}}
    \mathcal{N}\!\left(
        \mathbf{0},
        \boldsymbol V(\mathbf{z}^{*})
    \right),
    \label{eq:CLT_solution_SAA_1}
\end{equation}
where the asymptotic covariance matrix is given by
\begin{equation}
    \boldsymbol V(\mathbf{z}^{*})
    =
    \Bigl(
        {{\nabla_{\mathbf{z}}^{2}}\mathcal{L}}
        \bigl(\mathbf{z}^{*}\bigr)
    \Bigr)^{-1}
    \boldsymbol H(\mathbf{z}^{*})
    \Bigl(
        {{\nabla_{\mathbf{z}}^{2}}\mathcal{L}}
        \bigl(\mathbf{z}^{*}\bigr)
    \Bigr)^{-1},
    \label{eq:CLT_solution_SAA_2}
\end{equation}
with
\[
\boldsymbol H(\mathbf{z}^{*})
=
\mathrm{Var}_{\mathbf{X}}\!\left(
    {{\nabla_{\mathbf{z}}}\mathcal{L}}
    \bigl(\mathbf{X},\mathbf{z}^{*}\bigr)
\right).
\]
\end{proposition}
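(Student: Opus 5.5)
The proof will follow the template of the proof of Theorem~\ref{theorem:efficiency-RQMC-Fou}, except that there is only one limit, $N\to\infty$, and the i.i.d.\ structure of $\{\mathbf X^{(i)}\}$ gives the classical CLT directly. First, by Proposition~\ref{prop:consistency_SAA_solution}, for $N$ large the SAA solution $\mathbf z_N^{\mathrm{SAA},*}$ exists, is locally unique, and converges a.s.\ to $\mathbf z^*$. It also satisfies the sample first-order system $\mathcal L_N^{(1),\mathrm{SAA}}(\mathbf z_N^{\mathrm{SAA},*})=0$, while $\nabla_{\mathbf z}\mathcal L(\mathbf z^*)=0$ by \eqref{eq:KKT_FOC_g_form}.

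Next, expand $\mathcal L_N^{(1),\mathrm{SAA}}$ around $\mathbf z^*$ by the mean value theorem, componentwise:
\[
0=\mathcal L_N^{(1),\mathrm{SAA}}(\mathbf z^*)+\mathcal L_N^{(2),\mathrm{SAA}}(\bar{\mathbf z}_N)\bigl(\mathbf z_N^{\mathrm{SAA},*}-\mathbf z^*\bigr),
\]
with $\bar{\mathbf z}_N$ on the segment between $\mathbf z^*$ and $\mathbf z_N^{\mathrm{SAA},*}$, possibly different for each row. The differentiability needed here comes from Assumption \ref{ass:C1}, continuity of $\nabla^2\ell$ from Assumption \ref{assump:SAA_dominate_integrable}(1), and the interchange of derivative and expectation from Remark~\ref{rema:interchange_diff_expectation} together with domination by $D_2$. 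The uniform SLLN for $\hat g_N^{(2),\mathrm{SAA}}$ on $M$ (Shapiro's Proposition 7, with envelope $D_2$), combined with $\bar{\mathbf z}_N\to\mathbf z^*$ a.s.\ and continuity of $\nabla_{\mathbf z}^2\mathcal L$, gives
\[
\mathcal L_N^{(2),\mathrm{SAA}}(\bar{\mathbf z}_N)\to\nabla_{\mathbf z}^2\mathcal L(\mathbf z^*)\quad\text{a.s.}
\]
Under Assumption~\ref{assump:regularity_exact_KKT}, strong regularity at $\mathbf z^*$ holds (via \cite[Proposition~16]{sun_strong_2006}). Hence $\nabla_{\mathbf z}^2\mathcal L(\mathbf z^*)$ is invertible, and so is $\mathcal L_N^{(2),\mathrm{SAA}}(\bar{\mathbf z}_N)$ for large $N$. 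Therefore
\[
\sqrt N\bigl(\mathbf z_N^{\mathrm{SAA},*}-\mathbf z^*\bigr)=-\bigl(\mathcal L_N^{(2),\mathrm{SAA}}(\bar{\mathbf z}_N)\bigr)^{-1}\sqrt N\,\bigl(\mathcal L_N^{(1),\mathrm{SAA}}(\mathbf z^*)-\nabla_{\mathbf z}\mathcal L(\mathbf z^*)\bigr).
\]

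The centered term is $N^{-1/2}\sum_{i}\bigl(\nabla_{\mathbf z}\mathcal L(\mathbf X^{(i)},\mathbf z^*)-\mathbb E[\cdot]\bigr)$, a normalized sum of i.i.d.\ vectors. The multivariate CLT then gives convergence in law to $\mathcal N(\mathbf 0,\boldsymbol H(\mathbf z^*))$, provided $\boldsymbol H(\mathbf z^*)$ is finite. Finiteness needs second moments of $\ell(\mathbf X-\mathbf m^*)$ and $\nabla\ell(\mathbf X-\mathbf m^*)$. I would obtain these from the Orlicz-heart condition $\mathbf X\in M^\gamma$ (Assumption~\ref{ass:multivariate_orclizt_space}), since $\gamma$ dominates every polynomial for the loss families considered. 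If that route is too loose, I would read square-integrability of $D_0,D_1$ into Assumption~\ref{assump:SAA_dominate_integrable}. Slutsky's theorem then yields \eqref{eq:CLT_solution_SAA_1} with the sandwich covariance \eqref{eq:CLT_solution_SAA_2}, using symmetry of $\nabla_{\mathbf z}^2\mathcal L(\mathbf z^*)$.

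The main obstacle is justifying the mean value expansion with a random intermediate point and securing the second-moment condition. A clean approach is to write the remainder as
\[
\bigl(\mathcal L_N^{(2),\mathrm{SAA}}(\bar{\mathbf z}_N)-\nabla_{\mathbf z}^2\mathcal L(\mathbf z^*)\bigr)\bigl(\mathbf z_N^{\mathrm{SAA},*}-\mathbf z^*\bigr)=o_{\mathbb P}\bigl(\norm{\mathbf z_N^{\mathrm{SAA},*}-\mathbf z^*}\bigr)
\]
and then use the strong-regularity Lipschitz bound to get $\norm{\mathbf z_N^{\mathrm{SAA},*}-\mathbf z^*}=O_{\mathbb P}(N^{-1/2})$. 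This mirrors term (A) in the proof of Theorem~\ref{theorem:efficiency-RQMC-Fou}.
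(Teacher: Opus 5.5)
Your proposal is the paper's proof written out. The paper reuses term (A) of the proof of Theorem~\ref{theorem:efficiency-RQMC-Fou} with $S_{\mathrm{shift}}$ replaced by $N$ (consistency via Proposition~\ref{prop:consistency_SAA_solution}, strong-regularity linearization at $\mathbf z^*$, the i.i.d.\ multivariate CLT for $\mathcal L_N^{(1),\mathrm{SAA}}(\mathbf z^*)$, and Slutsky with the sandwich covariance), citing \cite[Section~5.2.2]{shapiro_lectures_2014}, and your mean-value formulation is an equivalent, more explicit version of that argument.

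One caveat: your regularity justifications do not cover the QPC loss. There $\gamma$ is only quadratic, so $\mathbf X\in M^\gamma$ does not give the fourth moments behind $\mathbb E[\ell(\mathbf X-\mathbf m^*)^2]<\infty$, and your fallback of square-integrable $D_0,D_1$ (left implicit in the paper) is genuinely needed. Also, for $\alpha>0$ the gradient $\nabla\ell$ jumps across the hyperplanes $\{x_k=0\}$, so the pointwise-a.s.\ Assumption~\ref{ass:C1} does not make $\mathcal L_N^{(1),\mathrm{SAA}}$ differentiable along the random segment. Your mean value step, like the paper's linearization, tacitly needs $C^1$ sample gradients near $\mathbf m^*$ (or an empirical-process argument in their place).
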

\begin{proof}
  The result follows by the same argument as in the proof Theorem~\ref{theorem:efficiency-RQMC-Fou} in Appendix \ref{appendix:proof_theorem_efficiency}, term~(A), with $S_{\mathrm{shift}}$ replaced by $N$; see \cite[Section~5.2.2]{shapiro_lectures_2014}.  
\end{proof}
In the numerical experiments, we estimate the statistical error at the solution for SAA by replacing $\boldsymbol H$ with the SAA estimator $\widehat{\boldsymbol H}^{\mathrm{SAA}}_{N}$, computed from $\mathrm{Var}\brac{\mathcal{L}_{N}^{(1),\mathrm{SAA}}(\mathbf z^*)}$, and by approximating ${\nabla_{\mathbf{z}}^{2}}\mathcal{L}$ with $\mathcal{L}_{N}^{(2),\mathrm{SAA}}$. These two quantities define $\boldsymbol V_N^{\mathrm{SAA}}(\mathbf{z}^{*})$, yielding
\begin{equation}
\varepsilon_{N}^{\mathrm{SAA}}(\mathbf z^*)
= \frac{C_\alpha}{\sqrt{N}}
\,\sqrt{\norm{\boldsymbol V_N^{\mathrm{SAA}}(\cdot;\mathbf z^*)}}.
\label{eq:stat_err_sol_SAA}
\end{equation}
where $C_\alpha$ is defined in \eqref{eq:RMSE_RQMC}.

\section{Fourier Representation of the MSRM problem}\label{Appendix: Fourier_trans_loss_func}
\subsection{Proof for Corollary \ref{coro:Multivariate_Fourier_pricing}}\label{Appendix:proof_corro_fourier_pricing}
Let 
$\ell^{(\nu)}_{\mathbf K^{(\nu)}}(\mathbf x):=e^{\langle \mathbf K^{(\nu)},\mathbf x\rangle}\ell^{(\nu)}(\mathbf x)$ and $\hat \ell^{(\nu)}_{\mathbf K^{(\nu)}}(\mathbf x):=e^{\langle \mathbf K^{(\nu)},\mathbf x\rangle}\hat \ell^{(\nu)}(\mathbf x)$. Now, using the inverse generalized Fourier transform theorem 
\cite{hormander_analysis_2009}, we express $\ell^{(\nu)}$ as:
 		\begin{equation}\label{eq:inverse generalized Fourier transform}
        \begin{aligned}
            \ell^{(\nu)}(\mathbf{x}-\mathbf{m}) &=  \Re \left[ (2 \pi)^{-d}  e^{\langle \mathbf{-K}^{(\nu)}, \mathbf{x-\mathbf{m}} \rangle}   \int_{\mathbb{R}^d} e^{  \mathrm{i} \langle \mathbf{w}, \mathbf{x-\mathbf{m}}\rangle} {\widehat{\ell}^{(\nu)}}_{\mathbf{K}^{(\nu)}}(\mathbf{w}) \mathrm{d}\mathbf{w}\right], \: \\
            &= \Re \left[ (2 \pi)^{-d} e^{\langle \mathbf{-K}^{(\nu)}, \mathbf{x-\mathbf{m}} \rangle}  \int_{\mathbb{R}^d} e^{  \mathrm{i} \langle \mathbf{w}, \mathbf{x-\mathbf{m}}\rangle} {\widehat{\ell}^{(\nu)}}\paren{\mathbf{w} + \mathrm{i} \mathbf{K}^{(\nu)}} \mathrm{d}\mathbf{w}\right], \: \mathbf{K}^{(\nu)} \in {\delta}_{\ell}^{(\nu)},\: \mathbf{x} \in \mathbb{R}^d
        \end{aligned}	
 		\end{equation}
        Here we have:
        \begin{equation*}
        \begin{aligned}
            {\widehat{\ell}^{(\nu)}}_{\mathbf{K}^{(\nu)}}(\mathbf{w}) &= \int_{\mathbb{R}^d} e^{  -\mathrm{i} \langle \mathbf{w}, \mathbf{x-\mathbf{m^*}}\rangle} e^{\langle \mathbf{K}^{(\nu)}, \mathbf{x-\mathbf{m^*}}  \rangle} \ell^{(\nu)}(\mathbf{x}-\mathbf{m^*}) \mathrm{d}\mathbf{x}\\
            & = \int_{\mathbb{R}^d} e^{  -\mathrm{i} \langle \mathbf{w}+\mathrm{i}\mathbf{K}^{(\nu)}, \mathbf{x-\mathbf{m^*}}\rangle} \ell^{(\nu)}(\mathbf{x}-\mathbf{m^*})\mathrm{d}\mathbf{x} = {\widehat{\ell}^{(\nu)}}(\mathbf{w}+\mathrm{i}\mathbf{K}^{(\nu)})
        \end{aligned}
        \end{equation*}
 		Let $\mathbb{E}[\ell^{(\nu)}(\mathbf{X} - \mathbf{m})] 
 				:= \int_{\mathbb{R}^d} \ell^{(\nu)}(\mathbf{x-\mathbf{m}}) f_{\mathbf{X}}(\mathbf{x}) \mathrm{d} \mathbf{x}$, then using  \eqref{eq:inverse generalized Fourier transform} and Fubini's theorem, we obtain
 		\begin{equation*}
 			\begin{aligned}
 				\hat g^{(\nu),\mathrm{Fou}}(\mathbf{m}) &=  \mathbb{E}[\ell^{(\nu)}(\mathbf{X} - \mathbf{m})] 
 				= \int_{\mathbb{R}^d} \ell^{(\nu)}(\mathbf{x-\mathbf{m}}) f_{\mathbf{X}}(\mathbf{x}) \mathrm{d} \mathbf{x} , \\
 				&=(2 \pi)^{-d} \mathbb{E}_{ f_{\mathbf{X}}}\Re \left[ (2 \pi)^{-d} e^{\langle \mathbf {-K}^{(\nu)}, \mathbf{x-\mathbf{m}} \rangle}  \int_{\mathbb{R}^d} e^{ \mathrm{i} \langle \mathbf{w}, \mathbf{x-\mathbf{m}}\rangle} {\widehat{\ell}^{(\nu)}}\paren{\mathbf{w} + \mathrm{i} \mathbf{K}^{(\nu)}} d\mathbf{w}\right] , \quad \mathbf{K}^{(\nu)} \in {\delta}_{\ell}^{(\nu)}\\
 				&=(2 \pi)^{-d}  \Re\left(\int_{\mathbb{R}^d} e^{\langle \mathbf{K}^{(\nu)}-\mathrm{i}\mathbf{w}, \mathbf{m} \rangle}\mathbb{E}_{ f_{\mathbf{X}}}\left[e^{\mathrm{i}\left\langle\mathbf{w}+\mathrm{i}\mathbf{K}^{(\nu)}, \mathbf{X}\right\rangle}\right] {\widehat{\ell}^{(\nu)}}(\mathbf{w}+\mathrm{i}\mathbf{K}^{(\nu)}) \mathrm{d} \mathbf{w}\right), \quad \mathbf{K}^{(\nu)} \in \delta_{K}^{(\nu)} ={\delta}_{\ell}^{(\nu)} \cap \delta_X \\
 				&= (2 \pi)^{-d}  \Re\left(\int_{\mathbb{R}^d} e^{\langle \mathbf{K}^{(\nu)}-\mathrm{i}\mathbf{w}, \mathbf{m} \rangle}\Phi_{\mathbf{X}}(\mathbf{w+\mathrm{i} \mathbf{K}^{(\nu)}}) {\widehat{\ell}^{(\nu)}}(\mathbf{w+\mathrm{i} \mathbf{K}^{(\nu)}}) \mathrm{d} \mathbf{w}\right).
 			\end{aligned}
 		\end{equation*}
 	The application of Fubini’s theorem is justified by Assumption~\ref{ass:A7}.

\subsection{Fourier transform of the given loss functions}\label{Appendix: Fourier_transform_loss}
 This section presents the Fourier transforms of the component integrands, $\ell^{(\nu)}_{k,p}$ for the loss functions in Example \ref{exam:cross_dependent_losses},  arising in the decomposition in Notation \ref{not:component_integrands}.

\paragraph{Exponential loss function in \eqref{eq:multi_entropy}.} 
For $k\in\{1,d\}$ and $\mathbf{p}=(p_1,\dots,p_k) \in \mathcal I_k$, the loss component can be represented as
\(
\ell_{k,p}(\mathbf x)=c_{k,p}\exp\!\big(\beta\,\mathbf 1^\top\mathbf x\big)
\)
with $\mathbf x\in\R^k$, where the constants $c_{k,p}$ depend on the selected component.

Following the domain decomposition induced by the transformation in
Notation~\ref{not:component_integrands}, we split the integral into
one-sided contributions and introduce componentwise damping parameters
$K^-_{k,p}, K^+_{k,p}\in\mathbb R^k$ such that
\[
K^-_{k,p} < \beta < K^+_{k,p}
.
\]
Under this choice, the damped Fourier transform of $\ell_{k,p}$ is well defined
and given by
\[
\widehat{\ell}^{(0)}_{k,p}(u+iK_{k,p})
= c_{k,p}
\prod_{j=1}^k
\left(
\frac{1}{K^+_{k,p}-\beta-i u_j}
+
\frac{1}{\beta-K^-_{k,p}+i u_j}
\right),
\]
where $K_{k,p}$ denotes the combined contour shift arising from the
positive and negative half-line contributions.
For its gradient and Hessian, letting $\mathbf 1_k := (1,\dots,1)^\top \in \mathbb R^k$,
we obtain
\begin{equation*}
\widehat{\ell}^{(1)}_{k,p}\paren{\mathbf u + \mathrm{i} \mathbf{K}_{k,p}}
=\beta\,\widehat{\ell}^{(0)}_{k,p}\paren{\mathbf u + \mathrm{i} \mathbf{K}_{k,p}}\,\mathbf 1_k,
\qquad
\widehat{\ell}^{(2)}_{k,p}\paren{\mathbf u + \mathrm{i} \mathbf{K}_{k,p}}
=\beta^2\,\widehat{\ell}^{(0)}_{k,p}\paren{\mathbf u + \mathrm{i} \mathbf{K}_{k,p}}\,\mathbf 1_k\mathbf 1_k^\top.
\end{equation*}
\paragraph{QPC loss function in \eqref{eq:multi_qcl}.} The linear component $\ell(x)=x$ does not require a Fourier transform, 
since its expectation and its derivatives can be computed directly under the law of the loss vector $\mathbf{X}$. 

Let $\theta\in\{0,1,2\}$ and define $\phi_\theta(x):=(x^\theta)^+$. 
Fix a damping parameter $K<0$ and set, for $y\in\mathbb R$,
\[
\widehat{\phi}_\theta(y+iK)
:= \int_{\mathbb R} e^{-iyx} e^{Kx} \phi_\theta(x)\,dx
= \int_0^\infty e^{(K-iy)x} x^\theta\,dx
= \frac{\theta!}{(-K+iy)^{\theta+1}}.
\]
Moreover, for $\nu\in\{0,1,2\}$ with $\nu\le\theta$,
\begin{equation}\label{E.2}
\widehat{\phi}^{(\nu)}_\theta(y+iK)
:= \int_{\mathbb R} e^{-iyx} e^{Kx} \phi^{(\nu)}_\theta(x)\,dx
= \frac{\theta!}{(-K+iy)^{\theta-\nu+1}}.
\end{equation}
For $k\in\{1,2\}$ and $\mathbf{p}=(p_1,\dots,p_k)\in\mathcal I_k$, 
the loss components can be written as
\[
\ell_{k,p}(\mathbf x_{k,p})
= c_{k,p}\prod_{j=1}^k \phi_{a(k)}( x_{p_j}),
\qquad
a(1)=2,\quad a(2)=1,
\]
with
\[
c_{k,p}=
\begin{cases}
\frac12, & k=1,\\[0.3em]
\alpha, & k=2.
\end{cases}
\]
Then, for $\nu\in\{0,1,2\}$ and damping vectors $\mathbf K_{k,p}<0$,
the Fourier transform of $\ell^{(\nu)}_{k,p}$ is given by
\[
\widehat{\ell}^{(\nu)}_{k,p}(\mathbf u+i \mathbf K_{k,p})
= c_{k,p}\prod_{j=1}^k
\widehat{\phi}^{(\nu)}_{a(k)}
\!\left(u_j + i(\mathbf K_{k,p})_j\right),
\]
with $\widehat{\phi}^{(\nu)}_\theta$ defined in \eqref{E.2}.

\subsection{Loss vector distributions and extended characteristic functions}\label{appendix:loss_vector_distr}
\begin{example}[Gaussian]\label{ex:Gaussian_components}
Let $\mathbf X\sim\mathcal N_d(\boldsymbol\mu,\boldsymbol\Sigma)$ with
$\boldsymbol\mu\in\mathbb R^d$ and symmetric positive definite
$\boldsymbol\Sigma\in\mathbb R^{d\times d}$. Then the marginal distribution of $\mathbf X_{k,p}$ is
\[
\mathbf X_{k,p}\sim\mathcal N_k(\boldsymbol\mu_{k,p},\boldsymbol\Sigma_{k,p}),
\qquad
\boldsymbol\mu_{k,p}=P_{k,p}\boldsymbol\mu,\quad
\boldsymbol\Sigma_{k,p}=P_{k,p}\boldsymbol\Sigma P_{k,p}^\top.
\]
where $P_{k,p}$ is the coordinate selection matrix introduced in Notation~\ref{not:component_integrands}. The extended CF of $\mathbf X_{k,p}$ is, for any $\mathbf y\in\mathbb C^k$,
\[
\Phi_{\mathbf X_{k,p}}({\mathbf y})
:=\mathbb E\!\left[e^{\mathrm i\langle \mathbf y,\mathbf X_{k,p}\rangle}\right]
=\exp\!\left(\mathrm i\,\mathbf y^\top\boldsymbol\mu_{k,p}
-\tfrac12\,\mathbf y^\top\boldsymbol\Sigma_{k,p}\mathbf y\right),
\]
in particular for $\mathbf y                       =\mathbf u+\mathrm i\,\mathbf K_{k,p}$ with $\mathbf u,\mathbf K_{k,p}\in\mathbb R^k$.
\end{example}
\begin{example}[Normal Inverse Gaussian (NIG)]\label{ex:MNIG_components}
Let $\mathbf X\sim\mathrm{NIG}_d(\alpha,\boldsymbol\beta,\delta,\boldsymbol\mu,\boldsymbol\Gamma)$, i.e.\ $\mathbf X$ is the Generalized Hyperbolic distribution with
$\lambda=-\tfrac12$, $\alpha>0$, $\delta>0$, $\boldsymbol\mu,\boldsymbol\beta\in\mathbb R^d$, and $\boldsymbol\Gamma\in\mathbb R^{d\times d}$ symmetric positive definite \cite{hammerstein_tail_2016},  satisfying  $\alpha^2 > \boldsymbol\beta^\top \boldsymbol\Gamma \boldsymbol\beta$. From Notation \ref{not:component_integrands}, let $\mathbf{p}^c : = \{1,\dots,d\} \setminus \mathbf{p}$ be the complement index list of $\mathbf{p}$, ordered increasingly. Define the selection matrix for the complement $P_{d-k,p^c} \in \mathbb{R}^{{d-k} \times d}$
. Let $\Pi_{k,p}\in\mathbb R^{d\times d}$ be any permutation matrix that reorders coordinates so that
$\Pi_{k,p}:= \begin{bmatrix}
    P_{k,p}\\
    P_{d-k,p^c}
\end{bmatrix}$, and write the permuted parameters as,
\[
\Pi_{k,p}\boldsymbol\mu=\binom{\boldsymbol\mu_1}{\boldsymbol\mu_2},\qquad
\Pi_{k,p}\boldsymbol\beta=\binom{\boldsymbol\beta_1}{\boldsymbol\beta_2},\qquad
\Pi_{k,p}\boldsymbol\Gamma\Pi_{k,p}^\top=
\begin{pmatrix}
\boldsymbol\Gamma_{11}&\boldsymbol\Gamma_{12}\\
\boldsymbol\Gamma_{21}&\boldsymbol\Gamma_{22}
\end{pmatrix},
\quad \boldsymbol\Gamma_{11}\in\mathbb R^{k\times k}.
\]
Then, by \cite[Theorem 1(a)]{hammerstein_tail_2016} (applied with $\lambda=-\tfrac12$), the selected marginal $\mathbf X_{k,p}$ satisfies:
\[
\mathbf X_{k,p}\sim \mathrm{NIG}_k\!\big(\alpha_{k,p},\boldsymbol\beta_{k,p},\delta_{k,p},\boldsymbol\mu_{k,p},\boldsymbol\Gamma_{k,p}\big),
\]
with
\[
\boldsymbol\mu_{k,p}=\boldsymbol\mu_1,\qquad
\boldsymbol\beta_{k,p}=\boldsymbol\beta_1+\boldsymbol\Gamma_{11}^{-1}\boldsymbol\Gamma_{12}\boldsymbol\beta_2,\qquad
\delta_{k,p}=\det(\boldsymbol\Gamma_{11})^{1/2}\,\delta,
\qquad
\boldsymbol\Gamma_{k,p}=\det(\boldsymbol\Gamma_{11})^{-1/k}\,\boldsymbol\Gamma_{11},
\]
and
\[
\alpha_{k,p}=\det(\boldsymbol\Gamma_{11})^{-1/(2k)}
\sqrt{\alpha^2-\big\langle \boldsymbol\beta_2,\,
(\boldsymbol\Gamma_{22}-\boldsymbol\Gamma_{21}\boldsymbol\Gamma_{11}^{-1}\boldsymbol\Gamma_{12})\,\boldsymbol\beta_2\big\rangle }.
\]
The extended CF is, for any $\mathbf y\in\mathbb C^k$,
\[
\phi^{\mathrm{NIG}}_{\mathbf X_{k,p}}(\mathbf y)
=\exp\!\left(
\mathrm i\,\mathbf y^\top\boldsymbol\mu_{k,p}
+\delta_{k,p}\Big(\gamma_{k,p}-\sqrt{\alpha_{k,p}^2-(\boldsymbol\beta_{k,p}+\mathrm i\,\mathbf y)^\top\boldsymbol\Gamma_{k,p}(\boldsymbol\beta_{k,p}+\mathrm i\,\mathbf y)}\Big)
\right),
\]
where $\gamma_{k,p}=\sqrt{\alpha_{k,p}^2-\boldsymbol\beta_{k,p}^\top\boldsymbol\Gamma_{k,p}\,\boldsymbol\beta_{k,p}} >0$. In particular, this applies to $\mathbf y=\mathbf u+\mathrm i\,\mathbf K_{k,p}$ with $\mathbf u,\mathbf K_{k,p}\in\mathbb R^k$.
\end{example}

\end{document}